\documentclass[12pt]{article}

\usepackage[colorlinks, linkcolor=blue,citecolor=blue]{hyperref}           
\usepackage{color}

\usepackage{graphicx,subfigure,amsmath,amssymb,amsfonts,bm,epsfig,epsf,url,dsfont}
\usepackage{amsthm,mathrsfs}
\usepackage{tikz}

\usetikzlibrary{arrows.meta,positioning}

\usetikzlibrary{positioning, calc}
\usepackage{multirow}
\usepackage{bbm}      
\usepackage{booktabs}
\usepackage{cases}
\usepackage{fullpage,enumitem}
\usepackage[small,bf]{caption}
\usepackage{natbib}
\usepackage[top=1in,bottom=1in,left=1in,right=1in]{geometry}
\usepackage{fancybox}
\usepackage{hyperref}
\usepackage{algorithm}
\usepackage{verbatim}
\usepackage{algorithmic}
\usepackage{amsthm}
\usepackage{mathtools,pgfmath}

\usepackage{scalerel,stackengine}
\stackMath
\newcommand\reallywidehat[1]{%
\savestack{\tmpbox}{\stretchto{%
0  \scaleto{%
    \scalerel*[\widthof{\ensuremath{#1}}]{\kern-.6pt\bigwedge\kern-.6pt}%
    {\rule[-\textheight/2]{1ex}{\textheight}}%WIDTH-LIMITED BIG WEDGE
  }{\textheight}% 
}{0.5ex}}%
\stackon[1pt]{#1}{\tmpbox}%
}
\newenvironment{nscenter}
 {\parskip=0pt\par\nopagebreak\centering}
 {\par\noindent\ignorespacesafterend}
\newcommand{\bE}{\mathbb{E}}

\newtheorem{definition}{Definition}
\newtheorem{example}{Example}

\newtheorem{theorem}{Theorem}
\newtheorem{corollary}{Corollary}

\newtheorem{lemma}{Lemma}

\newenvironment{fminipage}%
  {\begin{Sbox}\begin{minipage}}%
  {\end{minipage}\end{Sbox}\fbox{\TheSbox}}

\makeatletter
\newcommand*{\rom}[1]{\expandafter\@slowromancap\romannumeral #1@}
\makeatother
\newtheorem{remark}{Remark}

\newcommand{\adv}{\mathsf{{Adv}}}
\newcommand{\Ind}{\mathbbm{1}}

\newcommand{\sd}{\mu}

\newcommand{\Jb}{\mathbf{J}}

\newcommand{\abs}[1]{\left|#1\right|}
\newcommand{\R}{\mathbb{R}} 
\newcommand{\N}{\mathbb{N}}

\newcommand{\E}{\mathbb{E}}

\def\P{{\mathbb P}}

\newcommand {\pr} {\mathbb{P}}

\newcommand{\calA}{{\cal A}}
\newcommand{\calB}{{\cal B}}
\newcommand{\calC}{{\cal C}}
\newcommand{\calD}{{\cal D}}
\newcommand{\calE}{{\cal E}}
\newcommand{\calF}{{\cal F}}
\newcommand{\calG}{{\cal G}}
\newcommand{\calH}{{\cal H}}

\newcommand{\calK}{{\cal K}}

\newcommand{\calM}{{\cal M}}
\newcommand{\calN}{{\cal N}}
\newcommand{\calO}{{\cal O}}
\newcommand{\calP}{{\cal P}}

\newcommand{\calR}{{\cal R}}
\newcommand{\calS}{{\cal S}}
\newcommand{\calT}{{\cal T}}

\newcommand{\calW}{{\cal W}}
\newcommand{\calX}{{\cal X}}

\DeclarePairedDelimiterX{\set}[1]{\{}{\}}{\setargs{#1}}
\DeclarePairedDelimiterX{\cond}[1]{[}{]}{\setargs{#1}}
\NewDocumentCommand{\setargs}{>{\SplitArgument{1}{;}}m}
{\setargsaux#1}
\NewDocumentCommand{\setargsaux}{mm}
{\IfNoValueTF{#2}{#1} {#1\,\delimsize|\,\mathopen{}#2}}%{#1\:;\:#2}

\newcommand{\be}{\begin{equation}}
\newcommand{\ee}{\end{equation}}
\newcommand{\beqna}{\begin{eqnarray}}
\newcommand{\eeqna}{\end{eqnarray}}

\newcommand{\p}[1]{\left(#1\right)}
\newcommand{\pp}[1]{\left[#1\right]}
\newcommand{\ppp}[1]{\left\{#1\right\}}
\newcommand{\norm}[1]{\left\|#1\right\|}
\newcommand{\innerP}[1]{\left\langle#1\right\rangle}

\usepackage{xspace}

\newcommand{\s}[1]{\mathsf{#1}}

\makeatletter
\def\thanks#1{\protected@xdef\@thanks{\@thanks
        \protect\footnotetext{#1}}}
\makeatother

\makeatletter
\renewcommand{\paragraph}{%
  \@startsection{paragraph}{4}%
  {\z@}{1.25ex \@plus 1ex \@minus .2ex}{-1em}%
  {\normalfont\normalsize\bfseries}%
}
\makeatother

\allowdisplaybreaks

\begin{document}
%\title{Recovering Arbitrary Planted Subgraphs in Random Graphs}
\title{Recovery of Planted Subgraphs}
\author{Wasim Huleihel\thanks{W. Huleihel is with the School of Electrical Engineering and Computer Engineering, at Tel Aviv University, {T}el {A}viv 6997801, Israel (e-mail:  \texttt{wasimh@tauex.tau.ac.il}).}}

\maketitle

\begin{abstract}

Understanding the fundamental limits of recovering planted subgraphs in random graphs is a central challenge in high-dimensional statistics and theoretical computer science. While existing work has largely focused on special subgraph families such as cliques, bicliques, or dense blocks, the exact recovery of a general planted subgraph in Erd\H{o}s--R\'enyi random graphs remains poorly understood. In this paper, we study the exact recovery of an arbitrary planted subgraph $\Gamma = \Gamma_n$ embedded in a dense Erd\H{o}s--R\'enyi random graph $\mathcal{G}(n,q_n)$, where edges within $\Gamma$ are present independently with probability $p_n > q_n$.

Our main results identify sharp conditions under which exact recovery is possible with high probability, and we establish matching lower bounds showing the necessity of these conditions. The resulting statistical threshold is characterized by a new graph-theoretic quantity, which we term the \emph{minimal maximum subgraph density}. This quantity is defined as the maximum subgraph density of the smallest induced balanced subgraph of $\Gamma$.

We then turn to the problem of recovery under polynomial-time constraints. We propose a computationally efficient recovery algorithm that applies to arbitrary planted subgraphs and analyze its performance in terms of certain spectral properties of the adjacency matrix. In addition, we derive computational lower bounds for recovery using the low-degree polynomial framework, establishing regimes where recovery is statistically possible but computationally hard. Finally, we consider several extensions of our setting, including recovery in semi-random models and weaker notions of recovery.

\end{abstract}

\newpage
\tableofcontents

%\newpage

\section{Introduction}\label{sec:intro}

The study of structured signals in networks lies at the intersection of graph theory, computer science, and statistics, with applications ranging from social networks to computational biology. A central question in this area is whether one can reliably reconstruct hidden or anomalous structures embedded in otherwise random graphs. Much of the existing literature has focused on identifying communities or clusters of vertices with unusually high internal connectivity. While early work emphasized \emph{detection}, namely determining whether such a structure is present, an equally fundamental and arguably more challenging task is \emph{recovery}: given an observed network, can one pinpoint the precise location of the hidden structure? This recovery perspective is crucial in applications such as anomaly localization, motif discovery in biological networks, and targeted monitoring in social networks, and it raises new theoretical questions regarding the statistical and computational limits of inference.

As in many inference problems, recovery exhibits both statistical and computational facets. The statistical aspect concerns the feasibility of recovery given unlimited computational power, whereas the computational aspect asks whether recovery can be achieved efficiently. Traditionally, these aspects were studied separately, with information-theoretic tools providing fundamental limits. However, recent work has highlighted the critical role of computational constraints in high-dimensional inference. A growing body of research (see, e.g., \cite{berthet2013complexity,ma2015computational,cai2015computational,chen2016statistical,hopkins2017bayesian,Hopkins18,gamarnik2020lowdegree,barak2016nearly,Lenka16,Lesieur_2015,hajek2015computational,bandeira2018notes,brennan18a,brennan19,brennan20a}, among many others) has identified striking statistical--computational gaps in planted combinatorial problems, that is, regimes where recovery is information-theoretically possible but no known polynomial-time algorithm succeeds.

In this paper, we investigate the problem of recovering an \emph{arbitrary} subgraph planted in an Erd\H{o}s--R\'enyi random graph. Formally, let $n \in \mathbb{N}$, let $p_n \in (0,1)$ and $q_n \in (0,1)$ satisfy $q_n < p_n \leq 1$, and let $\Gamma = \Gamma_n$ denote an arbitrary sequence of undirected graphs, referred to as the planted subgraph. We observe a graph $\s{G}$ generated as follows: first, a copy $\Gamma_n^\star$ of $\Gamma_n$ is chosen uniformly at random among all embeddings in the complete graph on $n$ vertices; edges within $\Gamma_n^\star$ are then included independently with probability $p_n$, while all remaining edges are included independently with probability $q_n$. The inferential task is to recover the precise location of $\Gamma_n^\star$, namely to achieve \emph{exact recovery}, with high probability.

Over the past several decades, numerous special cases of planted subgraph recovery have been studied. The canonical example is the planted clique problem \cite{jerrum1992large}, where the goal is to recover a hidden $k$-clique embedded in $\calG(n,1/2)$. Other notable examples include the planted dense subgraph problem \cite{arias2014community,arias2015detecting,hajek2015computational}, the recovery of planted trees \cite{massoulie19a}, Hamiltonian cycles \cite{Bagaria20}, matchings \cite{10.1214/20-AAP1660}, and bipartite structures \cite{HuleihelBip}. Despite shared methodological features, the statistical and computational behaviors vary significantly across different planted structures. Some subgraphs exhibit sharp recovery thresholds and conjectured computationally hard regimes, as in the planted clique problem, whereas others, such as paths or stars, appear not to exhibit any computational barrier \cite{massoulie19a}.

There has been recent progress toward unified frameworks for planted subgraph inference. For the \emph{detection problem}, where the goal is to distinguish between a pure Erd\H{o}s--R\'enyi random graph and the planted model described above, \cite{addario2010combinatorial,Huleihel2022,pmlr-v247-yu24a,elimelech2025detecting,ElimelechHuleihel2025SemiRandom} studied general detection models for arbitrary planted subgraphs in various settings and established both statistical and computational thresholds, primarily in the dense regime. Collectively, these results provide a nearly complete understanding of detection: when it is statistically possible and when it can be achieved efficiently.

By contrast, the landscape for the \emph{recovery problem} remains far less understood. Important progress was made in \cite{pmlr-v195-mossel23a,LeePerniceRajaramanZadik2025}, which studied weak recovery for arbitrary planted subgraphs and derived a general variational formula for the limiting minimum mean squared error (MMSE) under mild structural density assumptions. These results yield a powerful statistical characterization of recovery in the weak sense. Nevertheless, several key questions remain open. First, the focus there is on approximate or fractional recovery, as captured by small MMSE, rather than on exact recovery or other natural notions of recovery. Different recovery criteria can lead to qualitatively different thresholds, governed by fundamentally different graph-theoretic measures. Second, while these works characterize statistical limits, they do not address computational tractability: general computational lower and upper bounds for recovery are still missing. As a result, it remains unclear when a statistical--computational gap arises and how efficient recovery can be achieved.

In this paper, we take a step toward addressing these challenges by developing a general framework for recovery in the arbitrary planted subgraph setting. Our approach simultaneously addresses statistical and computational aspects and accommodates multiple notions of recovery. Motivated by the gaps described above, we pose the following guiding questions:

\vspace{0.2cm}
\centerline{\noindent\fbox{\parbox{0.9\textwidth}{
\begin{nscenter}
\emph{What graph-theoretic properties of $\Gamma$ govern the statistical and computational limits of recovery?}

  \emph{For which planted structures does a statistical–computational gap emerge?}  

\end{nscenter}
}}}
\vspace{0.2cm}

\subsection{Main contributions}

In this paper, we aim to answer the questions posed above. We begin with what is arguably the most canonical setting of the problem: the dense regime, where the edge probabilities $(p_n,q_n)$ are fixed constants. Our first objective is to characterize the statistical (information-theoretic) limits of exact recovery of the planted subgraph $\Gamma = \Gamma_n$, as defined above (see Section~\ref{sec:problem} for formal details).

A key ingredient in our analysis is the \emph{onion decomposition} of $\Gamma = \Gamma_n$, a concept recently introduced in \cite{LeePerniceRajaramanZadik2025}. Informally, this decomposition iteratively peels off subgraphs achieving the maximum subgraph density of $\Gamma$ (see Definition~\ref{def:maxsubden}), layer by layer, until the entire graph is decomposed into the union of such layers. A precise definition is given in Definition~\ref{def:onionDec}. Our results are expressed in terms of a graph-theoretic quantity that we call the \emph{minimal maximum subgraph density}, defined as
\begin{align*}
\mu_{\s{min}}(\Gamma_n) \triangleq \min_{\s{S}\subseteq \Gamma_n}\max_{\s{S}\subsetneq\s{F}\subseteq\Gamma_n}\eta(\s{F}\vert\s{S}).
\end{align*}
Here, $\eta(\cdot\vert\cdot)$ denotes the relative density, as defined in Definition~\ref{def:relative}. This quantity admits a natural operational interpretation: it coincides with the maximum subgraph density of the final, and hence minimal, layer in the onion decomposition of $\Gamma$.

Our first main result shows that the statistical threshold for exact recovery is governed by
\centerline{\noindent\fbox{\parbox[t][1cm][c]{0.4\textwidth}{
\begin{align*}
\mu_{\s{min}}(\Gamma_n)\mathop{\gtrless}\limits_{\s{Impossible}}^{\s{Possible}}\s{C}\frac{\log n}{d_{\s{KL}}(p_n||q_n)}\,,
\end{align*}}}}
\vspace{0.2cm}

\noindent where $d_{\s{KL}}(p_n||q_n)$ denotes the Kullback--Leibler divergence between $\s{Bern}(p_n)$ and $\s{Bern}(q_n)$, and $\s{C}\geq 1$ is a universal constant. In particular, if the minimal maximum subgraph density of $\Gamma$ exceeds the right-hand side, then exact recovery is achievable with high probability; otherwise, exact recovery is information-theoretically impossible. The achievability result is obtained by applying a maximum-likelihood estimator (MLE) layer by layer along the onion decomposition, while the converse follows from a genie-aided argument combined with Fano's inequality. Together, these results fully characterize the statistical limits of exact planted subgraph recovery in the dense regime.

We next turn to the computational aspect of the problem. We propose a general, computationally efficient recovery algorithm and analyze its performance guarantees. This method can be viewed as a semidefinite relaxation of the optimal MLE. We show that it succeeds whenever certain spectral properties of an appropriate transformation of the adjacency matrix of $\Gamma_n$, specifically its rank and coherence number, satisfy suitable conditions. Notably, the resulting algorithm and its guarantees recover, as special cases, the best-known polynomial-time procedures for classical planted subgraph problems, including cliques, bipartite graphs, and related models. For certain choices of $\Gamma$, however, there remains a gap between the information-theoretic limits established above and the performance of this efficient algorithm. We conjecture that this gap is intrinsic, in the sense that below the corresponding computational threshold, no polynomial-time algorithm can achieve recovery.

To provide evidence for this conjecture, we adopt the low-degree polynomial framework of \cite{SchrammWein2022} and establish computational lower bounds for recovery. Roughly speaking, we show that all polynomials of degree $O(\log n)$ fail to recover $\Gamma$ (even in a weak sense) whenever its density satisfies

\centerline{\noindent\fbox{\parbox[t][1cm][c]{0.2\textwidth}{
\begin{align*}
\eta(\Gamma_n)\triangleq\frac{|e(\Gamma_n)|}{|v(\Gamma_n)|}\ll \sqrt{n}\,.
\end{align*}}}}
\vspace{0.2cm}

\noindent For instance, when $\Gamma_n$ is a clique, this condition predicts computational hardness whenever $|v(\Gamma_n)| \ll \sqrt{n}$. Similarly, for a complete bipartite graph with $k_{\s{L}}$ left and $k_{\s{R}}$ right vertices, recovery is conjectured to be computationally hard when $\min\{k_{\s{L}},k_{\s{R}}\} \ll \sqrt{n}$. These predictions align with well-known folklore conjectures. We also establish complementary upper bounds, showing that low-degree polynomials succeed whenever $\eta(\Gamma_n)\gg \sqrt{n}$ for a broad class of nearly regular subgraphs, including cliques, balanced bipartite graphs, and sparse expander graphs. In these cases, the low-degree predictions match the performance of the best known polynomial-time algorithms.

Finally, we consider several extensions of the baseline planted subgraph model. The standard formulation assumes a purely random generative process, which may be fragile under adversarial perturbations. To address this issue, we study semi-random models in which an adversary may delete edges outside the planted subgraph and add edges inside it prior to observation. Importantly, the statistician does not know which edges have been modified. For this setting, known as the monotone adversary model \cite{feige2000finding,feige2001heuristics}, we show that both the optimal MLE and the efficient algorithm proposed above remain robust and achieve the same recovery guarantees as in the non-adversarial model.

Motivated by the observation that exact recovery is fundamentally impossible for planted subgraphs consisting of a dense core with a very sparse appendage, such as kite graphs, since their minimal maximum subgraph density $\mu_{\s{min}}$ is sub-logarithmic in the number of vertices, we also study weaker notions of recovery. Specifically, we consider exact layer recovery, where the goal is to recover a fixed number of initial layers in the onion decomposition, as well as almost-exact recovery \cite{hajek2016information,wu2018statistical}, where the estimator is required to be close to the planted subgraph in Hamming distance. For both recovery criteria, we establish asymptotically tight thresholds.

\subsection{Related work}

This paper lies within a broad literature on planted combinatorial structure in random graphs and matrices, studied from both statistical and computational perspectives. We highlight work most pertinent to recovery; for a fuller survey, see \cite{elimelech2025detecting}.

\paragraph{Planted subgraphs and matrices.} One of the earliest and most influential recovery problems is the planted clique. Alon, Krivelevich, and Sudakov \cite{alon1998finding} showed that spectral methods can identify a clique of size $k = \Omega(\sqrt{n})$. Since then, a variety of algorithmic approaches have been studied—including combinatorial heuristics \cite{feige2000finding, mcsherry2001spectral, feige2010finding}, convex relaxations such as semidefinite programming and nuclear-norm minimization \cite{ames2011nuclear, deshpande2015finding}, and approximate message passing \cite{chen2016statistical}. Despite these advances, all known polynomial-time algorithms require $k = \Omega(\sqrt{n})$, giving rise to the widely accepted planted clique conjecture, which asserts that recovery is computationally intractable when $k = o(\sqrt{n})$. Beyond cliques, researchers have explored other planted subgraph models:
\begin{itemize}[leftmargin=*]
    \item \emph{Independent sets.} Feige and Krauthgamer \cite{feige2005finding} proposed a spectral method for recovering planted independent sets, while Coja-Oghlan \cite{coja2003finding} established polynomial-time recovery guarantees in sparse regimes where $q = \Theta(n^{-\alpha})$, provided the independence number scales appropriately. Additional work has analyzed the limits of greedy and local algorithms in these settings \cite{coja2015independent, gamarnik2014limits, rahman2017local}.
    \item \emph{Dense subgraphs and communities.} The problem of recovering a dense community embedded in an Erd\H{o}s--R\'enyi background has attracted extensive attention \cite{arias2014community, butucea2013detection, verzelen2015community, Hajek2015, montanari2015finding, candogan2018finding, hajek2016information, chen2016statistical}. A key milestone was the reduction from planted clique to planted dense subgraph, due to Hajek, Wu, and Xu \cite{hajek2015computational}, which established hardness in regimes where $p = cq$ and $q = \Theta(n^{-\alpha})$. Brennan et al. \cite{brennan18a} later extended this reduction to nearly the full range of $p > q$, with transitions to denser regimes analyzed in \cite{bhaskara2010detecting}.
    \item \emph{Other planted structures.} A variety of recovery problems have been studied for specific combinatorial templates, including: planted trees \cite{massoulie19a}, Hamiltonian cycles \cite{Bagaria20}, perfect matchings \cite{10.1214/20-AAP1660}, bipartite subgraphs \cite{HuleihelBip}, and cycles \cite{ChengWein, pmlr-v195-mao23a}. These studies demonstrate that the algorithmic and statistical behavior can differ drastically depending on the underlying subgraph: some exhibit sharp thresholds and conjectured computational barriers, while others allow efficient algorithms down to information-theoretic limits.
    \item \emph{Matrices and Gaussian models.} Beyond graphs, analogous recovery problems appear in high-dimensional statistics. A prime example is Gaussian biclustering, where one seeks to identify a planted submatrix with elevated mean. Detection aspects were analyzed in \cite{butucea2013detection, ma2015computational, montanari2015limitation}, while recovery guarantees were developed in \cite{shabalin2009finding, kolar2011minimax, balakrishnan2011statistical, cai2015computational, chen2016statistical, hajek2016information, brennan19, 10447320, 10387722}. Closely related are spectral analyses of the spiked Wigner model, beginning with \cite{peche2006largest, feral2007largest, capitaine2009largest}, and later work on spectral thresholds and Bayesian algorithms \cite{montanari2015limitation, perry2016statistical, perry2016optimality, banks2018information, hopkins2017power}.
\end{itemize}
Together, this body of research illustrates the diversity of recovery phenomena: in some cases (e.g., planted cliques and dense subgraphs), computational-statistical gaps appear central, while in others (e.g., certain trees or paths), efficient recovery is achievable down to statistical thresholds.

\paragraph{General planting models.} Research has long studied planting arbitrary substructures in random graphs, exploring both detection and recovery under varied generative models. Early works like \cite{addario2010combinatorial} developed general probabilistic bounds for testing in the planted-vector setting—though their results were broad and not always tight across all cases. More recent lines of inquiry have introduced two foundational paradigms:
\begin{enumerate}[leftmargin=*]
    \item \emph{Detection.} Several works have developed frameworks for the detection of arbitrary planted subgraphs. The work \cite{Huleihel2022} introduced two models for planting arbitrary subgraphs in random graphs: the union model (where the planted edges are superimposed on an Erd\H{o}s--R\'enyi background) and the induced model (where the planted subgraph appears as an induced copy). They analyzed detection thresholds in the dense regime, providing both information-theoretic bounds and computational insights. Recently, \cite{pmlr-v247-yu24a} studied the computational limits of detection in the dense regime and showed that optimal constant-degree polynomial tests are always given by counting stars. This result highlights the limitations of low-degree polynomials for detection of arbitrary planted subgraphs. Recently, \cite{elimelech2025detecting} extended this line of work by analyzing detection in the union model more generally. Their results provide sharp characterizations across both sparse and dense regimes, demonstrating how the feasibility of detection depends delicately on graph density and subgraph structure. Together, these works clarify the statistical and algorithmic landscape for detection in general planted subgraph models, identifying both tractable and hard regimes. Finally, \cite{ElimelechHuleihel2025SemiRandom} established fundamental statistical limits for detecting arbitrary planted subgraph under a semi-random model where an adversary is allowed to remove edges outside the planted subgraph before the graph is provided to the statistician; the goal is to derive robust detection algorithms. 
    \item \emph{Recovery.} The problem of recovery, where the goal is to reconstruct the precise location of the planted subgraph, has also been addressed in several recent studies. In \cite{Huleihel2022} considered recovery in the induced model, deriving bounds in the dense regime. While their focus was primarily on detection, they established statistical thresholds for recovery as well. In \cite{pmlr-v195-mossel23a} investigated recovery for arbitrary planted subgraphs in the dense regime. They provided tight upper and lower bounds for recovery in this setting, focusing on specific families of subgraphs, and highlighted where statistical–computational gaps arise. Finally, \cite{LeePerniceRajaramanZadik2025} advanced this direction by deriving an exact formula for the asymptotic MMSE curve for recovering arbitrary planted subgraphs. They also proposed an efficient algorithm for approximating this threshold in dense graphs, based on a novel decomposition technique, extending “all-or-nothing” phenomena to general subgraphs. Taken together, these works establish the first general frameworks for recovery in arbitrary planted subgraph models, with \cite{pmlr-v195-mossel23a} and \cite{LeePerniceRajaramanZadik2025} providing sharp results in the dense setting, and \cite{Huleihel2022} bridging induced subgraphs and recovery.
\end{enumerate}

\paragraph{Computational hardness.} Over the past decade, major progress has been made toward a rigorous understanding of the fundamental limits of efficient algorithms for high-dimensional inference problems with planted structure. A recurring theme in this line of work is the emergence of a statistical–computational gap: the number of samples required by any known polynomial-time algorithm is strictly larger than the information-theoretic minimum \cite{berthet2013complexity,ma2015computational,cai2015computational,krauthgamer2015semidefinite,hajek2015computational,chen2016statistical,wang2016average,wang2016statistical,gao2017sparse,brennan18a,brennan19,wu2018statistical,brennan20a,hopkins2017bayesian,Hopkins18,Kunisky19,Cherapanamjeri20,gamarnik2020lowdegree,barak2016nearly,deshpande2015improved,meka2015sum,TengyuWig15,kothari2017sum,hopkins2016integrality,raghavendra2019highdimensional,hopkins2017power,mohanty2019lifting,Feldman17,feldman2018complexity,Diakonikolas17,DiakonikolasKong19,Lenka16,Lesieur_2015,Lesieur_2016,Krzakala10318,Ricci_Tersenghi_2019,bandeira2018notes,SchrammWein2022,pmlr-v134-brennan21a,Wein2021Independent,Abhishek23,elimelech2025detecting}. The evidence for these gaps typically falls into two broad categories:
\begin{enumerate}[leftmargin=*]
    \item \emph{Failure of classes of algorithms.} One line of evidence comes from showing that broad algorithmic paradigms fail in the conjectured hard regime. For instance, low-degree polynomials provide a unifying lens for analyzing high-dimensional inference, and their failure below certain thresholds suggests sharp computational barriers \cite{hopkins2017bayesian,Hopkins18,Kunisky19,Cherapanamjeri20,gamarnik2020lowdegree}. Similarly, the sum-of-squares hierarchy, despite its power as a family of semidefinite relaxations, has been shown to fall short in planted clique, planted dense subgraph, and related problems \cite{barak2016nearly,deshpande2015improved,meka2015sum,TengyuWig15,kothari2017sum,hopkins2016integrality,raghavendra2019highdimensional,hopkins2017power,mohanty2019lifting}. The statistical query model, which captures a wide class of algorithms accessing only expectations of data-dependent functions, has also yielded strong lower bounds in this context \cite{Feldman17,feldman2018complexity,Diakonikolas17,DiakonikolasKong19,pmlr-v134-brennan21a}. Finally, message-passing algorithms such as belief propagation and approximate message passing often achieve optimal performance in ``easy" regimes but exhibit provable failures in the hard phase across multiple planted models \cite{Lenka16,Lesieur_2015,Lesieur_2016,Krzakala10318,Ricci_Tersenghi_2019,bandeira2018notes}.
    \item \emph{Average-case reductions.} Another powerful approach is to establish hardness by reducing one planted problem to another conjectured-to-be-hard task, most prominently planted clique. This technique has been used to show that recovery (or even detection) in models such as planted dense subgraph or biclustering is at least as hard as planted clique in certain parameter regimes \cite{berthet2013complexity,ma2015computational,cai2015computational,chen2016statistical,hajek2015computational,wang2016average,wang2016statistical,gao2017sparse,brennan18a,brennan19,wu2018statistical,brennan20a}.
\end{enumerate}

\paragraph{Semi-random models.} The notion of robustness via semi-random perturbations dates back to Blum and Spencer \cite{blum1995coloring}, with stronger adversarial variants introduced in the semi-random planted clique model \cite{feige2000finding,feige2001heuristics}. These ideas have since been generalized to community detection in the SBM \cite{charikar2017learning,moitra2016robustCD,banks2021local,bhaskara2024robustness}, and to recovery of cliques \cite{steinhardt2017does,mckenzie2020new,buhai2023algorithms,blasiok2024semirandom,guruswami2025semirandom}, bipartite graphs \cite{Kumar22}, and $r$-colorable graphs \cite{louis2025robust}. While convex relaxations such as SDP often remain robust, spectral and other classical methods can fail under semi-random noise. In the planted dense subgraph model, Brennan et al. \cite{brennan20a} conjectured that below certain density thresholds even weak recovery is computationally hard with adversarial edge deletions, supporting their claim via reductions from the planted clique with side information.

\subsection{Notation}

For an integer $n\in\N$, we write $[n]\triangleq\{1,\dots,n\}$ and define $n^{(2)}\triangleq\binom{n}{2}$. For $i\in\N$, the collection of all subsets of $[n]$ of size $i$ is denoted by $\binom{[n]}{i}$. For real numbers $a,b\in\R$, we write $a\vee b$ for their maximum. We use the standard asymptotic notations $O(\cdot)$, $o(\cdot)$, $\Omega(\cdot)$, and $\omega(\cdot)$ to describe growth rates of sequences, and write $a_n\ll b_n$ to indicate that $a_n$ is polynomially smaller than $b_n$, i.e., $\limsup_{n\to\infty}\log_n a_n < \liminf_{n\to\infty}\log_n b_n$.

For matrices and vectors, we denote by $\mathbf{J}_{m\times n}$ the all-ones $m\times n$ matrix, and by $\mathbf{1}_m$ and $\mathbf{0}_m$ the all-ones and all-zeros vectors in $\R^m$, respectively; subscripts are omitted when dimensions are clear from the context. For square matrices $\s{A}$ and $\s{B}$ of the same size, $\innerP{\s{A},\s{B}}$ denotes the Hilbert--Schmidt inner product, defined as $\s{Tr}(\s{A}^\top\s{B})$. We write $\|\s{A}\|_\star$ for the nuclear (trace, Schatten-$1$) norm of $\s{A}$, given by the sum of its singular values, and $\|\s{A}\|_F$ for its Frobenius norm. Additional matrix norms used throughout include the spectral norm $\|\s{X}\|_{\s{op}}$, the entrywise $\ell_1$ norm $\|\s{X}\|_{\ell_1}=\sum_{i,j}|\s{X}_{ij}|$, the entrywise $\ell_\infty$ norm $\|\s{X}\|_{\ell_\infty}=\max_{i,j}|\s{X}_{ij}|$, the maximum absolute row sum $\|\s{X}\|_{\ell_\infty\to\ell_\infty}\triangleq\max_i\sum_j|\s{X}_{ij}|$, and the maximum row $\ell_2$ norm $\|\s{X}\|_{2,\infty}\triangleq\max_{i\in[n]}\|\s{X}_{i,:}\|_2$.

For probability measures $\P_0$ and $\P_1$ on the same measurable space, we use the total variation distance
$d_{\s{TV}}(\P_0,\P_1)=\frac{1}{2}\int|\mathrm{d}\P_0-\mathrm{d}\P_1|$, the $\chi^2$-divergence $\chi^2(\P_0||\P_1)=\int \frac{(\mathrm{d}\P_0-\mathrm{d}\P_1)^2}{\mathrm{d}\P_1}$, and the Kullback--Leibler (KL) divergence $d_{\s{KL}}(\P_0||\P_1)=\bE_{\P_0}\log\frac{\mathrm{d}\P_0}{\mathrm{d}\P_1}$. 
When $\P_0=\s{Bern}(p)$ and $\P_1=\s{Bern}(q)$, we write $\chi^2(p||q)=\frac{(p-q)^2}{q(1-q)}$ and $d_{\s{KL}}(p||q)=p\log\frac{p}{q}+(1-p)\log\frac{1-p}{1-q}$. For a finite set $S$, $\s{Unif}(S)$ denotes the uniform distribution over $S$.

We use standard graph-theoretic notation. An undirected graph is denoted by $\s{G}=(v(\s{G}),e(\s{G}))$, with vertex set $v(\s{G})$ and edge set $e(\s{G})$. Since we primarily consider graphs without isolated vertices, for a subgraph $\s{H}\subseteq\s{G}$ we write $|\s{H}|$ to denote the number of edges in $e(\s{H})$. If $|v(\s{G})|\le n$, we let $\calS_{\s{G}}$ denote the set of all isomorphic copies of $\s{G}$ in the complete graph $\calK_n$. That is, $\s{H}\subseteq\calK_n$ belongs to $\calS_{\s{G}}$ if there exists a bijection $f:v(\s{G})\to v(\s{H})$ such that $(u,v)\in e(\s{G})$ if and only if $(f(u),f(v))\in e(\s{H})$. Given $\Gamma\subseteq\calK_n$ and a subgraph $\s{H}\subseteq\Gamma$, we define $\calN(\s{H},\Gamma)$ as the set of copies of $\s{H}$ in $\Gamma$, and $\calM(\s{H},\Gamma)$ as the set of copies of $\Gamma$ in $\calK_n$ that contain $\s{H}$. For any graph $\s{G}$, $\s{A}_{\s{G}}$ denotes its adjacency matrix, indexed by $v(\s{G})$, where $\s{A}_{\s{G}}(i,j)=\Ind\{\{i,j\}\in e(\s{G})\}$.

The rest of this paper is organized as follows. In Section~\ref{sec:problem}, we introduce the problem setup and provide some necessary preliminaries. Section~\ref{sec:mainresults} presents our main results, discussions, and examples. Sections~\ref{sec:lowerbounds}--\ref{sec:complowerbounds} are devoted to the the derivation of our lower and upper bounds. Finally, in Section~\ref{sec:Out} we conclude our paper, and discuss a few directions for future research.

\section{Problem Setup and Preliminaries}\label{sec:problem}

In this section, we describe the setting we study, along with several important preliminaries. Let $\Gamma = (\Gamma_n)_{n \in \mathbb{N}}$ be a sequence of graphs such that, for each $n \in \mathbb{N}$, $\Gamma_n = (v(\Gamma_n), e(\Gamma_n))$ is an undirected graph without isolated vertices and with $|v(\Gamma_n)| \leq n$. Let $\mathcal{S}_{\Gamma_n}$ denote the set of all isomorphic copies of $\Gamma_n$ in the complete graph on $n$ vertices. We refer to $\Gamma_n$ as the \emph{planted} (or \emph{hidden}) structure. Fix parameters $p_n,q_n$ satisfying $0 < q_n < p_n \leq 1$. The \emph{planted subgraph model} $\mathcal{G}_{\Gamma_n}(n,p_n,q_n)$ is defined as the distribution of a random graph $\mathsf{G}$ on $n$ vertices generated as follows: first draw an arbitrary but fixed copy $\Gamma_n^\star \in \mathcal{S}_{\Gamma_n}$; then include each edge $e \in e(\Gamma_n^\star)$ independently with probability $p_n$, and include each edge $e \notin e(\Gamma_n^\star)$ independently with probability $q_n$. Equivalently, $\mathsf{G}$ can be viewed as the union of a noisy copy of $\Gamma_n^\star$ and an Erd\H{o}s--R\'enyi random graph $\mathcal{G}(n,q_n)$.

A learner observes a single sample $\mathsf{G} \sim \mathcal{G}_{\Gamma_n}(n,p_n,q_n)$, and the goal is to recover the hidden copy $\Gamma_n^\star$. We study this framework in the asymptotic regime where $n \to \infty$. Figure~\ref{fig:recovery-figs} illustrates typical graph observations. Given $\mathsf{G}$, an estimator $\widehat\Gamma:\{0,1\}^{\binom{n}{2}}\to\mathcal S_{\Gamma_n}$ aims to output $\Gamma_n^\star$.
Define the worst-case error probability associated with an estimator $\widehat{\Gamma}$ as
\begin{align}
    \s{E}_n(\hat{\Gamma})\triangleq\sup_{\Gamma^\star\in\mathcal{S}_{\Gamma_n}}\pr_{\mathcal{G}_{\Gamma_n^\star}(n, p_n, q_n)}[\hat{\Gamma}(\s{G})\neq\Gamma^\star],
\end{align}
and the optimal error probability as
\begin{align}
    \s{E}_n^\star \triangleq \inf_{\hat{\Gamma}:\{0,1\}^{n^{(2)}} \to \calS_{\Gamma_n}} \sup_{\Gamma^\star \in \calS_\Gamma} \pr_{\mathcal{G}_{\Gamma_n^\star}(n, p_n, q_n)}[\hat{\Gamma}(\s{G})\neq\Gamma^\star].
\end{align}
We say that a sequence of estimators $(\widehat{\Gamma}_n)_{n \in \mathbb{N}}$, where $\widehat{\Gamma}_n : \{0,1\}^{\binom{n}{2}} \to \mathcal{S}_{\Gamma_n}$, achieves \emph{exact recovery} if $\limsup_{n\to\infty}\s{E}_n(\widehat{\Gamma}_n)= 0$. Conversely, we say that \emph{exact recovery is impossible} if $\liminf_{n \to \infty} \s{E}_n^\star>0$.

\begin{remark}
In the above, we focused on a worst-case definition of error probability. However, as we show in Appendix~\ref{app:EquivBayesWorst}, the Bayesian (average-case) definition, where an expectation over $\Gamma^\star \sim \s{Unif}(\mathcal{S}_{\Gamma_n})$ is taken instead of a supremum, is equivalent. Indeed, the uniform measure over $\mathcal{S}_{\Gamma_n}$ induces a permutation-invariant statistical model for which the least favorable prior is also uniform.
\end{remark}

\begin{figure}[t]
\centering

%%% LEFT: BIPARTITE %%%
\begin{minipage}[t]{0.48\textwidth}
\centering
\scalebox{0.3}{\begin{tikzpicture}
  [blue node/.style={circle, draw=blue!50, fill=blue!20, thick, minimum size=5mm},
   red node/.style={circle, draw=red!50, fill=red!20, thick, minimum size=3mm},
   red edge/.style={draw=red!70, thick},
   blue edge/.style={draw=blue!70, thick}]

  % Define two blue partitions A and B (for bipartite)
  \foreach \i in {1,...,7} {
    \node[blue node] (a\i) at ({-2 + rand}, {2 - \i}) {};
  }

  \foreach \i in {1,...,8} {
    \node[blue node] (b\i) at ({2 - rand}, {2 - \i}) {};
  }

  % Define red nodes (background ER nodes)
  \foreach \i in {1,...,35} {
    \node[red node] (r\i) at ({rand * 10 - 5}, {rand * 10 - 5}) {};
  }

  % Connect blue nodes as a complete bipartite graph between sets A and B
  \foreach \i in {1,...,7} {
    \foreach \j in {1,...,8} {
      \draw[blue edge] (a\i) -- (b\j);
    }
  }

  % Randomly connect red nodes (~100 edges)
  \foreach \i in {1,...,35} {
  \foreach \j in {1,...,35} {
    \ifnum\i<\j
      \pgfmathsetmacro\randval{rnd}
      \ifdim\randval pt < 0.08pt
        \draw[red edge] (r\i) -- (r\j);
      \fi
    \fi
  }
}

  % Randomly connect red nodes to blue nodes (~50 edges)
  \foreach \i in {1,...,35} {
    \foreach \j in {1,...,7} {
      \pgfmathsetmacro\randval{rnd}
      \ifdim\randval pt < 0.08pt
        \draw[red edge] (r\i) -- (a\j);
      \fi
    }
    \foreach \j in {1,...,8} {
      \pgfmathsetmacro\randval{rnd}
      \ifdim\randval pt < 0.08pt
        \draw[red edge] (r\i) -- (b\j);
      \fi
    }
  }

\end{tikzpicture}}
\caption*{(a) Planted $\Gamma^\star_n=\calK_{k_{\s{L}},k_{\s{R}}}$.}
\end{minipage}~
\begin{minipage}[t]{0.48\textwidth}
\centering
\scalebox{0.3}{\begin{tikzpicture}
  [blue node/.style={circle, draw=blue!50, fill=blue!20, thick, minimum size=5mm},
   red node/.style={circle, draw=red!50, fill=red!20, thick, minimum size=3mm},
   red edge/.style={draw=red!70, thick},
   blue edge/.style={draw=blue!70, thick}]

  % Define blue nodes (forming a path inside the graph)
  \foreach \i in {1,...,15} {
    \node[blue node] (b\i) at ({rand * 6 - 3}, {rand * 6 - 3}) {}; % More central placement
  }

  % Define red nodes (spread randomly around the blue nodes)
  \foreach \i in {1,...,35} {
    \node[red node] (r\i) at ({rand * 10 - 5}, {rand * 10 - 5}) {};
  }

  % Connect blue nodes in a path
  \foreach \i in {1,...,14} {
    \pgfmathtruncatemacro{\next}{\i + 1}  % Compute next node correctly
    \draw[blue edge] (b\i) -- (b\next);
  }

  % Randomly connect red nodes (~100 edges)
  \foreach \i in {1,...,35} {
  \foreach \j in {1,...,35} {
    \ifnum\i<\j
      \pgfmathsetmacro\randval{rnd}
      \ifdim\randval pt < 0.08pt
        \draw[red edge] (r\i) -- (r\j);
      \fi
    \fi
  }
}

  % Randomly connect red nodes to blue nodes (~50 edges)
  \foreach \i in {1,...,35} {
    \foreach \j in {1,...,15} {
      \pgfmathsetmacro\randval{rnd}  % Generate random probability
      \ifdim\randval pt < 0.12pt  % Adjust probability for ~50 edges
        \draw[red edge] (r\i) -- (b\j);
      \fi
    }
  }

\end{tikzpicture}}
\caption*{(b) Planted $\Gamma^\star_n=\calP_{k}$}
\end{minipage}

%%% Main Caption
\caption{The observed graph $\s{G}$ is the union of an Erd\H{o}s--R\'enyi graph and a planted subgraph $\Gamma_n^\star$. In (a) $\Gamma_n^\star$ is a bipartite subgraph, and in (b) $\Gamma_n^\star$ is a path.}
\label{fig:recovery-figs}
\end{figure}

Our results will be expressed in terms of the following graph-theoretic measures. We let $\eta\left(\Gamma_n\right)\triangleq|e(\Gamma_n)|/|v(\Gamma_n)|$ denote the density of $\Gamma_n$, and we recall the following definition of the \emph{maximum subgraph density}.
\begin{definition}[Maximum subgraph density \cite{bollobas_2001}]\label{def:maxsubden}
Let $\s{G}$ be an undirected graph. The maximum subgraph density of $\s{G}$ is
\begin{align}
\sd(\s{G})\triangleq\max\ppp{\eta(\s{H}):\s{H}\subseteq\s{G},\s{H}\neq\emptyset}.\label{eqn:maxDensity}
\end{align}
\end{definition}
Next, we introduce the notion of a graph-cut.
\begin{definition}[Graph-cut]\label{def:gcut}
Let $n$ be a positive integer. A graph-cut on $n$ vertices is a triplet $\s{H} = (V,S,E)$, where $S \subseteq V \subseteq [n]$, and
\begin{align}
E \subseteq \calK_V \setminus \calK_S \triangleq \left\{(u,v) : u,v \in V \;\s{and}\;\s{at}\;\s{most}\;\s{one}\;\s{of}\;u,v\;\s{belongs}\;\s{to}\; S \right\}.
\end{align}
We define the number of edges of the graph-cut as $|\s{H}|\triangleq|E|$, and the number of non-selected vertices as $|v(\s{H})|\triangleq|V \setminus S|$.
%\begin{equation}
%\eta(\s{H})\triangleq\frac{|\s{H}|}{|v(\s{H})|}.
%\end{equation}
%In the special case where $V \setminus S = \emptyset$, we define $\eta(\s{H})\triangleq \infty$.
\end{definition}
We note that throughout the paper, we use $\s{H}$ to denote either a graph (identified with its edge set) or a graph-cut, depending on the context. When $\s{H}$ is a graph, $v(\s{H})$ denotes its vertex set; when $\s{H}=(V,S,E)$ is a graph-cut, $v(\s{H})$ denotes the set of non-selected vertices $V\setminus S$. This distinction will be clear from the context. An important graph-theoretic quantity that will play a central role is the \emph{relative density} and the \emph{maximum subgraph relative density}, defined as follows.
\begin{definition}[Relative densities]\label{def:relative}
Given graphs $\s{H}' \subseteq \s{H}$ (viewed as edge sets), we define the induced graph-cut by $\s{H} \vert \s{H}' \triangleq (v(\s{H}), v(\s{H}'), \s{H} \setminus \s{H}')$. The relative density is defined as
\begin{align}
\eta(\s{H} \vert \s{H}') \triangleq \frac{|\s{H}| - |\s{H}'|}{|v(\s{H}) \setminus v(\s{H}')|}.
\end{align}
%In general, for graphs $\s{H}' \subseteq \s{H}$, we define
%\begin{align}
%\eta(\s{H} \vert \s{H}') \triangleq \frac{|\s{H} \setminus \s{H}'|}{|v(\s{H}) \setminus v(\s{H}')|}.
%\end{align}
If $\s{H}\setminus \s{H}' = \emptyset$, we define $\eta(\s{H} \vert \s{H}')\triangleq \infty$. The maximum subgraph relative density is
\begin{align}
\sd(\s{H}\vert\s{H}')\triangleq\max\ppp{\eta(\s{J} \vert \s{H}'):\s{H}'\subsetneq\s{J}\subseteq\s{H}}.\label{eqn:maxDensityRelative}
\end{align}
\end{definition}
Note that, given $\s{H}'$, we choose $\s{J} \supsetneq \s{H}'$ in \eqref{eqn:maxDensityRelative} to maximize the ``cut density'', namely the number of new edges per new vertex, counting also edges from the new vertices back into $\s{H}'$. This corresponds to the densest subgraph in the cut, rather than in the induced remainder. For example, let $\s{H}' = \calK_4$ be a clique on $4$ vertices (with $6$ edges). Add many leaf vertices, each joined to $\calK_4$ by exactly one edge and with no edges among the leaves, and denote the resulting graph by $\s{H}$. Then every added vertex contributes exactly one cross-edge to $\s{H}'$, and hence $\sd(\s{H} \vert \s{H}') = 1$.

Next, our statistical lower and upper bounds will rely, both in the statements and in the proofs, on a canonical decomposition of the planted subgraph $\Gamma$, introduced in \cite[Definition~3.3]{LeePerniceRajaramanZadik2025}.
\begin{definition}[Onion decomposition]\label{def:onionDec}
Let $\Gamma = \Gamma_n$ be an arbitrary graph. The \emph{onion decomposition} of $\Gamma$ is an increasing sequence of subgraphs $\Gamma^{(0)},\Gamma^{(1)},\dots$ constructed as follows:
\begin{itemize}
    \item[(i)] Initialize with $\Gamma^{(0)}\triangleq\emptyset$.
    \item[(ii)] For each $\ell> 0$, let $\Gamma^{(\ell)}$ be a maximal subgraph that maximizes $\eta(\s{H} \vert \Gamma^{(\ell-1)})$ among all $\s{H}$ such that $ \Gamma^{(\ell-1)}\subsetneq \s{H}$.\footnote{Maximality implies that there does not exist a subset $\Gamma^{(\ell)}\subsetneq\Gamma'$ for which $\eta(\Gamma' \vert \Gamma^{(\ell-1)})$ is also maximized.}
    \item[(iii)] Stop if $\Gamma^{(\ell)} = \Gamma$.
\end{itemize}
Let $M = M(\Gamma) \leq |\Gamma|$ denote the total number of steps until termination. The sequence $\{ \Gamma^{(\ell)} \}_{\ell=0}^M$ is referred to as the onion decomposition of $\Gamma$. Finally, define the remainder subgraphs $\calD^{(\ell)}\triangleq\Gamma^{(\ell)}\setminus\Gamma^{(\ell-1)}$ for $\ell=1,2,\ldots,M$.
\end{definition}
Intuitively, this process iteratively selects the densest remaining subgraph, removes it, and continues on the remainder. In the case of balanced graphs \cite{bollobas_1981}, where $\max_{\s{H} \subseteq \Gamma} \frac{|\s{H}|}{|v(\s{H})|} = \frac{|\Gamma|}{|v(\Gamma)|}$, the process terminates in a single step. Note that the decomposition elements $\{\Gamma^{(\ell)}\}_{\ell\geq0}$ are, by definition, edge-disjoint; however, they may share vertices. The following result, proved in Appendix~\ref{app:onionUn}, shows that the decomposition is unique. An alternative proof appears in \cite[Thm.~3.6]{LeePerniceRajaramanZadik2025} as well.
\begin{lemma}[Uniqueness]\label{lem:onionUnique}
Let $\Gamma = \Gamma_n$ be an arbitrary graph, and let $\{ \Gamma^{(\ell)} \}_{\ell=0}^{M}$ (and $\{ \calD^{(\ell)} \}_{\ell=0}^{M}$) denote its onion decomposition in Definition~\ref{def:onionDec}. Then, for every $\ell = 0, 1, \dots, M-1$, the choice of $\Gamma^{(\ell+1)}$ in step (ii) is unique. Thus, the onion decomposition of any graph is uniquely determined.
\end{lemma}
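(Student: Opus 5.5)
The plan is to show that, for a fixed base $\Gamma^{(\ell)}=\s{B}$, the family of maximizers of $\eta(\cdot\vert\s{B})$ over $\s{B}\subsetneq\s{H}\subseteq\Gamma$ is closed under union. Since the family is finite and nonempty, the union of all maximizers is then itself a maximizer, and it contains every other maximizer. So it is the unique maximal one, and it must equal $\Gamma^{(\ell+1)}$. Uniqueness of the whole decomposition then follows by induction on $\ell$, starting from $\Gamma^{(0)}=\emptyset$.

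The closure property rests on a supermodularity inequality. For subgraphs $\s{H}\supseteq\s{B}$, let $e(\s{H})=|\s{H}|-|\s{B}|$ and $v(\s{H})=|v(\s{H})\setminus v(\s{B})|$. Vertex counts are modular: $v(\s{H}_1\cup\s{H}_2)+v(\s{H}_1\cap\s{H}_2)=v(\s{H}_1)+v(\s{H}_2)$, where $\s{H}_1\cap\s{H}_2$ is the common edge set with its induced vertex set. A vertex lying in both $v(\s{H}_1)$ and $v(\s{H}_2)$ but incident to no common edge could break this, so I will instead take $\s{H}_1\cap\s{H}_2$ as the graph with vertex set $v(\s{H}_1)\cap v(\s{H}_2)$ and edge set $e(\s{H}_1)\cap e(\s{H}_2)$. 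With that choice, vertex counts are exactly modular. Edge counts are exactly modular too, since edge sets simply form a union and an intersection. In both cases the counts are taken relative to $\s{B}$, which both graphs contain.

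Let $\mu=\sd(\Gamma\vert\s{B})$ and consider the potential $\phi(\s{H})=e(\s{H})-\mu\, v(\s{H})$. This is modular, with $\phi\le 0$ on every $\s{H}\supsetneq\s{B}$ and equality exactly at the maximizers. One also needs $\phi(\s{B})=0$, which holds trivially since $\s{B}$ adds no vertices and no edges. If $\s{H}_1,\s{H}_2$ are maximizers, then $\phi(\s{H}_1\cup\s{H}_2)+\phi(\s{H}_1\cap\s{H}_2)=0$. Both terms are $\le 0$: the first because $\s{H}_1\cup\s{H}_2\supsetneq\s{B}$, and the second either by maximality of $\mu$ or because the intersection equals $\s{B}$. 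Hence both terms vanish, so $\s{H}_1\cup\s{H}_2$ is a maximizer.

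The main obstacle is making the objects precise, since a plain edge set does not fix its vertex set. The cleanest route is to work with pairs (vertex set, edge set) that contain $\s{B}$ and keep every vertex set as part of the data. I also need to check that an intersection whose vertex set strictly exceeds $v(\s{B})$, while its edge set equals $e(\s{B})$, still satisfies $\phi\le 0$. It does: its value is $-\mu\cdot(\text{extra vertices})\le 0$, which holds because $\mu>0$. That positivity is guaranteed since $\Gamma$ has no isolated vertices, so some strict extension adds an edge.

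Two further checks remain. First, edge sets with $\s{H}\setminus\s{B}=\emptyset$ never arise as strict supersets in the maximization. Second, the convention $\eta=\infty$ in the definition of relative density only concerns the degenerate case with no new vertices, which I will handle by noting that the maximizer then adds edges inside $v(\s{B})$. In that case the union argument is trivial.
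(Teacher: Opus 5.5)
Your proposal is correct and follows essentially the same route as the paper: the paper also shows that the union of two maximizers $\calA,\calB$ is again a maximizer, because the shared part satisfies $\delta\le\eta^\star\gamma$, and then concludes by maximality; the same modular-potential computation appears in Lemma~\ref{lem:APP2}. Your version is somewhat more careful than the paper's, since you explicitly handle shared vertices not covered by shared edges, the case where the intersection adds no new edges, and the degenerate $\eta=\infty$ convention, all of which the paper's one-line claim $\delta/\gamma\le\eta^\star$ glosses over.
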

We note that computing $\mu(\Gamma\vert\Gamma^{(\ell-1)})=\eta(\Gamma^{(\ell)}\vert \Gamma^{(\ell-1)})$ for $\ell\in[M(\Gamma)]$ requires, at least naively, constructing the entire sequence $\{\Gamma^{(\ell)}\}_{\ell=0}^{M}$. Nonetheless, the following result, proved in Appendix~\ref{app:equiv}, provides a non-sequential equivalent characterization. In particular, a key quantity in our results and analysis is the \emph{minimal maximum subgraph density}, given by $\eta(\Gamma^{(M)}\vert \Gamma^{(M-1)})$, corresponding to the relative density of the final layer in the onion decomposition of $\Gamma$.
\begin{lemma}[Equivalent characterization]\label{lem:equivminimalMSD}
Let $\Gamma = \Gamma_n$ be an arbitrary graph, and denote by $\{ \Gamma^{(\ell)} \}_{\ell=0}^{M}$ its onion decomposition in Definition~\ref{def:onionDec}. Let
\begin{align}
\Lambda(\Gamma)\triangleq\{\mu(\Gamma\vert\s{J}): \s{J}\subseteq \Gamma\},
\end{align}
and list the distinct values of $\Lambda(\Gamma)$ in strictly decreasing order
\begin{align}
\lambda_1>\lambda_2>\cdots>\lambda_T.
\end{align}
Then $T=M(\Gamma)$ and, for every $\ell=1,\dots,M$,
\begin{align}
\eta\left(\Gamma^{(\ell)}\vert\Gamma^{(\ell-1)}\right)
=\lambda_\ell,
\end{align}
i.e., the $\ell^{\s{th}}$ layer value is the $\ell^{\s{th}}$ largest distinct value taken by $\s{J}\mapsto\mu(\Gamma\vert\s{J})$. %Equivalently,
%\begin{align}
%\eta \left(\Gamma^{(\ell)}\vert\Gamma^{(\ell-1)}\right)
%=\operatorname{ord}_\ell\bigl\{\mu(\Gamma\vert\s{J}):\s{J}\subseteq \Gamma\bigr\}.
%\end{align}
In particular, define
\begin{align}
\mu_{\s{min}}(\Gamma) \triangleq \min_{\s{S}\subseteq \Gamma}\max_{\s{S}\subsetneq\s{F}\subseteq\Gamma}\eta(\s{F}\vert\s{S}).\label{eqn:minimaxSubDen}
\end{align}
Then, $\eta(\Gamma^{(M)}\vert \Gamma^{(M-1)})=\mu_{\s{min}}(\Gamma)$.
\end{lemma}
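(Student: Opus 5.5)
We will prove the statement by establishing two facts about the map $\s{J}\mapsto\mu(\Gamma\vert\s{J})$ on subgraphs $\s{J}\subsetneq\Gamma$. (For $\s{J}=\Gamma$ the value is degenerate or infinite, so we restrict to proper $\s{J}$; we also use the convention that subgraphs are identified with edge sets, with vertex sets spanned by their edges.) Write $\lambda^{(\ell)}\triangleq\eta(\Gamma^{(\ell)}\vert\Gamma^{(\ell-1)})=\mu(\Gamma\vert\Gamma^{(\ell-1)})$.

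\textbf{Step 1: the layer values strictly decrease.} We first show $\lambda^{(1)}>\lambda^{(2)}>\cdots>\lambda^{(M)}$. Take $\s{H}\supsetneq\Gamma^{(\ell)}$. The relative density is a mediant:
\begin{align*}
\eta(\s{H}\vert\Gamma^{(\ell-1)})=\frac{a+b}{c+d},
\end{align*}
where $a/c=\lambda^{(\ell)}$ is the layer-$\ell$ count of new edges over new vertices, and $b/d=\eta(\s{H}\vert\Gamma^{(\ell)})$. If $b/d\ge\lambda^{(\ell)}$, then the mediant is at least $\lambda^{(\ell)}$. That contradicts either maximality of $\Gamma^{(\ell)}$ or the fact that it maximizes. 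When $d=0$ (no new vertices, $b>0$) the mediant exceeds $\lambda^{(\ell)}$, again a contradiction. Hence $\lambda^{(\ell+1)}<\lambda^{(\ell)}$.

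\textbf{Step 2: every value $\mu(\Gamma\vert\s{J})$ is a layer value.} This is the main obstacle. Given $\s{J}\subsetneq\Gamma$, let $\ell$ be the smallest index with $\Gamma^{(\ell)}\not\subseteq\s{J}$. Then $\Gamma^{(\ell-1)}\subseteq\s{J}$. We claim $\mu(\Gamma\vert\s{J})=\lambda^{(\ell)}$.

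For the lower bound, take $\s{F}=\s{J}\cup\Gamma^{(\ell)}$. We need the supermodularity-type inequality
\begin{align*}
|\s{F}|-|\s{J}|\ \ge\ |\Gamma^{(\ell)}|-|\Gamma^{(\ell)}\cap\s{J}|.
\end{align*}
We also need a vertex count: the new vertices of $\s{F}$ over $\s{J}$ are at most those of $\Gamma^{(\ell)}$ over $\Gamma^{(\ell)}\cap\s{J}$. Together these give
\begin{align*}
\eta(\s{F}\vert\s{J})\ \ge\ \eta\bigl(\Gamma^{(\ell)}\vert\Gamma^{(\ell)}\cap\s{J}\bigr).
\end{align*}
Next, $\Gamma^{(\ell-1)}\subseteq\Gamma^{(\ell)}\cap\s{J}\subsetneq\Gamma^{(\ell)}$. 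By the same mediant argument as in Step 1, the part of $\Gamma^{(\ell)}$ removed to reach $\Gamma^{(\ell)}\cap\s{J}$ has relative density at most $\lambda^{(\ell)}$. Therefore the remaining part has relative density at least $\lambda^{(\ell)}$. So $\mu(\Gamma\vert\s{J})\ge\lambda^{(\ell)}$.

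For the upper bound, take any $\s{F}\supsetneq\s{J}$. Then $\s{F}\cup\Gamma^{(\ell-1)}$ relative to $\Gamma^{(\ell-1)}$ has density at most $\lambda^{(\ell)}$. Using the submodular vertex/edge counting (uncrossing $\s{F}$ with $\s{J}$), we get $\eta(\s{F}\vert\s{J})\le\lambda^{(\ell)}$. Both directions are mediant/uncrossing computations, and getting the vertex bookkeeping right, especially with shared vertices between layers, is the delicate part.

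\textbf{Conclusion.} By Step 2, $\Lambda(\Gamma)\subseteq\{\lambda^{(1)},\dots,\lambda^{(M)}\}$. Conversely, each $\lambda^{(\ell)}=\mu(\Gamma\vert\Gamma^{(\ell-1)})$ is attained, so the two sets are equal. Step 1 shows these $M$ values are distinct and already listed in decreasing order. Hence $T=M$ and $\lambda_\ell=\lambda^{(\ell)}$ for every $\ell$. In particular the minimum of $\Lambda(\Gamma)$, which is exactly $\mu_{\s{min}}(\Gamma)$ as defined in \eqref{eqn:minimaxSubDen}, equals $\lambda^{(M)}=\eta(\Gamma^{(M)}\vert\Gamma^{(M-1)})$.
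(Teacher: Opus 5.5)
\textbf{Gap: the upper bound in Step~2, and the statement it supports, are false.} The upper-bound half of your Step~2 does not go through, and no argument can rescue it, because the first assertion of the statement is false.

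Since $\Gamma^{(\ell-1)}\subseteq\s{J}\subseteq\s{F}$, the quantity $\eta(\s{F}\vert\Gamma^{(\ell-1)})$ is the mediant of $\eta(\s{J}\vert\Gamma^{(\ell-1)})$ and $\eta(\s{F}\vert\s{J})$. Knowing that this mediant is at most $\lambda^{(\ell)}$ bounds $\eta(\s{F}\vert\s{J})$ only if $\eta(\s{J}\vert\Gamma^{(\ell-1)})\ge\lambda^{(\ell)}$. A seed that is sparse over $\Gamma^{(\ell-1)}$ violates this: it acquires vertices cheaply and leaves a denser remainder.

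Concretely, take $\Gamma$ to be the path with edges $ab,bc$. Its subgraph densities are $1/2$ and $2/3$, so $M=1$ and $\lambda^{(1)}=2/3$. Yet for $\s{J}=\{ab\}$ one gets $\mu(\Gamma\vert\s{J})=\eta(\Gamma\vert\s{J})=1$. Thus $\Lambda(\Gamma)=\{2/3,1\}$, $T=2\neq M$, and even $\lambda_1=1\neq\mu(\Gamma)$. Any star $K_{1,m}$ behaves the same way: there is one layer, of value $m/(m+1)$, but every nonempty proper substar used as a seed gives value $1$.

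The paper's proof fails at the same spot. Its monotonicity lemma ($\s{J}\subseteq\s{J}'\Rightarrow\mu(\Gamma\vert\s{J})\ge\mu(\Gamma\vert\s{J}')$) treats the change of seed as merely enlarging the feasible set, overlooking that the objective $\eta(\cdot\vert\s{J})$ changes as well. Its reduction ``the minimal $\ell$ with $\s{J}\subseteq\Gamma^{(\ell)}$ satisfies $\Gamma^{(\ell-1)}\subseteq\s{J}$'' is also false; take $\s{J}$ to be the pendant edge of the kite in Example~\ref{examp:kite}.

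\textbf{What is correct.} Everything else you wrote is sound, and it gives what the paper actually uses downstream. Step~1 is the paper's strict-decrease argument. Your lower bound in Step~2 handles an arbitrary seed through $\s{J}\cup\Gamma^{(\ell)}$ and the first layer not contained in $\s{J}$, so it is more careful than the paper's version, which only treats seeds sandwiched between consecutive layers. It yields $\mu(\Gamma\vert\s{J})\ge\lambda^{(\ell)}\ge\lambda^{(M)}$ for every proper $\s{J}$, with equality at $\s{J}=\Gamma^{(M-1)}$.

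That is precisely $\mu_{\s{min}}(\Gamma)=\eta(\Gamma^{(M)}\vert\Gamma^{(M-1)})$, with $\s{S}=\Gamma$ excluded from the minimum as you do. Combined with Step~1, it also gives $\eta(\Gamma^{(\ell)}\vert\Gamma^{(\ell-1)})\ge\mu_{\s{min}}(\Gamma)$ for all $\ell$, which is all that Theorems~\ref{thm:lowerp1} and~\ref{thm:MLE} need.

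\textbf{A correct replacement for the false part.} The same two ingredients give a non-sequential characterization that is true. If $|\s{S}|\le|\Gamma^{(\ell-1)}|$, then $\Gamma^{(\ell)}\not\subseteq\s{S}$, so
\[
\lambda^{(\ell)}=\min\{\mu(\Gamma\vert\s{S}):|\s{S}|\le|\Gamma^{(\ell-1)}|\}.
\]
In other words, the layer values are the distinct values of the edge-budgeted function $\phi(\alpha)$ from the paper's alternative proof, not of the unconstrained map $\s{J}\mapsto\mu(\Gamma\vert\s{J})$. You should drop the claim $T=M$ and state the result in this form.
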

Thus, for any fixed layer index $\ell\in[M(\Gamma)]$, we have
\begin{align}
  \eta \left(\Gamma^{(\ell)}\vert\Gamma^{(\ell-1)}\right)
  =\s{the}\;\ell^{\s{th}}\;\s{largest}\;\s{distinct}\;\s{value}\;\s{of}\; \{\mu(\Gamma\vert\s{J}):\s{J}\subseteq\Gamma\},
\end{align}
and this value can be defined without explicit reference to the onion construction. At the extremes:
\begin{itemize}
    \item For $\ell=1$ we have $\lambda_1=\mu(\Gamma\vert\emptyset)=\max_{\s{F}\subseteq\Gamma}\eta(\s{F}\vert\emptyset)$, which is the classical maximum subgraph density $\mu(\Gamma)$.
    \item For $\ell=M$ we have $\lambda_M=\min_{\s{J}\subseteq\Gamma}\mu(\Gamma\vert\s{J})$, which is the minimal maximum subgraph density $\mu_{\s{min}}(\Gamma)$.
\end{itemize}
Throughout this paper, we sometimes suppress the explicit dependence of various parameters on the index $n$. For example, we write the sequence of planted graphs as $\Gamma = (\Gamma_n)_n$, the sequences of edge probabilities as $p = (p_n)_n$ and $q = (q_n)_n$, and so on.

\section{Main Results}\label{sec:mainresults}

In this section, we present our main results. We begin by analyzing the statistical limits of the recovery problem, setting computational considerations aside, and identify a sharp threshold at which the optimal recovery error probability undergoes a phase transition from $0$ to $1$ as $n\to\infty$. We then address recovery under polynomial-time constraints by proposing a general, computationally efficient procedure and providing statistical guarantees on its performance. Next, we establish several computational lower bounds, showing that polynomial-time algorithms can incur an inherent gap relative to the optimal (information-theoretic) solution. Finally, we consider extensions of the vanilla planted model, including recovery under semi-random (monotone-adversary) perturbations as well as other weaker notions of recovery.

\subsection{Statistical limits}

In this subsection, we establish the statistical limits of the problem, beginning with information-theoretic lower bounds and continuing with matching upper bounds.

\paragraph{Lower bound.} We start with the following general lower bound, which holds for an arbitrary planted subgraph $\Gamma$.
\begin{theorem}[Statistical lower bound]\label{thm:lowerp1}
Fix a sequence of subgraphs $\Gamma=(\Gamma_n)_n$, and assume $p_n,q_n=\Theta(1)$. %Let $\{\calD^{\ell}\}_{\ell=0}^M$ be the reminder decomposition of $\Gamma$ in Definition~\ref{def:onionDec}. Then, exact recovery is impossible if,
Exact recovery is statistically impossible if
\begin{align} 
        \mu_{\s{min}}(\Gamma_n)\leq \frac{(1-\varepsilon)\cdot \log n}{d_{\s{KL}}(p||q)},\label{eq:condDenseStat}
\end{align}
for any $\varepsilon>0$.
\end{theorem}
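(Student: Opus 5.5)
The plan is a genie-aided reduction followed by Fano's inequality, as the introduction indicates. Let $\s{S}^\star\subseteq\Gamma_n$ be the minimizer in \eqref{eqn:minimaxSubDen}; by Lemma~\ref{lem:equivminimalMSD} we may take $\s{S}^\star=\Gamma^{(M-1)}$. Then every $\s{F}$ with $\s{S}^\star\subsetneq\s{F}\subseteq\Gamma_n$ satisfies $\eta(\s{F}\vert\s{S}^\star)\le\mu_{\s{min}}(\Gamma_n)$. A genie reveals to the statistician the location of the copy of $\s{S}^\star$ inside $\Gamma^\star_n$. This can only lower the error probability, so it suffices to show that recovery still fails. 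Moreover, by the remark on the equivalence of the Bayesian and worst-case criteria, we may work with a uniform prior.

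With the genie in place, the remaining uncertainty concerns the last layer $\calD^{(M)}$, i.e. the graph-cut $\Gamma_n\vert\s{S}^\star$. Its $k\triangleq|v(\Gamma_n)\setminus v(\s{S}^\star)|$ new vertices must be placed among the $n-|v(\s{S}^\star)|$ unused vertices of $\calK_n$. I would not run Fano over all such placements. Instead I would build a local packing. Pick a single vertex $u$ among the new ones (or a small set of them), and consider the alternatives obtained by swapping $u$ with an outside vertex $w$, ranging over a set $W$ of size roughly $n-|v(\Gamma_n)|$. In the dense regime this is $n^{1-o(1)}$ whenever $|v(\Gamma_n)|\le(1-\delta)n$; the case where $\Gamma_n$ spans almost all of $[n]$ needs a separate, easier treatment. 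Two such hypotheses differ only on edges incident to $u$ and $w$. The KL divergence between the induced distributions is at most $2\deg_{\Gamma_n\vert\s{S}^\star}(u)\,(d_{\s{KL}}(p\|q)+d_{\s{KL}}(q\|p))$. A cleaner choice is to measure divergence against a common reference in which $u$'s edges are all $\s{Bern}(q)$, which gives mutual information at most $\deg(u)\,d_{\s{KL}}(p\|q)$.

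This is where $\mu_{\s{min}}$ enters, and it is the main obstacle: we need a vertex $u$ outside $v(\s{S}^\star)$ whose degree into the rest of the cut is at most $\mu_{\s{min}}$. The tempting choice $\s{F}=\Gamma_n$ only bounds the average degree, and the average degree counts edges inside the new set once while $u$'s degree counts them as well, so the two notions differ. I would try the following. Let $\s{F}'$ be $\Gamma_n$ with $u$ and its incident cut edges removed. By maximality of the onion layer, $\eta(\Gamma_n\vert\s{F}')\le\eta(\Gamma_n\vert \s{S}^\star)$ must hold for some $u$; otherwise removing a vertex increases density, contradicting $\Gamma_n=\Gamma^{(M)}$ being the unique maximal maximizer. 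Equivalently, $\deg(u)\le\mu_{\s{min}}$, using $\min_{\s{J}}\mu(\Gamma\vert\s{J})$ applied with $\s{J}=\s{F}'$: since $\mu(\Gamma_n\vert\s{F}')\ge\mu_{\s{min}}$, I actually need the reverse direction. So I would instead pick $\s{J}=\s{F}'$ to be the genie-revealed graph itself, i.e. reveal everything except one vertex $u$, and choose $u$ via Lemma~\ref{lem:equivminimalMSD}. Then there is some $\s{J}$ with $\mu(\Gamma_n\vert\s{J})=\mu_{\s{min}}$, and among its complement's vertices the single-vertex extension has density $\deg(u)\le\mu_{\s{min}}$ for the minimum-degree new vertex. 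If needed, I would reveal $\s{J}$ together with all but one new vertex, losing only this minimum-degree argument.

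Once that is in place, Fano gives an error probability of at least
\[
1-\frac{\mu_{\s{min}}\,d_{\s{KL}}(p\|q)+\log 2}{\log|W|}\ge 1-\frac{(1-\varepsilon)\log n+\log 2}{(1-o(1))\log n},
\]
which stays bounded away from zero. To push this to a sharp constant I would use many disjoint swap pairs simultaneously or a second-moment count of alternative positions. This suffices for impossibility as defined, i.e. $\liminf\s{E}_n^\star>0$.
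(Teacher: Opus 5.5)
\textbf{Gap: the key step fails.} Your argument rests on the claim that some new vertex $u\in v(\Gamma_n)\setminus v(\s{S}^\star)$ satisfies $\deg_{\Gamma_n}(u)\le\mu_{\s{min}}(\Gamma_n)$. This is false, and the inequality in fact goes the other way.

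To see this, suppose $u$ has no neighbour of degree one, and let $\s{J}$ be $\Gamma_n$ with the edges at $u$ deleted. Every $\s{F}$ with $\s{J}\subsetneq\s{F}\subseteq\Gamma_n$ adds only the vertex $u$, so $\mu(\Gamma_n\vert\s{J})=\deg_{\Gamma_n}(u)$. Lemma~\ref{lem:equivminimalMSD} then gives $\deg_{\Gamma_n}(u)\ge\mu_{\s{min}}(\Gamma_n)$.

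Quantitatively, write $e_{NN}$ for the last-layer edges among new vertices and $e_{NO}$ for those from new to old vertices. Then $\mu_{\s{min}}k'=e_{NN}+e_{NO}$, while $\sum_u\deg(u)=2e_{NN}+e_{NO}$. A single-vertex swap therefore pays for every internal edge of $\calD^{(M)}$ twice.

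For the clique $\calK_k$ (where $M=1$ and $\s{S}^\star=\emptyset$), every vertex has degree $k-1=2\mu_{\s{min}}$. In your final display the numerator should be $\deg(u)\,d_{\s{KL}}(p\|q)$, not $\mu_{\s{min}}d_{\s{KL}}(p\|q)$. With that correction, your bound rules out only $k\le 1+(1-\varepsilon)\log n/d_{\s{KL}}(p\|q)$, whereas the theorem asserts impossibility up to $k\le 1+2(1-\varepsilon)\log n/d_{\s{KL}}(p\|q)$.

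The source of the error is that you conflate two different degrees. The degree of $u$ into $v(\s{S}^\star)$ is indeed at most $\mu_{\s{min}}$, since it is the density of a single-vertex extension of $\s{S}^\star$. But once the genie also reveals the other new vertices, what matters is the full degree of $u$. So this is not a matter of ``sharpening the constant''. Local swaps yield a threshold governed by the minimum degree, which is a different and weaker quantity, and they miss the stated bound by up to a factor of $2$.

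\textbf{The missing idea.} You need to keep the entire last layer uncertain. This is exactly the step you chose to avoid, and it is what the paper does. Conditionally on $\Gamma^{(M-1),\star}=\gamma^{(M-1)}$, apply Fano over all of $\calM(\gamma^{(M-1)},\Gamma)$.

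Because each unrevealed pair is planted with marginal probability $o(1)$, the conditional mutual information is at most $|e(\calD^{(M)})|\,d_{\s{KL}}(p\|q)(1+o(1))=\mu_{\s{min}}k'\,d_{\s{KL}}(p\|q)(1+o(1))$. Here each cut edge is counted once. This is compared with $\log|\calM(\gamma^{(M-1)},\Gamma)|\ge\log\binom{n'}{k'}\ge k'\log(n'/k')$. The resulting ratio $\mu_{\s{min}}d_{\s{KL}}(p\|q)/\log(n'/k')$ is where $\mu_{\s{min}}$ enters with constant one.

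Your closing remark about ``many disjoint swaps or a second-moment count'' would have to be developed into this kind of relocation of the whole layer; as written it does not supply the argument.

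One caveat applies to that route as well: it needs $\log(n'/k')=(1-o(1))\log n$. This holds when the last layer has $k'=n^{o(1)}$ new vertices, but it does not follow from $k'=o(n)$ alone.
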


To better understand Theorem~\ref{thm:lowerp1}, we now consider a few examples.
\begin{example}[Planted clique]
    Consider the case where $\Gamma = \calK_k$ is a clique with $k = |v(\calK_k)|$ vertices. A clique is balanced (i.e., $\mu(\calK_k) = \eta(\calK_k) = \frac{k-1}{2}$), and thus $\mu_{\s{min}}(\calK_k) = \frac{k-1}{2}$. Therefore, Theorem~\ref{thm:lowerp1} tells us that exact recovery is statistically impossible if $k \leq (1 - \varepsilon)\frac{2\log n}{d_{\s{KL}}(p||q)}$, which is consistent with folklore results (e.g., \cite{jerrum1992large}).
\end{example}

\begin{example}[Union of disjoint cliques]
Consider the case where $\Gamma$ is a union of $L$ disjoint cliques with sizes $k_1,\dots,k_L$. Define $k_{\min}\triangleq \min_{i\in[L]} k_i$. A straightforward calculation reveals that $\mu_{\s{min}}(\Gamma) = \frac{k_{\min}-1}{2}$, and so Theorem~\ref{thm:lowerp1} implies that exact recovery is statistically impossible if $k_{\min} \leq (1 - \varepsilon)\frac{2\log n}{d_{\s{KL}}(p||q)}$.
\end{example}

\begin{example}[A kite]\label{examp:kite}
    Consider the case where $\Gamma$ is a kite on $k+1$ vertices, namely, a clique with $k = |v(\calK_k)|$ vertices, with one of its vertices connected by an edge to an additional vertex. It is not difficult to see that, in this case, the onion decomposition layers are $\calD^{(1)} = \calK_k$ and $\calD^{(2)} = \{\cdot, k+1\}$, which is a single edge. Thus, $\mu_{\s{min}}(\calK_k) = \frac{1}{2}$. Therefore, Theorem~\ref{thm:lowerp1} tells us that exact recovery of a kite is statistically impossible. This aligns with intuition: the noisy random background can alter (or ``tweak'') the single edge with positive probability. 
\end{example}

Example~\ref{examp:kite} illustrates that if $\Gamma$ contains a dense core together with a very sparse appendage, then the latter can preclude exact recovery. In such cases, it is reasonable to consider alternative recovery criteria (such as weak recovery), which allow for a nonzero fraction of errors. We discuss such criteria in Subsection~\ref{subsec:exten}. Next, it is instructive to compare and contrast the above results with the detection (hypothesis-testing) variant, given by
\begin{align}
\calH_0: \s{G} \sim \calG(n,q) \quad \s{vs.} \quad \calH_1: \s{G} \sim \calG_{\Gamma_n}(n,p,q).\label{eqn:super_hypo}   
\end{align}
Here, the task is to decide whether or not a copy of $\Gamma$ was planted in $\s{G}$. Specifically, \cite[Thm.~1]{elimelech2025detecting} shows that 
\begin{itemize}
    \item If $\Gamma_n$ has sub-logarithmic density, i.e., $\mu(\Gamma_n)=o(\log|v(\Gamma_n)|)$, then detection is statistically impossible provided that
    \begin{align}
        |e(\Gamma_n)|\vee d_{\max}^2(\Gamma_n)\ll n.\label{eqn:Dec1IMP}
    \end{align}
    \item If $\Gamma_n$ has super-logarithmic density, i.e., $\mu(\Gamma_n)=\Omega(\log|v(\Gamma_n)|)$, then detection is statistically impossible provided that
    \begin{align}
        \mu(\Gamma_n) \leq \underline{C} \cdot \log n,\label{eqn:Dec2IMP}
    \end{align}
    for some $\underline{C}>0$.
\end{itemize}
It is further shown in the same paper that these bounds are asymptotically tight, namely, there exist algorithms that achieve them. Comparing the above with Theorem~\ref{thm:lowerp1}, we see that the statistical barriers for detection and recovery hinge on different graph-theoretic quantities, revealing a distinct \emph{detection--recovery gap}. Specifically, in the sub-logarithmic density regime, which includes graphs such as paths, stars, trees, and related structures, detection can be statistically possible, in stark contrast to exact recovery, which is always statistically impossible in this regime. In fact, this impossibility extends beyond subgraphs with sub-logarithmic density to (arguably) all subgraphs with sub-logarithmic minimal maximum subgraph density $\mu_{\s{min}}$. Thus, a clear \emph{detection--recovery gap} emerges in this setting. By contrast, the super-logarithmic density regime includes sufficiently dense subgraphs, such as cliques and bipartite graphs, where detection and recovery thresholds may coincide when the planted structure is ``sufficiently nice'' (e.g., when $\Gamma$ is a clique). However, recovery may still be statistically impossible even when detection is feasible, mirroring the behavior in the sub-logarithmic regime, as illustrated by the kite example. 

Finally, for the special case where $\Gamma$ is balanced \cite{bollobas_1981} (e.g., a clique), we can establish tight information-theoretic lower bounds using a different technique, namely Bayes risk analysis, which relates exact recovery to a variant of the detection problem. The details are provided in Appendix~\ref{app:secProofofITLOWEr}.\footnote{As noted in \cite[pg.~7]{wein2025computational}, even in the special case of a planted clique, the ``$2\log_2 n$'' information-theoretic threshold had been established for detection but not for exact recovery, although the latter was expected to hold. In this work, we confirm this expectation by providing a lecture-style proof that establishes $2\log_2 n$ as the sharp threshold for exact recovery as well, and in fact we generalize this result to arbitrary planted subgraphs.} We now move to our statistical upper bounds.

\paragraph{Upper bound.} It is relatively straightforward to show that the optimal maximum-likelihood estimator (MLE) of $\Gamma^\star$ is given by (see Appendix~\ref{app:0} for details)
\begin{align}
\hat{\Gamma}_{\s{MLE}}(\s{G}) = \arg\max_{\Gamma'\in\calS_{\Gamma}}\abs{e(\Gamma')\cap\s{G}}.\label{eqn:MLEGen}
\end{align}
Interestingly, it turns out to be both simpler and more intuitive to analyze the following recursive peeling MLE procedure. Specifically, let $\{\Gamma^{(\ell)}\}_{\ell\geq1}$ denote the onion decomposition of $\Gamma$ in Definition~\ref{def:onionDec}, and recall that $\calM(\s{H},\Gamma)$ denotes the set of copies of $\Gamma$ in $\calK_n$ which contain $\s{H}$. At step $\ell \in [M]$, the recovery algorithm produces the MLE estimate for $\Gamma^{(\ell)}$ as follows:
\begin{align}
\hat{\Gamma}^{(\ell)}_{\s{MLE}} = \arg\max_{\calD \in \calM(\hat{\Gamma}^{(\ell-1)}_{\s{MLE}},\Gamma^{(\ell)})} \abs{e(\calD) \cap \left(\s{G}\setminus\hat{\Gamma}^{(\ell-1)}_{\s{MLE}}\right)}.\label{eqn:PeelMLE0}
\end{align}
That is, the estimator searches for a copy of $\Gamma^{(\ell)}$, restricted to those copies that extend the previously estimated layer $\hat{\Gamma}^{(\ell-1)}_{\s{MLE}}$, within the observed graph $\s{G}$, after removing the previously estimated layer $\hat{\Gamma}^{(\ell-1)}_{\s{MLE}}$. We define $\hat{\Gamma}^{(0)}_{\s{MLE}}=\emptyset$. At the final step, the algorithm outputs the complete estimate
\begin{align}
    \hat{\Gamma}_{\s{pMLE}} = \hat{\Gamma}^{(M)}_{\s{MLE}}.\label{eqn:PeelMLE}
\end{align}
The following result complements the information-theoretic lower bound in Theorem~\ref{thm:lowerp1}.
\begin{theorem}[Optimal algorithm]\label{thm:MLE}
Fix a sequence of subgraphs $\Gamma=(\Gamma_n)_n$, and assume $p_n,q_n=\Theta(1)$. 
%Let $\{\calD^{\ell}\}_{\ell=0}^M$ be the reminder decomposition of $\Gamma$ in Definition~\ref{def:onionDec}. 
Exact recovery of $\Gamma^\star$, via the likelihood peeling algorithm in \eqref{eqn:PeelMLE}, is possible if
\begin{align}
    \mu_{\s{min}}(\Gamma_n)\geq \s{C}\frac{(1+\varepsilon)\cdot \log n}{d_{\s{KL}}(p||q)},\label{eqn:MLEcond}
\end{align}
for any $\varepsilon>0$ and some constant $\s{C}>0$. %then, the maximum likelihood estimator satisfies $\pr_{\calG_{\Gamma_n}}[\hat{\Gamma}_{\s{MLE}}(\s{G})=\Gamma^\star]\to1$, as $n\to\infty$.
\end{theorem}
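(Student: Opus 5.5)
We analyze the peeling estimator \eqref{eqn:PeelMLE} layer by layer and show, by induction on $\ell\in[M]$, that $\hat\Gamma^{(\ell)}_{\s{MLE}}=\Gamma^{\star(\ell)}$ with probability at least $1-o(1/M)$ conditionally on $\hat\Gamma^{(\ell-1)}_{\s{MLE}}=\Gamma^{\star(\ell-1)}$. Here $\Gamma^{\star(\ell)}$ denotes the layer of the planted copy. A union bound over the at most $M\le|\Gamma|\le n^2$ layers then gives exact recovery. Alternatively, we absorb the layer count into a single union bound over all competing copies indexed by their overlap pattern, which only costs a constant in $\s{C}$.

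\textbf{Reducing to a difference of binomial sums.} Fix $\ell$ and condition on the correct previous layer $\s{S}\triangleq\Gamma^{\star(\ell-1)}$. Any competitor $\calD\in\calM(\s{S},\Gamma^{(\ell)})$ with $\calD\neq\Gamma^{\star(\ell)}$ beats the truth only if
\begin{align*}
|e(\calD\setminus\Gamma^{\star(\ell)})\cap\s{G}|\ge|e(\Gamma^{\star(\ell)}\setminus\calD)\cap\s{G}|.
\end{align*}
The edges on the right are $\s{Bern}(p)$, since they lie in the planted layer, and they are outside $\s{S}$. The edges on the left are outside $\Gamma^{\star(\ell)}$. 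If they lie in later planted layers they are $\s{Bern}(p)$ as well, but then we can simply dominate, or note that $\calD$ only intersects $\Gamma^\star$ through the planted edges. To handle this cleanly we restrict to the case where the left edges are $\s{Bern}(q)$ or bounded by $\s{Bern}(p)$ stochastically.

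Let $m=|\calD\setminus\Gamma^{\star(\ell)}|=|\Gamma^{\star(\ell)}\setminus\calD|$, which holds because both are copies of $\Gamma^{(\ell)}$ containing $\s{S}$. A Chernoff bound gives a failure probability of at most $\exp(-m\,c(p,q))$, where the optimal exponent relates to $d_{\s{KL}}(p\|q)$ up to a constant. This constant is absorbed into $\s{C}$; the theorem allows a generic $\s{C}$, so we do not need the sharp Rényi-type exponent.

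\textbf{Counting competitors.} Group competitors by the graph-cut $\s{F}\triangleq\calD\cap\Gamma^{\star(\ell)}\supseteq\s{S}$ they share with the truth, and let $k$ be the number of vertices of $\calD$ not in $v(\s{F})$. The number of such competitors is at most $n^{k}\cdot|v(\Gamma)|^{O(k)}\le n^{2k}$, obtained by choosing the new vertex locations and the embedding. The key combinatorial claim is
\begin{align*}
m=|\Gamma^{(\ell)}|-|\s{F}|\ge k\cdot\mu_{\s{min}}(\Gamma).
\end{align*}
To prove it, note that the missing part $\Gamma^{\star(\ell)}\setminus\s{F}$ is the image of a graph-cut relative to $\s{F}$ on at least $k$ new vertices. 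By maximality and the uniqueness in Lemma~\ref{lem:onionUnique}, every extension of $\s{F}$ within $\Gamma^{(\ell)}$ has relative density at least $\eta(\Gamma^{(\ell)}\vert\Gamma^{(\ell-1)})\ge\lambda_M=\mu_{\s{min}}$, using Lemma~\ref{lem:equivminimalMSD}. Concretely, if $\eta(\Gamma^{(\ell)}\vert\s{F})<\lambda_\ell$ then $\eta(\s{F}\vert\s{S})>\lambda_\ell$, which contradicts the maximality of layer $\ell$. Combining the count with the Chernoff bound, the sum over $k\ge1$ of
\begin{align*}
n^{2k}\exp\bigl(-c\,k\,\mu_{\s{min}}\,d_{\s{KL}}\bigr)
\end{align*}
is $o(1/n^2)$ under \eqref{eqn:MLEcond} with $\s{C}$ large enough.

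\textbf{Main obstacle.} The delicate step is the combinatorial inequality $m\ge k\mu_{\s{min}}$, which requires care about vertices of $\calD$ that are reused in $v(\s{F})$ but carry different edges. Handling these needs the "relative density of every sub-extension is at least the layer density" property of onion layers, which we would prove from the supermodularity of edge counts. A second technical point is the dependence created by overlaps with later planted layers. We will address it by conditioning on everything outside $\Gamma^{\star(\ell)}\triangle\calD$ and using monotonicity, since $p>q$.
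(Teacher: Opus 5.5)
The statement is false as written, and your plan breaks at two specific steps, not at the one you flag.

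\textbf{The density inequality is not the obstacle.} Your bound $m\ge k\,\eta_\ell\ge k\,\mu_{\s{min}}$ holds. It is the paper's density comparison, and your $k$ equals the paper's $k_\ell-i$.

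\textbf{The counting step fails.} You charge a competitor only for vertices not covered by shared edges, sum from $k=1$, and bound the number of competitors by $n^{k}|v(\Gamma)|^{O(k)}$. This drops every $\calD\neq\Gamma^{\star(\ell)}$ with $k=0$. It also ignores the many distinct copies that can sit on the same vertex set. Such competitors can be cheap and numerous:
\begin{itemize}
\item Take $\Gamma_n=\calK_K$ minus a perfect matching, with $K=n^{0.1}$. It is $(K-2)$-regular, hence balanced, so $M=1$ and $\mu_{\s{min}}(\Gamma_n)=(K-2)/2\gg\log n$.
\item Re-pairing two matching pairs, $\{a,b\},\{c,d\}\mapsto\{a,c\},\{b,d\}$, gives a copy on the same vertex set with $k=0$ and only $m=2$ non-planted edges.
\item With $p=1$, every copy obtained by re-matching the endpoints of the roughly $qK/2$ matching pairs present in $\s{G}$ lies in $\s{G}$ and has the same likelihood as $\Gamma^\star$. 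The posterior is uniform over super-exponentially many copies, so every estimator fails.
\end{itemize}
The paper excludes $k=0$ via Lemma~\ref{lem:unique-full-overlap}. Its proof asserts that $\Gamma^{(\ell-1)}\cup\psi(\calD^{(\ell)})$ has more edges than $\Gamma^{(\ell)}$, which cannot happen since $\psi$ preserves edge counts; the example above contradicts the lemma's conclusion. The paper's count $\binom{k_\ell}{i}\binom{n-k_\ell}{k_\ell-i}$ counts vertex sets, not copies, which is the same blind spot as yours. A valid argument needs an extra hypothesis. Charge the edges of $\calD$ outside $\Gamma^\star$ to the vertices \emph{displaced} by a best-aligned embedding, not to uncovered vertices, and assume that charge is of order $\eta_\ell$ per displaced vertex. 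This holds, for example, for cliques.

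\textbf{Your fix for later layers goes the wrong way.} Edges of $\calD$ in $\Gamma^\star\setminus\Gamma^{\star(\ell)}$ are $\s{Bern}(p)$, which stochastically dominates $\s{Bern}(q)$. So monotonicity strengthens the competitor, and the $\s{Bern}(q)$ Chernoff bound is not an upper bound. For $p<1$ this breaks the peeling estimator itself:
\begin{itemize}
\item Take $\Gamma=\calK_k\sqcup\calK_{k-1}$. Its layers are $\calK_k$ then $\calK_{k-1}$, so $\mu_{\s{min}}=(k-2)/2$.
\item At step one, the clique on the planted $\calK_{k-1}$ plus one extra vertex trails the planted $\calK_k$ by $(p-q)(k-1)$ in mean.
\item The difference of the two independent scores has standard deviation about $\sqrt{p(1-p)}\,(k-1)$. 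By the CLT this competitor wins with probability bounded away from zero, although the global MLE succeeds here.
\end{itemize}
The paper's $p<1$ argument makes the same $\s{Bern}(q)$ assumption on $\calD_j\setminus\calD^{(\ell),\star}$. For $p=1$ this particular issue can be repaired by measuring overlap against all of $\Gamma^\star$. You then have $\eta(\calD\cap\Gamma^\star\vert\Gamma^{\star(\ell-1)})\le\mu(\Gamma\vert\Gamma^{(\ell-1)})=\eta_\ell$, because the onion maximum ranges over all extensions inside $\Gamma$, not only inside $\Gamma^{(\ell)}$.
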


Thus, Theorem~\ref{thm:MLE} matches the lower bound in Theorem~\ref{thm:lowerp1} up to a constant factor. For $p=1$, the proof of Theorem~\ref{thm:MLE} yields $\s{C}=4$, and for any $p<1$ we obtain $\s{C}=16$. We have not made a serious effort to optimize this constant, but we conjecture that $\s{C}=1$ for any $p\in[0,1]$.\footnote{As remarked in the proof, improving the constant would require splitting the various sums involved into different asymptotic regimes and analyzing each case separately.}

It is worth emphasizing that the proof of Theorem~\ref{thm:MLE} relies on an important property of the layers obtained in the onion decomposition of $\Gamma$, namely a uniqueness property of the planted copy on its full vertex set. Specifically, one of the steps in the proof analyzes $|v(\calD' \cap \calD^{(\ell),\star})|$, the intersection between an isomorphic copy $\calD'\neq\calD^{(\ell),\star}$ of the actual planted $\ell$th layer $\calD^{(\ell),\star}$. In principle, it could happen that $|v(\calD' \cap \calD^{(\ell),\star})| = |v(\calD^{(\ell),\star})|$ yet $\calD' \neq \calD^{(\ell),\star}$; for example, this can occur in the case of a kite. When this happens, exact recovery becomes impossible. Fortunately, we show that this cannot occur for the layers selected by the onion procedure when the parameters lie in the achievability regime.

\subsection{Computationally efficient algorithm}\label{sec:compConvex}

We now introduce a polynomial-time recovery algorithm. To describe the procedure, we first establish notation and record several definitions. The Hilbert--Schmidt inner product (also referred to as the Frobenius inner product) between two matrices $\s{A}$ and $\s{B}$ of identical dimensions is
\begin{align}
    \innerP{\s{A},\s{B}} \triangleq \s{Tr}(\s{A}\s{B}^\top)=\sum_{i,j}\s{A}_{i,j}\s{B}_{i,j}.
\end{align}
This inner product will be used throughout our analysis.

The formulation in \eqref{eqn:MLEGen} often serves as a starting point for constructing computationally tractable recovery algorithms, typically via relaxation. In our work, we employ semidefinite programming (SDP), a matrix-based generalization of linear programming, as one such relaxation. A standard form of an SDP is
\begin{equation}
\begin{aligned}
&\underset{\s{Z}\in\mathbb{R}^{n\times n}}{\max}
& & \innerP{\s{C},\s{Z}} \\
& \ \text{s.t.}
& &  \s{Z}\succeq 0\\
&&& \innerP{\s{B}_i,\s{Z}}\leq \beta_i\quad\forall i\in[m],
\end{aligned}\label{eqn:GenSDP}
\end{equation}
for a collection of matrices $\s{C},\{\s{B}_i\}_{i=1}^m$, constants $\{\beta_i\}_{i=1}^m$, and $m\in\mathbb{N}$. Here, $\s{Z} \succeq 0$ means that $\s{Z}$ is symmetric and positive semidefinite. SDPs are convex optimization problems and therefore admit efficient solution techniques, including interior-point methods and first-order algorithms; see, for instance, \cite{nesterov1987interior,boyd2004convex}. A typical use case is to approximate solutions to optimization problems with nonconvex constraints, such as integer programs, by relaxing them into a semidefinite form.

Recall that the nuclear norm $\norm{\s{B}}_{\star}$ of a matrix $\s{B}$ is defined as the $\ell_1$-norm of its singular value vector. When $\s{B}$ is symmetric, this equals the sum of the absolute values of its eigenvalues. Slightly abusing notation, we denote by $\norm{\s{G}}_\star$ the nuclear norm of the adjacency matrix associated with a graph $\s{G}$, interpreted as the adjacency matrix of the subgraph induced on $|v(\s{G})|$ vertices. Importantly, if $|v(\s{G})| \leq n$, then padding the adjacency matrix with zeros to embed $\s{G}$ into an $n$-vertex graph does not change its nuclear norm.

To facilitate the discussion, we associate the planted subgraph $\Gamma^\star$ with a clustered adjacency matrix $\s{X}^\star \in \{0,1\}^{n \times n}$, defined such that $\s{X}^\star_{ii} = 0$ for all $i \in [n]$, $\s{X}^\star_{ij} = 1$ if $(i,j) \in e(\Gamma)$, and $\s{X}^\star_{ij} = 0$ otherwise. Thus, $\s{X}^\star$ can be interpreted as the adjacency matrix of $\Gamma^\star$ embedded in a graph on $n$ vertices, where the additional $n-|v(\Gamma^\star)|$ vertices are isolated. Note that there is a one-to-one correspondence between each $\Gamma^\star$ and its matrix representation $\s{X}^\star$.

Our main result depends on spectral and structural properties of real-valued symmetric matrices. Let $\s{Q}$ be a generic real-valued symmetric matrix. Denote its singular value decomposition (SVD) by $\s{Q} = \s{U}_{\s{Q}} \Sigma \s{U}_{\s{Q}}^T$, where $\Sigma \triangleq \s{Diag}(\sigma_1, \ldots, \sigma_r) \in \mathbb{R}^{r \times r}$ contains the (non-negative) singular values $\{\sigma_i\}_{i=1}^r$. The columns of $\s{U}_{\s{Q}} = [u_1, \ldots, u_r] \in \mathbb{R}^{n \times r}$ are orthonormal and are referred to as the singular vectors. The quantity $r \triangleq \s{rank}(\s{Q})$ denotes the rank of $\s{Q}$.

In our setting, $\s{Q}$ will exhibit a block structure; for example,
\begin{align}
\s{Q} =
\begin{bmatrix}
\s{Q}_\calS & \mathbf{0} \\
\mathbf{0} & \mathbf{0}
\end{bmatrix},
\end{align}
where $\s{Q}_\calS$ denotes the restriction of $\s{Q}$ to $\calS\subseteq[n]$. Accordingly, $\mathsf{U}_{\s{Q}}$ is supported on $\calS$, in the sense that $[\mathsf{U}_{\s{Q}}]_{i,:} = 0$ for all $i \notin\calS$. Finally, we define the \emph{coherence parameter} of $\s{U}_{\s{Q}}$, which measures how ``spread out'' (or ``localized'') its rows are:
\begin{align}
    \s{coh}(\s{U}_{\s{Q}})\triangleq\frac{n}{\s{rank}(\s{Q})}\cdot\max_{i\in[n]}\norm{[\mathsf{U}_{\s{Q}}]_{i,:}}_2^2 = \frac{n}{\s{rank}(\s{Q})}\cdot\norm{[\mathsf{U}_{\s{Q}}]_{i,:}}_{2,\infty}^2.
\end{align}
We prove in Appendix~\ref{app:cohereBounds} that $\frac{n}{|\calS|}\leq \s{coh}(\s{U}_{\s{Q}})\leq \frac{n}{\s{rank}(\s{Q})}$.\footnote{Note that the lower bound is achieved when $\Gamma^\star$ is a clique; in this case $\s{X}^\star$ has rank one with $\s{U} = \frac{1}{\sqrt{|v(\Gamma^\star)|}}\mathbf{1}_{v(\Gamma^\star)}$, and hence $\s{coh}(\s{U}) = \frac{n}{|v(\Gamma^\star)|}$.} With a slight abuse of notation, we also write $\s{coh}(\s{Q}) = \s{coh}(\s{U}_{\s{Q}})$. In our presentation and analysis, it is convenient to work with the following transformation of $\s{A}$:
\begin{align}
\s{W}_{ij}\triangleq\begin{cases}
q^{-1}\s{A}_{ij}-1\ &i\neq j\\
0\ &i=j.
\end{cases}\label{eqn:centeredform}
\end{align}

To present the proposed algorithm and its performance guarantees, we introduce a few additional pieces of notation. Fix a hyper-parameter $\alpha\in\mathbb{R}_+$. For any symmetric matrix $\s{X}\in\mathbb{R}^{n\times n}$ with $\s{Diag}(\s{X})=\mathbf{0}$, and any vector $\mathbf{s}\in\mathbb{R}^n$, define
\begin{align}
    \s{Ext}(\s{X},\mathbf{s};\alpha)\triangleq \s{X}+\alpha\cdot\s{Diag}(\mathbf{s}).
\end{align}
Let $\mathbf{s}^\star = \s{supp}(v(\Gamma^\star))$ denote the support of the vertex set of the planted subgraph. At $(\s{X},\mathbf{s}) = (\s{X}^\star,\mathbf{s}^\star)$, the matrix $\s{Ext}(\s{X}^\star,\mathbf{s}^\star;\alpha)$ is therefore a diagonally shifted version of the adjacency matrix of $\Gamma^\star$. We denote the nuclear norm of the diagonally shifted version of an arbitrary planted subgraph $\Gamma\in\calS_\Gamma$ by
\begin{align}
    \norm{\s{Ext}(\s{X}^\star,\mathbf{s}^\star;\alpha)}_\star
    =\norm{\Gamma+\alpha\cdot\s{Diag}(v(\Gamma))}_\star.
\end{align}
This quantity is an a priori known constant, taking the same value for all $\Gamma \in \mathcal{S}_\Gamma$, and is independent of the particular latent planted subgraph.

We now consider the following convex optimization problem:
\begin{equation}
\begin{aligned}
\hat{\s{X}}_{\mathrm{con}}^{(\alpha)}=& \underset{\s{X}\in\mathbb{R}^{n\times n},\mathbf{s}\in[0,1]^n}{\arg\max}
& & \innerP{\s{W},\s{X}} \\
& \ \text{s.t.}
& & \;\mathbf{s}^\top\mathbf{1} = |v(\Gamma)|,\;\s{X}_{ij}\leq \min(\mathbf{s}_i,\mathbf{s}_j),\;\forall i\neq j\\
&&&  \norm{\s{Ext}(\s{X},\mathbf{s};\alpha)}_\star\leq\norm{\Gamma+\alpha\cdot\s{Diag}(v(\Gamma))}_\star\\
&&& \;\mathbf{0}\leq\s{X}\leq\mathbf{J},\;\s{X} = \s{X}^\top,\;\s{X}_{ii}=0,\;\forall i\in[n]\\
&&& \innerP{\Jb,\s{X}}=2|e(\Gamma)|,
\end{aligned}\label{eqn:convex}
\end{equation}
where the inequality $\mathbf{0}\leq\s{X}\leq\mathbf{J}$ is to be interpreted entrywise. In several special cases, such as when $\Gamma$ is a clique or a bipartite graph, the program in~\eqref{eqn:convex} can be viewed as a direct relaxation of the maximum-likelihood estimator. 

Before proceeding, we briefly interpret the variables and constraints in~\eqref{eqn:convex}. The vector $\mathbf{s}\in[0,1]^n$ serves as a relaxed selector for the vertex set of the planted subgraph: each coordinate $\mathbf{s}_i$ indicates whether vertex $i$ belongs to $v(\Gamma^\star)$. At the planted solution, $\mathbf{s}_i^\star=\Ind\ppp{i\in v(\Gamma^\star)}$. Similarly, $\s{X}\in\mathbb{R}^{n\times n}$ is a convex relaxation of the planted adjacency matrix $\s{X}^\star$. One might ask why the auxiliary variable $\mathbf{s}$ is needed, rather than encoding the vertex set directly via self-loops in $\s{X}$. The issue is that such an encoding can significantly weaken the discriminative power of the nuclear norm. For instance, certain subgraphs, such as complete bipartite graphs, become full-rank under this representation, rendering the nuclear-norm prior ineffective. Introducing $\mathbf{s}$ allows us to control the number of selected vertices through the linear constraint $\sum_{i\in[n]}\mathbf{s}_i = |v(\Gamma)|$, while applying the low-rank prior to the shifted matrix $\s{Ext}(\s{X},\mathbf{s};\alpha)$. 

In addition, we impose the edge--vertex coupling constraints $\s{X}_{ij}\leq\min(\mathbf{s}_i,\mathbf{s}_j)$ for all $i\neq j$, ensuring that an edge can exist only if both endpoints are selected. These are convex (McCormick envelope-type) relaxations of the nonconvex constraint $\s{X}_{ij}\leq \mathbf{s}_i\mathbf{s}_j$, enforcing that $\s{X}$ is supported on the selected vertex set. The remaining constraints ensure that $\s{X}$ encodes a relaxed simple-graph adjacency matrix: it is symmetric, has zero diagonal, and has fixed off-diagonal mass equal to $2|e(\Gamma)|$, corresponding to the number of edges in the planted subgraph. Finally, we constrain the nuclear norm of the shifted matrix $\s{Ext}(\s{X},\mathbf{s};\alpha)$ to be no larger than that of the ground-truth planted subgraph. This promotes low-rank structure in the solution. By choosing $\alpha$ appropriately, we can avoid rank inflation caused by forced self-loops. For example, setting $\alpha=1$ for cliques ensures that $\Gamma+\alpha\cdot\s{Diag}(v(\Gamma))$ has rank one, while setting $\alpha=0$ for complete bipartite graphs ensures it has rank two.

The program~\eqref{eqn:convex} is convex. The objective $\innerP{\s{W},\s{X}}$ is linear in $\s{X}$, and all constraints on $\s{X}$ and $\mathbf{s}$ are linear, except for the nuclear-norm constraint. The latter defines a convex set because $\|\cdot\|_\star$ is a convex norm and $\s{Ext}(\cdot,\cdot;\alpha)$ is affine. Moreover, the nuclear-norm constraint admits a standard semidefinite representation, allowing~\eqref{eqn:convex} to be formulated as a semidefinite program (SDP). In particular, the following classical result applies.
\begin{lemma}[{\cite[Lemma 2]{fazel2002matrix}}]
 Fix $t\geq0$. Let $\s{X} \in \mathbb{R}^{m\times n}$. Then, $\norm{\s{X}}_\star \leq t$ if and only if there exist $\s{P}_1 \in \mathbb{R}^{m \times m}$ and $\s{P}_2 \in \mathbb{R}^{n \times n}$ such that,
    \begin{align}
\begin{bmatrix}
\s{P}_1 & \s{X} \\
\s{X}^\top & \s{P}_2
\end{bmatrix} \succeq 0 \quad \s{and} \quad \s{Tr}(\s{P}_1) + \s{Tr}(\s{P}_2) \leq 2t.\label{eqn:ccooc}
\end{align}
\end{lemma}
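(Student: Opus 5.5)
The plan is to prove both directions directly from the compact singular value decomposition. Write $\s{X}=\s{U}\Sigma\s{V}^\top$ with $\s{U}\in\R^{m\times r}$ and $\s{V}\in\R^{n\times r}$ having orthonormal columns, $\Sigma=\s{Diag}(\sigma_1,\dots,\sigma_r)$ with $\sigma_i>0$, and $r=\s{rank}(\s{X})$. Then $\norm{\s{X}}_\star=\s{Tr}(\Sigma)$. If $\s{X}=0$, both directions are trivial: take $\s{P}_1=0$ and $\s{P}_2=0$ for one, and note $\norm{\s{X}}_\star=0\le t$ for the other. So we may assume $r\ge 1$.

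\emph{($\Rightarrow$)} Suppose $\norm{\s{X}}_\star\le t$. Set $\s{P}_1\triangleq\s{U}\Sigma\s{U}^\top$ and $\s{P}_2\triangleq\s{V}\Sigma\s{V}^\top$. The block matrix in \eqref{eqn:ccooc} then factors as
\begin{align*}
\begin{bmatrix}\s{U}\\ \s{V}\end{bmatrix}\Sigma\begin{bmatrix}\s{U}\\ \s{V}\end{bmatrix}^\top,
\end{align*}
which is positive semidefinite because $\Sigma\succeq 0$. Moreover, by orthonormality of the columns,
\begin{align*}
\s{Tr}(\s{P}_1)+\s{Tr}(\s{P}_2)=2\s{Tr}(\Sigma)=2\norm{\s{X}}_\star\le 2t.
\end{align*}

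\emph{($\Leftarrow$)} Suppose the block matrix $\s{M}$ in \eqref{eqn:ccooc} is positive semidefinite with $\s{Tr}(\s{P}_1)+\s{Tr}(\s{P}_2)\le 2t$. The key step is to test $\s{M}$ against the positive semidefinite matrix
\begin{align*}
\s{N}\triangleq\begin{bmatrix}\s{U}\\ -\s{V}\end{bmatrix}\begin{bmatrix}\s{U}\\ -\s{V}\end{bmatrix}^\top
=\begin{bmatrix}\s{U}\s{U}^\top & -\s{U}\s{V}^\top\\ -\s{V}\s{U}^\top & \s{V}\s{V}^\top\end{bmatrix}.
\end{align*}
Since the Hilbert--Schmidt inner product of two positive semidefinite matrices is nonnegative, $\innerP{\s{M},\s{N}}\ge 0$. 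Expanding this gives
\begin{align*}
\s{Tr}(\s{P}_1\s{U}\s{U}^\top)+\s{Tr}(\s{P}_2\s{V}\s{V}^\top)-2\,\s{Tr}(\s{U}^\top\s{X}\s{V})\ge 0.
\end{align*}
Two observations finish the argument:
\begin{itemize}
\item We have $\s{U}^\top\s{X}\s{V}=\Sigma$, so the cross term equals $2\norm{\s{X}}_\star$.
\item The diagonal blocks $\s{P}_1$ and $\s{P}_2$ are principal submatrices of $\s{M}$, hence positive semidefinite. Since $\s{U}\s{U}^\top\preceq\s{I}$ and $\s{V}\s{V}^\top\preceq\s{I}$, we get $\s{Tr}(\s{P}_1\s{U}\s{U}^\top)\le\s{Tr}(\s{P}_1)$ and $\s{Tr}(\s{P}_2\s{V}\s{V}^\top)\le\s{Tr}(\s{P}_2)$.
\end{itemize}
Combining these, $2\norm{\s{X}}_\star\le\s{Tr}(\s{P}_1)+\s{Tr}(\s{P}_2)\le 2t$.

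The only step requiring an idea is the choice of the test matrix $\s{N}$ in the converse. Everything else is bookkeeping. The points to verify carefully are the sign of the off-diagonal blocks, so that the cross term comes out as $-2\s{Tr}(\s{U}^\top\s{X}\s{V})$, and the use of the trace inequality $\s{Tr}(\s{P}\s{\Pi})\le\s{Tr}(\s{P})$ for $\s{P}\succeq0$ and an orthogonal projector $\s{\Pi}$.
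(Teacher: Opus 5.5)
Your proof is correct in both directions. The paper gives no argument for this lemma; it simply imports it from \cite{fazel2002matrix}. The usual proof of this fact goes through SDP duality. One writes $\norm{\s{X}}_\star=\max\{\innerP{\s{W},\s{X}}:\norm{\s{W}}_{\s{op}}\le 1\}$ and encodes $\norm{\s{W}}_{\s{op}}\le 1$ as the LMI $\begin{bmatrix}\s{I}&\s{W}\\ \s{W}^\top&\s{I}\end{bmatrix}\succeq 0$. The Lagrange dual of this SDP is exactly $\min\frac12\bigl(\s{Tr}(\s{P}_1)+\s{Tr}(\s{P}_2)\bigr)$ subject to the block LMI of the lemma. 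The equivalence then follows from strong duality with dual attainment, since Slater's condition holds at $\s{W}=\mathbf{0}$.

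Your argument is the explicit-certificate version of the same duality. Your converse is weak duality evaluated at the single dual point $\s{W}=\s{U}\s{V}^\top$. Indeed, $\begin{bmatrix}\s{I}&-\s{U}\s{V}^\top\\ -\s{V}\s{U}^\top&\s{I}\end{bmatrix}=\s{N}+\begin{bmatrix}\s{I}-\s{U}\s{U}^\top&\mathbf{0}\\ \mathbf{0}&\s{I}-\s{V}\s{V}^\top\end{bmatrix}$, and the left-hand side is itself PSD by a Schur complement. Testing against it would give $2\norm{\s{X}}_\star\le\s{Tr}(\s{P}_1)+\s{Tr}(\s{P}_2)$ in one step. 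Testing against $\s{N}$ instead costs you the extra trace-against-a-projector inequality, which you correctly supply. Your forward direction exhibits the optimizer $(\s{U}\Sigma\s{U}^\top,\s{V}\Sigma\s{V}^\top)$ in closed form, so you never need strong duality or attainment.

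The duality route explains where the LMI comes from and transfers to other norm/LMI pairs. Yours is shorter and uses nothing beyond the SVD and the nonnegativity of $\innerP{\s{M},\s{N}}$ for PSD matrices. It also identifies the optimal $\s{P}_1,\s{P}_2$ explicitly.
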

As a result, since all constraints in \eqref{eqn:convex} consist of linear equalities, linear inequalities, and linear matrix inequalities, the program is a semidefinite program (SDP), and therefore admits a polynomial-time solution via standard convex optimization methods. We now state the following result.
%Substituting \eqref{eqn:nuc-epigraph} into \eqref{eqn:convex-shifted} yields an SDP:
%\begin{equation}
%\begin{aligned}
%\hat{\s{X}}_{\mathrm{SDP}}=& \underset{\substack{\s{X}\in\mathbb{R}^{n\times n},\ s\in\mathbb{R}^n\\ \s{P},\s{Q}\in\mathbb{S}^n}}{\arg\max}
%& & \innerP{\s{W},\s{X}} \\begin{align}2pt]
%& \ \text{s.t.}
%& & \begin{bmatrix}
%\s{P} & \s{X}+\alpha\mathrm{Diag}(s)\\begin{align}2pt]
%\s{X}+\alpha\mathrm{Diag}(s) & \s{Q}
%\end{bmatrix}\succeq 0,\quad \s{P}\succeq 0,\ \s{Q}\succeq 0,\quad \mathrm{tr}(\s{P})+\mathrm{tr}(\s{Q})\leq  2\tau\\begin{align}4pt]
%&&& \mathbf{0}\leq\s{X}\leq\mathbf{J},\quad \s{X}=\s{X}^\top,\quad \s{X}_{ii}=0\ \ (\forall i)\\begin{align}2pt]
%&&& 0\leq  \mathbf{s}_i\leq  1\ \ (\forall i),\qquad \sum_{i=1}^n \mathbf{s}_i\ =\ |v(\Gamma)|\\begin{align}2pt]
%&&& X_{ij}\leq  \mathbf{s}_i,\ \ X_{ij}\leq  \mathbf{s}_j\ \ \ (\forall i\neq j)\\begin{align}2pt]
%&&& \innerP{\Jb,\s{X}}\ =\ 2|e(\Gamma)|\;.
%\end{aligned}
%\label{eqn:convex-shifted-sdp}
%\end{equation}
\begin{theorem}[Efficient algorithm]\label{thm:effAlg}
Fix a sequence of subgraphs $\Gamma=(\Gamma_n)_n$, $\alpha\in\mathbb{R}_+$, and assume $p_n,q_n=\Theta(1)$. Define the diagonally shifted adjacency matrix $\s{S}^{(\alpha)}_\Gamma\triangleq\s{X}^\star+\alpha\cdot\s{Diag}(v(\Gamma^\star))$. Exact recovery of $\s{X}^\star$, using the convex program $\hat{\s{X}}_{\mathrm{con}}$ in \eqref{eqn:convex}, is possible if,
\begin{align}
        &\s{coh}(\s{S}^{(\alpha)}_\Gamma)\cdot\s{rank}(\s{S}^{(\alpha)}_\Gamma)\leq \min\ppp{c_1\sqrt{n},c_2\frac{n}{\sqrt{|v(\Gamma)|\log n}}},\label{eqn:TwoCondMain}
\end{align}
for some constants $c_1,c_2>0$.
\end{theorem}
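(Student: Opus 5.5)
We argue via a dual certificate (KKT) for \eqref{eqn:convex}, as in the nuclear-norm analyses of planted clique and biclique (Ames--Vavasis; Chen--Xu). Write $\s{S}\triangleq\s{S}^{(\alpha)}_\Gamma=\s{U}\Sigma\s{U}^\top$ with sign-absorbed decomposition $\s{S}=\s{U}\s{D}\s{U}^\top$. Let $\calT$ be the tangent space $\{\s{U}\s{Y}^\top+\s{Y}\s{U}^\top\}$, and let $\calP_\calT,\calP_{\calT^\perp}$ be the associated projections. The subdifferential of $\|\cdot\|_\star$ at $\s{S}$ is $\{\s{U}\s{D}_{\s{sgn}}\s{U}^\top+\s{R}:\calP_\calT\s{R}=0,\ \|\s{R}\|_{\s{op}}\le1\}$.

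Since $(\s{X}^\star,\mathbf{s}^\star)$ is feasible, it suffices to show that it is the unique maximizer. We plan to exhibit the following dual variables:
\begin{itemize}
\item a multiplier $\lambda\ge0$ for the nuclear-norm constraint;
\item a scalar $\nu$ for $\innerP{\Jb,\s{X}}=2|e(\Gamma)|$ and a scalar $\tau$ for $\mathbf{s}^\top\mathbf{1}=|v(\Gamma)|$;
\item nonnegative multipliers for the box constraints and for the coupling constraints $\s{X}_{ij}\le\min(\mathbf{s}_i,\mathbf{s}_j)$, which are active off the support;
\item a subgradient $\lambda(\s{U}\s{D}_{\s{sgn}}\s{U}^\top+\s{R})$.
\end{itemize}
These must satisfy stationarity: $\s{W}=\nu\Jb+\lambda(\s{U}\s{D}_{\s{sgn}}\s{U}^\top+\s{R})_{\s{offdiag}}+\s{M}$, where $\s{M}$ collects the box and coupling multipliers with the correct signs. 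A matching stationarity condition must also hold on the diagonal/$\mathbf{s}$ block, involving $\alpha$ and $\tau$. Uniqueness will follow from strict inequalities, namely $\|\s{R}\|_{\s{op}}<1$ and strictly positive multipliers on inactive entries, together with a standard injectivity argument on $\calT$.

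The construction proceeds in the following order. First, decompose $\s{W}=\bE\s{W}+(\s{W}-\bE\s{W})$, where $\bE\s{W}=(p/q-1)\s{X}^\star$. The mean part is absorbed exactly by choosing $\lambda\s{U}\s{D}_{\s{sgn}}\s{U}^\top$ and $\nu$ to match the planted block. Entries of $\Gamma^\star$ then receive upper-box multipliers, and nonedges inside $v(\Gamma^\star)$ as well as entries outside the support receive lower-box or coupling multipliers. The choice of $\lambda$ scales like $(p-q)/q$ times a quantity of the order of $\|\s{S}\|_{\s{op}}$. Second, the noise $\s{E}\triangleq\s{W}-\bE\s{W}$ is split as $\calP_\calT\s{E}+\calP_{\calT^\perp}\s{E}$. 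We set $\lambda\s{R}\approx\calP_{\calT^\perp}\s{E}$ plus a correction that cancels $\calP_\calT\s{E}$ against the entrywise multipliers.

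Two bounds are needed for this. By matrix Bernstein or standard random-graph bounds, $\|\s{E}\|_{\s{op}}\lesssim\sqrt{n}$. By incoherence, $\|\calP_\calT\s{E}\|_{\ell_\infty}\lesssim\sqrt{\s{coh}\cdot\s{rank}\cdot\log n/n}\cdot\sqrt{|v(\Gamma)|}$ after a union bound. The condition $\s{coh}\cdot\s{rank}\le c_1\sqrt n$ is designed to give $\|\s{R}\|_{\s{op}}<1$, since $\lambda$ is of order the planted eigenvalue, roughly $n/(\s{coh}\cdot\s{rank})$ per unit. The condition $\s{coh}\cdot\s{rank}\le c_2 n/\sqrt{|v(\Gamma)|\log n}$ is designed to keep every entrywise multiplier of the correct sign. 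The latter uses $\|\s{U}\|_{2,\infty}^2=\s{coh}\cdot\s{rank}/n$ to bound $\|\s{U}\s{U}^\top\s{E}\|_{\ell_\infty}\le\|\s{U}\|_{2,\infty}\max_j\|\s{U}^\top\s{E}_{:,j}\|_2$. Each column term is a sum of independent bounded variables over $|v(\Gamma)|$ coordinates, and Bernstein's inequality gives $\sqrt{\s{rank}\log n}$-type control.

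The main obstacle is the diagonal/$\mathbf{s}$ block and the coupling constraints. The $\alpha\s{Diag}(\mathbf{s})$ term makes the subgradient's diagonal interact with the $\mathbf{s}$-stationarity condition, and we must check that $\tau$ can be chosen so that the coordinates $\mathbf{s}_i$ for $i\notin v(\Gamma^\star)$ have strictly negative reduced cost. Our first attempt would be to put $\s{R}$'s diagonal outside the support into the $\mathbf{s}$-block. Nonnegativity would then follow from $|\s{R}_{ii}|\le\|\s{R}\|_{\s{op}}<1$, together with choosing $\tau$ equal to $\lambda\alpha$ times the common diagonal of $\s{U}\s{D}_{\s{sgn}}\s{U}^\top$ on the support, up to $O(\s{coh}\cdot\s{rank}/n)$ corrections. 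Once all multiplier signs are verified, a standard perturbation argument (any feasible direction has strictly negative first-order change, with a strict gap coming from $\|\s{R}\|_{\s{op}}<1$) concludes uniqueness.
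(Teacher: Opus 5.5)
Your KKT/dual-certificate architecture is a legitimate dual counterpart of the paper's proof, which works primally instead. With $\s{E}\triangleq\s{W}-\bE\s{W}$, the paper splits $\innerP{\s{X}^\star-\s{X},\s{W}}$ into $\innerP{\s{X}^\star-\s{X},c\s{X}^\star}=\frac{c}{2}\|\s{X}-\s{X}^\star\|_{\ell_1}$ plus the pairings with $\calP_{\calT^\perp}(\s{E})$ and $\calP_{\calT}(\s{E})$. It controls the $\calT^\perp$ pairing by the nuclear-norm subgradient inequality with $\s{Y}=\s{E}/\|\s{E}\|_{\s{op}}$ (effectively $\lambda=\|\s{E}\|_{\s{op}}$), and it handles the $\mathbf{s}$-block through $\|\s{Diag}(\mathbf{s}-\mathbf{s}^\star)\|_{\ell_1}\leq\|\s{X}-\s{X}^\star\|_{\ell_1}/\delta(\Gamma^\star)$. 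It concludes $\innerP{\s{X}^\star-\s{X},\s{W}}\geq c'\|\s{X}-\s{X}^\star\|_{\ell_1}$, so uniqueness needs no injectivity argument on $\calT$; strict margins on your box multipliers would give the same inequality. Your sign-corrected $\s{U}\s{D}_{\s{sgn}}\s{U}^\top$ is, if anything, more careful than the paper's $\s{U}\s{U}^\top$ when $\s{S}^{(\alpha)}_\Gamma$ is indefinite.

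The genuine gap is the estimate that must keep the entrywise multipliers correctly signed. Your bound $\|\calP_\calT\s{E}\|_{\ell_\infty}\lesssim\sqrt{\s{coh}\cdot\s{rank}\cdot\log n/n}\cdot\sqrt{|v(\Gamma)|}$ is not controlled by the second condition in \eqref{eqn:TwoCondMain}. Making it at most $c/4$ needs $\s{coh}\cdot\s{rank}\cdot|v(\Gamma)|\log n\lesssim n$, which for the clique ($\s{coh}\cdot\s{rank}=n/k$, $|v(\Gamma)|=k$) reads $\log n\lesssim 1$. Your column-wise route $\|\s{U}\|_{2,\infty}\max_j\|\s{U}^\top\s{E}_{:,j}\|_2$ also loses a factor $\sqrt{\s{rank}}$, giving $\sqrt{\s{coh}\cdot\s{rank}^2\log n/n}$; this is of order $\sqrt{\log n}$ for the triangular graph ($\s{rank}=m$, $\s{coh}\cdot\s{rank}=2n/(m-1)$).

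What is needed, with $\s{P}=\s{U}\s{U}^\top$, is scalar Bernstein on each entry $(\s{P}\s{E})_{ij}=\sum_s\s{P}_{is}\s{E}_{sj}$, whose variance proxy is $\sum_s\s{P}_{is}^2=\s{P}_{ii}\leq\s{coh}\cdot\s{rank}/n$. This gives $\|\s{P}\s{E}\|_{\ell_\infty}\lesssim\sqrt{\s{coh}\cdot\s{rank}\cdot\log n/n}$, which the first condition makes $o(1)$. Then $\|\s{P}\s{E}\s{P}\|_{\ell_\infty}\leq\max_i\|\s{P}_{i,:}\|_1\,\|\s{P}\s{E}\|_{\ell_\infty}$ with $\|\s{P}_{i,:}\|_1\leq\sqrt{|v(\Gamma)|\,\s{P}_{ii}}$. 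The resulting product $\s{coh}\cdot\s{rank}\sqrt{|v(\Gamma)|\log n}/n$ is exactly where the second condition comes from. Alternatively, $|(\s{P}\s{E}\s{P})_{ij}|\leq\|\s{U}\|_{2,\infty}^2\|\s{E}\|_{\s{op}}\lesssim\s{coh}\cdot\s{rank}/\sqrt{n}$ is handled by the first condition alone.

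Two further steps need repair. First, the mean cannot be ``absorbed exactly''. $\s{U}\s{D}_{\s{sgn}}\s{U}^\top$ is the matrix sign of $\s{S}$, and it is proportional to $\s{X}^\star$ off the diagonal only when all nonzero eigenvalues of $\s{S}$ have the same modulus (clique, biclique). Also, $\lambda$ is not of order $c\|\s{S}\|_{\s{op}}$ in general. With $\lambda\s{R}\approx\calP_{\calT^\perp}(\s{E})$, the sign conditions hold once $\lambda\|\s{U}\s{D}_{\s{sgn}}\s{U}^\top\|_{\ell_\infty}\leq c/4$; since $|(\s{U}\s{D}_{\s{sgn}}\s{U}^\top)_{ij}|\leq\|\s{U}\|_{2,\infty}^2$, this means $\lambda\lesssim c\,n/(\s{coh}\cdot\s{rank})$. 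For a union of cliques with $\alpha=1$ and $k_{\max}\geq 2k_{\min}$, this bound is $c\,k_{\min}$. Taking $\lambda\asymp c\,k_{\max}=c\|\s{S}\|_{\s{op}}$ instead makes the multipliers on the edges of the smallest clique negative, because the nonedges between cliques force $\nu\geq-\|\calP_\calT\s{E}\|_{\ell_\infty}$. With $2\|\s{E}\|_{\s{op}}\leq\lambda\leq c\,n/(4\,\s{coh}\cdot\s{rank})$ and $\nu\approx c/2$, the box multipliers absorb the mismatch, and compatibility of the two bounds on $\lambda$ is exactly the first condition.

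Second, the $\mathbf{s}$-block. The diagonal of $\s{U}\s{D}_{\s{sgn}}\s{U}^\top$ on $v(\Gamma^\star)$ is not constant for non-vertex-transitive $\Gamma$, and at $\lambda\asymp c\,n/(\s{coh}\cdot\s{rank})$ its variation, of order $\lambda\alpha\,\s{coh}\cdot\s{rank}/n\asymp c\alpha$, is not negligible. There are two fixes.
\begin{itemize}
\item Take $\lambda$ a constant multiple of $\|\s{E}\|_{\s{op}}$, as the paper implicitly does. Then $\lambda\alpha|(\s{U}\s{D}_{\s{sgn}}\s{U}^\top)_{ii}|\lesssim\alpha\,\s{coh}\cdot\s{rank}/\sqrt{n}$ is absorbed by the coupling multipliers on the (at least one) edge at each planted vertex; this is the dual form of the paper's $1/\delta(\Gamma^\star)$ inequality.
\item Add $\tfrac14\,\s{Diag}(\mathbf{1}-\mathbf{s}^\star)$ to $\s{R}$. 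It stays in $\calT^\perp$ because $\s{U}$ vanishes off $v(\Gamma^\star)$, and $\|\s{R}\|_{\s{op}}<1$ is preserved. It lifts the off-support diagonal of $\lambda\alpha(\s{U}\s{D}_{\s{sgn}}\s{U}^\top+\s{R})$ to $\lambda\alpha/4$, far above its $O(c\alpha)$ on-support values, so a single $\tau$ signs all reduced costs.
\end{itemize}
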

To appreciate Theorem~\ref{thm:effAlg} let us consider a few examples.
\begin{example}[Planted clique]\label{exmp:1}
Consider the case where $\Gamma=\calK_k$ is a clique with $k\triangleq |v(\calK_k)|$ vertices. The adjacency $\s{X}^\star$ in this case has eigenvalues $k-1$ and $-1$ with multiplicity $k-1$. Hence, if we take $\alpha=1$, we get that the shifted matrix
\begin{align}
\s{S}^{(1)}_{\calK_k}=\s{X}^\star+\s{Diag}(v(\calK_k)),
\end{align}
has rank one, namely, $\s{rank}(\s{S}^{(1)}_{\calK_k})=1$. Furthermore, the corresponding singular vector $\s{U}$ associated with $\s{S}^{(1)}_{\calK_k}$ has a single column equal to $k^{-1/2}\mathbf{1}_{v(\calK_k)}$. This implies that $\max_i \|\s{U}_{i,:}\|_2^2= \frac{1}{k}$, and thus,
\begin{align}
\s{coh}(\s{S}^{(1)}_{\calK_k}) = \s{coh}(\s{U})=\frac{n}{\s{rank}(\s{S}^{(1)}_{\calK_k})}\cdot\max_{i\in[n]}\norm{\mathsf{U}_{i,:}}_2^2 = \frac{n}{k}.
\end{align}
Accordingly, it can be seen that Theorem~\ref{thm:effAlg} implies that strong recovery using \eqref{eqn:convex} is possible provided that $k\geq C\sqrt{n}$, for some $C>0$. This is consistent with how state-of-the-art recovery algorithms perform on the planted clique problem, e.g., \cite{alon1998finding,dekel2014finding,montanari2015finding,hajek2016achieving,chen2016statistical}.
\end{example}
\begin{example}[Complete bipartite]\label{exmp:bip}
Consider the case where $\Gamma=\calK_{k_{\s{L}},k_{\s{R}}}$ is a complete bipartite graph with partitions of size $k_{\s{L}}$ and $k_{\s{R}}$, and define $k\triangleq k_{\s{L}}+k_{\s{R}}$. Let us choose $\alpha=0$, and then $\s{S}_{\calK_{k_{\s{L}},k_{\s{R}}}}^{(0)} = \s{X}^\star$ is of the form
\begin{align}
\s{S}_{\calK_{k_{\s{L}},k_{\s{R}}}}^{(0)}=\begin{bmatrix} \mathbf{0} & \mathbf{J}_{k_{\s{L}},k_{\s{R}}}\\ \mathbf{J}_{k_{\s{R}},k_{\s{L}}} & \mathbf{0}\end{bmatrix}.
\end{align}
Thus, we see that $\s{S}_{\calK_{k_{\s{L}},k_{\s{R}}}}^{(0)}$ has rank 2, i.e., $\s{rank}(\s{S}^{(0)}_{\calK_{k_{\s{L}},k_{\s{R}}}})=2$, with non-zero eigenvalues $\{\sqrt{k_{\s{L}}k_{\s{R}}},-\sqrt{k_{\s{L}}k_{\s{R}}}\}$. The projector $\s{P}\triangleq\s{U}\s{U}^\top$ onto this $2$-dimensional subspace satisfies
\begin{align}
[\s{P}]_{ii}=\begin{cases}
\frac{1}{k_{\s{L}}}, & i\in \s{left}\;\s{side},\\
\frac{1}{k_{\s{R}}}, & i\in \s{right}\;\s{side},
\end{cases}
\end{align}
which implies that $\max_i[\s{P}]_{ii}=\frac{1}{\min\{k_{\s{L}},k_{\s{R}}\}}$. Therefore
\begin{align}
\s{coh}(\s{S}^{(0)}_{\calK_{k_{\s{L}},k_{\s{R}}}})\cdot\s{rank}(\s{S}^{(0)}_{\calK_{k_{\s{L}},k_{\s{R}}}})=\frac{n}{\min\{k_{\s{L}},k_{\s{R}}\}}.
\end{align}
Accordingly, it can be seen that Theorem~\ref{thm:effAlg} implies that strong recovery using \eqref{eqn:convex} is possible provided that
\begin{align}
    \min\{k_{\s{L}},k_{\s{R}}\}\geq C_1\sqrt{n}\quad\s{and}\quad\frac{\min\{k_{\s{L}},k_{\s{R}}\}}{\sqrt{\max\{k_{\s{L}},k_{\s{R}}\}}}\geq C_2\sqrt{\log n},\label{eqn:CondBip}
\end{align}
for some $C_1,C_2>0$. In the balanced case, where $k_{\s{L}}\approx k_{\s{R}}$, the condition at the left-hand side of \eqref{eqn:CondBip} dominates, and exact recovery is possible once $\min\{k_{\s{L}},k_{\s{R}}\}\geq C_1\sqrt{n}$. This is consistent with how state-of-the-art recovery algorithms perform on the planted balanced bipartite (or bi-clique) problem, e.g., \cite{Levanzov2018,Kumar2022}.
\end{example}

\begin{example}[Balanced Tur\'{a}n graph]
Consider the case where $\Gamma=\calT(k,r)$ is the balanced Tur\'{a}n graph, namely, a complete $r$-partite graph on $k=|v(\Gamma)|$ vertices that avoids a $(r+1)$-clique. Denote the size of each partition by $m\triangleq k/r$. Let us choose $\alpha=0$. As is well-known, the adjacency matrix $\s{S}_{\calT(k,r)}^{(0)} = \s{X}^\star$ has the following eigenvalues (see, e.g., \cite{nikiforov2017norms}): $(r-1)m$ with multiplicity 1, $-m$ with multiplicity $r-1$, and $0$ with multiplicity $k - r$. Thus, $\s{rank}(\s{S}_{\calT(k,r)}^{(0)})=r$. The range is spanned by the $r$ part-indicator vectors. As in the bipartite case, for any vertex $i\in\Gamma$, we have $[\s{P}]_{ii}=\frac{1}{m}$, and thus, $\max_i[\s{P}]_{ii}=\frac{1}{m}$. Hence
\begin{align}
\s{coh}(\s{S}_{\calT(k,r)}^{(0)})\cdot\s{rank}(\s{S}_{\calT(k,r)}^{(0)})=\frac{n}{m}=\frac{nr}{k}.
\end{align}
Accordingly, it can be seen that Theorem~\ref{thm:effAlg} implies that strong recovery using \eqref{eqn:convex} is possible provided that $k\geq \max(C_1r\sqrt{n},C_2r^2\log n)$. 
\end{example}

\begin{example}[Union of disjoint cliques]
Consider the case where $\Gamma$ is a union of $L$ disjoint cliques with sizes sizes $k_1,\dots,k_L$. Define $k\triangleq \sum_{i=1}^L k_i$ and $k_{\min}\triangleq \min_{i\in[L]} k_i$. Take $\alpha=1$. Similarly to Example~\ref{exmp:1}, it can be shown that $\s{rank}(\s{S}_{\Gamma}^{(1)})=L$, and that 
\begin{align}
\s{coh}(\s{S}_{\Gamma}^{(1)})\cdot\s{rank}(\s{S}_{\Gamma}^{(1)})=\frac{n}{k_{\min}}.
\end{align}
Accordingly, Theorem~\ref{thm:effAlg} implies that strong recovery using \eqref{eqn:convex} is possible provided that $k_{\min}\geq \max(C_1\sqrt{n},C_1\sqrt{k\log n})$.
\end{example}

\begin{example}[Triangular graph]
Consider the case where $\Gamma=T(m)$ is the triangular graph, namely, the line graph of $\calK_m$. The triangular graph $T(m)$ has vertex set $\binom{[m]}{2}$, and thus $k=|v(\Gamma)|=\binom{m}{2}$, and edges between pairs of $2$-subsets that intersect in exactly one element. It is the Johnson graph $J(m,2)$ (a strongly regular, vertex-transitive graph). It is well-known that the eigenvalues of its adjacency matrix $\s{X}^\star$ are given by (see, e.g., \cite{nikiforov2017norms}): 
\begin{align}
\s{Spectrum}(\s{X}^\star)=\ppp{\underbrace{2(m-2)}_{\s{mult.}\; 1},\underbrace{m-4}_{\s{mult.}\; m-1},\underbrace{-2}_{\s{mult.}\;(m-1)(m-2)/2}}.
\end{align}
Taking $\alpha=2$ we obtain
\begin{align}
\s{Spectrum}(\s{S}_{T(m)}^{(2)})=\ppp{\underbrace{2m-2}_{\s{mult.}\; 1},\underbrace{m-2}_{\s{mult.}\; m-1},\underbrace{0}_{\s{mult.}\;(m-1)(m-2)/2}}.
\end{align}
Hence $\s{rank}(\s{S}_{T(m)}^{(2)})=m$. Because $T(m)$ is vertex-transitive and the range of $\s{S}_{T(m)}^{(2)}$ is a direct sum of irreducible representations invariant under $\mathrm{Aut}(T(m))\cong \mathbb{S}_m$, the projector $\s{P} = \s{U}\s{U}^\top$ has constant diagonal:
\begin{align}
[\s{P}]_{ii}\equiv \frac{\s{rank}(\s{S}_{T(m)}^{(2)})}{k}=\frac{m}{\binom{m}{2}}=\frac{2}{m-1}\;\forall i\in [k].\label{eq:proj-constant-diag}
\end{align}
Indeed, conjugating $\s{P}$ by any permutation matrix induced by an automorphism leaves $\s{P}$ invariant; by vertex-transitivity, all diagonal entries must match, and $\s{trace}(\s{P})=\s{rank}(\s{S}_{T(m)}^{(2)})$ gives \eqref{eq:proj-constant-diag}. Therefore
\begin{align}
\s{coh}(\s{S}_{T(m)}^{(2)})\cdot\s{rank}(\s{S}_{T(m)}^{(2)})=\frac{2n}{m-1}.
\end{align}
Thus, it is not difficult to check that Theorem~\ref{thm:effAlg} implies that strong recovery using \eqref{eqn:convex} is possible provided that $k\geq C\sqrt{n}$, for some $C>0$.
\end{example}

\subsection{Computational lower bounds}\label{subsec:ldp-intro}

In this subsection, we derive computational lower bounds for recovery. We begin by introducing the low-degree polynomial (LDP) framework for \emph{recovery}, setting up the notation used throughout, and recording several basic identities that will be invoked later. Our presentation adapts the general LDP methodology to the planted-graph model from Section~\ref{sec:problem}, and follows \cite{SchrammWein2022} (see also \cite{BKYLowDegreeSurvey}).

\subsubsection{Background and preliminaries}

\paragraph{Problem setup.} Fix $n$, $0<q_n<p_n\leq 1$, and a planted structure $\Gamma_n=(v(\Gamma_n),e(\Gamma_n))$ with $|v(\Gamma_n)|\le n$ and no isolated vertices. A planted copy $\Gamma_n^\star\in\mathcal{S}_{\Gamma_n}$ is selected uniformly at random and then observed through the binary edge--channel
\begin{align}
\left.\s{Y}_e \right| \Gamma_n^\star \sim \s{Bern}(\s{X}_e), 
\qquad
\s{X}_e=\begin{cases}
p_n, & e\in e(\Gamma_n^\star)\\
q_n, & e\notin e(\Gamma_n^\star),
\end{cases}
\end{align}
independently over $e\in \binom{[n]}{2}$. We write $N\triangleq\binom{n}{2}$ and view $\s{Y}\in\{0,1\}^{N}$ as the input to any estimator $\hat{\Gamma}:\{0,1\}^{N}\to \mathcal{S}_{\Gamma_n}$.

\paragraph{LDP and performance metrics.} We observe a random vector $\s{Y} \in \mathcal{Y}^{N}$ and seek to estimate an \emph{anchor} scalar parameter $x \in \mathbb{R}$, defined as a function of the latent signal generating $\s{Y}$ (e.g., one coordinate of the signal). For a degree budget $D \in \mathbb{N}$, let
\begin{align}
\mathbb{R}[\s{Y}]_{\leq D}\triangleq\left\{ f:\mathcal{Y}^{N}\to\mathbb{R}\; \s{polynomial}\;\s{of}\;\s{total}\;\s{degree}\;\s{at}\;\s{most}\;D \right\},
\end{align}
namely, the space of polynomials of degree at most $D$. The \emph{degree-$D$ minimum mean-squared error} is defined as
\begin{align}\label{eq:ldp-mmse-def}
\s{MMSE}_{\leq D}
\triangleq
\inf_{f \in \mathbb{R}[\s{Y}]_{\le D}}\mathbb{E}\left[(f(\s{Y})-x)^2\right],
\end{align}
where the expectation is taken with respect to the joint law of $(x,\s{Y})$. Furthermore, define the associated \emph{degree-$D$ maximum correlation} as
\begin{align}\label{eq:corr-def}
\s{Corr}_{\le D}\triangleq
\sup_{f\in\mathbb{R}[\s{Y}]_{\le D}}
\frac{\mathbb{E}\pp{f(\s{Y})\cdot x}}{\sqrt{\mathbb{E}\pp{f(\s{Y})^2}}}.
\end{align}
A basic identity links these two quantities:
\begin{align}\label{eq:mmse-def}
\s{MMSE}_{\le D} = \mathbb{E}[x^2]-\s{Corr}_{\le D}^{2}.
\end{align}
This identity allows us to lower bound $\s{MMSE}_{\le D}$ (and hence derive computational lower bounds) by upper bounding $\s{Corr}_{\le D}$. Polynomials of degree $D=\s{polylog}(n)$ capture a broad class of efficient estimators, including many spectral and AMP-type procedures via polynomial approximation, and the LDP framework has proved predictive for computational thresholds across a range of high-dimensional problems \cite{SchrammWein2022}. In particular, if the right-hand side of \eqref{eq:corr-def} is $o(1)$ for some $D=\s{polylog}(n)$, then $\s{Corr}_{\le D}=o(1)$ and hence $\s{MMSE}_{\le D}\geq \mathbb{E}[x^2]-o(1)$, ruling out any degree-$D$ polynomial from achieving nontrivial recovery of $x$. This yields concrete, model-specific computational lower bounds that often match (up to $\s{polylog}$ factors) the best known algorithmic thresholds in classical inference problems.

To study recovery via low-degree polynomials, we focus on a one-bit anchor that is symmetric across the vertices of the planted copy. Fix an \emph{ambient anchor vertex} $v^\star=\{1\}$ and define
\begin{align}\label{eq:x-edge-anchor}
x\triangleq\Ind\{v^\star\in v(\Gamma_n^\star)\}\in\{0,1\}.
\end{align}
Because $\Gamma_n^\star$ is drawn uniformly among all of its isomorphic copies, the joint law of $(x,\s{Y})$ does not depend on which ambient vertex is chosen, and the anchor is without loss of generality. %This edge--anchored choice matches the symmetry of general planted subgraphs (where vertex degrees may vary) and is the key distinction from the vertex--anchored targets used in \cite{SchrammWein2022}.

\paragraph{Binary observation model.} The planted graph problem in this paper fits the general binary model considered in \cite[Sec. 2.3]{SchrammWein2022}, where coordinates of $\s{Y}$ are conditionally independent Bernoulli variables with means $\s{X}=(\s{X}_i)_{i\in[N]}$ taking values in $\{\tau_0,\tau_1\}$ with $0<\tau_0<\tau_1<1$. In this setting, \cite[Thm. 2.7]{SchrammWein2022} derived an upper bound on $\s{Corr}_{\le D}$ expressed in terms of \emph{joint cumulants}:
\begin{align}\label{eq:sw-master}
\s{Corr}_{\le D}^{2}
\le
\sum_{\substack{\alpha\subseteq\binom{[n]}{2}\\ |\alpha|\le D}}
\frac{\kappa_\alpha^{  2}}{\big(q_n(1-p_n)\big)^{|\alpha|}},
\end{align}
where $\kappa_\alpha=\kappa(x,\{\s{X}_e\}_{e\in\alpha})$ is the joint cumulant of $x$ and $\{\s{X}_e\}_{e\in\alpha}$. It is shown in \cite[Thm. 2.7]{SchrammWein2022} that the cumulants $\{\kappa_\alpha\}$ are defined recursively as
\begin{align}\label{eq:cumulant-recursion}
\kappa_\alpha
&=\mathbb{E}[x\cdot\s{X}^\alpha]
-\sum_{0\leq \beta\prec\alpha}
\kappa_\beta\mathbb{E}[\s{X}^{\alpha-\beta}],\\
\s{X}^\alpha&\triangleq\prod_{e:\;\alpha_e=1}\s{X}_e,
\end{align}
with $\beta\prec\alpha$ meaning coordinate--wise inequality and $\beta\neq\alpha$.

\subsubsection{Main result}

The following result shows that $O(\log n)$-degree polynomials fail at recovery whenever the graph-density satisfies $\eta(\Gamma_n)=\frac{|e(\Gamma_n)|}{|v(\Gamma_n)|}\ll\sqrt{n}$. For reference, the trivial estimator $f(\s{Y})=\bE[x]$ attains mean-squared error $\bE[f(\s{Y})-x]^2=\s{Var}(x)$. The proof is given in Section~\ref{sec:complowerbounds}.
\begin{theorem}[Computational lower bound]\label{thm:compLower}
Fix a sequence of subgraphs $\Gamma=(\Gamma_n)_n$, and assume $p_n,q_n=\Theta(1)$. If 
\begin{align}
    \eta(\Gamma_n) \leq n^{\frac{1}{2}-\varepsilon},
\end{align}
for any fixed $\varepsilon>0$, and $D=D_n$ scales as $D\leq(\log n)^{\alpha}$ for some fixed $\alpha<1$, then
\begin{align}
    \s{MMSE}_{\leq D}\geq (1-o(1))\cdot\s{Var}(x).
\end{align}
\end{theorem}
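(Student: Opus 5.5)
The plan is to apply the Schramm--Wein cumulant bound \eqref{eq:sw-master} and show that its right-hand side, after subtracting the $\alpha=\emptyset$ term, is $o(\s{Var}(x))$. The empty-set term is $\kappa_\emptyset^2=(\bE x)^2$. Writing $k=|v(\Gamma_n)|$, we have $\bE x = k/n$ and $\s{Var}(x)=\tfrac{k}{n}(1-\tfrac kn)$. Formally, \eqref{eq:mmse-def} together with the standard centering argument gives $\s{MMSE}_{\le D}\ge \bE[x^2]-\s{Corr}^2_{\le D}$. It therefore suffices to prove
\begin{align*}
\sum_{1\le|\alpha|\le D}\frac{\kappa_\alpha^2}{(q(1-p))^{|\alpha|}}=o\Big(\frac{k}{n}\Big),
\end{align*}
where for $p=1$ I would use the variant of the bound with a denominator $\Theta(1)$. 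The steps below are ordered as: vanishing cumulants, a bound on each nonzero cumulant, and summation.

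First, structural vanishing of cumulants. Write $\s{X}_e=q+(p-q)\Ind\{e\in\Gamma^\star\}$. By multilinearity and translation invariance of joint cumulants (for $|\alpha|\ge1$), $\kappa_\alpha=(p-q)^{|\alpha|}\kappa(x,\{\Ind\{e\in\Gamma^\star\}\}_{e\in\alpha})$. Following \cite[Sec.~2.3]{SchrammWein2022}, $\kappa_\alpha=0$ unless $\alpha\cup\{v^\star\}$ is connected, meaning $\alpha$ is a connected edge set containing vertex $1$. The reason is that if $\alpha$ splits into two vertex-disjoint groups, one of them not touching $v^\star$, the relevant indicators are nearly independent under the uniform random embedding. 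This independence is only approximate, and I would handle it with the standard bound $|\kappa_\alpha|\le (C|\alpha|)^{|\alpha|}\max_{\beta}|\ldots|$. Concretely, I would bound $|\kappa|$ by a sum over partitions of products of moments and use that $\bE[\prod_{e\in\beta}\Ind\{e\in\Gamma^\star\}]$ equals the probability that a fixed graph $\beta$ embeds into the random copy of $\Gamma$. That probability is at most $\frac{|\calN(\beta,\Gamma)|}{(n)_{|v(\beta)|}}\cdot(\text{automorphism factor})$, which is $\lesssim \frac{2^{|\beta|}\,(\#\text{copies of }\beta\text{ in }\Gamma)}{n^{|v(\beta)|}}$.

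Second, I would bound each nonzero cumulant in terms of $\Gamma$'s density. For a connected $\alpha$ containing vertex $1$, with $m=|\alpha|$ edges and $v=|v(\alpha)|$ vertices (so $v\le m+1$), I aim for
\begin{align*}
|\kappa_\alpha|\le (CD)^{CD}\,\frac{k}{n}\,\Big(\frac{2|e(\Gamma)|}{n^{2}}\Big)^{?}\ldots
\end{align*}
The cleaner route is to count as follows. The number of ways to map a spanning tree of $\alpha$ rooted at $v^\star$ into $\Gamma$ is at most $k\cdot d_{\max}^{v-1}$. This is too crude for irregular graphs, so instead I would count edge-by-edge, with each new vertex costing a factor $\le 2|e(\Gamma)|/k$ on average. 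Using the rooted count with one fixed root, the probability that $\alpha$ lies in $\Gamma^\star$ is bounded by $\frac{k}{n}\prod(\text{per-edge factor } \frac{2\eta(\Gamma)}{n})$ along the tree, while each extra non-tree edge costs at most $1$. The expected outcome is $|\kappa_\alpha|\lesssim (CD)^{D}\frac{k}{n}\big(\frac{C\eta}{n}\big)^{v-1}$. Making this rigorous for arbitrary (irregular) $\Gamma$ is the main obstacle, since a tree-embedding count bounded through the average degree $2\eta$ fails when degrees are skewed. There I would first try a Cauchy--Schwarz or Hölder argument over the rooted vertex (the rooted homomorphism count from a uniformly random root), and otherwise fall back on the total count $\#\{\text{copies of a tree with }v\text{ vertices in }\Gamma\}\le k(2\eta)^{v-1}$ up to $(Cv)^{v}$. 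I would then accept that the root being uniform over $v(\Gamma^\star)$ absorbs the averaging.

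Third, I would sum. The number of connected $\alpha$ with $v$ vertices containing vertex $1$ and $m$ edges is at most $n^{v-1}v^{2m}$. Hence the contribution is
\begin{align*}
\sum_{v,m} n^{v-1}v^{2m}(CD)^{2D}\frac{k^2}{n^2}\Big(\frac{C\eta}{n}\Big)^{2(v-1)}
\le \frac{k}{n}\cdot\frac{k}{n}\sum_{v}(CD)^{4D}\Big(\frac{C^2\eta^2}{n}\Big)^{v-1}.
\end{align*}
With $\eta\le n^{1/2-\varepsilon}$ each factor $\eta^2/n\le n^{-2\varepsilon}$. Since $D\le(\log n)^{\alpha}$ with $\alpha<1$, we have $(CD)^{CD}=n^{o(1)}$. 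The $v\ge2$ terms are therefore $o(k/n)$. For $v=1$ no edge exists, so those terms are absent, and the extra factor $k/n\le1$ is harmless. The delicate bookkeeping is the $(CD)^{CD}$ combinatorial factor from the cumulant expansion, which is why the degree restriction to $(\log n)^{\alpha}$ with $\alpha<1$ is needed. Combining with \eqref{eq:sw-master} yields $\s{Corr}^2_{\le D}\le(\bE x)^2+o(\s{Var}(x))$ and hence the claim.
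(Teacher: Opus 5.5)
\textbf{Verdict.} There is a genuine gap, and it is the step you flagged as the main obstacle: bounding rooted tree-embedding counts of an arbitrary $\Gamma_n$ by the average degree $2\eta(\Gamma_n)$. Your fallback inequality $\#\{\text{copies of a $v$-vertex tree in }\Gamma\}\le k(2\eta)^{v-1}(Cv)^{v}$ is false. The star $\calK_{1,k-1}$ has $\eta<1$ but about $k^2/2$ two-edge paths. No Cauchy--Schwarz or H\"older step over a uniform root can fix this, because averaging over the root just returns the total count divided by $k$.

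\textbf{The statement fails for irregular templates.} Take $\Gamma_n=\calK_{1,m}$ with $m=\lfloor n/2\rfloor$, so $\s{Var}(x)\to 1/4$. Set $z_c\triangleq\frac{1}{(p-q)m}\sum_{j\notin\{1,c\}}(\s{Y}_{cj}-q)$. Then $z_{c^\star}=1+O(n^{-1/2})$ at the planted center and $z_c=O(n^{-1/2})$ elsewhere (in $L^4$). Hence the cubic $g=\sum_{c\neq 1}(\s{Y}_{1c}-q)z_c^2$ equals $\s{Y}_{1c^\star}-q$ up to an $L^2$ error $O(n^{-1/2})$. Now $\s{Y}_{1c^\star}\sim\s{Bern}(p)$ when vertex $1$ is a leaf and $\s{Bern}(q)$ when $x=0$. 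So $g$ has correlation with $x$ bounded away from $0$, which gives $\s{MMSE}_{\le 3}\le(1-c)\s{Var}(x)$ for some $c=c(p,q)>0$, and hence the same for every $D\ge 3$. For $k=o(n)$, the template $\calK_{n^{0.3},n^{0.9}}$ does the same: the quadratic $\sum_{c\neq1}(\s{Y}_{1c}-q)\sum_{j\notin\{1,c\}}(\s{Y}_{cj}-q)$ has correlation $1-o(1)$ with $x$.

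\textbf{The paper's proof fails at the same point.} Lemma~\ref{lem:anchor-avg-emb} claims $\frac1k\s{Emb}_{t,d}(\Gamma_n)\le(Ct)^{C(d+t)}\bar d(\Gamma_n)^{t-1}$, which the star already violates at $t=3$. Its proof averages over relabelings, but every quantity involved is relabeling-invariant. It also implicitly replaces $\bE_\sigma[\deg_{\Gamma_n^\sigma}(u)N_s(u;\sigma)]$ by $\deg_{\Gamma_n}(u)\,\bE_\sigma[N_s(u;\sigma)]$, treating a random degree as fixed. High-degree vertices are exactly the ones over-represented in histories.

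\textbf{What your bookkeeping does prove.} For connected $\alpha\cup\{v^\star\}$ on $t$ vertices, it proves the version with $\eta(\Gamma_n)$ replaced by $d_{\max}(\Gamma_n)$. A component on $c$ vertices of any partition block embeds in at most $k\,d_{\max}^{c-1}$ ways. Connectivity of the union then bounds every partition product by $2^{O(D)}\frac kn\big(\frac{d_{\max}}{n}\big)^{t-1}$, and after that your summation goes through.

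\textbf{Second gap: disconnected $\alpha$.} Under a uniformly random embedding (sampling without replacement), these cumulants do \emph{not} vanish. For an edge $e$ avoiding vertex $1$, $\kappa(x,\Ind\{e\in\Gamma^\star\})=-\frac{2(n-k)}{n(n-2)}\pr[e\in\Gamma^\star]$. Your partition-sum bound discards this cancellation: it gives $\asymp\frac kn\pr[e\in\Gamma^\star]$ instead of $O(\frac1n)\pr[e\in\Gamma^\star]$. Summed over the $\asymp n^2$ such $\alpha$, it yields $\asymp\lambda^2k^4\eta^2/n^4$. This is not even $o(1)$ when $k\asymp n$ and $\eta=n^{0.4}$, for example a disjoint union of cliques. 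You need a genuine decorrelation estimate, such as an extra factor $O(1/n)$ per additional vertex-disjoint component. The paper does not supply one either. It sums only over connected patterns and bounds every partition product by $\bE[x\prod_{e\in\alpha}\Ind\{e\in\Gamma_n\}]$. That bound fails, for instance, for a matching template and a two-edge path at vertex $1$, where the joint probability is $0$.

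\textbf{What you got right.} Keeping the factor $k/n$ is correct and necessary. The exact identity is $\pr[\{v^\star\}\cup\alpha\subseteq\Gamma^\star]=\s{Emb}_{t,d}(\s{H}_\alpha\to\Gamma_n)/(n)_t$, whereas Lemma~\ref{lem:probEmbedding} has $\frac1k$ where $\frac1n$ belongs. As a result, the paper's conclusion $\s{Corr}^2_{\le D}\le\bE^2[x]+o(1)$ gives only $\s{MMSE}_{\le D}\ge\s{Var}(x)-o(1)$, which is vacuous when $k=o(n)$. Your $o(k/n)$ target is the right one for the multiplicative claim.
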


Next, we prove upper bounds on $\s{MMSE}_{\leq D}$. To that end, following \cite[Sec. 4.2]{SchrammWein2022}, we analyze one and multiple rounds of the power iteration method starting from the all-ones vector, followed by thresholding. We now describe this estimator in more detail. Fix the number of iterations $L\in\mathbb{N}$. For simplicity of notation, let $\s{Z}_{ij}\triangleq\s{Y}_{ij}-q$, for any $i,j\in[n]$. Let $\calP_L$ denote the set of all simple undirected paths of length $L$ in the ambient complete graph starting at vertex $1$ and visiting pairwise distinct vertices, namely, $P=(u_0,u_1,\dots,u_\ell)$ with $u_0=1$. For each $P\in\calP_L$, define
\begin{align}
    \s{Z}(P)\triangleq\prod_{\ell=0}^{L-1} \s{Z}_{u_t u_{t+1}},
\end{align}
and the degree-$L$ \emph{walk polynomial}
\begin{equation}
\s{W}_L\triangleq\sum_{P\in\mathcal{P}_L}\s{Z}(P).
\end{equation}
For $u\in v(\Gamma)$ let
\begin{align}
W_L(\Gamma;u)&\triangleq\abs{\ppp{\s{simple}\;\s{paths}\;\s{of}\;\s{length}\;L\;\s{in}\;\Gamma\;\s{ starting}\;\s{at}\;u}},\\
    W_L^{\min}(\Gamma)&\triangleq\min_{u\in v(\Gamma)} W_L(\Gamma;u).
\end{align}
Define the rescaled statistic 
\begin{align}
\mathscr{Z}_L\triangleq \frac{2}{(p-q)^L}\frac{\s{W}_L}{W_L^{\min}(\Gamma)}.    
\end{align}
Our estimator is defined as
\begin{align}
f_{L,m}(\s{Y})=\tau_m(\mathscr{Z}_L),\label{eqn:EstL}
\end{align}
where $\tau_m$ of degree $D=2m+1$ is a \emph{polynomial threshold} that approximates the threshold function using a polynomial of degree $2m+1$ (see Lemma~\ref{lem:poly-approx}). In the special case of a single-iteration power method, the estimator above reduces to
\begin{align}
    f_{1,m}(\s{Y}) = \tau_k\p{\frac{1}{(p-q)\eta(\Gamma)}\sum_{i=2}^n\p{\s{Y}_{1i}-q}}.\label{eqn:EstL1}
\end{align}
We have the following result.
\begin{theorem}[LDP upper bound]\label{thm:compUpper}
Assume $p_n,q_n=\Theta(1)$. 
\begin{enumerate}
    \item Single iteration: consider the estimator in \eqref{eqn:EstL1}. Fix $\epsilon>0$, and let $\Gamma_n$ be any sequence of subgraphs with
\begin{align}
\s{Dis}(\Gamma)\triangleq\max_{v\in v(\Gamma)} \frac{|d_\Gamma(v)-\eta(\Gamma)|}{\eta(\Gamma)}&\leq\frac{r}{12},\label{eqn:condDstar1Main}\\
\eta(\Gamma_n)&\geq n^{\frac12+\epsilon},\label{eqn:condDstar2Main}
\end{align}
for some fixed $0<r<1$ and all sufficiently large $n$. If $D=D(n)\leq (\log n)^{\alpha}$ for any fixed $\alpha>0$, then
\begin{align}
\bE\pp{(f_{1,m}(\s{Y})-x)^2}\leq CD^2r^{D-1},
\end{align}
for some $C>0$.

\item Multiple iteration: consider the estimator in \eqref{eqn:EstL}. Fix $L\in\mathbb{N}$ and let $r\in(0,1/4]$. Let $\Gamma_n$ be any sequence of subgraphs with
\begin{equation}
W_L^{\min}(\Gamma)\geq C^\star(L,p,q)\pp{n^{L/2}+k^{L-1/2}}\sqrt{\log n},\label{eqn:condDstar3Main}
\end{equation}
for some $C^\star(L,p,q)>0$. Assume that $m=\omega(1)$ and $D\le C\log\log n$, for some constant $C>0$. Then
\begin{align}
\bE\pp{(f_{L,m}(\s{Y})-x)^2}\leq (\log n)^{-\Omega(1)}.
\end{align}
\end{enumerate}
\end{theorem}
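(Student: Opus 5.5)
The plan follows \cite[Sec.~4.2]{SchrammWein2022}. We show that the centered linear (or path) statistic concentrates around a value separating $x=1$ from $x=0$, and then pass it through the polynomial threshold $\tau_m$ of Lemma~\ref{lem:poly-approx}. The error splits into two parts: the polynomial approximation error of $\tau_m$ on a bounded window, and the contribution of the low-probability event where the statistic leaves that window. On that event $\tau_m$ can be as large as a power of the statistic, so we control it through moments.

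\textbf{Single iteration.} Set $\mathscr{Z}_1=\frac{1}{(p-q)\eta(\Gamma)}\sum_{i\ge2}(\s{Y}_{1i}-q)$.
\begin{itemize}
\item If $x=0$, every $\s{Y}_{1i}-q$ is centered. Then $\mathscr{Z}_1$ has mean $0$ and standard deviation $O(\sqrt n/\eta(\Gamma))=O(n^{-\epsilon})$ by \eqref{eqn:condDstar2Main}.
\item If $x=1$, vertex $1$ has degree $d_\Gamma(1)$ in $\Gamma^\star$. The mean is then $d_\Gamma(1)/\eta(\Gamma)\in[1-r/12,1+r/12]$ by \eqref{eqn:condDstar1Main}, and the fluctuations are again $O(n^{-\epsilon})$.
\end{itemize}
So $\mathscr{Z}_1$ lies within $r/6$ of $x$ except on an event whose probability is controlled by Hoeffding. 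I would choose $\tau_m$ so that it approximates the indicator $\Ind\{z>1/2\}$ within $O(r^{D})$ on $[-r/6,r/6]\cup[1-r/6,1+r/6]$, and so that it is bounded by $(C|z|)^{D}$ elsewhere; the Chebyshev-type construction of Lemma~\ref{lem:poly-approx} should give this. Then
\begin{align*}
\bE[(f_{1,m}-x)^2]\le O(r^{D-1})+\bE\left[(C|\mathscr{Z}_1|)^{2D}\Ind\{\text{bad}\}\right].
\end{align*}
The second term is handled by Cauchy--Schwarz: moments of order $4D$ of a sum of bounded independent variables are at most $(CD)^{2D}$, while the bad event has probability $\exp(-cn^{2\epsilon})$. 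Since $D\le(\log n)^\alpha$, the second term is negligible. The $D^2$ prefactor should come from the derivative and coefficient bounds in the approximation lemma.

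\textbf{Multiple iterations.} Write $\s{Z}_{e}=(\s{X}_e-q)+\xi_e$, where $\xi_e$ is the centered noise. Expanding $\s{W}_L$ splits it into three parts:
\begin{itemize}
\item \emph{Signal.} These are paths lying entirely inside $\Gamma^\star$ with all $L$ edges taking the signal part. Their contribution is $(p-q)^L W_L(\Gamma;1)\,x$, which after rescaling is at least $2x$.
\item \emph{Pure noise.} Its variance is $\Theta(n^{L})$.
\item \emph{Mixed terms.} These are paths with $j$ edges in the signal part and $L-j$ noise edges. Their variance is at most roughly $k^{2j}n^{L-j}$-type counts, and with a more careful count these are dominated by $n^L+k^{2L-1}$.
\end{itemize}
Dividing by $(p-q)^LW_L^{\min}(\Gamma)$, condition \eqref{eqn:condDstar3Main} makes the standard deviation of the fluctuations $O(1/(C^\star\sqrt{\log n}))$. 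When $x=0$, vertex $1$ lies outside $\Gamma^\star$, but a path starting at $1$ may enter $\Gamma^\star$ after its first step; these terms have mean zero but contribute to the variance, and the $k^{L-1/2}$ term in \eqref{eqn:condDstar3Main} accounts for them. After rescaling, the statistic is thus concentrated near $0$ when $x=0$ and near a value at least $2$ when $x=1$.

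Concentration beyond the second moment is needed because $\tau_m$ has degree $2m+1$. I would bound the $2m$-th moments of the fluctuation via hypercontractivity for degree-$L$ polynomials of independent bounded variables; these grow like $(Cm)^{Lm}$ times the variance to the power $m$. Combining this with the $\sqrt{\log n}$ slack, and with $m=\omega(1)$, $D\le C\log\log n$, gives error $(\log n)^{-\Omega(1)}$.

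\textbf{Main obstacle.} The hard step is the combinatorial variance count for the mixed and overlapping path pairs in $\bE[\s{W}_L^2]$ for an arbitrary $\Gamma$, together with the tail-moment interaction with the polynomial threshold. I would classify pairs of paths by the shape of their union graph, and bound each class using only $|v(\Gamma)|=k$ and the number of edges shared with $\Gamma^\star$.
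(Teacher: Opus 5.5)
The step that breaks is not the variance count you flag as the main obstacle. It is the last step of Part~2: you control the fluctuations of $\mathscr{Z}_L$ around its conditional mean, but you never control $\tau_m$ \emph{at} that mean. By Lemma~\ref{lem:planted-mean}, given $x=1$ and $\phi(u)=1$, the mean is $2W_L(\Gamma;u)/W_L^{\min}(\Gamma)$ (your ``at least $2$''). Lemma~\ref{lem:poly-approx}, however, says nothing about $\tau_m$ outside $|y-\ell|\le\Delta\le\frac12$.

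So already for a clique, where the mean is exactly $2$, the value $\tau_m(\mathscr{Z}_L)$ is unaccounted for. In the framework you set up for Part~1 (approximation on two windows, growth bound elsewhere), all of $\{x=1\}$ falls outside the windows. Its probability $k/n$ need not be small, so Cauchy--Schwarz plus moment bounds gives nothing.

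For irregular $\Gamma$ the mean ranges over $[2,2\rho]$ with $\rho\triangleq\max_uW_L(\Gamma;u)/W_L^{\min}(\Gamma)$, and \eqref{eqn:condDstar3Main} places no bound on $\rho$. For example, take $L=1$ and $\Gamma=\calK_{a,b}$ with $a=3C^\star\sqrt{n\log n}$ and $b=n^{0.9}$. Then $W_1^{\min}(\Gamma)=a$ and \eqref{eqn:condDstar3Main} holds. But with probability $a/n$ vertex $1$ lands on the small side, where $\mathscr{Z}_L$ concentrates at $2b/a\asymp n^{0.4}/\sqrt{\log n}$. Since $\tau_m$ has degree $2m+1$, its leading term makes $|\tau_m(\mathscr{Z}_L)|$ polynomially large in $n$ there, so this event alone contributes far more than $(\log n)^{-\Omega(1)}$.

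The paper's proof does not close this hole either. It relies on \eqref{eq:tau-approx2}, i.e.\ $|\tau_m(y)-1|\le(6r)^m$ for all $y\ge r$, and no nonconstant polynomial has that property.

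To rescue Part~2 you need an extra hypothesis that pins the target. One option is the path analogue of \eqref{eqn:condDstar1Main}, $\max_uW_L(\Gamma;u)\le(1+\frac{r}{12})W_L^{\min}(\Gamma)$, together with the normalization $\s{W}_L/\bigl((p-q)^LW_L^{\min}(\Gamma)\bigr)$ without the factor $2$; under $x=1$ the statistic then concentrates in $[1-\frac r6,1+\frac r6]$. If you only assume $\rho=O(1)$, you can instead divide by $2\rho$ and use a threshold polynomial for the step on $[-\delta,\delta]\cup[\rho^{-1}-\delta,1+\delta]$.

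Once the target is pinned, the rest of your plan works, and in one respect it is stronger than the paper's argument. A fixed high moment of the degree-$L$ fluctuation bounds the bad event by $O((\log n)^{-K})$ for any fixed $K$. This absorbs the $c_1^{D}=(\log n)^{O(1)}$ growth of $\tau_m$ for every constant in $D\le C\log\log n$. The paper instead feeds the Chebyshev bound $1/\log n$ into Lemma~\ref{lem:HP-to-MSE}, which forces $C<1/\log(9/\nu_{p,q})$. Note also that $m=\omega(1)$ only gives $o(1)$; you need $m\gtrsim\log\log n$ to get $(\log n)^{-\Omega(1)}$.

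Your observation that under $x=0$ paths may enter $\Gamma^\star$ after their first edge, contributing up to order $k^{2L-1}$ to the variance, is correct. It is in fact missed by Lemma~\ref{lem:null-variance}, which treats every edge as $\s{Bern}(q)$.

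Part~1 is fine as proposed. Given Hoeffding tails $e^{-cn^{2\epsilon}}$, your growth bound on $\tau_m$ plus Cauchy--Schwarz is a valid alternative to the paper's route. That growth bound is a property of the specific construction, not of Lemma~\ref{lem:poly-approx}. The paper instead applies the conditional hypercontractivity of Lemma~\ref{lem:HC} to $f-x$ itself (Lemma~\ref{lem:HP-to-MSE}), so it needs no information about the coefficients of $\tau_m$.
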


We see that for ``almost-regular'' structures satisfying \eqref{eqn:condDstar1Main}, the single-iteration bound complements the computational lower bound in Theorem~\ref{thm:compLower}. However, there are natural examples for which \eqref{eqn:condDstar1Main} fails. We next discuss a few such examples to further illustrate the LDP-based lower and upper bounds.

\begin{example}[Planted clique]
Consider the case where $\Gamma_n = \calK_k$. Since $\frac{|e(\Gamma_n)|}{|v(\Gamma_n)|} = \frac{|v(\Gamma_n)| - 1}{2}$, Theorem~\ref{thm:compLower} implies that recovery is computationally hard whenever $|v(\Gamma_n)| \leq n^{\frac{1}{2} - \varepsilon}$, matching the state-of-the-art algorithmic threshold (see Example~\ref{exmp:1}) and consistent with the folklore planted clique conjecture. Moreover, the first item of Theorem~\ref{thm:compUpper} shows that a single iteration of the power method, followed by thresholding, achieves this computational barrier: the regularity condition in \eqref{eqn:condDstar1Main} is clearly satisfied, and \eqref{eqn:condDstar2Main} reduces to $|v(\Gamma_n)| \geq n^{\frac{1}{2} + \varepsilon}$. Finally, we note that the multi-iteration bound is also applicable. In this case, $W_L(\Gamma;u)=(k-1)_{L}\sim k^L$, and hence Theorem~\ref{thm:compUpper} implies that recovery is possible whenever $k\gtrsim \sqrt n(\log n)^{1/(2L)}$.
\end{example}

\begin{example}[Complete bipartite clique]
Consider the case where $\Gamma_n=\calK_{k_{\s{L}},k_{\s{R}}}$ is a complete bipartite
graph with partitions of size $k_{\s{L}}$ and $k_{\s{R}}$, and define $k=k_{\s{L}}+k_{\s{R}}$. Here $\frac{|e(\Gamma_n)|}{|v(\Gamma_n)|} = \frac{k_{\s{L}}\cdot k_{\s{R}}}{k_{\s{L}}+k_{\s{R}}}\in\pp{\min\{k_{\s{L}},k_{\s{R}}\}/2,\min\{k_{\s{L}},k_{\s{R}}\}}$. Accordingly, Theorem~\ref{thm:compLower} implies that recovery is computationally hard whenever $\min\{k_{\s{L}},k_{\s{R}}\}\leq n^{\frac{1}{2}-\varepsilon}$, matching the state-of-the-art algorithmic threshold (see Example~\ref{exmp:bip}). As in the previous example, the first item of Theorem~\ref{thm:compUpper} shows that a single iteration of the power method, followed by thresholding, achieves this computational barrier: the regularity condition in \eqref{eqn:condDstar1Main} is clearly satisfied, and \eqref{eqn:condDstar2Main} reduces to $\min\{k_{\s{L}},k_{\s{R}}\}\leq n^{\frac{1}{2}-\varepsilon}$. Finally, we note that the multi-iteration bound is also applicable. In this case, $W_1(\Gamma;u)=\min\{k_{\s{L}},k_{\s{R}}\}$, and hence Theorem~\ref{thm:compUpper} implies that recovery is possible under the same condition.
\end{example}

\begin{example}[Half clique half independent set]
Split $v(\Gamma)$ into two sets $\calA$ and $\calB$ with $|\calA| = |\calB| = k/2$: $\calA$ forms a clique, $\calB$ is an independent set, and each $b \in \calB$ is adjacent to exactly $\sqrt{k}$ vertices in $\calA$. Accordingly, Theorem~\ref{thm:compLower} implies that recovery is computationally hard whenever $k \leq n^{\frac{1}{2}-\varepsilon}$. On the other hand, the first item of Theorem~\ref{thm:compUpper} does not apply in this example because \eqref{eqn:condDstar1Main} is not satisfied. Nonetheless, for $u \in \calB$ we have
$W_L(\Gamma;u) \asymp \sqrt{k}\,(k/2)^{L-1} \asymp k^{L-\tfrac{1}{2}}$, and therefore the second item of Theorem~\ref{thm:compUpper} implies that recovery is possible whenever $k \gtrsim n^{\tfrac{L}{2L-1}} (\log n)^{\tfrac{1}{2L-1}}$, which complements our recovery lower bound for any $L = \omega(1)$.
\end{example}

\subsection{Extensions}\label{subsec:exten}

In this subsection, we present several direct extensions of the results above. We begin by showing that both the optimal and the efficient algorithms we introduce are robust to a simple monotone adversary, which is allowed to remove arbitrary edges outside the planted subgraph and add edges within it. We then develop guarantees for weaker notions of recovery.

\subsubsection{Semi-random model}

We now consider a framework for the \emph{recovery} problem in a semi-random graph setting known as the \emph{monotone adversary} or ``Sandwich Model'' introduced in \cite{feige2000finding}. As before, let $\Gamma = (\Gamma_n)_n$ be a sequence of graphs with $|v(\Gamma_n)| \leq n$. To exclude trivial settings, we assume that $\Gamma_n$ contains no isolated vertices. The recovery task is to exactly identify a hidden subgraph embedded in a noisy background graph under semi-random perturbations.

We assume the data is generated according to the following semi-random process. First, we pick a subgraph $\Gamma^\star\in\calS_\Gamma$. A random graph $\s{G}$ is then formed by retaining each edge of $\Gamma^\star$ independently with probability $p$, and adding each edge outside $\Gamma^\star$ independently with probability $q$. The resulting distribution over graphs is denoted $\calG_{\Gamma_n}(n,p,q)$. An adversary may then modify $\s{G}$ by removing edges that do not belong to the planted structure $\Gamma^\star$, and by adding edges within $\Gamma^\star$. The final observed graph is denoted by $\s{G}_{\adv}$. We model this perturbation via a (possibly randomized) function $\adv$ acting on both $\s{G}$ and $\Gamma^\star$, such that $\s{G}_{\adv} = \adv(\s{G},\Gamma^\star) \in \{0,1\}^{\binom{n}{2}}$. The adversary satisfies
\begin{align}
    \P\pp{\bigcap_{(i,j)\in\binom{[n]}{2}\setminus e(\Gamma^\star)}\s{A}_{\s{G}_{\adv}}(i,j)\leq \s{A}_{\s{G}}(i,j),\bigcap_{(i,j)\in e(\Gamma^\star)}\s{A}_{\s{G}_{\adv}}(i,j)\geq \s{A}_{\s{G}}(i,j)}=1.\label{eq:recovery_adversary}
\end{align}
We denote by $\calA$ the collection of all such valid (possibly randomized) functions $\adv$ satisfying the condition above. When randomized, these functions define conditional distributions over graphs. Thus, the observed graph is drawn from a distribution in the family $\s{Adv}(\calG_{\Gamma_n}(n,p,q))$, induced by applying some $\adv\in \calA$ to a sample from the planted model. Our goal is to design a recovery algorithm that identifies the underlying subgraph $\Gamma^\star$ exactly.

A \emph{recovery algorithm} is a function $\hat{\Gamma}_n: \{0,1\}^{\binom{n}{2}} \to \calS_\Gamma$ that maps an observed graph to a candidate subgraph. The (worst-case) error of such an algorithm is given by
\begin{align}
    \s{E}_{\s{adv}}(\hat{\Gamma}) \triangleq \sup_{\Gamma^\star\in\calS_\Gamma}\sup_{\adv_1 \in \calA_1} \P[\hat{\Gamma}(\s{G}_{\adv_1}) \neq \Gamma^\star].
\end{align}
We define the worst-case optimal error as
\begin{align}
    \s{E}_{\s{adv}}^{\star} \triangleq \inf_{\hat{\Gamma}:\{0,1\}^{n^{(2)}} \to \calS_\Gamma} \s{E}_{\s{adv}}(\phi).
\end{align}
As before, we say that \textit{exact recovery} is possible if $\limsup_{n \to \infty} \s{E}_{\s{adv}}^{\star} = 0$, and impossible otherwise.

It is straightforward to show that the optimal MLE in the vanilla setting (i.e., without an adversary) is robust to the adversarial model described above. This is summarized in the following result.
\begin{theorem}[MLE is robust]\label{thm:MLERobust}
    Let $\hat{\Gamma}_{\s{MLE}}$ denote the MLE for the recovery problem in Section~\ref{sec:problem}. Then, whenever $\hat{\Gamma}_{\s{MLE}}=\Gamma^\star$ is the unique solution to the recovery problem under $\calG_{\Gamma_n}(n,p,q)$ (i.e., in the absence of an adversary), it remains the unique solution under the adversarial mode $\s{Adv}(\calG_{\Gamma_n}(n,p,q))$.
\end{theorem}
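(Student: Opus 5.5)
The plan is a monotonicity argument on the MLE objective from \eqref{eqn:MLEGen}. For a graph $\s{G}$ and any $\Gamma'\in\calS_\Gamma$, write the score as
\[
\s{L}_{\s{G}}(\Gamma')\triangleq\abs{e(\Gamma')\cap\s{G}}.
\]
The hypothesis is that $\Gamma^\star$ is the unique maximizer of $\s{L}_{\s{G}}$ over $\calS_\Gamma$. That is, $\s{L}_{\s{G}}(\Gamma^\star)>\s{L}_{\s{G}}(\Gamma')$ for every $\Gamma'\neq\Gamma^\star$. We must show that the same strict inequalities hold with $\s{G}$ replaced by $\s{G}_{\adv}=\adv(\s{G},\Gamma^\star)$, for every valid $\adv\in\calA$, almost surely.

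The key step is to track how the score difference $\s{L}_{\s{G}}(\Gamma^\star)-\s{L}_{\s{G}}(\Gamma')$ changes under a single adversarial edge modification. By \eqref{eq:recovery_adversary}, the symmetric difference between $\s{G}$ and $\s{G}_{\adv}$ almost surely has only two kinds of edges:
\begin{itemize}
\item[(a)] edges $e\in e(\Gamma^\star)$ that are added, meaning absent in $\s{G}$ and present in $\s{G}_{\adv}$;
\item[(b)] edges $e\notin e(\Gamma^\star)$ that are removed.
\end{itemize}
For type (a), $\s{L}(\Gamma^\star)$ increases by exactly one. $\s{L}(\Gamma')$ increases by $\Ind\{e\in e(\Gamma')\}\le 1$. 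So the difference does not decrease. For type (b), $\s{L}(\Gamma^\star)$ is unchanged, since $e\notin e(\Gamma^\star)$. $\s{L}(\Gamma')$ decreases by $\Ind\{e\in e(\Gamma')\}\ge0$. So again the difference does not decrease.

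Because the score is additive over edges, we can sum these contributions, or equivalently apply the modifications one at a time. Both routes give
\[
\s{L}_{\s{G}_{\adv}}(\Gamma^\star)-\s{L}_{\s{G}_{\adv}}(\Gamma')\ \ge\ \s{L}_{\s{G}}(\Gamma^\star)-\s{L}_{\s{G}}(\Gamma')\ >\ 0
\]
for every $\Gamma'\neq\Gamma^\star$. Hence $\Gamma^\star$ remains the unique argmax on $\s{G}_{\adv}$.

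For a randomized adversary, the argument is applied pointwise on the probability-one event in \eqref{eq:recovery_adversary}. The argument is also pointwise in $\Gamma^\star$ and in $\s{G}$. Therefore it holds uniformly over all $\Gamma^\star\in\calS_\Gamma$ and all $\adv\in\calA$, so $\s{E}_{\s{adv}}(\hat{\Gamma}_{\s{MLE}})$ is bounded by the vanilla failure probability of the MLE.

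There is no real obstacle here. The only point needing care is to state the MLE as the maximizer of the edge-count score $\abs{e(\Gamma')\cap\s{G}}$, which holds since $p>q$ (Appendix~\ref{app:0}); the monotonicity argument relies on exactly this form. The estimator itself does not depend on $\adv$, so no knowledge of the modified edges is needed. The same edge-by-edge argument should carry over to each step of the peeling estimator \eqref{eqn:PeelMLE0}, conditional on the previous layers being correct, if one wishes to state robustness for $\hat{\Gamma}_{\s{pMLE}}$ as well.
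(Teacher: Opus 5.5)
Your proof is correct and follows the paper's argument. The paper splits the edges of a competitor $\Gamma$ into those outside $\Gamma^\star$, where the adversary can only delete, and those in $\Gamma^\star$, where it can only add, then chains the resulting inequalities; this is your edge-by-edge monotonicity of the score difference.

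One caution, about your closing aside only. At a peeling step $\ell$, take a competing candidate $\calW$. Any adversarially added edge in $e(\Gamma^\star)\setminus e(\Gamma^{(\ell),\star})$ (an edge of a later layer) enters the layer-$\ell$ score difference with coefficient $-\Ind\{e\in e(\calW)\}\le 0$, so such additions can shrink the margin and the argument does not transfer verbatim to $\hat{\Gamma}_{\s{pMLE}}$.
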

In other words, $\s{E}_{\s{adv}}(\hat{\Gamma}_{\s{MLE}})\to0$ whenever $\s{E}(\hat{\Gamma}_{\s{MLE}})\to0$. In light of the fact that $\s{E}^{\star}\leq\s{E}_{\s{adv}}^{\star}$, we see that, at least statistically, the monotone adversary does not alter the performance. 
\begin{proof}[Proof of Theorem~\ref{thm:MLERobust}]
    Recall that the MLE is given by
    \begin{align}
        \hat{\Gamma}_{\s{MLE}}(\s{G}) = \arg\max_{\Gamma'\in\calS_{\Gamma}}\abs{e(\Gamma')\cap\s{G}}.
    \end{align}
    Denote the corresponding $\s{G}$ after modification as $\s{G}_{\s{adv}}$. Then, for all feasible $\Gamma\neq\Gamma^\star$, we have the following chain of inequalities:
    \begin{align}
        \abs{e(\Gamma)\cap\s{G}_{\s{adv}}} &= \abs{e(\Gamma\cap(\Gamma^\star)^c)\cap\s{G}_{\s{adv}}}+\abs{e(\Gamma\cap\Gamma^\star)\cap\s{G}_{\s{adv}}}\\
        &\leq\abs{e(\Gamma\cap(\Gamma^\star)^c)\cap\s{G}}+\abs{e(\Gamma\cap\Gamma^\star)\cap\s{G}_{\s{adv}}}\\
        & = \abs{e(\Gamma)\cap\s{G}}+\abs{e(\Gamma\cap\Gamma^\star)\cap\s{G}_{\s{adv}}}-\abs{e(\Gamma\cap\Gamma^\star)\cap\s{G}}\\
        &<\abs{e(\Gamma^\star)\cap\s{G}}+\abs{e(\Gamma\cap\Gamma^\star)\cap\s{G}_{\s{adv}}}-\abs{e(\Gamma\cap\Gamma^\star)\cap\s{G}}\\
        & = \abs{e(\Gamma^c\cap\Gamma^\star)\cap\s{G}}+\abs{e(\Gamma\cap\Gamma^\star)\cap\s{G}_{\s{adv}}}\\
        & \leq  \abs{e(\Gamma^c\cap\Gamma^\star)\cap\s{G}_{\s{adv}}}+\abs{e(\Gamma\cap\Gamma^\star)\cap\s{G}_{\s{adv}}}\\
        & = \abs{e(\Gamma^\star)\cap\s{G}_{\s{adv}}},
    \end{align}
    where the second inequality follows from the fact that outside the planted subgraph (i.e., in  $\Gamma\cap(\Gamma^\star)^c$), the monotone adversary can only remove edges. Therefore, replacing $\s{G}_{\s{adv}}$ with $\s{G}$ can only increase the intersection. The second \emph{strict} inequality holds due to the assumption that $\Gamma^\star$ is the unique global maximizer of the original problem in the absence of an adversary. The third inequality follows from the fact that within the planted subgraph (i.e., in $\Gamma^c\cap\Gamma^\star$), the adversary can only add edges, so replacing $\s{G}$ with $\s{G}_{\s{adv}}$ can only increase the intersection. Hence, $\Gamma^\star$ stays optimal even when the graph is modified by the adversary.
\end{proof}

Next, we show that the computationally efficient algorithm proposed in \eqref{eqn:convex} is also robust to a monotone adversary. 
\begin{theorem}[Convex program is robust]\label{thm:ConvRobust}
        Let $\hat{\s{X}}^{(\alpha)}_{\s{con}}$ denote the convex program in \eqref{eqn:convex}. Then, whenever $\Gamma^\star$ is the unique solution of \eqref{eqn:convex} to the recovery problem under $\calG_{\Gamma_n}(n,p,q)$ (i.e., in the absence of an adversary), it remains the unique solution under the adversarial mode $\s{Adv}(\calG_{\Gamma_n}(n,p,q))$.
\end{theorem}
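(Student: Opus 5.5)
The argument mirrors the proof of Theorem~\ref{thm:MLERobust}. The key observation is that the feasible set of \eqref{eqn:convex} does not depend on the observation at all. The constraints involve only $\mathbf{s}$, $\s{X}$, $|v(\Gamma)|$, $|e(\Gamma)|$, $\alpha$, and the a priori known constant $\norm{\Gamma+\alpha\cdot\s{Diag}(v(\Gamma))}_\star$. Hence the adversary changes only the linear objective, replacing $\s{W}$ by $\s{W}_{\adv}$, with $[\s{W}_{\adv}]_{ij}=q^{-1}[\s{A}_{\s{G}_{\adv}}]_{ij}-1$ off the diagonal. By \eqref{eq:recovery_adversary}, $\Delta\triangleq\s{W}_{\adv}-\s{W}=q^{-1}(\s{A}_{\s{G}_{\adv}}-\s{A}_{\s{G}})$ satisfies $\Delta_{ij}\geq 0$ for $(i,j)\in e(\Gamma^\star)$ and $\Delta_{ij}\leq 0$ for $(i,j)\notin e(\Gamma^\star)$, and $\Delta_{ii}=0$. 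Equivalently, $\Delta_{ij}\geq0$ where $\s{X}^\star_{ij}=1$ and $\Delta_{ij}\le 0$ where $\s{X}^\star_{ij}=0$.

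Next, fix any feasible pair $(\s{X},\mathbf{s})$ with $\s{X}\neq\s{X}^\star$. I would write
\begin{align}
\innerP{\s{W}_{\adv},\s{X}}-\innerP{\s{W}_{\adv},\s{X}^\star}=\big(\innerP{\s{W},\s{X}}-\innerP{\s{W},\s{X}^\star}\big)+\innerP{\Delta,\s{X}-\s{X}^\star}.
\end{align}
The first bracket is strictly negative, because $\s{X}^\star$ is the unique maximizer in the non-adversarial problem. For the second term, the entrywise box constraint $\mathbf{0}\leq\s{X}\leq\mathbf{J}$ gives $\s{X}_{ij}-\s{X}^\star_{ij}\leq 0$ on $e(\Gamma^\star)$ (where $\s{X}^\star_{ij}=1$) and $\s{X}_{ij}-\s{X}^\star_{ij}\geq0$ off $e(\Gamma^\star)$ (where $\s{X}^\star_{ij}=0$). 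In both cases the product $\Delta_{ij}(\s{X}_{ij}-\s{X}^\star_{ij})$ is $\leq 0$, so $\innerP{\Delta,\s{X}-\s{X}^\star}\leq 0$. The total difference is therefore strictly negative.

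It then remains to note that $(\s{X}^\star,\mathbf{s}^\star)$ is feasible. This holds because the nuclear-norm constraint holds with equality and the remaining constraints are clearly satisfied. Hence $\s{X}^\star$ is the unique maximizer of the adversarial program, which proves the theorem. Feasibility of $\s{X}^\star$ and the observation-independence of the feasible set are exactly what the argument uses.

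The only delicate point is uniqueness when several $\mathbf{s}$ accompany the same $\s{X}$. I would handle it by stating uniqueness in terms of the $\s{X}$-component, which is what $\hat{\s{X}}_{\mathrm{con}}$ outputs. The comparison above only ranges over feasible $\s{X}\neq\s{X}^\star$, so it is unaffected by this ambiguity.
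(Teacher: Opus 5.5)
Your proposal is correct and takes essentially the paper's approach. The paper splits $\innerP{\s{W}_{\s{adv}},\s{X}}$ over the planted support $\s{X}^\star$ and its complement, and uses the same monotonicity of the adversary together with $\mathbf{0}\leq\s{X}\leq\mathbf{J}$; your single bound $\innerP{\s{W}_{\s{adv}}-\s{W},\s{X}-\s{X}^\star}\leq 0$ compactly packages its two ``$\leq$'' steps, and you also make explicit two facts it uses only implicitly, namely that the feasible set of \eqref{eqn:convex} is independent of the observed graph and that $(\s{X}^\star,\mathbf{s}^\star)$ is feasible.
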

\begin{proof}[Proof of Theorem~\ref{thm:ConvRobust}]
    The proof follows the same ideas as in the proof of Theorem~\ref{thm:MLERobust}. Denote the corresponding $\s{W}$ after modification as $\s{W}_{\s{adv}}$. Also, recall that $\s{X}^\star$ denotes the adjacency matrix of the planted subgraph $\Gamma^\star$. Let $\s{X}^\star_{\s{comp}}$ denote the adjacency matrix of $(\Gamma^\star)^c$, and for any feasible $\s{X}$ let $\s{X}_{\s{comp}}\triangleq\mathbf{J}-\s{X}$. Finally, recall that $\odot$ denotes the element-wise Hadamard product. Then, for all feasible $\s{X}\neq\s{X}^\star$, we have the following chain of inequalities:
    \begin{align}
        \innerP{\s{W}_{\s{adv}},\s{X}} &= \innerP{\s{W}_{\s{adv}},\s{X}\odot\s{X}^\star_{\s{comp}}}+\innerP{\s{W}_{\s{adv}},\s{X}\odot\s{X}^\star}\\
        &\leq\innerP{\s{W},\s{X}\odot\s{X}^\star_{\s{comp}}}+\innerP{\s{W}_{\s{adv}},\s{X}\odot\s{X}^\star}\\
        & = \innerP{\s{W},\s{X}}+\innerP{\s{W}_{\s{adv}}-\s{W},\s{X}\odot\s{X}^\star}\\
        &<\innerP{\s{W},\s{X}^\star}+\innerP{\s{W}_{\s{adv}}-\s{W},\s{X}\odot\s{X}^\star}\\
        & = \innerP{\s{W},\s{X}^\star\odot\s{X}_{\s{comp}}}+\innerP{\s{W}_{\s{adv}},\s{X}\odot\s{X}^\star}\\
        & \leq  \innerP{\s{W}_{\s{adv}},\s{X}^\star\odot\s{X}_{\s{comp}}}+\innerP{\s{W}_{\s{adv}},\s{X}\odot\s{X}^\star}\\
        & = \innerP{\s{W}_{\s{adv}},\s{X}^\star},
    \end{align}
    where the second inequality arises because outside the planted subgraph (that is, over the region $\s{X} \odot \s{X}^\star_{\s{comp}}$), the monotone adversary is limited to deleting edges. As a result, substituting $\s{W}_{\s{adv}}$ with $\s{W}$ can only lead to a larger objective value over this region. The second strict inequality follows from the premise that $\s{X}^\star$ is the unique global maximizer of the original problem when no adversary is present. The third inequality holds because within the planted subgraph (specifically, over $\s{X}^\star \odot \s{X}_{\s{comp}}$), the adversary is only allowed to insert edges. Consequently, replacing $\s{W}$ with $\s{W}_{\s{adv}}$ can only increase the objective contribution over this region, implying that $\s{X}^\star$ remains the optimal solution even under adversarial modifications to the graph.
\end{proof}

\subsubsection{Individual layers recovery} 

The strict notion of exact recovery precludes recovering subgraphs with very sparse layers (e.g., as in Example~\ref{examp:kite}). One way to bypass this inherent limitation is to consider weaker notions of recovery. For instance, one might be interested in exactly recovering only the subgraph $\s{H} \subseteq \Gamma$ that achieves the maximum subgraph density of $\Gamma$. More generally, for any $\kappa \in [M(\Gamma)]$---where $M(\Gamma)$ denotes the number of components in the onion decomposition of $\Gamma$, i.e., $\Gamma = \bigcup_{\ell=1}^{M}\calD^{(\ell)}$ (see Definition~\ref{def:onionDec})---we could aim to exactly recover the first $\kappa$ layers of $\Gamma$, namely $\Gamma^{(\kappa)}= \bigcup_{\ell=1}^{\kappa}\calD^{(\ell)}$. Accordingly, we define the worst-case error probability associated with an estimator $\hat{\Gamma}$ as follows:
\begin{align}
    \s{E}_n^{(\kappa)}(\hat{\Gamma})\triangleq\sup_{\Gamma^\star\in\calS_\Gamma}\pr_{\mathcal{G}_{\Gamma_n^\star}(n,p_n,q_n)}[\hat{\Gamma}(\s{G})\neq\Gamma^{(\kappa),\star}],
\end{align}
and the optimal error probability as $\s{E}_n^{(\kappa),\star} = \inf_{\hat{\Gamma}}\s{E}_n^{(\kappa)}(\hat{\Gamma})$. We say that a sequence of estimators $(\hat{\Gamma}_n)_{n \in \mathbb{N}}$ achieves \emph{$\kappa$-exact recovery} if $\limsup_{n\to\infty}\s{E}_n^{(\kappa)}(\hat{\Gamma}_n)= 0$. Conversely, we say that \emph{$\kappa$-exact recovery is impossible} if $\liminf_{n \to \infty} \s{E}_n^{(\kappa),\star}>0$. Finally, we define the \emph{truncated likelihood peeling algorithm} as
\begin{align}
    \hat{\Gamma}^{(\kappa)}_{\s{pMLE}} = \hat{\Gamma}^{(\kappa)}_{\s{MLE}},\label{eqn:PeelMLEKappa}
\end{align}
where $\{\hat{\Gamma}^{(\ell)}_{\s{MLE}}\}_{\ell\geq1}$ is defined in \eqref{eqn:PeelMLE0}. Recall that $\mu(\Gamma\vert\Gamma^{(\kappa-1)})=\eta(\Gamma^{(\kappa)}\vert\Gamma^{(\kappa-1)})$. We have the following result.
\begin{theorem}[$\kappa$-exact recovery]\label{thm:kappaExactRec}
Fix a sequence of subgraphs $\Gamma=(\Gamma_n)_n$, $\kappa\in[M(\Gamma)]$, and assume $p_n,q_n=\Theta(1)$. Consider the onion decomposition of $\Gamma$ in Definition~\ref{def:onionDec}.
\begin{enumerate}
    \item $\kappa$-exact recovery is statistically impossible if
\begin{align}
        \mu(\Gamma\vert\Gamma^{(\kappa-1)})\leq \frac{(1-\varepsilon)\cdot \log n}{d_{\s{KL}}(p||q)},\label{eq:condDenseStatKappa}
\end{align}
for any $\varepsilon>0$.
\item $\kappa$-exact recovery of $\Gamma^\star$, via the truncated likelihood peeling algorithm in \eqref{eqn:PeelMLEKappa}, is possible if
\begin{align}
    \mu(\Gamma\vert\Gamma^{(\kappa-1)})\geq \s{C}\frac{(1+\varepsilon)\cdot \log n}{d_{\s{KL}}(p||q)},\label{eqn:MLEcondKappa}
\end{align}
for any $\varepsilon>0$ and some constant $\s{C}>0$.
\end{enumerate}
\end{theorem}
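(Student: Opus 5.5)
Both parts follow by applying, to the truncated graph $\Gamma^{(\kappa)}$, the arguments behind Theorems~\ref{thm:lowerp1} and~\ref{thm:MLE}. We therefore treat the layers up to $\kappa$ exactly as in those proofs and show that layers $\kappa+1,\dots,M$ do not interfere. The structural fact we rely on is that, by Definition~\ref{def:onionDec} and Lemma~\ref{lem:onionUnique}, the onion decomposition of $\Gamma^{(\kappa)}$ itself is exactly $\Gamma^{(0)},\dots,\Gamma^{(\kappa)}$. Indeed, at each step $\ell\le\kappa$ the maximizer of $\eta(\cdot\vert\Gamma^{(\ell-1)})$ over subgraphs of $\Gamma$ lies in $\Gamma^{(\kappa)}$, so it is also the maximizer within $\Gamma^{(\kappa)}$. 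Consequently $\mu_{\s{min}}(\Gamma^{(\kappa)})=\eta(\Gamma^{(\kappa)}\vert\Gamma^{(\kappa-1)})=\mu(\Gamma\vert\Gamma^{(\kappa-1)})$ by Lemma~\ref{lem:equivminimalMSD}.

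\emph{Achievability.} The truncated peeling estimator runs the first $\kappa$ steps of \eqref{eqn:PeelMLE0}. The proof of Theorem~\ref{thm:MLE} proceeds layer by layer. At step $\ell$ it conditions on $\hat\Gamma^{(\ell-1)}_{\s{MLE}}=\Gamma^{(\ell-1),\star}$ and bounds, by a union bound over competitors $\calD'\neq\calD^{(\ell),\star}$ grouped by overlap size, the probability that $\calD'$ captures at least as many edges of $\s{G}\setminus\Gamma^{(\ell-1),\star}$. The required condition at step $\ell$ is $\eta(\Gamma^{(\ell)}\vert\Gamma^{(\ell-1)})\ge \s{C}(1+\varepsilon)\log n/d_{\s{KL}}(p||q)$.

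The edges of layers $>\kappa$ are present with probability $p$ rather than $q$. This only affects competitors $\calD'$ that overlap $\Gamma^\star\setminus\Gamma^{(\kappa),\star}$, and we must handle it. My plan is to note that the layer values are decreasing, so every later-layer edge incident to new vertices of $\calD'$ belongs to a cut of density at most $\lambda_{\kappa+1}<\lambda_\ell$. The competitor's overlap with later layers therefore contributes at most $\lambda_{\kappa+1}$ edges per new vertex. The term for $\calD'$ in the union bound then still decays provided the gap between $\lambda_\ell\cdot|v(\calD'\setminus\calD^\star)|$ and the overlap count is absorbed into the constant $\s{C}$. This is the step I expect to be the main obstacle. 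If the clean density argument fails, the fallback is to stochastically dominate by giving competitors every later-layer edge they touch and to accept a worse constant. Layer values are decreasing, so step $\kappa$ has the smallest value among steps $1,\dots,\kappa$, and a union over $\kappa\le|\Gamma|$ steps gives total error $o(1)$.

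\emph{Converse.} We use the genie-aided Fano argument of Theorem~\ref{thm:lowerp1}. The genie reveals $\Gamma^{(\kappa-1),\star}$ and all of $\Gamma^\star\setminus\Gamma^{(\kappa),\star}$ except for the location of layer $\kappa$. Recovering $\Gamma^{(\kappa),\star}$ is then at least as hard as locating $\calD^{(\kappa),\star}$ among its placements that are consistent with the revealed structure. The last-layer argument in the proof of Theorem~\ref{thm:lowerp1} uses only that the cut $\Gamma^{(\kappa)}\vert\Gamma^{(\kappa-1)}$ has relative density at most $(1-\varepsilon)\log n/d_{\s{KL}}$. Its construction applies verbatim: swap a vertex of the new part for one of the $\approx n$ outside vertices, which changes $\approx\eta$ edges. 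This produces $n^{1-o(1)}$ hypotheses whose pairwise KL divergence is about $\eta\, d_{\s{KL}}(p||q)\le(1-\varepsilon)\log n$, and Fano's inequality then gives error bounded away from zero. When choosing swap vertices outside $v(\Gamma^\star)$, we must check that the later layers do not constrain the swap. This is handled by revealing those layers relative to the swapped vertex set, which leaves them unaffected.
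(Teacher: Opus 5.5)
Your reduction is correct and matches the paper's intended route ("verbatim, restricted to $\Gamma^{(\kappa)}$"): the onion decomposition of $\Gamma^{(\kappa)}$ is the truncation, so $\mu_{\s{min}}(\Gamma^{(\kappa)})=\lambda_\kappa$. However, the step you flag as the main obstacle cannot be closed, and the converse also breaks. In both cases the reason is that the statement itself fails for $\kappa<M$.

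\textbf{Achievability.} Your overlap bound is fine, provided you use $\lambda_{\ell+1}$ rather than $\lambda_{\kappa+1}$. At step $\ell<\kappa$ the layers $\ell+1,\dots,\kappa$ are also planted, and the per-layer computation in the proof of Theorem~\ref{thm:MLE} ignores this too. Apply maximality of $\lambda_{\ell+1}=\mu(\Gamma\vert\Gamma^{(\ell)})$ to $\s{F}=\Gamma^{(\ell),\star}\cup(\calD'\cap(\Gamma^\star\setminus\Gamma^{(\ell),\star}))$. This shows that a competitor with $s$ vertices outside $v(\Gamma^{(\ell),\star})$ uses at most $\lambda_{\ell+1}s$ later-layer edges.

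That guarantees only $(\lambda_\ell-\lambda_{\ell+1})s$ edges that must appear by chance, and $(\lambda_\ell-\lambda_{\ell+1})/\lambda_\ell$ can tend to $0$. Your fallback (granting competitors every later-layer edge) is exactly this bound, so it fails for the same reason.

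Concretely, take $\Gamma=\calK_k\sqcup(\calK_k-e)$ and $\kappa=1$, so $\lambda_1=\frac{k-1}{2}$ and $\lambda_2=\frac{k-1}{2}-\frac1k$. Put the planted clique on $A$ and the planted $\calK_k-e$ on $B$. Compare ``clique on $A$, non-edge somewhere in $B$'' with ``clique on $B$, non-edge somewhere in $A$''. For $p=1$ these have equal likelihood on the event $\{e\in\s{G}\}$, which has probability $q$. For $p<1$ they have asymptotically equal posterior mass. So every estimator errs with probability bounded away from $0$, even though $\mu(\Gamma\vert\Gamma^{(0)})=\frac{k-1}{2}$ can be $\gg\log n$. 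Item~2 therefore needs an extra hypothesis, such as a relative gap between $\lambda_\ell$ and $\lambda_{\ell+1}$ for $\ell\le\kappa$.

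Even for $\kappa=M$ there is a problem. The overlap-grouped union bound you import discards full-vertex-overlap competitors via Lemma~\ref{lem:unique-full-overlap}, and that proof is invalid: $\psi(\calD^{(\ell)})$ has exactly as many edges as $\calD^{(\ell)}$. For $\Gamma=\calK_k-e$ and $p=1$, all $\binom{k}{2}-1$ relocations of the non-edge are present as soon as $e\in\s{G}$. Hence exact recovery fails for every $k$.

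\textbf{Converse.} A single-vertex swap does not change $\approx\lambda_\kappa$ edges. By maximality of the layer, every new vertex has cut-degree at least $\lambda_\kappa$, since deleting it cannot raise the relative density. For a clique layer the cut-degree is exactly $2\lambda_\kappa$.

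With $\approx n$ hypotheses, Fano gives impossibility only when this cut-degree times $d_{\s{KL}}(p||q)$ is below $(1-\varepsilon)\log n$. Here Fano must go through mutual information against a reference law, because pairwise KL is infinite when $p=1$. For cliques this yields only $k\lesssim\log n/d_{\s{KL}}$, instead of the claimed $2\log n/d_{\s{KL}}$.

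The paper argues differently. It applies Fano over all of $\calM(\gamma^{(M-1)},\Gamma)$, whose entropy is $\log\binom{n'}{k'}\approx k'\log n$ when $k'=n^{o(1)}$. This is compared with the edge-by-edge bound $I\le\lambda_M k' d_{\s{KL}}(p||q)$, which is what produces the constant $1$.

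For $\kappa<M$ there is a further problem that no genie construction can fix. Genie information must be a fixed function of $\Gamma^\star$; it cannot be chosen ``relative to the swapped vertex set''. If layers $>\kappa$ are not revealed, their planted edges can locate layer $\kappa$.

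For example, let $\Gamma$ be $\calK_k$ with $k\approx\log n/d_{\s{KL}}(p||q)$, plus $m=n^{0.6}$ further vertices. Join each of these to $k/4$ clique vertices so that every clique vertex receives $\approx m/4$ of them. Then $\lambda_1=\frac{k-1}{2}\le(\frac12+o(1))\frac{\log n}{d_{\s{KL}}(p||q)}$ and $\lambda_2=\frac{k}{4}$. Yet clique vertices have degree elevated by $\approx(p-q)m/4\gg\sqrt{n\log n}$, so degree thresholding recovers $\Gamma^{(1),\star}$ w.h.p. Item~1 also fails for $\kappa<M$, and the paper's one-line ``verbatim'' reduction addresses neither issue.
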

Going back to Example~\ref{examp:kite}, if we take $\kappa = 1$, this corresponds to recovering the clique while ignoring the single edge attached to it (which precluded strong recovery). Since $\calD^{(1)} = K_k$, Theorem~\ref{thm:kappaExactRec} tells us that $1$-exact recovery is possible (or impossible) when $k \gtrless C \log n$ for some $C > 0$, just as in the planted clique recovery problem. Finally, the proof of Theorem~\ref{thm:kappaExactRec} follows verbatim from the arguments used in the proofs of Theorems~\ref{thm:lowerp1} and~\ref{thm:MLE}, with the recovery restricted to $\Gamma^{(\kappa)}$. The details are therefore omitted.

\subsubsection{Almost-exact recovery} 

Next, we turn to the notion of \emph{almost-exact recovery}, defined as follows (see, e.g., \cite{hajek2016information,wu2018statistical}).
\begin{definition}[Almost-exact recovery]
An estimator $\hat{\Gamma}$ almost-exactly recovers $\Gamma^\star$ if, as $n \to \infty$, $\s{d}_{\s{H}}(\hat{\Gamma},\Gamma^\star)/|e(\Gamma)| \to 0$ in probability, where $\s{d}_{\s{H}}$ denotes the Hamming distance between the adjacency matrices of $\hat{\Gamma}$ and $\Gamma^\star$. 
\end{definition}

In general, lower bounds for exact recovery do not directly imply lower bounds for almost-exact recovery, but any estimator that achieves exact recovery also achieves almost-exact recovery. To state our main results, we first introduce some notation. Consider the onion decomposition of $\Gamma_n$ in Definition~\ref{def:onionDec}, and define for $\ell\in0\cup[M(\Gamma)]$ and $n\in\mathbb{N}$ the $\ell^{\s{th}}$ \emph{leftover-edge fraction} as
\begin{align}
\s{Res}_\ell^{(n)}\triangleq\frac{|e(\Gamma_n\setminus \Gamma_n^{(\ell)})|}{|e(\Gamma_n)|}.
\end{align}
Fix any null sequence $\varepsilon_n\downarrow 0$ and define
\begin{align}
\ell_{\s{LB}}(n)&\triangleq\max\{\ell\in0\cup[M(\Gamma)]:\; \s{Res}_\ell^{(n)}>\varepsilon_n\},\label{eqn:ellLB}\\
\ell_{\s{UB}}(n)&\triangleq\min\{\ell\in0\cup[M(\Gamma)]:\; \s{Res}_\ell^{(n)}\le \varepsilon_n\}.\label{eqn:ellUB}
\end{align}
Heuristically, $\ell_{\s{LB}}$ is the last index for which the leftover-edge fraction is $\Omega(1)$, whereas $\ell_{\s{UB}}$ is the first index for which the leftover-edge fraction is $o(1)$. Since $\s{Res}_\ell^{(n)}$ is nonincreasing, it follows that $\ell_{\s{UB}}(n) = \ell_{\s{LB}}(n)+1$. We nevertheless introduce both notations for clarity.

\paragraph{Lower bounds.} We begin with our impossibility results. For almost-exact recovery, an interesting dichotomy emerges. Roughly speaking, if the last non-negligible leftover component has sub-logarithmic maximum subgraph relative density, then almost-exact recovery is impossible. On the other hand, when this relative density is super-logarithmic, almost-exact recovery remains impossible whenever it is at most logarithmic in $n$. This dichotomy is summarized in the following result, proved in Appendix~\ref{app:weakRec}.
\begin{theorem}\label{thm:subLogAlmostExactMain}
Fix a sequence of subgraphs $\Gamma=(\Gamma_n)_n$, assume $p_n,q_n=\Theta(1)$, and consider the onion decomposition of $\Gamma$ in Definition~\ref{def:onionDec}. 
\begin{enumerate}
    \item If $|v(\Gamma\setminus\Gamma^{(\ell_{\s{LB}})})|=o(n)$ and
\begin{align}
\mu(\Gamma\vert\Gamma^{(\ell_{\s{LB}})})=o\p{\frac{\log|v(\Gamma\setminus\Gamma^{(\ell_{\s{LB}})})|}{\log \log|v(\Gamma\setminus\Gamma^{(\ell_{\s{LB}})})|}},\label{eq:condCopyLBMain}
\end{align}
then almost-exact recovery is impossible. 
\item If $\mu(\Gamma\vert\Gamma^{(\ell_{\s{LB}})})\geq \alpha_n\cdot \log|v(\Gamma\vert\Gamma^{(\ell_{\s{LB}})})|$, for some $\alpha_n=\Omega(1)$, then there exists a constant $\underline{C}>0$ such that almost-exact recovery is impossible if
    \begin{align}
        \mu(\Gamma\vert\Gamma^{(\ell_{\s{LB}})})\leq \underline{C}\cdot \log n.\label{eq:condDenseStatweakMain}
    \end{align} 
\end{enumerate}
\end{theorem}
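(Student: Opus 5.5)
Both parts reduce to a \emph{genie-aided} lower bound. We reveal to the estimator the planted copy of the first $\ell_{\s{LB}}$ layers, $\Gamma^{(\ell_{\s{LB}}),\star}$, together with the observed graph. This information can only help, so a lower bound for the genie-aided problem is a lower bound for the original one. Since $\s{Res}_{\ell_{\s{LB}}}^{(n)}>\varepsilon_n$ does not vanish fast enough, the edges of $\Gamma\setminus\Gamma^{(\ell_{\s{LB}})}$ carry a non-negligible fraction of $e(\Gamma)$. Hence almost-exact recovery of $\Gamma^\star$ forces almost-exact recovery of the residual cut $\Gamma^\star\setminus\Gamma^{(\ell_{\s{LB}}),\star}$, viewed as a graph-cut with selected set $v(\Gamma^{(\ell_{\s{LB}})})$. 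I would first make this reduction precise. If $\s{d}_{\s{H}}(\hat\Gamma,\Gamma^\star)\le \delta|e(\Gamma)|$, then intersecting with the complement of the revealed layer gives an estimate of the residual with Hamming error at most $\delta|e(\Gamma)| \le (\delta/\varepsilon_n)\cdot|e(\Gamma\setminus\Gamma^{(\ell_{\s{LB}})})|$. By choosing $\varepsilon_n$ decaying slowly relative to the error rate, this is a vanishing fraction.

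For part 1 (sub-logarithmic relative density), I would use a counting/local-perturbation argument in the spirit of Example~\ref{examp:kite}. Let $m=|v(\Gamma\setminus\Gamma^{(\ell_{\s{LB}})})|=o(n)$. The residual vertices can be swapped with outside vertices: for a residual vertex $u$ and a fresh vertex $w\notin v(\Gamma^\star)$, consider the copy obtained by relocating $u$ to $w$. By the definition of $\mu(\Gamma\vert\Gamma^{(\ell_{\s{LB}})})$, every set of residual vertices of size $s$ spans at most $s\,\mu(\Gamma\vert\Gamma^{(\ell_{\s{LB}})})$ edges in the cut. Consequently there are many residual vertices of degree $O(\mu)$ into the rest, roughly a constant fraction by averaging. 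For each such vertex, the probability that some outside vertex $w$ has a neighbourhood pattern at least as likely as $u$'s is bounded below. Indeed, there are $n-|v(\Gamma)|=\Omega(n)$ candidates and each matches with probability $\ge c^{\mu}$, where $c=\min(q,1-p)$. The condition \eqref{eq:condCopyLBMain}, $\mu=o(\log m/\log\log m)$, is tailored to ensure that even after a union over $m$ vertices the posterior stays spread. So I would lower bound the posterior mass on copies differing in a constant fraction of residual vertices. I would do this by showing that the number of equally likely configurations is exponential in $m$ and, via a second-moment computation, concentrated. The $\log\log$ term arises from counting $\binom{m}{s}$ and degree profiles. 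It then follows from a Bayes-risk (posterior-sampling) argument that any estimator errs on a $\Omega(1)$ fraction of residual edges. This part is the main obstacle. The hard step is controlling dependencies between swaps that share neighbours, and I would first try to restrict to an independent-set-like family of low-degree residual vertices whose neighbourhoods are disjoint.

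For part 2, I would follow the Fano route of Theorem~\ref{thm:lowerp1}, applied to the genie-aided residual cut. I would construct a packing of $\exp(\Omega(m\log n))$ residual embeddings that pairwise differ in a constant fraction of edges, using $m\ge$ a $\log$-scale condition given by $\mu\ge\alpha_n\log m$. I would then bound the mutual information by $|e(\Gamma\setminus\Gamma^{(\ell_{\s{LB}})})|\,d_{\s{KL}}(p\|q)$, restricted to a densest subcut, which is at most $m\,\mu\, d_{\s{KL}}$. The generalized Fano inequality for approximate recovery (with Hamming-ball counting) then gives impossibility when $\mu\le \underline{C}\log n$, with $\underline{C}$ absorbing the packing constants and $d_{\s{KL}}(p\|q)=\Theta(1)$.
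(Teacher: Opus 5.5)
Your genie step is the paper's (Lemma~\ref{lem:genWeak}). However, you cannot ``choose $\varepsilon_n$ decaying slowly relative to the error rate'': $\varepsilon_n$ is fixed before the estimator, and the estimator's rate is arbitrary. What the reduction actually needs is that $\s{Res}^{(n)}_{\ell_{\s{LB}}}$ stays bounded away from $0$; the paper tacitly relies on the same thing.

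\textbf{Part 1 has a genuine gap.} Write $\mu$ for $\mu(\Gamma\vert\Gamma^{(\ell_{\s{LB}})})$ and $T=\Gamma^\star\setminus\Gamma^{(\ell_{\s{LB}}),\star}$ for the planted tail. The hypothesis allows $\mu\to\infty$, and then relocating vertices of an independent family with disjoint neighbourhoods cannot produce an $\Omega(1)$ fraction of edge errors. For example, if the tail is a vertex-disjoint union of cliques $\calK_{2\mu+1}$ (relative density exactly $\mu$), such a family has at most one vertex per clique. It therefore touches only $2\mu$ of each clique's $\mu(2\mu+1)$ edges, an $O(1/\mu)$ fraction. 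Separately, ``exponentially many equally likely configurations'' is not a Hamming-loss bound on its own: you would still have to show the posterior is spread far from every possible output.

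The paper's argument is global rather than local, and this is the idea you are missing. With probability at least $1/2$, $\s{G}$ contains a second full extension $\Gamma'\in\calM(\Gamma^{(\ell_{\s{LB}}),\star},\Gamma)$ whose tail is vertex-disjoint from the planted one. This is proved by a generalized expectation threshold plus the spread lemma (Theorems~\ref{th:Zadik2} and~\ref{theorem:sublogProbALE}, using Lemma~\ref{lem:probcalc2Alsmot} and the fact that every subcut has relative density at most $\mu$). The $\log\log$ in \eqref{eq:condCopyLBMain} is exactly the $\log|e|$ loss of the spread lemma, not a $\binom{m}{s}$ count. Given $\Gamma'$, and taking $p=1$ after the coupling reduction, Lemma~\ref{lem:symmetryTwoCop} makes the posterior uniform on $\{\Gamma^\star,\Gamma'\}$. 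Since the two tails are at Hamming distance $2|e(T)|$, any estimator errs by at least $|e(T)|$ with probability at least $1/4$ under one of them.

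If you want to stay local, drop independence and use a leave-one-out genie instead:
\begin{itemize}
\item Reveal the positions of all template vertices except one vertex $a$ of tail-degree at most $4\mu$. Its posterior position is then uniform over the roughly $nq^{d_a}\to\infty$ vertices adjacent in $\s{G}$ to its revealed neighbours.
\item Such low-degree vertices carry at least half the tail edges. Edges avoiding them span fewer than $|e(T)|/(2\mu)$ new vertices, hence number at most $|e(T)|/2$ by the relative-density bound.
\item Summing over them bounds the expected overlap by $|e(T)|/2+o(|e(T)|)$, say when $m\le n^{1-\epsilon}$.
\end{itemize}

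\textbf{Part 2 takes a different route, and the packing claim needs repair.} The paper uses no packing. It imports the detection bound $\bE_{\calH_0}\s{L}^2=1+o(1)$, which is where the super-logarithmic assumption enters. This gives both $d_{\s{TV}}(\pr_{\calH_0},\pr_{\calH_1})=o(1)$ and that, under the null, only an $o(1)$ fraction of pairs of present copies intersect. Uniformity of the posterior, a change of measure, and Jensen then bound the expected overlap by $o(|e(\bar\Gamma)|)$ (Theorem~\ref{thm:lowerp2weak}).

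Fano can also work, but the packing is the whole content, and as you state it (with $m$ counting all residual vertices) it can be false. If many residual vertices lie in later layers carrying only an $\varepsilon_n$-fraction of the edges, no $\exp(\Omega(m\log n))$ family differs pairwise in a constant fraction of edges. An example is $\calK_k$ with a pendant path of length $k^{3/2}$, where $\ell_{\s{LB}}=0$ for slowly decaying $\varepsilon_n$.

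Instead, build the packing on the layer $\calD^{(\ell_{\s{LB}}+1)}$, which has $m_1$ new vertices and $\mu m_1$ edges:
\begin{itemize}
\item For two uniform extensions, any common edge set of size at least $(1-c)\mu m_1$ spans at least $(1-c)m_1$ new vertices.
\item By Lemma~\ref{lem:probcalc2Alsmot}, such an overlap has probability at most $2^{\mu m_1}(m_1/n)^{(1-c)m_1}$.
\item Gilbert--Varshamov then gives a packing of size $\exp(\Omega(m_1\log(n/m_1)))$ when $\mu\lesssim\log(n/m_1)$.
\end{itemize}
Compare this with $I\le|e(T)|\,d_{\s{KL}}(p\|q)\approx\mu m_1\,d_{\s{KL}}(p\|q)$, not $m\mu\,d_{\s{KL}}$. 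Done this way, your route does not need the super-logarithmic assumption at all.
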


A few remarks are in order. First, the proofs of \eqref{eq:condCopyLBMain} and \eqref{eq:condDenseStatweakMain} rely on different techniques, each tailored to its own regime of relative density and typically suboptimal in the other. In both cases, the first step is to reduce almost-exact recovery for $\Gamma$ to recovering the last non-negligible $\Omega(1)$ leftover-edge fraction of $\Gamma$, namely, $\Gamma \setminus \Gamma^{(\ell_{\s{LB}})}$. 

To prove \eqref{eq:condCopyLBMain}, one of the main ingredients is a generalization of the subgraph expectation threshold \cite{kahn2007thresholds}, and more specifically, of the modified subgraph expectation threshold studied in \cite{mossel2022second}, which analyzes the threshold for the appearance of \emph{any} isomorphic copy of $\Gamma$ in $\s{G} \sim \calG(n,q)$. For our purposes, however, not all copies are admissible. Indeed, recall that the recovery problem here is supplied with $\Gamma^{(\ell_{\s{LB}})}$ and is tasked with finding $\Gamma$. Thus, the admissible copies are precisely those extending $\Gamma^{(\ell_{\s{LB}})}$, i.e., the copies contained in $\calM(\Gamma^{(\ell_{\s{LB}})},\Gamma)$. To handle this, we derive a generalization of the modified subgraph expectation threshold that accounts for the appearance of such constrained copies. 

To prove \eqref{eq:condDenseStatweakMain}, we establish a connection between almost-exact recovery and a hypothesis-testing variant of the problem (see \eqref{eqn:super_hypo}), and then leverage the known impossibility result \eqref{eqn:Dec2IMP} for the latter. Most notably, as discussed right after \eqref{eqn:Dec1IMP}--\eqref{eqn:Dec2IMP}, in the sub-logarithmic regime there exists a region where detection is statistically possible while, as we show above, almost-exact recovery is always impossible. This highlights why hypothesis-testing--based bounds are insufficient in this regime.

Let us now explain the scope of Theorem~\ref{thm:subLogAlmostExactMain}. Comparing Theorems~\ref{thm:lowerp1} and~\ref{thm:subLogAlmostExactMain}, we see that if $\Gamma$ is balanced and has super-logarithmic maximum density, i.e., $\mu(\Gamma_n) = \Omega(\log |v(\Gamma_n)|)$, then the two bounds coincide (up to constant factors). In this case, almost-exact and exact recovery are statistically equivalent. In general, however, the bounds need not coincide. As noted after Theorem~\ref{thm:lowerp1} (see, e.g., Example~\ref{examp:kite}), the strict exact-recovery criterion rules out recovering graphs with a dense core and a very sparse appendage, whereas almost-exact recovery may remain possible since one may effectively ``discard'' the problematic appendage, provided its contribution is $o(|e(\Gamma_n)|)$. We illustrate this phenomenon with two examples.

\begin{example}[A kite]
Recall Example~\ref{examp:kite}, where $\Gamma$ is a kite on $k+1$ vertices---namely, a clique with $k = |v(\calK_k)|$ vertices, with one of its vertices connected by an edge to an additional vertex. In this case, $\ell_{\s{LB}} = 0$, so $\Gamma^{(\ell_{\s{LB}})} = \emptyset$ and the core is a clique on $k$ vertices. Thus, $\Gamma\setminus\Gamma^{(\ell_{\s{LB}})} = \Gamma$ and $\mu(\Gamma\vert\Gamma^{(\ell_{\s{LB}})}) = \mu(\Gamma) = \mu(\calK_k)=\frac{k-1}{2}$. Hence, almost-exact recovery is impossible precisely when recovery of a $k$-clique is impossible, namely for $k \leq C \log n$ for some constant $C > 0$, in agreement with folklore results. Therefore, while the dangling edge precludes exact recovery, it has no effect on almost-exact recovery.
\end{example}

\begin{example}[A kite graph with a long tail]
Consider the case where $\Gamma$ is a clique $\calK_k$ on $k = |v(\calK_k)|$ vertices, with one of its vertices attached to a path of length $k^2$ (i.e., $k^2-1$ edges and $k^2$ vertices). In this case, $\Gamma^{(1)} = \calK_k$ and $\Gamma^{(2)} = \calP_{k^2}$ (a path on $k^2$ vertices). Accordingly, we have $\ell_{\s{LB}} = 1$, so $\Gamma^{(\ell_{\s{LB}})} = \calK_k$. Thus, $\Gamma\setminus\Gamma^{(\ell_{\s{LB}})} = \calP_{k^2}$ and $\mu(\Gamma\vert\Gamma^{(\ell_{\s{LB}})}) = \frac{|e(\calP_{k^2})|}{k+k^2-k-1} = \frac{k^2-1}{k^2-1} = 1$. Hence, we are in the sub-logarithmic regime of \eqref{eq:condCopyLBMain}, which implies that almost-exact recovery is impossible, in agreement with the exact-recovery criterion. We see that a sparse appendage may still deteriorate almost-exact recovery when its ``size'' is comparable to that of the dense core. Notably, if the path above had length $o(k^2)$, then almost-exact recovery would be impossible only when recovery of the clique itself is impossible (whereas exact recovery remains impossible).
\end{example}

We now turn to achievability results and show that the lower bounds above are tight.

\paragraph{Upper bounds.} Recall the truncated likelihood peeling algorithm in \eqref{eqn:PeelMLEKappa}. For almost-exact recovery, we propose
\begin{align}
    \hat{\Gamma}^{\s{almost}}_{\s{pMLE}} \triangleq \hat{\Gamma}^{(\ell_{\s{UB}})}_{\s{MLE}},\label{eqn:PeelMLEAlmostExact}
\end{align}
where $\{\hat{\Gamma}^{(\ell)}_{\s{MLE}}\}_{\ell\geq1}$ is defined in \eqref{eqn:PeelMLE0}, and $\ell_{\s{UB}}$ is defined in \eqref{eqn:ellUB}. Indeed, \eqref{eqn:PeelMLEAlmostExact} aims to recover all but the last layers whose leftover-edge fraction is $o(1)$, since these contribute negligibly to the Hamming error. We have the following result.
\begin{theorem}\label{thm:AlmostExactRec}
Almost-exact recovery of $\Gamma^\star$, via the truncated likelihood peeling algorithm in \eqref{eqn:PeelMLEAlmostExact}, is possible if
\begin{align}
    \mu(\Gamma\vert\Gamma^{(\ell_{\s{UB}}-1)})\geq \s{C}\frac{(1+\varepsilon)\cdot \log n}{d_{\s{KL}}(p||q)},\label{eqn:MLEcondAlmost}
\end{align}
for any $\varepsilon>0$ and some constant $\s{C}>0$.
\end{theorem}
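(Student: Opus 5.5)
The plan is to reduce Theorem~\ref{thm:AlmostExactRec} to the $\kappa$-exact recovery guarantee of Theorem~\ref{thm:kappaExactRec} with $\kappa=\ell_{\s{UB}}$. Then we control the Hamming error from the discarded layers through the definition of $\ell_{\s{UB}}$. The estimator \eqref{eqn:PeelMLEAlmostExact} outputs $\hat{\Gamma}^{(\ell_{\s{UB}})}_{\s{MLE}}$, which is a copy of $\Gamma^{(\ell_{\s{UB}})}$ rather than of $\Gamma$. To make the output a valid element of $\calS_\Gamma$, we complete it arbitrarily: pick any copy of $\Gamma$ in $\calM(\hat{\Gamma}^{(\ell_{\s{UB}})}_{\s{MLE}},\Gamma)$, for example the MLE over the remaining layers. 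Alternatively, if the Hamming distance is allowed to compare with a subgraph, we can simply output the layer estimate itself. Either way the analysis is the same.

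\emph{Step 1 (layer-wise exactness).} We need to show that $\hat{\Gamma}^{(\ell_{\s{UB}})}_{\s{MLE}}=\Gamma^{(\ell_{\s{UB}}),\star}$ with probability $1-o(1)$. The achievability proof of Theorem~\ref{thm:MLE} is a union bound over the layers $\ell=1,\dots,\kappa$. At layer $\ell$, conditioned on the earlier layers being correct, an incorrect extension $\calD'$ beats the truth with probability roughly $\exp(-|\calD^{(\ell),\star}\setminus\calD'|\cdot d)$. Here the exponent is lower bounded through the relative density $\eta(\cdot\vert\Gamma^{(\ell-1)})$, which is at least $\mu(\Gamma\vert\Gamma^{(\ell-1)})$ on the relevant cuts. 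This bound is then summed against the roughly $n^{|v(\cdot)|}$ competing copies.

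By Lemma~\ref{lem:equivminimalMSD}, the layer values $\eta(\Gamma^{(\ell)}\vert\Gamma^{(\ell-1)})=\lambda_\ell$ are strictly decreasing in $\ell$. So the condition \eqref{eqn:MLEcondAlmost} at index $\ell_{\s{UB}}$ implies the same condition at every earlier layer $\ell\le\ell_{\s{UB}}$. This is exactly hypothesis \eqref{eqn:MLEcondKappa} with $\kappa=\ell_{\s{UB}}$, so part~2 of Theorem~\ref{thm:kappaExactRec} applies.

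There is one subtlety, and I expect it to be the main obstacle. The number of layers $\ell_{\s{UB}}$ may grow with $n$, so the union bound over layers must still give $o(1)$. I would handle this by checking that the per-layer error bound in the proof of Theorem~\ref{thm:MLE} is of order $n^{-\varepsilon'}$ times a factor that decays in the size of the layer. Since $M\le|\Gamma|$ and each layer adds at least one vertex, the sum over layers can be absorbed, and the constant $\s{C}$ may be enlarged if needed.

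A second point to check is the uniqueness issue noted after Theorem~\ref{thm:MLE}. A wrong copy $\calD'$ could span the same vertex set as the truth while differing from it. The paper argues that this cannot happen for onion layers in the achievability regime, and I would invoke that argument verbatim for the layers up to $\ell_{\s{UB}}$.

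\emph{Step 2 (Hamming control).} On the event from Step~1, the output agrees with $\Gamma^\star$ on all edges of $\Gamma^{(\ell_{\s{UB}}),\star}$. The symmetric difference is therefore contained in the union of the edges of $\Gamma^\star\setminus\Gamma^{(\ell_{\s{UB}}),\star}$ and the edges of the arbitrary completion outside $\Gamma^{(\ell_{\s{UB}}),\star}$. Each of these two sets has exactly $|e(\Gamma\setminus\Gamma^{(\ell_{\s{UB}})})|$ edges. Hence
\begin{align*}
\frac{\s{d}_{\s{H}}(\hat{\Gamma},\Gamma^\star)}{|e(\Gamma)|}\le \frac{2\cdot 2\,|e(\Gamma\setminus\Gamma^{(\ell_{\s{UB}})})|}{|e(\Gamma)|}=4\,\s{Res}^{(n)}_{\ell_{\s{UB}}}\le 4\varepsilon_n\to0,
\end{align*}
where the extra factor $2$ accounts for the two symmetric matrix entries per edge, and the last inequality is the definition \eqref{eqn:ellUB} of $\ell_{\s{UB}}$. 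Since the failure event in Step~1 has vanishing probability, the ratio tends to $0$ in probability. This is almost-exact recovery.
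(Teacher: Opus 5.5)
Your route is the paper's own. The paper says only that the proof of Theorem~\ref{thm:MLE} carries over with recovery restricted to $\Gamma^{(\ell_{\s{UB}})}$. Your Step~2, and your use of Lemma~\ref{lem:equivminimalMSD} to transfer the hypothesis to all layers $\ell\le\ell_{\s{UB}}$, are fine.

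\textbf{Step~1 is the gap.} Exact recovery of $\Gamma^{(\ell_{\s{UB}}),\star}$ does not follow from \eqref{eqn:MLEcondAlmost}, and the uniqueness result you plan to invoke verbatim (Lemma~\ref{lem:unique-full-overlap}) is false.
\begin{itemize}
\item Counterexample: take $\Gamma=\calK_k$ minus two disjoint edges $\{u,b\},\{v,a\}$, with $k>4$. Every proper subgraph has smaller density: on $s\le k-1$ vertices it is at most $(s-1)/2\le(k-2)/2<(k-1)/2-2/k=\eta(\Gamma)$. Hence $M=1$, $\ell_{\s{UB}}=1$, and the hypothesis holds once $k\gtrsim\log n$.
\item The transposition $(u\,v)$ yields a copy $\Gamma'\neq\Gamma^\star$ on the same vertex set, with $\Gamma^\star\setminus\Gamma'=\{\{u,a\},\{v,b\}\}$ and $\Gamma'\setminus\Gamma^\star=\{\{u,b\},\{v,a\}\}$.
\item For $p<1$, with probability $q^2(1-p)^2$ the copy $\Gamma'$ has two more edges in $\s{G}$ than $\Gamma^\star$. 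For $p=1$ the two copies tie with probability $q^2$. So $\Gamma^{(1),\star}=\Gamma^\star$ is missed with probability bounded away from zero, for every $k$. Theorem~\ref{thm:MLE} and part~2 of Theorem~\ref{thm:kappaExactRec} therefore fail on this example too.
\item Where the lemma's proof breaks: the claim $|e(\widetilde{\Gamma}^{(\ell)})|>|e(\Gamma^{(\ell)})|$ is unjustified. $\psi(\calD^{(\ell)})$ is isomorphic to $\calD^{(\ell)}$, so it has the same number of edges. Moreover $\widetilde{\Gamma}^{(\ell)}\not\subseteq\Gamma$, so maximality in Definition~\ref{def:onionDec} says nothing about it.
\item Two further steps you would be copying are also unjustified. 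The count $\binom{k_\ell}{i}\binom{n-k_\ell}{k_\ell-i}$ counts vertex sets, not labeled copies; one vertex set can carry up to $k_\ell!/|\s{Aut}|$ copies. The per-layer bound also treats every competitor edge outside $\Gamma^{(\ell),\star}$ as $\s{Bern}(q)$, although edges of later layers are planted.
\end{itemize}

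\textbf{Exact layer recovery is the wrong intermediate goal.} In the example above, every copy on the correct vertex set is within four edges of $\Gamma^\star$, so almost-exact recovery holds even though Step~1 fails. What you actually need is: with high probability, no copy $\calW$ of $\Gamma^{(\ell_{\s{UB}})}$ that misses at least $\delta_n|e(\Gamma)|$ edges of $\Gamma^{(\ell_{\s{UB}}),\star}$ scores at least as high as the truth. A one-shot maximization over copies of $\Gamma^{(\ell_{\s{UB}})}$ is simpler here. With peeling you would also have to redo the conditioning, since later layers are no longer built on an exactly correct $\hat{\Gamma}^{(\ell-1)}_{\s{MLE}}$.

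\textbf{Why this repair is not routine.} A per-competitor Chernoff bound, the crude count of at most $n^{|v(\Gamma^{(\ell_{\s{UB}})})|}$ competitors, and $|e(\Gamma^{(\ell_{\s{UB}})})|\ge\lambda_{\ell_{\s{UB}}}|v(\Gamma^{(\ell_{\s{UB}})})|$ together give a Hamming error fraction of order $1/\s{C}$. This proves the claim with $\s{C}$ replaced by some $\s{C}_n\to\infty$, but not with a fixed $\s{C}$. Closing the gap would require controlling near-automorphisms of the layers, and a universal constant does not seem to suffice. Consider the following construction (for $p<1$):
\begin{itemize}
\item Let $\Gamma$ be bipartite between a set $S$ of size $a$ and $g=\sqrt n$ further vertices. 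Each of these is joined to $(1-\delta)a$ vertices of $S$, and the non-neighborhoods pairwise share at most $\delta a/2$ vertices. A random choice with expurgation achieves this when $\delta a\log(1/\delta)$ exceeds a suitable multiple of $\log n$.
\item Such a $\Gamma$ is strictly balanced, so $\ell_{\s{UB}}=1$ and $\mu(\Gamma)\approx(1-\delta)a$.
\item Suppose a genie reveals $S$ with its labels and the locations of the other $g$ vertices. Each location's edges to $S$ still carry at most $\delta a\,(d_{\s{KL}}(p\|q)+d_{\s{KL}}(q\|p))$ nats about which neighborhood pattern sits there, against the $\tfrac12\log n$ nats needed.
\item By Fano, with probability bounded below a constant fraction of locations are misassigned, each costing at least $\delta a/2$ edges. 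This holds whenever $\delta a\,(d_{\s{KL}}(p\|q)+d_{\s{KL}}(q\|p))\le\tfrac15\log n$.
\item That condition is compatible with $(1-\delta)a\ge\s{C}(1+\varepsilon)\log n/d_{\s{KL}}(p\|q)$ for any fixed $\s{C}$, once $\delta$ is small.
\end{itemize}
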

Since $\ell_{\s{UB}} = \ell_{\s{LB}}+1$, we see that Theorem~\ref{thm:AlmostExactRec} matches the lower bound in \eqref{eq:condDenseStatweakMain} up to a constant factor. Finally, the proof of Theorem~\ref{thm:AlmostExactRec} follows verbatim from the arguments used in the proof of Theorem~\ref{thm:MLE}, with the recovery restricted to $\Gamma^{(\ell_{\s{UB}})}$. The details are therefore omitted.

\section{Statistical Lower Bound}\label{sec:lowerbounds}

In this subsection, we prove Theorem~\ref{thm:lowerp1}. As it turns out, this result can be established using two different approaches. The first relies on information-theoretic tools, while the second is based on a second-moment Bayes risk analysis. Here, we present the first approach and relegate the second to Appendix~\ref{app:secProofofITLOWEr}.

We adopt an information-theoretic strategy based on Fano's inequality combined with a genie-aided argument. Specifically, we lower bound the optimal recovery error probability by the error probability of an estimator that is granted partial access to the latent planted subgraph $\Gamma^\star$. Let $\{\Gamma^{(\ell),\star}\}_{\ell=1}^M$ denote the (unique) onion decomposition of $\Gamma^\star$ as defined in Definition~\ref{def:onionDec}. Suppose that a genie reveals $\Gamma^{(M-1),\star}$ to the recovery algorithm. Under this additional information, recovering $\Gamma^\star$ reduces to recovering only the final layer
\begin{align}
\calD^{(M),\star}\triangleq \Gamma^{(M),\star}\setminus\Gamma^{(M-1),\star}
= \Gamma^\star\setminus\Gamma^{(M-1),\star}.
\end{align}
Accordingly, we consider the recovery problem of $\Gamma^{(M),\star}$ \emph{given} the observation
\begin{align}
\calO\triangleq(\s{G},\Gamma^{(M-1),\star}),
\end{align}
and let $\hat{\Gamma}^{(M)}(\calO)$ denote any recovery algorithm that is allowed to depend on this augmented information. Then, by definition,
\begin{align}
    \inf_{\hat{\Gamma}}\max_{\Gamma^\star\in\calS_{\Gamma}}\pr[\hat{\Gamma}(\s{G})\neq\Gamma^\star]
    \geq
    \inf_{\hat{\Gamma}^{(M)}}\max_{\Gamma^\star\in\calS_{\Gamma}}\pr[\hat{\Gamma}^{(M)}(\calO)\neq\Gamma^{\star}],
    \label{eqn:genie-aidedIT}
\end{align}
where the infimum on the left-hand side of \eqref{eqn:genie-aidedIT} ranges over all measurable functions of $\s{G}$.

Recall that $\calM(\Gamma^{(M-1),\star},\Gamma)$ denotes the number of ways in which $\Gamma^{(M-1),\star}$ can be extended to a copy $\Gamma$ of $\Gamma^\star$ in the complete graph $\s{K}_n$. Equivalently, it is the number of copies of $\Gamma^\star$ in $\s{K}_n$ that contain $\Gamma^{(M-1),\star}$. In words, $\calM(\Gamma^{(M-1),\star},\Gamma)$ characterizes the set of all \emph{valid} embeddings of $\Gamma^\star$ consistent with the revealed partial structure, and thus captures the sole remaining uncertainty in recovering $\Gamma^\star$ given $\Gamma^{(M-1),\star}$. Accordingly, let $n'\triangleq n-|v(\Gamma^{(M-1),\star})|$ and $k'\triangleq|v(\Gamma^{(M),\star})\setminus v(\Gamma^{(M-1),\star})|$. Crucially, note that $k'$ represents the number of vertices not yet known to the recovery algorithm. Conversely, recall that although recall that although $\Gamma^{(M),\star}\setminus\Gamma^{(M-1),\star}$ and $\Gamma^{(M-1),\star}$ are edge-disjoint, they may share common vertices. These shared vertices are available to the informed recovery algorithm, since it is provided with $\Gamma^{(M-1),\star}$. Next, we observe that
\begin{align}
    \inf_{\hat{\Gamma}^{(M)}}\max_{\Gamma^\star\in\calS_{\Gamma}}\pr[\hat{\Gamma}^{(M)}(\calO)\neq\Gamma^{\star}]
    &= \inf_{\hat{\Gamma}^{(M)}}\max_{\Gamma^{(M-1),\star}}\max_{\Gamma^{\star}\in\calM(\Gamma^{(M-1),\star},\Gamma^{\star})}\pr[\hat{\Gamma}^{(M)}(\calO)\neq\Gamma^{\star}]\\
    &\geq \inf_{\hat{\Gamma}^{(M)}}\max_{\Gamma^{(M-1),\star}}\bE_{\Gamma^{\star}\sim\pi}\pr[\hat{\Gamma}^{(M)}(\calO)\neq\Gamma^{\star}],
\end{align}
where $\pi \triangleq \s{Unif}(\calM(\Gamma^{(M-1),\star},\Gamma))$ denotes the uniform distribution over $\calM(\Gamma^{(M-1),\star},\Gamma)$, and the inequality follows from lower bounding the worst-case risk by the corresponding average risk. Applying conditional Fano's inequality (see, e.g., \cite[Theorem~3]{scarlett2021fano}), we obtain
\begin{align}
    &\inf_{\hat{\Gamma}^{(M)}}\max_{\Gamma^{(M-1),\star}}\bE_{\Gamma^{\star}\sim\pi}\pr[\hat{\Gamma}^{(M)}(\calO)\neq\Gamma^{\star}]\nonumber\\
    &\hspace{3cm}\geq 
    \max_{\gamma^{(M-1)}}\pp{1-\frac{I(\Gamma^\star; \s{G} \vert \Gamma^{(M-1),\star}=\gamma^{(M-1)}) + \log 2}{\log|\calM(\gamma^{(M-1)},\Gamma)|}},
    \label{eqn:fanoCond}
\end{align}
where $I(X;Z\mid V=v)$ denotes the conditional mutual information between the random variables $X$ and $Z$ given $V=v$, namely, $I(X;Z\vert V=v) = H(X\vert V=v)-H(X\vert Z,V=v)$. We begin by lower bounding $|\calM(\gamma^{(M-1)},\Gamma)|$ for an arbitrary realization $\gamma^{(M-1)}$. Although sharper bounds on $|\calM(\gamma^{(M-1)},\Gamma)|$ can be derived, the following simple bound suffices for our purposes. Specifically, we claim that $|\calM(\gamma^{(M-1)},\Gamma)| \geq \binom{n'}{k'}$. Indeed, fixing $\Gamma^{(M-1),\star}=\gamma^{(M-1)}$, we may first select $k'$ vertices from the remaining $n'$ vertices to host the vertices of the final layer $\Gamma^\star=\Gamma^{(M),\star}$. This yields $\binom{n'}{k'}$ distinct choices. In principle, one should also account for the possible labelings of $\Gamma^\star$, whose number is given by $\frac{k'!}{|\s{Aut}(\Gamma^\star\setminus\gamma^{(M-1)})|}$. For our purposes, however, it is sufficient to note that this factor is always at least $1$, and thus the crude lower bound $\binom{n'}{k'}$ is adequate.

Next we upper bound $I(\Gamma^\star; \s{G} \vert\gamma^{(M-1)})$. With a slight abuse of notation, for any $e\in\binom{[n]}{2}$ we let $\s{G}_e$ denote the indicator for the presence of an edge in $\s{G}$, i.e., $\s{G}_e=1$ if $e\in\s{G}$, and zero otherwise. We define $\Gamma^\star_e$ similarly. Then, note that conditioned on the edges $\gamma^{(M-1)}$, the corresponding edges in $\s{G}$ are deterministically fixed, and thus do not contribute to the mutual information. Specifically, recall that 
\begin{align}
    I(\Gamma^\star; \s{G} \vert\Gamma^{(M-1),\star}=\gamma^{(M-1)}) = H(\s{G} \vert\Gamma^{(M-1),\star}=\gamma^{(M-1)})-H(\s{G} \vert\Gamma^{(M-1),\star}=\gamma^{(M-1)},\Gamma^\star).
\end{align}
Then, applying the chain rule for entropy we get
\begin{align}
    H(\s{G} \vert\Gamma^{(M-1),\star}=\gamma^{(M-1)}) &= H\p{\left.\s{G}\setminus\gamma^{(M-1)} \right|\Gamma^{(M-1),\star}=\gamma^{(M-1)}}\\
    &\leq \pp{\binom{n}{2}-|e(\gamma^{(M-1)})|}H(\s{G}_e),
\end{align}
for any arbitrary edge $e\in\binom{[n]}{2}\setminus\gamma^{(M-1)}$, where the inequality follows from the fact that $\s{G}_{e}$'s are identically distributed by symmetry. Similarly
\begin{align}
    H(\s{G} \vert\Gamma^{(M-1),\star}=\gamma^{(M-1)},\Gamma^\star)= \pp{\binom{n}{2}-|e(\gamma^{(M-1)})|}H(\s{G}_e\vert\Gamma^\star_e),
\end{align}
again, for any arbitrary edge $e\in\binom{[n]}{2}\setminus\gamma^{(M-1)}$, where we have used the fact that $\s{G}_{e}$'s are independent conditioned on $\Gamma^\star$. Therefore 
\begin{align}
    I(\Gamma^\star; \s{G} \vert\Gamma^{(M-1),\star})\leq \pp{\binom{n}{2}-|e(\gamma^{(M-1)})|}I(\s{G}_e;\Gamma^\star_e),\label{eqn:MutualTotal}
\end{align}
for an arbitrary edge $e\in\binom{[n]}{2}\setminus\gamma^{(M-1)}$. Note that
\begin{align}
    \s{G}_e\vert\Gamma^\star_e\sim\begin{cases}
        \s{Bern}(p_n),\ &\s{if}\;\Gamma^\star_e=1\\
        \s{Bern}(q_n),\ &\s{if}\;\Gamma^\star_e=0.
    \end{cases}
\end{align}
Next, we find $\eta\triangleq\pr[\Gamma^\star_e=1]$ for $e\in\binom{[n]}{2}\setminus\gamma^{(M-1)}$. Recall that $\Gamma^\star\sim\s{Unif}(\calM(\Gamma^{(M-1),\star},\Gamma))$ given $\Gamma^{(M-1),\star}=\gamma^{(M-1)}$. Thus, we have
\begin{align}
    \eta= \frac{1}{|\calM(\gamma^{(M-1)},\Gamma)|}\sum_{\Gamma'\in\calM(\gamma^{(M-1)},\Gamma)}\Ind\{e\in\Gamma'\}.
\end{align}
Observe that each embedding $\Gamma'\in\calM(\gamma^{(M-1)},\Gamma)$ contains the same number of edges $|e(\Gamma^\star\setminus\Gamma^{(M-1),\star})|$, and so
\begin{align}
    \sum_{\Gamma'\in\calM(\gamma^{(M-1)},\Gamma)}\sum_{\{e\}\subseteq\binom{[n]}{2}\setminus\gamma^{(M-1)}} \Ind\{e\in\Gamma'\} = |\calM(\gamma^{(M-1)},\Gamma)|\cdot|e(\Gamma^\star\setminus\Gamma^{(M-1),\star})|.
\end{align}
Since every embedding $\Gamma'\in\calM(\gamma^{(M-1)},\Gamma)$ is obtained by first choosing an \emph{unordered} $k'$-subset and then relabeling the vertices of $\Gamma'$, every unordered host pair $\{i,j\}$ appears in precisely the \emph{same} number of embeddings as any other pair. Therefore, combining the above, we may conclude that each fixed pair $\{i,j\}$ appears as an edge in $\frac{|\calM(\gamma^{(M-1)},\Gamma)|\cdot|e(\Gamma^\star\setminus\Gamma^{(M-1),\star})|}{\binom{n}{2}-|e(\gamma^{(M-1)})|}$ embeddings, and as so
\begin{align}
    \eta &=\frac{\frac{|\calM(\gamma^{(M-1)},\Gamma)|\cdot|e(\Gamma^\star\setminus\Gamma^{(M-1),\star})|}{\binom{n}{2}-|e(\gamma^{(M-1)})|}}
       {|\calM(\gamma^{(M-1)},\Gamma)|}\\
       & = \frac{|e(\Gamma^\star\setminus\Gamma^{(M-1),\star})|}{\binom{n}{2}-|e(\gamma^{(M-1)})|}\\
       & = \frac{\eta(\Gamma^{(M),\star}\vert\Gamma^{(M-1),\star})k'}{\binom{n}{2}-|e(\gamma^{(M-1)})|} \\
       &=\frac{\mu_{\s{min}}(\Gamma)k'}{\binom{n}{2}-|e(\gamma^{(M-1)})|} = o(1),\label{eqn:etaDef}
\end{align}
where in the third equality we have used the fact that, the onion decomposition in Definition~\ref{def:onionDec} forms the sequence $\{\Gamma^{(\ell),\star}\}_{\ell\geq0}$ in a way such that each layer $\Gamma^{(\ell)}$ is the maximal subgraph that maximizes $\eta(\s{H} \vert \Gamma^{(\ell-1)})$ among all subgraphs $\Gamma^{(\ell-1)}\subsetneq \s{H}$, and therefore, $|e(\Gamma^{(M),\star} \setminus \Gamma^{(M-1),\star})| = \eta(\Gamma^{(M),\star} \vert \Gamma^{(M-1),\star})\cdot|v(\Gamma^{(M),\star}) \setminus v(\Gamma^{(M-1),\star})| = \mu_{\s{min}}(\Gamma)k'$. The last equality is because $k'\leq n$ and \eqref{eq:condDenseStat}. Next, note that by Bayes theorem, $\zeta\triangleq\pr[\s{G}_e=1]=\eta p_n + (1-\eta)q_n$. Combining the above, we finally obtain
\begin{align}
   I(\s{G}_e;\Gamma^\star_e) &= \eta d_{\s{KL}}(p_n||\zeta)+(1-\eta) d_{\s{KL}}(q_n||\zeta)\\
   &\leq \eta d_{\s{KL}}(p_n||\zeta)+(1-\eta) \frac{(q_n-\zeta)^2}{\zeta(1-\zeta)}.
\end{align}
Since $\eta=o(1)$, we note that $\zeta = q_n+O(\eta)$, and so, $\frac{(q_n-\zeta)^2}{\zeta(1-\zeta)} = O(\eta^2)$. Furthermore
\begin{align}
    d_{\s{KL}}(p_n||\zeta) = d_{\s{KL}}(p_n||q_n)+O(\eta).
\end{align}
Thus
\begin{align}
    I(\s{G}_e;\Gamma^\star_e)\leq \eta d_{\s{KL}}(p_n||q_n)+O(\eta^2),
\end{align}
which in light of \eqref{eqn:MutualTotal} implies that
\begin{align}
    I(\Gamma^\star; \s{G} \vert\Gamma^{(M-1),\star})&\leq \pp{\binom{n}{2}-|e(\gamma^{(M-1)})|}I(\s{G}_e;\Gamma^\star_e)\\
    &\leq \pp{\binom{n}{2}-|e(\gamma^{(M-1)})|}\eta d_{\s{KL}}(p_n||q_n)\pp{1+O(\eta)}\\
    & = \mu_{\s{min}}(\Gamma)k'd_{\s{KL}}(p_n||q_n)\pp{1+O(\eta)},
\end{align}
where the last equality follows from \eqref{eqn:etaDef}. Substituting in \eqref{eqn:fanoCond} we obtain
\begin{align}
    \inf_{\hat{\Gamma}^{(M)}}\max_{\Gamma^{(M-1),\star}}\bE_{\Gamma^{\star}\sim\pi}\pr[\hat{\Gamma}^{(M)}(\calO)\neq\Gamma^{\star}]&\geq 1-\frac{\mu_{\s{min}}(\Gamma)k'd_{\s{KL}}(p_n||q_n)\pp{1+O(\eta)} + \log 2}{\log\binom{n'}{k'}}.\label{eqn:fanOfinal}
\end{align}
For any $\delta>0$, the term at the right-hand side of \eqref{eqn:fanOfinal} is bounded below by $\delta$ provided that
\begin{align}
    \mu_{\s{min}}(\Gamma)k'd_{\s{KL}}(p_n||q_n)\pp{1+O(\eta)}\leq (1-\delta)\log\binom{n'}{k'}.\label{eqn:fanOfinal2}
\end{align}
Using the facts that $\log\binom{n'}{k'}\geq k'\log(n'/k')$, $k'\leq k = o(n)$, $n' = (1-o(1))n$, and $\eta=o(1)$, it is evident that \eqref{eqn:fanOfinal} is implied by \eqref{eq:condDenseStat}, for large enough $n$, which concludes the proof.

\section{Upper Bounds}\label{sec:upperbounds}

In this section, we establish the sufficient conditions for the planted graph recovery.

\subsection{Peeling MLE}\label{sec:upperbounds_MLE}
\begin{proof}[Proof of Theorem~\ref{thm:MLE}]
We analyze the MLE peeling algorithm in \eqref{eqn:PeelMLE}. Let $\{\Gamma^{(\ell),\star}\}_{\ell=0}^M$ denote the onion decomposition of the planted subgraph $\Gamma^\star$ as defined in Definition~\ref{def:onionDec}. We have
\begin{align}
    \pr[\hat{\Gamma}_{\s{pMLE}}\neq\Gamma^{\star}] &= \pr[\hat{\Gamma}^{(M)}_{\s{MLE}}\neq\Gamma^{\star}]\\
    & = \pr\pp{\bigcup_{\ell=1}^M\ppp{\hat{\Gamma}^{(\ell)}_{\s{MLE}}\neq{\Gamma}^{(\ell),\star}}}\\
    & \leq \sum_{\ell=1}^M\pr\pp{\left.\hat{\Gamma}^{(\ell)}_{\s{MLE}}\neq{\Gamma}^{(\ell),\star}\right|\hat{\Gamma}^{(\ell-1)}_{\s{MLE}}={\Gamma}^{(\ell-1),\star}},\label{eqn:UnionSmartBound}
\end{align}
where the inequality follows from the fact that for a set of monotonically increasing nested events $(\calA_i)_{i=1}^n$ we have $\pr[\bigcup_{i=1}^n\calA_i]\leq\sum_{i=1}^n\pr[\calA_i\vert \calA_{i-1}^c]$.\footnote{Define $\calB_1 = \Omega$, $\calB_i = \bigcap_{j < i} \calA_j^{c}$ for $i \geq2$. Then the union splits into disjoint pieces $\bigcup_{i=1}^n \calA_i=\bigsqcup_{i=1}^n (\calA_i \cap \calB_i),$ and so $\pr(\bigcup_{i=1}^n \calA_i) = \sum_{i=1}^n \pr(\calA_i \cap \calB_i)= \sum_{i=1}^n \pr(\calA_i \vert \calB_i)\pr(\calB_i)\leq\sum_{i=1}^n \pr(\calA_i\vert \calB_i),$ because $\pr(\calB_i) \leq 1$ for each $i$.} Accordingly, in what follows, we analyze the error probability at the $\ell^{\s{th}}$ iteration of the algorithm, conditioned on the event that at the $(\ell-1)^{\s{th}}$ iteration the estimated subgraph is correct. It should be emphasized that, in the case where $v({\Gamma}^{(\ell),\star}\setminus{\Gamma}^{(\ell-1),\star})\cap v({\Gamma}^{(\ell-1),\star})\neq\emptyset$, the previously estimated layer ${\Gamma}^{(\ell-1),\star}$ correctly identified those common vertices. 

Recall that $\hat{\Gamma}^{(\ell)}_{\s{MLE}}$ in \eqref{eqn:PeelMLE0} scans over all copies $\calM\triangleq\calM(\hat{\Gamma}^{(\ell-1)}_{\s{MLE}},\Gamma^{(\ell)})$, i.e., copies of $\Gamma^{(\ell)}$ in $\calK_n$ which contain $\hat{\Gamma}^{(\ell-1)}_{\s{MLE}}$. Loosely speaking, we shall refer to this set as the collection of all possible copies of $\calD^{(\ell),\star}$, each taken in union with $\hat{\Gamma}^{(\ell-1)}_{\s{MLE}}$. Now, the MLE $\hat{\Gamma}^{(\ell)}_{\s{MLE}}$ of $\Gamma^{(\ell,\star)}$, given $\hat{\Gamma}^{(\ell-1)}_{\s{MLE}}={\Gamma}^{(\ell-1),\star}$, outputs any element (copy) of $\calM(\hat{\Gamma}^{(\ell-1)}_{\s{MLE}},\Gamma^{(\ell)})$ in the observed graph $\s{G}$. Accordingly, this estimator succeeds if, with high probability, the observed graph contains such a \emph{unique} copy. Let $\{\calW_j\}_{j=1}^{|\calM|}$ denote the $|\calM(\hat{\Gamma}^{(\ell-1)}_{\s{MLE}},\Gamma^{(\ell)})|$ possible copies (note that each copy is an extension of $\hat{\Gamma}^{(\ell-1)}_{\s{MLE}}$). Without loss of generality, we may assume that the actual planted copy is $\Gamma^{(\ell,\star)} = \calW_1$. Given $\s{G}$, let $\s{N}_{\calM}(\s{G}) \triangleq \sum_{j=1}^{|\calM|}\Ind\ppp{\calW_j\in\s{G}}$ denote the number of copies in $\calM(\hat{\Gamma}^{(\ell-1)}_{\s{MLE}},\Gamma^{(\ell)})$ that appear in $\s{G}$. Define the event $\calF_\ell\triangleq\{\hat{\Gamma}^{(\ell-1)}_{\s{MLE}}=\Gamma^{(\ell-1),\star}\}$. Using the above argument and Markov's inequality, we have:
\begin{align}
\pr\pp{\left.\hat{\Gamma}^{(\ell)}_{\s{MLE}}\neq{\Gamma}^{(\ell),\star}\right|\calF_\ell}&\leq\pr[\s{At}\;\s{least}\;\s{two}\;\s{extensions}\;\s{in}\;\calG_{\Gamma_n}(n,p_n,q_n)\vert\calF_\ell]\\
& = \pr_{\calG_{\Gamma_n}(n,p_n,q_n)}\pp{\s{N}_{\calM}(\s{G})\geq2\vert\calF_\ell}\\
& = \pr_{\calG_{\Gamma_n}(n,p_n,q_n)}\pp{\s{N}_{\calM\setminus\calW_1}(\s{G})\geq1\vert\calF_\ell}\\
    &\leq \bE\pp{\s{N}_{\calM\setminus\calW_1}(\s{G})\vert\calF_\ell},\label{eqn:FirstMoment}
\end{align}
where the second equality follows because under $\calG_{\Gamma_n}(n,p_n,q_n)$ the copy $\calW_1$ is planted—and thus appear—in $\s{G}$, and $\s{N}_{\calM\setminus\calW_1}(\s{G})$ counts the number of copies in $\s{G}$ except the planted one $\calW_1$, i.e., $\s{N}_{\calM\setminus\calW_1}(\s{G})\triangleq\sum_{j=2}^{|\calM|}\Ind\ppp{\calW_j\in\s{G}}$. 

We first consider the case where $p=1$, and show that if $\mu_{\s{min}}(\Gamma_n)\geq \frac{(4+\varepsilon)\cdot \log n}{d_{\s{KL}}(1||q)}$, then the error probability is small; this differs by a factor of 4 from the lower bound stated in Theorem~\ref{thm:lowerp1}. To that end, we first note that each copy $\calW_j$ can be written as $\calW_j = \hat{\Gamma}^{(\ell-1)}_{\s{MLE}}\cup(\calW_j\setminus\hat{\Gamma}^{(\ell-1)}_{\s{MLE}})$, for all $j\in[|\calM|]$. Conditioned on $\hat{\Gamma}^{(\ell-1)}_{\s{MLE}}=\hat{\Gamma}^{(\ell-1),\star}$, we have $\hat{\Gamma}^{(\ell-1)}_{\s{MLE}}\in\s{G}$ with probability one. Thus, for any $j\in[|\calM|]$
\begin{align}
\pr\pp{\calW_j\in\s{G}\vert\calF_\ell} &= \pr\pp{\left.\hat{\Gamma}^{(\ell-1)}_{\s{MLE}}\cup(\calW_j\setminus\hat{\Gamma}^{(\ell-1)}_{\s{MLE}})\in\s{G}\right|\calF_\ell}\\
& = \pr\pp{\left.\calW_j\setminus\hat{\Gamma}^{(\ell-1)}_{\s{MLE}}\in\s{G}\right|\calF_\ell}\\
& = \pr\pp{\left.\calW_j\setminus\Gamma^{(\ell-1),\star}\in\s{G}\right|\calF_\ell}\\
&=q^{|e(\Gamma^{(\ell),\star} \setminus \Gamma^{(\ell-1),\star})|-|e((\calW_j\setminus\Gamma^{(\ell-1),\star})\cap(\Gamma^{(\ell),\star}\setminus\Gamma^{(\ell-1),\star}))|}\\
& = q^{|e(\Gamma^{(\ell),\star} \setminus \Gamma^{(\ell-1),\star})|-|e((\calW_j\cap\Gamma^{(\ell),\star})\setminus\Gamma^{(\ell-1),\star})|},
\end{align}
where $|e(\Gamma^{(\ell),\star} \setminus \Gamma^{(\ell-1),\star})|-|e((\calW_j\cap\Gamma^{(\ell),\star})\setminus\Gamma^{(\ell-1),\star})|$ is the number of edges in $\calW_j\setminus\Gamma^{(\ell-1),\star}$ that are not part of the planted subgraph $\Gamma^{(\ell),\star} \setminus \Gamma^{(\ell-1),\star}$, whose edges appear in the graph with probability one. Thus
\begin{align}
    \bE\pp{\s{N}_{\calM\setminus\calW_1}(\s{G})\vert\calF_\ell} &= \sum_{j\geq2}\pr\pp{\calW_j\in\s{G}\vert\calF_\ell}\\
    & = \sum_{j\geq2}q^{|e(\Gamma^{(\ell),\star} \setminus \Gamma^{(\ell-1),\star})|-|e((\calW_j\cap\Gamma^{(\ell),\star})\setminus\Gamma^{(\ell-1),\star})|}.\label{eqn:momentGenInt0}
\end{align}
The onion decomposition in Definition~\ref{def:onionDec} forms the sequence $\{\Gamma^{(\ell),\star}\}_{\ell\geq0}$ in a way such that each layer $\Gamma^{(\ell)}$ is the maximal subgraph that maximizes $\eta(\s{H} \vert \Gamma^{(\ell-1)}) = \frac{|e(\s{H})|-|e(\Gamma^{(\ell-1)})|}{|v(\s{H})\setminus v(\Gamma^{(\ell-1)})|}$ among all subgraphs $\Gamma^{(\ell-1)}\subsetneq \s{H}$. Therefore,
\begin{align}
   \eta(\Gamma^{(\ell),\star} \vert \Gamma^{(\ell-1),\star})&= \sup_{\Gamma^{(\ell-1),\star}\subsetneq \s{H}} \eta(\s{H} \vert \Gamma^{(\ell-1)})\\
   &\geq\frac{|e(\Gamma^{(\ell-1),\star}\cup(\calW_j\cap\Gamma^{(\ell),\star}))|-|e(\Gamma^{(\ell-1),\star})|}{|v(\Gamma^{(\ell-1),\star}\cup(\calW_j\cap\Gamma^{(\ell),\star}))\setminus v(\Gamma^{(\ell-1),\star})|}\\
   & = \frac{|e((\calW_j\cap\Gamma^{(\ell),\star})\setminus\Gamma^{(\ell-1),\star})|}{|v(\calW_j\cap\Gamma^{(\ell),\star})\setminus v(\Gamma^{(\ell-1),\star})|}.
\end{align}
Furthermore, $|e(\Gamma^{(\ell),\star} \setminus \Gamma^{(\ell-1),\star})| = \eta(\Gamma^{(\ell),\star} \vert \Gamma^{(\ell-1),\star})\cdot|v(\Gamma^{(\ell),\star}) \setminus v(\Gamma^{(\ell-1),\star})|$. Thus, we get
\begin{align}
     \bE\pp{\s{N}_{\calM\setminus\calW_1}(\s{G})\vert\calF_\ell}&\leq \sum_{j\geq2}q^{\eta(\Gamma^{(\ell),\star} \vert \Gamma^{(\ell-1),\star})\cdot\pp{|v(\Gamma^{(\ell),\star}) \setminus v(\Gamma^{(\ell-1),\star})|-|v(\calW_j\cap\Gamma^{(\ell),\star})\setminus v(\Gamma^{(\ell-1),\star})|}}.\label{eqn:momentGenInt0Mid}
\end{align}
Denote $\calD^{(\ell),\star} \triangleq \Gamma^{(\ell),\star} \setminus \Gamma^{(\ell-1),\star}$. It can be seen that \eqref{eqn:momentGenInt0Mid} depends on each possible embedding $\{\calW_j\}_{j\geq 2}$ only through the number of vertices it shares with $\Gamma^{(\ell),\star}$ but with the vertices of $\Gamma^{(\ell-1),\star}$ removed. Accordingly, let us count the number of embeddings $\{\calW_j\}_{j\geq 2}$ with overlap size $i = |v(\calW_j\cap\Gamma^{(\ell),\star})\setminus v(\Gamma^{(\ell-1),\star})|$, and then sum over all possible values of $i$. This count can be upper bounded quite easily. Indeed, we first consider the number of ways to choose $i$ vertices from $v(\Gamma^{(\ell),\star}) \setminus v(\Gamma^{(\ell-1),\star})$ to overlap with $v(\calW_j) \setminus v(\Gamma^{(\ell-1),\star})$—that is, we are selecting which $i$ vertices of $v(\Gamma^{(\ell),\star})\setminus v(\Gamma^{(\ell-1),\star})$ will align with the vertex set of $v(\calW_j) \setminus v(\Gamma^{(\ell-1),\star})$. This count is given by $\binom{|v(\Gamma^{(\ell),\star}) \setminus v(\Gamma^{(\ell-1),\star})|}{i}$. Next, we count the number of ways to select the remaining $|v(\Gamma^{(\ell),\star}) \setminus v(\Gamma^{(\ell-1),\star})|-i$ fresh vertices from the $n - |v(\Gamma^{(\ell),\star}) \setminus v(\Gamma^{(\ell-1),\star})|$ vertices not in $v(\Gamma^{(\ell),\star})\setminus v(\Gamma^{(\ell-1),\star})$. This is given by $\binom{n-|v(\Gamma^{(\ell),\star}) \setminus v(\Gamma^{(\ell-1),\star})|}{|v(\Gamma^{(\ell),\star}) \setminus v(\Gamma^{(\ell-1),\star})|-i}$. Therefore, the total number of injective vertex mappings of $v(\calW_j\cap\Gamma^{(\ell),\star})$ into the vertex set $[n]$ such that exactly $i$ of the vertices are shared with $v(\Gamma^{(\ell),\star})\setminus v(\Gamma^{(\ell-1),\star})$, and the remaining $|v(\Gamma^{(\ell),\star}) \setminus v(\Gamma^{(\ell-1),\star})|-i$ vertices are disjoint from it, is bounded above by $\binom{k_\ell}{i}\binom{n-k_\ell}{k_\ell-i}$, where $k_\ell\triangleq |v(\Gamma^{(\ell),\star}) \setminus v(\Gamma^{(\ell-1),\star})|$. 

Before we plug in the above and sum over the index $i$, there is a subtle point that must be taken into account—namely, the uniqueness property of the planted copy on its full vertex set. Specifically, in theory, it could be the case that $|v(\calW_j\cap\Gamma^{(\ell),\star})\setminus v(\Gamma^{(\ell-1),\star})| = |v(\Gamma^{(\ell),\star}) \setminus v(\Gamma^{(\ell-1),\star})|$, yet $\calW_j \neq \Gamma^{(\ell),\star}$. For example, consider the case where $\Gamma^{(\ell),\star}$ is a clique plus a single edge attached to one of its vertices. This scenario implies that the range of $i$ should be $0 \leq i \leq |v(\Gamma^{(\ell),\star}) \setminus v(\Gamma^{(\ell-1),\star})|$. However, this possibility is precluded by the property that, at each step of the onion decomposition (see Definition~\ref{def:onionDec}), the selected layer maximizes the relative subgraph density. We now prove that this is indeed the case. The following result is general and holds for any step in the onion decomposition of $\Gamma$ as defined in Definition~\ref{def:onionDec}. Accordingly, consider the $\ell$th step in the onion decomposition, and recall that $\calD^{(\ell),\star}\triangleq \Gamma^{(\ell),\star} \setminus \Gamma^{(\ell-1),\star}$.
\begin{lemma}[Uniqueness at full overlap]
\label{lem:unique-full-overlap}
Consider the onion decomposition of $\Gamma^\star$ in Definition~\ref{def:onionDec}.
Let $\calW'\in\calM({\Gamma}^{(\ell-1),\star},\Gamma^{(\ell)})$ and $\calD' = \calW'\setminus{\Gamma}^{(\ell-1),\star}$. Assume that $\calD'$ satisfies $v(\calD')=v(\calD^{(\ell),\star})$. Then $e(\calD')=e(\calD^{(\ell),\star})$; hence
$\calD'=\calD^{(\ell),\star}$. In particular, there exists no copy distinct from the planted one whose vertex overlap with $\calD^{(\ell),\star}$ equals $|v(\calD^{(\ell),\star})|$.
\end{lemma}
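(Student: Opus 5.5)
The natural route is a proof by contradiction through the maximality built into step (ii) of Definition~\ref{def:onionDec}. Suppose $\calW'\in\calM(\Gamma^{(\ell-1),\star},\Gamma^{(\ell)})$ satisfies $v(\calD')=v(\calD^{(\ell),\star})$ but $e(\calD')\neq e(\calD^{(\ell),\star})$.

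The first step is to record the counts. Both $\calW'$ and $\Gamma^{(\ell),\star}$ are copies of $\Gamma^{(\ell)}$ containing $\Gamma^{(\ell-1),\star}$, so $|e(\calD')|=|e(\calD^{(\ell),\star})|$. Their new vertex sets coincide by hypothesis. Hence both have the same relative density $\eta(\Gamma^{(\ell),\star}\vert\Gamma^{(\ell-1),\star})$.

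The second step is to form the union $\s{H}\triangleq\Gamma^{(\ell-1),\star}\cup\calD^{(\ell),\star}\cup\calD'$. It has the same new vertex set as $\Gamma^{(\ell),\star}$. If the edge sets differ, it has strictly more new edges. Therefore $\eta(\s{H}\vert\Gamma^{(\ell-1),\star})>\eta(\Gamma^{(\ell),\star}\vert\Gamma^{(\ell-1),\star})$, which would contradict step (ii), or the equivalent characterization in Lemma~\ref{lem:equivminimalMSD}.

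The third step is the main obstacle, and I expect it to fail. Step (ii) maximizes only over subgraphs of $\Gamma^\star$, so the contradiction requires $\s{H}\subseteq\Gamma^\star$, that is, $e(\calD')\subseteq e(\Gamma^\star)$. Nothing in the hypotheses forces this. I would first try to show that any edge of $\calD'$ inside $v(\calD^{(\ell),\star})$ must already be an edge of $\Gamma^\star$, but I do not see why this should hold.

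In fact, small checks suggest the statement as written is false without extra hypotheses, so the plan should be adjusted rather than pushed through.
\begin{itemize}
\item Take $\Gamma=\calK_k$ minus one edge. For large $k$ this graph is balanced and single-layer. Copies missing a different edge on the same vertex set are distinct from the planted one, yet have full vertex overlap.
\item Similarly, the path on three vertices admits two distinct copies on the same vertex set.
\end{itemize}

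So the repair I would pursue does not rule out full-overlap copies. Instead, it retains the term $i=k_\ell$ in the first-moment sum \eqref{eqn:momentGenInt0Mid} and bounds it directly. Copies with full vertex overlap number at most $k_\ell!/|\s{Aut}|$. Each such copy contributes $q^{|e(\calD^{(\ell),\star})\setminus e(\calD')|}$. One would then need a lower bound on the symmetric difference in terms of the density.

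I expect this last requirement to fail in general, given the near-clique example. The honest conclusion is that the lemma, and hence Theorem~\ref{thm:MLE}, would need either an additional rigidity assumption on the layers, for example that no two distinct copies of $\calD^{(\ell)}$ extending $\Gamma^{(\ell-1)}$ share a vertex set, or a weakened notion of recovery up to such ambiguity.
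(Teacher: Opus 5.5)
Your conclusion is correct: the statement is false as written, and your examples refute it.

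\begin{itemize}
\item For the path $a$--$b$--$c$, the onion decomposition has a single layer, since $2/3>1/2$. So with $\ell=1$, every copy of the path in $\calK_n$ lies in $\calM(\emptyset,\Gamma^{(1)})$. The path centered at $a$ has the planted vertex set but a different edge set.
\item For $\calK_k$ minus an edge ($k\geq 3$), every subgraph on at most $k-1$ vertices has density at most $\frac{k-2}{2}<\frac{k-1}{2}-\frac{1}{k}=\eta(\Gamma)$. So again $M=1$, and the other $\binom{k}{2}-1$ copies on $v(\Gamma^\star)$ violate the conclusion.
\end{itemize}

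You have also located exactly where the paper's own argument breaks. The paper composes the two isomorphisms into a vertex bijection $\psi$ fixing $v(\Gamma^{(\ell-1)})$. When $\psi$ is not the identity, it sets $\widetilde{\Gamma}^{(\ell)}=\Gamma^{(\ell-1)}\cup\psi(\calD^{(\ell)})$ and asserts $|e(\widetilde{\Gamma}^{(\ell)})|>|e(\Gamma^{(\ell)})|$, contradicting maximality in step~(ii). This fails on two counts.
\begin{itemize}
\item Since $\psi$ fixes $\Gamma^{(\ell-1)}$, $\widetilde{\Gamma}^{(\ell)}=\psi(\Gamma^{(\ell)})$ is just a relabeled copy of $\Gamma^{(\ell)}$, so the two edge counts are equal.
\item Neither $\widetilde{\Gamma}^{(\ell)}$ nor a union such as your $\s{H}$ need be a subgraph of $\Gamma$, and step~(ii) maximizes only over subgraphs of $\Gamma$.
\end{itemize}
Calling $\psi$ an automorphism already presupposes $e(\calD')=e(\calD^{(\ell),\star})$. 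Your second step is the valid core of the argument: it proves the lemma under the extra hypothesis $\calW'\subseteq\Gamma^\star$. That version is useless for \eqref{eqn:momentGenInt0Mid}, however, because there the competitors $\calW_j$ range over all copies in $\calK_n$.

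You can be firmer at the end: the failure is not an artefact of the first-moment method, because your near-clique example refutes Theorem~\ref{thm:MLE} itself. Let $\Gamma_n$ be $\calK_k$ minus an edge $e_0$, and take $p=1$.
\begin{itemize}
\item With probability $q$ the pair $e_0$ is present, and then all $\binom{k}{2}$ copies supported on $v(\Gamma^\star)$ lie in $\s{G}$.
\item For $p=1$, all copies contained in $\s{G}$ have the same likelihood. So under the uniform prior, the truth has posterior mass at most $1/\binom{k}{2}$ on this event.
\item Hence every estimator has Bayes (and therefore worst-case) error at least $q\bigl(1-1/\binom{k}{2}\bigr)$, for every $k$. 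For constant $p<1$ the error in fact tends to one.
\end{itemize}
Since $\mu_{\s{min}}(\Gamma_n)=\frac{k-1}{2}-\frac{1}{k}$, this contradicts Theorem~\ref{thm:MLE} once $k$ exceeds a suitable constant multiple of $\log n$. So, as you say, the theorem needs a rigidity hypothesis on the layers or a weaker recovery criterion.

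Note also that the bound $\binom{k_\ell}{i}\binom{n-k_\ell}{k_\ell-i}$ used for $i<k_\ell$ counts vertex sets rather than copies. Your hypothesis of at most one copy of $\calD^{(\ell)}$ per vertex set repairs that undercount too. Treating only the $i=k_\ell$ term separately would not.
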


\begin{proof}[Proof of Lemma~\ref{lem:unique-full-overlap}]
By definition of $\calM({\Gamma}^{(\ell-1),\star},\Gamma^{(\ell)})$ there is an isomorphism $\phi_1: v(\calD^{(\ell)})\to v(\calD^{(\ell),\star})$ with $\phi_1(e(\calD^{(\ell)}))=e(\calD^{(\ell),\star})$. Furthermore, because we assume that there exist $\calW'\in\calM({\Gamma}^{(\ell-1),\star},\Gamma^{(\ell)})$ such that $\calD'=\calW'\setminus{\Gamma}^{(\ell-1),\star}$ with $v(\calD')=v(\calD^{(\ell),\star})$, there is an isomorphism
$\phi_2:v(\calD^{(\ell)}) \to v(\calD^{(\ell),\star})$ with $\phi_2(e(\calD^{(\ell)}))=e(\calD')$.
Compose them to obtain an automorphism $\psi = \phi_1^{-1}\circ\phi_2:v(\calD^{(\ell)})\to v(\calD^{(\ell)})$. 
Now, let us show that $\psi$ fixes $v(\Gamma^{(\ell-1)})$. Indeed, since
$v(\Gamma^{(\ell-1)})\subsetneq v(\Gamma^{(\ell)})=v(\calD^{(\ell),\star})$, both $\phi_1$ and $\phi_2$ coincide with the identity on $v(\Gamma^{(\ell-1)})$, hence so does $\psi$. Next, we claim that $\psi$ is the identity on $v(\calD^{(\ell)})$. Otherwise set $\widetilde{\Gamma}^{(\ell)}=\Gamma^{(\ell-1)}\cup\psi(\calD^{(\ell)})$. The vertex set of $\widetilde{\Gamma}^{(\ell)}$ equals that of
$\Gamma^{(\ell)}$ and $e(\psi(\calD^{(\ell)})) = e(\calD')\neq e(\calD^{(\ell),\star}) =e(\calD^{(\ell)})$. Consequently, $|e(\widetilde{\Gamma}^{(\ell)})|>|e(\Gamma^{(\ell)})|$. This contradicts the \emph{maximality} of $\Gamma^{(\ell)}$ in step (ii) of Definition~\ref{def:onionDec}. Hence $\psi$ is the identity, and
$e(\calD')=e(\calD^{(\ell),\star})$.
\end{proof}
Returning to \eqref{eqn:momentGenInt0Mid} and its notation, we observe that, by Lemma~\ref{lem:unique-full-overlap}, the set
\begin{align}
\bigl\{\calW'\in\calM: \calW'\neq\Gamma^{(\ell),\star},\;|v(\calW'\cap\Gamma^{(\ell),\star})|=|v(\Gamma^{(\ell),\star})|\bigr\}
\end{align}
is empty. Recall that $k_\ell\triangleq |v(\Gamma^{(\ell),\star}) \setminus v(\Gamma^{(\ell-1),\star})|$, and further denote $\eta_\ell\triangleq\eta(\Gamma^{(\ell),\star} \vert \Gamma^{(\ell-1),\star})$. Therefore
\begin{align}
    \bE\pp{\s{N}_{\calM\setminus\calW_1}(\s{G})\vert\calF_\ell}&\leq \sum_{j\geq2}q^{\eta_\ell\cdot\pp{k_\ell-|v(\calW_j\cap\Gamma^{(\ell),\star})\setminus v(\Gamma^{(\ell-1),\star})|}}\label{eqn:samePP0}\\
    &\leq\sum_{i=0}^{k_\ell-1}q^{\eta_\ell\cdot(k_\ell-i)}\binom{k_\ell}{i}\binom{n-k_\ell}{k_\ell-i}\\
    &\leq \sum_{i=0}^{k_\ell-1}k_\ell^{k_\ell-i}n^{k_\ell-i} q^{\eta_\ell\cdot(k_\ell-i)}\\
    & = \sum_{i=0}^{k_\ell-1}\pp{k_\ell n q^{\eta_\ell}}^{k_\ell-i}\\
    &\leq \frac{\delta}{M(\Gamma)},\label{eqn:samePP1}
\end{align}
where the second inequality is because $\binom{k_\ell}{i} = \binom{k_\ell}{k_\ell-i}\leq k_\ell^{k_\ell-i}$ and $\binom{n-|v(\calD)|}{k_\ell-i}\leq n^{k_\ell-i}$, the last inequality holds provided that $k_\ell nq^{\eta_\ell}\leq \frac{\delta/2}{M(\Gamma)}$, for any $\delta>0$, and $M(\Gamma)$ is the number of subgraphs in the onion decomposition of $\Gamma^\star$. This concludes the analysis of the $\ell^{\s{th}}$ step of the MLE peeling algorithm. Applying the same argument for each step, by using \eqref{eqn:UnionSmartBound} we get
\begin{align}
    \pr[\hat{\Gamma}_{\s{pMLE}}\neq\Gamma^{\star}] &\leq \sum_{\ell=1}^M\pr\pp{\left.\hat{\Gamma}^{(\ell)}_{\s{MLE}}\neq{\Gamma}^{(\ell),\star}\right|\hat{\Gamma}^{(\ell-1)}_{\s{MLE}}={\Gamma}^{(\ell-1),\star}}\leq \delta,\label{eqn:FinalMLEPeelP1ana}
\end{align}
provided that $k_\ell nq^{\eta_\ell}\leq \frac{\delta/2}{M(\Gamma)}$, for all $\ell\geq1$. Since $M(\Gamma)\leq|e(\Gamma)|\leq n^2$ and $k_\ell\leq n$, this condition is clearly satisfied when $\eta_\ell\geq \frac{(4+\varepsilon)\cdot \log n}{d_{\s{KL}}(1||q)}$, for all $\ell\geq1$, namely, $\mu_{\s{min}}(\Gamma_n)\geq \frac{(4+\varepsilon)\cdot \log n}{d_{\s{KL}}(1||q)}$, for any $\varepsilon>0$, which concludes the proof for $p=1$.
\begin{remark}[Why the argument fails for the whole graph $\Gamma$]
The conclusion of Lemma~\ref{lem:unique-full-overlap} relies on the \emph{edge--maximality} property~\eqref{eqn:maxDensity} of each layer $\Gamma^{(\ell)}$. The full graph $\Gamma$ need not be edge--maximal on its vertex set.  If $\Gamma$ contains a dense core and a sparse appendage (e.g., a clique $K_k$ plus one extra edge), one can keep the entire vertex set fixed and relocate only the sparse part, producing $\Theta(n^2)$ distinct copies $\Gamma'\neq\Gamma^\star$ with full vertex overlap $i=|v(\Gamma^\star)|$. These copies must be included in the global first--moment sum, and their aggregate contribution is already unbounded, so the counting/Markov argument used for a peeling layer cannot be transferred verbatim to the global MLE.
\end{remark}

Finally, we adapt the above proof to the case where planted edges appear with probability $q<p\leq1$. Recall \eqref{eqn:UnionSmartBound}, and as above, let us analyze the $\ell^{\s{th}}$ step of the MLE peeling algorithm. To that end, for any $j\in[|\calM(\hat{\Gamma}^{(\ell-1)}_{\s{MLE}},\Gamma^{(\ell)})|]$, define
\begin{align}
    \calA(\calW_j)\triangleq \sum_{(i,j)\in\calW_j}\s{A}_{ij},
\end{align}
and $\triangle(\calW_j)\triangleq\calA(\calW_1)-\calA(\calW_j)$, where $\calW_1$ denotes the actual planted subgraph, namely, $\calW_1 = \Gamma^{(\ell),\star}$. We next find conditions under which $\triangle(\calW_j)>0$, for any $j\neq 1$. By definition, note that $\hat{\Gamma}^{(\ell-1)}_{\s{MLE}}\subsetneq\calW_j$ and since we condition on $\calF_\ell$ (see \eqref{eqn:UnionSmartBound}) we have $\Gamma^{(\ell-1),\star}\subsetneq\calW_j$, for any $\calW\in\calM$. For simplicity of notations, we define $\calD^{(\ell),\star}\triangleq\calW_1\setminus\Gamma^{(\ell-1),\star}=\Gamma^{(\ell),\star}\setminus\Gamma^{(\ell-1),\star}$ and $\calD_j\triangleq\calW_j\setminus\Gamma^{(\ell-1),\star}$, for all $j\geq2$. Thus,
\begin{align}
    \triangle(\calW_j) &= \sum_{(i,j)\in\calW_1}\s{A}_{ij}-\sum_{(i,j)\in\calW_j}\s{A}_{ij}\\
    & = \sum_{(i,j)\in\calD^{(\ell),\star}}[\s{A}_{ij}-\bE\s{A}_{ij}]-\sum_{(i,j)\in\calD_j}[\s{A}_{ij}-\bE\s{A}_{ij}]+\sum_{(i,j)\in\calD^{(\ell),\star}}\bE\s{A}_{ij}-\sum_{(i,j)\in\calD_j}\bE\s{A}_{ij}\\
    & = (p-q)\cdot |\calD_j\setminus\calD^{(\ell),\star}|+2\sum_{(i,j)\in\calD^{(\ell),\star}\setminus\calD_j,i<j}[\s{A}_{ij}-p]-2\sum_{(i,j)\in\calD_j\setminus\calD^{(\ell),\star},i<j}[\s{A}_{ij}-q]\\
    & \triangleq (p-q)\cdot |\calD_j\setminus\calD^{(\ell),\star}|+2\s{B}_1-2\s{B}_2,
\end{align}
where $\s{B}_1$ and $\s{B}_2$ are independent random variables, each composed of a sum of $\frac{1}{2}|\calD^{(\ell),\star}\setminus\calD_j| = \frac{1}{2}|\calD_j\setminus\calD^{(\ell),\star}| = \frac{1}{2}\p{|e(\calD^{(\ell),\star})|-|\calD_j\cap\calD^{(\ell),\star}|}$ i.i.d. centered Bernoulli random variables with parameters $p$ and $q$, respectively. Let $\s{J}(\calD_j)\triangleq|e(\calD^{(\ell),\star})|-|\calD_j\cap\calD^{(\ell),\star}|$. Chernoff's bound implies that,
\begin{align}
    \pr_{\s{Bern}(p)^{\otimes\s{J}(\calD_j)}}\pp{\s{B}_1\leq-\frac{p-q}{4}\s{J}(\calD_j)}\leq \exp\pp{-\frac{\s{J}(\calD_j)}{2}d_{\s{KL}}\p{\frac{p+q}{2}\Big\| p}},
\end{align}
and
\begin{align}
    \pr_{\s{Bern}(q)^{\otimes\s{J}(\calD_j)}}\pp{\s{B}_2\geq\frac{p-q}{4}\s{J}(\calD_j)}\leq \exp\pp{-\frac{\s{J}(\calD_j)}{2}d_{\s{KL}}\p{\frac{p+q}{2}\Big\| q}}.
\end{align}
For simplicity of notation define $\kappa\triangleq\frac{p+q}{2}$. Then
\begin{align}
    \pr\pp{\left.\hat{\Gamma}^{(\ell)}_{\s{MLE}}\neq{\Gamma}^{(\ell),\star}\right|\calF_\ell}&\leq\pr\pp{\bigcup_{j\geq2}\triangle(\calW_j)<0}\\
    &=\pr\pp{\bigcup_{j\geq2}\ppp{\s{B}_2-\s{B}_1<\frac{(p-q)}{2}\cdot |\calD_j\setminus\calD^{(\ell),\star}|}}\\
    &\leq \pr\pp{\bigcup_{j\geq2}\ppp{\s{B}_1\leq-\frac{p-q}{4}\s{J}(\calD_j)}\cup\ppp{\s{B}_2\geq\frac{p-q}{4}\s{J}(\calD_j)}}\\
    &\leq \sum_{j\geq2}\pp{\pr\p{\s{B}_1\leq-\frac{p-q}{4}\s{J}(\calD_j)}+\pr\p{\s{B}_2\geq\frac{p-q}{4}\s{J}(\calD_j)}}\\
    &\leq\sum_{j\geq2}\pp{e^{-\frac{\s{J}(\calD_j)}{2}d_{\s{KL}}\p{\kappa\parallel q}}+e^{-\frac{\s{J}(\calD_j)}{2}d_{\s{KL}}\p{\kappa\parallel p}}}\\
    & = \sum_{j\geq2}\pp{e^{-\frac{\s{J}(\calD_j)}{2}d_{\s{KL}}\p{\kappa\parallel q}}+e^{-\frac{\s{J}(\calD_j)}{2}d_{\s{KL}}\p{\kappa\parallel p}}},\label{eqn:errorProbabilityML}
\end{align}
where the first inequality follows from the fact that for any two random variables $X,Y$ and any $\zeta\in\mathbb{R}$, we have $\pr[X\geq Y]\leq\pr[X\geq\zeta\cup Y\leq\zeta]$. Since $\s{J}(\calD_j) = |e(\Gamma^{(\ell),\star} \setminus \Gamma^{(\ell-1),\star})|-|e((\calW_j\cap\Gamma^{(\ell),\star})\setminus\Gamma^{(\ell-1),\star})|$, we notice the resemblance between \eqref{eqn:errorProbabilityML} and \eqref{eqn:momentGenInt0}. Accordingly, by following the same arguments as in \eqref{eqn:momentGenInt0}--\eqref{eqn:FinalMLEPeelP1ana}, we obtain that $\pr[\hat{\calD}_{\s{MLE}}\neq\calD^{\star}]\leq\frac{\delta}{|M(\Gamma)|}$, provided that 
\begin{align}
    k_\ell \cdot n\cdot\exp\p{-\frac{d_{\s{KL}}(\kappa||q)\wedge d_{\s{KL}}(\kappa||p)}{2}\eta_\ell}\leq \frac{\delta/2}{M(\Gamma)}.\label{eqn:condGenP1}
\end{align}
Accordingly, $\pr[\hat{\Gamma}_{\s{pMLE}}\neq\Gamma^{\star}]\to0$, 
provided by \eqref{eqn:condGenP1}, for all $\ell\geq1$. This is equivalent to $\mu_{\s{min}}(\Gamma_n)\geq \frac{(8+\varepsilon)\cdot \log n}{d_{\s{KL}}(\kappa||q)\wedge d_{\s{KL}}(\kappa||p)}$, for any $\varepsilon>0$.
\end{proof}

\subsection{Convex relaxation}\label{sec:upperbounds_conv}

\begin{proof}[Proof of Theorem~\ref{thm:effAlg}]
We will show that, with high probability, the objective function of any feasible $\s{X}\neq\s{X}^\star$ is inferior, i.e.,
\begin{align}
    \innerP{\s{X},\s{W}} < \innerP{\s{X}^\star,\s{W}}.
\end{align}
To that end, we start by establishing some notations. Denote the singular value decomposition (SVD) of the underlying \emph{diagonally shifted} adjacency matrix $\s{S}^\star\triangleq\s{Ext}(\s{X}^\star,\mathbf{s}^\star;\alpha) = \s{X}^\star + \alpha\s{Diag}(\mathbf{s}^\star)$ associated with the planted subgraph $\Gamma^\star$ by
\begin{align}
\s{S}^\star = \s{U}\Sigma \s{U}^\top,
\end{align}
and define the projections projections onto the tangent space $\calT$ of the manifold of matrices as $\calP_{\calT}(M)\triangleq \s{U}\s{U}^TM+M\s{U}\s{U}^T-\s{U}\s{U}^TM\s{U}\s{U}^T$ and $\calP_{\calT^\perp}(M)\triangleq M-\calP_{\calT}(M)$. We note that $\bE\s{W} = c\s{X}^\star$ with $c\triangleq \frac{p}{q}-1>0$. With these notations, we can write,
\begin{align}
    \innerP{\s{X}^\star-\s{X},\s{W}} = \underbrace{\langle \s{X}^\star - \s{X}, c\s{X}^\star \rangle}_{\text{(a)}}
+ \underbrace{\langle \s{X}^\star - \s{X}, \calP_{\calT^\perp}(\s{W} - \bE[\s{W}]) \rangle}_{\text{(b)}}
+ \underbrace{\langle \s{X}^\star - \s{X}, \calP_{\calT}(\s{W} - \bE[\s{W}]) \rangle}_{\text{(c)}},
\end{align}
and we next bound each one of the above terms (a)--(c). As for (a), using the feasibility condition $\innerP{\Jb,\s{X}}=2|e(\Gamma)|$ in \eqref{eqn:convex}, we get
\begin{align}
    \langle \s{X}^\star - \s{X}, c\s{X}^\star \rangle = \frac{c}{2}\norm{\s{X}^\star - \s{X}}_{\ell_1}.
\end{align}
Next, we analyze term (b). To that end, we follow a standard approach and analyze the subgradient of $\norm{\cdot}_{\star}$ at $\s{S}^\star$. Using \cite[Corollary 6.1]{chen2016statistical}, we have $\partial \norm{\cdot}_{\star}(\s{S}^\star) = \{\s{U}\s{U}^\top + \calP_{\calT^\perp}(\s{Y}) : \|\s{Y}\|_{\s{op}} \leq 1\}$.
Thus, using the above fact and the nuclear-norm feasibility condition $\|\s{X}+\alpha\s{Diag}(\mathbf{s})\|_\star\leq \|\s{S}^\star\|_\star$, we obtain
\begin{align}
0 &\geq \|\s{X}+\alpha\s{Diag}(\mathbf{s})\|_\star-\|\s{S}^\star\|_\star\\
&\geq \langle \s{X} - \s{X}^\star, \s{U}\s{U}^\top \rangle + \langle \s{X} - \s{X}^\star, \calP_{\calT^\perp}(\s{Y}) \rangle + \alpha\langle \s{Diag}(\mathbf{s}-\mathbf{s}^\star), \s{U}\s{U}^\top + \calP_{\calT^\perp}(\s{Y}) \rangle,
\end{align}
and thus, by taking $\s{Y} = \frac{\s{W}-\bE[\s{W}]}{\|\s{W}-\bE[\s{W}]\|_{\s{op}}}$, and using H\"older's inequality, we have
\begin{align}
    &\abs{\langle \s{X}-\s{X}^\star, \calP_{\calT^\perp}(\s{W} - \bE[\s{W}]) \rangle}\leq \|\s{W}-\bE[\s{W}]\|_{\s{op}}\abs{\langle \s{X} - \s{X}^\star, \s{U}\s{U}^\top \rangle}\nonumber\\
    &\hspace{1.5cm}+\alpha\|\s{W}-\bE[\s{W}]\|_{\s{op}}\Big(\norm{\s{U}\s{U}^\top}_\infty \| \s{Diag}(\mathbf{s}-\mathbf{s}^\star)\|_{\ell_1}
    +\abs{\langle \s{Diag}(\mathbf{s}-\mathbf{s}^\star), \calP_{\calT^\perp}(\s{Y}) \rangle}\Big).
\end{align}
Since $(\s{W}-\bE[\s{W}])$ has zero diagonal, so does $\s{Y}$. Hence
\begin{align}
\langle \s{Diag}(\mathbf{s}-\mathbf{s}^\star), \calP_{\calT^\perp}(\s{Y}) \rangle
= - \langle \s{Diag}(\mathbf{s}-\mathbf{s}^\star), \calP_{\calT}(\s{Y}) \rangle,
\end{align}
and therefore, by duality,
\begin{align}
\abs{\langle \s{Diag}(\mathbf{s}-\mathbf{s}^\star), \calP_{\calT^\perp}(\s{Y}) \rangle}
\leq  \|\calP_{\calT}(\s{Y})\|_{\ell_\infty}\|\s{Diag}(\mathbf{s}-\mathbf{s}^\star)\|_{\ell_1}.
\end{align}
Finally, as for (c), using H\"older's inequality once again
\begin{align}
    \langle \s{X}^\star - \s{X}, \calP_{\calT}(\s{W}-\bE[\s{W}]) \rangle\leq \norm{\calP_{\calT}(\s{W} - \bE[\s{W}])}_{\ell_\infty} \norm{\s{X}-\s{X}^\star}_{\ell_1}.
\end{align}
Combining the above we obtain
\begin{align}
    \innerP{\s{X}^\star-\s{X},\s{W}}\geq\pp{\frac{c}{2}-\norm{\s{U}\s{U}^\top}_{\ell_\infty}\|\s{W}-\bE[\s{W}]\|_{\s{op}}-\norm{\calP_{\calT}(\s{W} - \bE[\s{W}])}_{\ell_\infty}}\norm{\s{X}-\s{X}^\star}_{\ell_1}\nonumber\\
    \hspace{1.8cm}-\alpha\|\s{W}-\bE[\s{W}]\|_{\s{op}}\Big(\norm{\s{U}\s{U}^\top}_\infty+\|\calP_{\calT}(\s{Y})\|_{\ell_\infty}\Big) \| \s{Diag}(\mathbf{s}-\mathbf{s}^\star)\|_{\ell_1},
    \label{eqn:ZerothBoundCon}
\end{align}
where $\s{Y}=\frac{\s{W}-\bE[\s{W}]}{\|\s{W}-\bE[\s{W}]\|_{\s{op}}}$. Let us bound each one of the terms at the right-hand side of \eqref{eqn:ZerothBoundCon}. Since $\s{W}-\bE\s{W}$ is a symmetric and i.i.d. matrix, with zero mean, it is well-known that (see, e.g., \cite[Corollary 4.4.7]{vershynin2010introduction}) with probability at least $1-\delta$,
\begin{align}
    \|\s{W}-\bE\s{W}\|_{\s{op}}\leq \frac{C}{q}\sqrt{n}+\frac{C}{q}\sqrt{\log\frac{4}{\delta}}.\label{eqn:FirstBoundCon00}
\end{align}
Thus,
\begin{align}
    \|\s{W}-\bE[\s{W}]\|_{\s{op}}\leq C'_q\sqrt{n},\label{eqn:FirstBoundCon}
\end{align}
for some large enough constant $C'_q>0$. Furthermore, we have the following lemma, which we will prove later on.

\begin{lemma}\label{lem:maxNormCohe}
Let $\s{U}\in\mathbb{R}^{n\times r}$ have orthonormal columns and set $\s{P}\triangleq \s{U}\s{U}^{\top}$.  
For each $i$, let $\ell_i\triangleq\|\s{U}_{i,:}\|_2^2$ and thus $\s{coh}(\s{U})=\frac{n}{r}\max_{1\leq i\leq n}\ell_i$. Then
\begin{align}
\frac{\sqrt{r}}{n}\leq\norm{\s{P}}_{\ell_\infty}\leq\frac{r}{n}\s{coh}(\s{U}).
\end{align}
\end{lemma}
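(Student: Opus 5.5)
The upper bound follows from Cauchy--Schwarz applied entrywise to the projector. Since $\s{P}=\s{U}\s{U}^\top$, each entry is an inner product of two rows, $\s{P}_{ij}=\innerP{\s{U}_{i,:},\s{U}_{j,:}}$. Hence
\begin{align}
|\s{P}_{ij}|\leq\norm{\s{U}_{i,:}}_2\norm{\s{U}_{j,:}}_2=\sqrt{\ell_i\ell_j}\leq\max_{1\leq m\leq n}\ell_m=\frac{r}{n}\s{coh}(\s{U}),
\end{align}
where the last equality is just the definition of the coherence parameter. Taking the maximum over $(i,j)$ gives $\norm{\s{P}}_{\ell_\infty}\leq\frac{r}{n}\s{coh}(\s{U})$. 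The bound is attained on the diagonal, where $\s{P}_{ii}=\ell_i$.

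For the lower bound, the plan is to compare the Frobenius norm of $\s{P}$ with its entrywise maximum. Because $\s{P}$ is an orthogonal projector of rank $r$,
\begin{align}
\norm{\s{P}}_F^2=\s{Tr}(\s{P}^\top\s{P})=\s{Tr}(\s{P})=r.
\end{align}
Since $\s{P}$ has $n^2$ entries, $\norm{\s{P}}_F^2\leq n^2\norm{\s{P}}_{\ell_\infty}^2$. Combining the two gives $\norm{\s{P}}_{\ell_\infty}\geq\sqrt{r}/n$, as claimed.

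It is worth recording a sharper alternative route for later use. The diagonal satisfies $\sum_i\ell_i=\s{Tr}(\s{P})=r$, so $\max_i\ell_i\geq r/n$, which yields $\norm{\s{P}}_{\ell_\infty}\geq\max_i\s{P}_{ii}\geq r/n$. This already dominates $\sqrt{r}/n$ when $r\geq1$. The same trace identity is essentially the source of the coherence bounds quoted earlier from Appendix~\ref{app:cohereBounds}.

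There is no real obstacle in this argument. The only points that need care are the identification $\s{P}^2=\s{P}=\s{P}^\top$, which follows from $\s{U}^\top\s{U}=\mathbf{I}_r$, and the fact that $\norm{\cdot}_{\ell_\infty}$ here denotes the entrywise maximum rather than an operator norm. Both facts are immediate from the notation section.
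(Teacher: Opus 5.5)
Your proof is correct and follows the paper's argument exactly: Cauchy--Schwarz on the rows $\s{U}_{i,:},\s{U}_{j,:}$ gives the upper bound, and $\norm{\s{P}}_F^2=\s{Tr}(\s{P})=r\leq n^2\norm{\s{P}}_{\ell_\infty}^2$ gives the lower bound. Your side remark is also right and is sharper than what the paper records: since $\max_i\s{P}_{ii}=\max_i\ell_i\geq r/n$, one in fact has $\norm{\s{P}}_{\ell_\infty}=\frac{r}{n}\s{coh}(\s{U})\geq\frac{r}{n}\geq\frac{\sqrt{r}}{n}$.
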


Lemma~\ref{lem:maxNormCohe} implies that 
\begin{align}
    \norm{\s{U}\s{U}^\top}_{\ell_\infty}\leq \frac{\s{rank}(\s{S}^\star)}{n}\s{coh}(\s{U}).\label{eqn:SecondBoundCon}
\end{align}
We finally upper bound $\norm{\calP_{\calT}(\s{W}-\bE[\s{W}])}_{\ell_\infty}$ and $\|\calP_{\calT}(\s{Y})\|_{\ell_\infty}$ where $\s{Y} = \frac{\s{W}-\bE[\s{W}]}{\|\s{W}-\bE[\s{W}]\|_{\s{op}}}$. To that end, we let, for simplicity of notation, $\s{P}\triangleq \s{U}\s{U}^\top$, namely, orthogonal projector onto $\s{range}(\s{S}^\star)$, with $\s{U} \in \mathbb{R}^{n \times r}$ having orthonormal columns. Denote $r\triangleq\s{rank}(\s{S}^\star)$, and recall that
\begin{align}
\|\s{U}\|_{2,\infty} \triangleq \max_{i \in [n]} \|\s{U}_{i,:}\|_2, \qquad
\s{coh}(\s{U})=\frac{n}{r} \cdot \|\s{U}\|_{2,\infty}^2
\end{align}
Accordingly, we note that $\s{P}_{ii} = \|\s{U}^\top e_i\|_2^2 \leq \|\s{U}\|_{2,\infty}^2 = \s{coh}(\s{U}) r/n$, for any $i\in[n]$, and furthermore,
\begin{align}
|\s{P}_{ij}| &\leq \|\s{U}\|_{2,\infty}^2 = \frac{r}{n}\s{coh}(\s{U}),
\end{align}
for all $i,j\in[n]$. We have the following lemma.
\begin{lemma}\label{lem:infBoundSymY}
    Let $\s{Z}$ be a real-valued symmetric matrix, and $\calP_{\calT}$ be the projection matrix onto the tangent space, i.e., $\calP_{\calT}(\s{Z}) = \s{P}\s{Z}+\s{Z}\s{P}-\s{P}\s{Z}\s{P}$. Then
    \begin{align}
\norm{\calP_{\calT}(\s{Z})}_{\ell_\infty} &\leq(2 + \|\s{P}\|_{\ell_\infty \to \ell_\infty}) \cdot \norm{\s{P}\s{Z}}_{\ell_\infty}\\
&\leq\p{2 + \sqrt{ \frac{|v(\Gamma)| r \s{coh}(\s{U})}{n}}} \cdot \norm{\s{P}\s{Z}}_{\ell_\infty}\label{eqn:Epropo2}
\end{align}
Furthermore, for a diagonal matrix $D$
\begin{align}
\norm{\calP_{\calT}(\s{D})}_{\ell_\infty} \leq3\frac{\s{coh}(\s{U}) r}{n} \cdot\norm{\s{D}}_{\ell_\infty}.
\end{align}
\end{lemma}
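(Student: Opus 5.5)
The plan is to expand $\calP_{\calT}(\s{Z})=\s{P}\s{Z}+\s{Z}\s{P}-\s{P}\s{Z}\s{P}$ and bound each of the three terms entrywise. Two elementary facts about the orthogonal projector $\s{P}=\s{U}\s{U}^\top$ will be used throughout. First, $\s{P}^2=\s{P}$ and $\s{P}^\top=\s{P}$, so the $i$th row satisfies $\|\s{P}_{i,:}\|_2^2=(\s{P}^2)_{ii}=\s{P}_{ii}$. Second, $\s{P}_{ii}=\|\s{U}_{i,:}\|_2^2\le \s{coh}(\s{U})\,r/n$, as recorded just before the statement.

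\emph{General bound.} Since $\s{Z}$ and $\s{P}$ are symmetric, $\s{Z}\s{P}=(\s{P}\s{Z})^\top$. Hence $\|\s{Z}\s{P}\|_{\ell_\infty}=\|\s{P}\s{Z}\|_{\ell_\infty}$, and the first two terms together contribute at most $2\|\s{P}\s{Z}\|_{\ell_\infty}$. For the third term, write $(\s{P}\s{Z}\s{P})_{ij}=\sum_k \s{P}_{ik}(\s{Z}\s{P})_{kj}$. Its absolute value is at most $\big(\sum_k|\s{P}_{ik}|\big)\,\|\s{Z}\s{P}\|_{\ell_\infty}\le \|\s{P}\|_{\ell_\infty\to\ell_\infty}\|\s{P}\s{Z}\|_{\ell_\infty}$. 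This gives the first inequality.

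\emph{Row-sum bound.} For \eqref{eqn:Epropo2} I would bound the maximum absolute row sum of $\s{P}$. Because $\s{S}^\star$ has the block form with nonzero block indexed by $v(\Gamma^\star)$, the columns of $\s{U}$ are supported on $v(\Gamma^\star)$. Consequently each row $\s{P}_{i,:}$ has at most $|v(\Gamma)|$ nonzero entries. Cauchy--Schwarz then gives
\begin{align}
\sum_j|\s{P}_{ij}|\le \sqrt{|v(\Gamma)|}\,\|\s{P}_{i,:}\|_2=\sqrt{|v(\Gamma)|\,\s{P}_{ii}}\le\sqrt{\frac{|v(\Gamma)|\,r\,\s{coh}(\s{U})}{n}}.
\end{align}
The only point needing care here is this support claim. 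It holds because the range of $\s{S}^\star$ lies in the coordinates $v(\Gamma^\star)$, which is exactly how the block structure and $\s{U}$ were set up earlier.

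\emph{Diagonal case.} For diagonal $\s{D}$ the entries are explicit:
\begin{align}
(\calP_{\calT}(\s{D}))_{ij}=\s{P}_{ij}\s{D}_{jj}+\s{D}_{ii}\s{P}_{ij}-\sum_k\s{P}_{ik}\s{D}_{kk}\s{P}_{kj}.
\end{align}
Each of the first two terms is at most $|\s{P}_{ij}|\,\|\s{D}\|_{\ell_\infty}\le \frac{r}{n}\s{coh}(\s{U})\|\s{D}\|_{\ell_\infty}$. For the third term, Cauchy--Schwarz gives
\begin{align}
\Big|\sum_k\s{P}_{ik}\s{D}_{kk}\s{P}_{kj}\Big|\le\|\s{D}\|_{\ell_\infty}\|\s{P}_{i,:}\|_2\|\s{P}_{:,j}\|_2=\|\s{D}\|_{\ell_\infty}\sqrt{\s{P}_{ii}\s{P}_{jj}}\le \frac{r}{n}\s{coh}(\s{U})\|\s{D}\|_{\ell_\infty}.
\end{align}
Summing the three contributions yields the constant $3$. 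None of these steps is a genuine obstacle; the idempotence identity $\|\s{P}_{i,:}\|_2^2=\s{P}_{ii}$ is what makes every estimate go through.
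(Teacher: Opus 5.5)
Your proof is correct, and it departs from the paper's in the two estimates that matter.

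The first inequality is handled exactly as in the paper: the triangle inequality, $\s{Z}\s{P}=(\s{P}\s{Z})^\top$, and pulling a row sum of $\s{P}$ out of $\s{P}\s{Z}\s{P}$.

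For the row-sum bound, the paper estimates each entry by $|\langle \s{U}_{i,:},\s{U}_{j,:}\rangle|\le\|\s{U}_{i,:}\|_2\|\s{U}_{j,:}\|_2$. It then uses $\sum_{j}\|\s{U}_{j,:}\|_2\le\sqrt{|v(\Gamma)|}\,\|\s{U}\|_F=\sqrt{|v(\Gamma)|\,r}$. Carried through honestly, this gives only $\|\s{P}\|_{\ell_\infty\to\ell_\infty}\le r\sqrt{|v(\Gamma)|\,\s{coh}(\s{U})/n}$, which is a factor $\sqrt{r}$ weaker than the stated bound. The paper's final ``combining'' equality drops this factor. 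Your idempotence identity $\sum_j\s{P}_{ij}^2=\s{P}_{ii}$ (equivalently $\s{U}^\top\s{U}=\mathbf{I}_r$) applies Cauchy--Schwarz to the whole row at once and loses nothing. It is what actually yields $\sqrt{|v(\Gamma)|\,r\,\s{coh}(\s{U})/n}$.

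In the diagonal case, the paper passes through $\norm{\calP_{\calT}(\s{D})}_{\ell_\infty}\le 3\norm{\s{P}\s{D}}_{\ell_\infty}$. That would need $\norm{\s{P}\s{D}\s{P}}_{\ell_\infty}\le\norm{\s{P}\s{D}}_{\ell_\infty}$, which does not follow from the first part. The first part only supplies the factor $\|\s{P}\|_{\ell_\infty\to\ell_\infty}$, and that can exceed $1$. The intermediate inequality can in fact fail. Take $\s{P}=uu^\top$ with $u=(2^{-1/2},(2m)^{-1/2},\dots,(2m)^{-1/2})$ and $\s{D}=\s{Diag}(\epsilon,1,\dots,1)$. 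Then $|(\calP_{\calT}(\s{D}))_{11}|\approx 1/4$, while $3\norm{\s{P}\s{D}}_{\ell_\infty}\approx 3/(2\sqrt{m})$.

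Your direct entrywise estimate $|(\s{P}\s{D}\s{P})_{ij}|\le\norm{\s{D}}_{\ell_\infty}\sqrt{\s{P}_{ii}\s{P}_{jj}}$ avoids that intermediate claim and gives the final bound with constant $3$. So your route is the sharper one, and it is the one that actually justifies the lemma with the constants as stated.
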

Apply Lemma~\ref{lem:infBoundSymY} with $\s{Z} =  \s{W} - \bE[\s{W}]$ we have
\begin{align}
   \norm{\calP_{\calT}(\s{W}-\bE[\s{W}])}_{\ell_\infty}\leq\p{2 + \sqrt{ \frac{|v(\Gamma)|  \s{coh}(\s{U})r}{n}}} \cdot \norm{\s{P}(\s{W}-\bE[\s{W}])}_{\ell_\infty}.\label{eqn:ProjectionBound}
\end{align}
Now, note that for any $(i,j)$
\begin{align}
[\s{P}(\s{W}-\bE[\s{W}])]_{ij} = \sum_{s=1}^n \s{P}_{is}[\s{W}-\bE[\s{W}]]_{sj}.
\end{align}
Furthermore $([\s{W}-\bE[\s{W}]]_{sj})_{s=1}^n$ has independent, mean-zero, bounded entries with $\psi_2$-norm $\leq K_q$). Then a weighted Bernstein's bound gives with probability at least $1-n^{-3}$ that,
\begin{align}
   [\s{P}(\s{W}-\bE[\s{W}])]_{ij}\leq K_q\sqrt{\sum_{s=1}^n \s{P}_{is}^2\log n},
\end{align}
for a large enough constant $K_q$. Because $\s{P}$ is a projector, $\sum_{s=1}^n \s{P}_{is}^2 = [\s{P}^2]_{ii} = [\s{P}]_{ii} \leq \norm{\s{U}}_{2,\infty}^2 = \s{coh}(\s{U}) r/n$. Hence, by a union bound over all pairs $(i,j) \in [n]\times [n]$, with probability at least $1-n^{-1}$,
\begin{align}
\norm{\s{P}(\s{W}-\bE[\s{W}])}_{\ell_\infty} \leq K_q\sqrt{\frac{\s{coh}(\s{U}) r}{n} \log n},
\end{align}
Likewise, with $\s{Z}=\s{Y}=\frac{\s{W}-\bE[\s{W}]}{\|\s{W}-\bE[\s{W}]\|_{\s{op}}}$ we get
\begin{align}
\|\calP_{\calT}(\s{Y})\|_{\ell_\infty}
&\leq\p{2 + \sqrt{ \frac{|v(\Gamma)|  \s{coh}(\s{U})r}{n}}}\cdot
\frac{K_q}{\|\s{W}-\bE[\s{W}]\|_{\s{op}}}\sqrt{\frac{\s{coh}(\s{U}) r}{n} \log n}.
\label{eqn:PTnormY}
\end{align}
Looking at \eqref{eqn:ZerothBoundCon}, it is only left to deal $\|\s{Diag}(\mathbf{s}-\mathbf{s}^\star)\|_{\ell_1}$. Let $\delta(\Gamma^\star)$ denote the minimum degree of $\Gamma^\star$. The edge--vertex coupling constraints $\s{X}_{ij}\leq \min(\mathbf{s}_i,\mathbf{s}_j)$ together with $\innerP{\bar{\s{J}},\s{X}}=2|e(\Gamma)|$ imply
\begin{align}
\|\s{Diag}(\mathbf{s}-\mathbf{s}^\star)\|_{\ell_1}\leq\frac{1}{\delta(\Gamma^\star)}\|\s{X}-\s{X}^\star\|_{\ell_1}.
\label{eqn:couplingBound}
\end{align}
Indeed, let $\calN_i$ denote the set of neighbors of $i$ in $\Gamma^\star$. For any $i\in v(\Gamma^\star)$,
\begin{align}
\sum_{j\in\mathcal N_i} \s{X}_{ij}
\leq\sum_{j\in\mathcal N_i}\min\{\mathbf{s}_i,\mathbf{s}_j\}
\leq\s{deg}_{\Gamma^\star}(i)\mathbf{s}_i.
\end{align}
Since $ \sum_{j\in\mathcal N_i}\s{X}^\star_{ij}=\s{deg}_{\Gamma^\star}(i) $, we get
\begin{align}
\sum_{j\in\mathcal N_i}\p{\s{X}^\star_{ij}-\s{X}_{ij}}
\geq\s{deg}_{\Gamma^\star}(i)(1-\mathbf{s}_i).
\end{align}
Taking positive parts only makes the left-hand side larger, so
\begin{align}
\sum_{j\in\mathcal N_i}\p{\s{X}^\star_{ij}-\s{X}_{ij}}_+
\geq\s{deg}_{\Gamma^\star}(i)(1-\mathbf{s}_i)
\geq\delta(\Gamma^\star)(1-\mathbf{s}_i),\label{eqn:LLNeig}
\end{align}
where $ \delta(\Gamma^\star)\triangleq\min_{k\in v(\Gamma^\star)}\s{deg}_{\Gamma^\star}(k) $. Now, for any scalars $ a,b $, it holds that $ |a-b|=(a-b)_++(b-a)_+\ge(b-a)_+ $. Summing \eqref{eqn:LLNeig} over ordered pairs $ (i,j) $ with $ j\in\mathcal{N}_i $ and using symmetry,
\begin{align}
\|\s{X}-\s{X}^\star\|_{\ell_1}
=\sum_{i,j}|\s{X}_{ij}-\s{X}^\star_{ij}|
\geq2\sum_{i\in v(\Gamma^\star)}\sum_{j\in\mathcal N_i}\p{\s{X}^\star_{ij}-\s{X}_{ij}}_+
\geq2\delta(\Gamma^\star)\sum_{i\in v(\Gamma^\star)}(1-\mathbf{s}_i).
\end{align}
By feasibility $ \sum_i \mathbf{s}_i=\sum_i \mathbf{s}_i^\star=|v(\Gamma^\star)| $, and thus
\begin{align}
\sum_{i\in v(\Gamma^\star)}(1-\mathbf{s}_i)
=\sum_{i\notin v(\Gamma^\star)} \mathbf{s}_i
=\frac{1}{2}\sum_{i\in[n]} |\mathbf{s}_i-\mathbf{s}_i^\star|
=\frac{1}{2}\|\s{Diag}(s-s^\star)\|_{\ell_1}.
\end{align}
Combining together reveals that,
\begin{align}
\|\s{X}-\s{X}^\star\|_{\ell_1}
\geq2\delta(\Gamma^\star)\cdot \frac{1}{2}\|\s{Diag}(\mathbf{s}-\mathbf{s}^\star)\|_{\ell_1}
=\delta(\Gamma^\star)\|\s{Diag}(\mathbf{s}-\mathbf{s}^\star)\|_{\ell_1},
\end{align}
which is equivalent to
\begin{align}
\|\s{Diag}(\mathbf{s}-\mathbf{s}^\star)\|_{\ell_1}\leq\frac{1}{\delta(\Gamma^\star)}\|\s{X}-\s{X}^\star\|_{\ell_1},
\end{align}
which yields \eqref{eqn:couplingBound}. Thus, plugging \eqref{eqn:FirstBoundCon}, \eqref{eqn:SecondBoundCon}, \eqref{eqn:ProjectionBound}, \eqref{eqn:PTnormY}, and \eqref{eqn:couplingBound} in \eqref{eqn:ZerothBoundCon}, we get with probability at least $1-n^{-1}$,
\begin{align}
    &\innerP{\s{X}^\star-\s{X},\s{W}}\geq\left[\frac{c}{2}
    -\frac{C'_qr}{\sqrt{n}}\s{coh}(\s{U})
    -\p{2 + \sqrt{ \frac{|v(\Gamma)| \s{coh}(\s{U})r}{n}}}K_q\sqrt{\frac{\s{coh}(\s{U}) r}{n} \log n}\right.\nonumber\\
    &\left.
    -\frac{\alpha}{\delta(\Gamma)}\left(
    \frac{C'_qr}{\sqrt n}\s{coh}(\s{U})
    + \p{2 + \sqrt{ \frac{|v(\Gamma)| \s{coh}(\s{U})r}{n}}}K_q\sqrt{\frac{\s{coh}(\s{U}) r}{n} \log n}
    \right)\right]\norm{\s{X}-\s{X}^\star}_{\ell_1}.
\end{align}
Accordingly, since $\alpha$ is a fixed constant, and $\delta(\Gamma)\geq1$, there exist constants $c_1,c_2>0$ such that if
\begin{align}
    &\s{coh}(\s{U})r\leq c_1\sqrt{n}\label{eqn:TwoCond0}\\
    &\s{coh}(\s{U})r\sqrt{|v(\Gamma)|}\leq c_2\frac{n}{\sqrt{\log n}},\label{eqn:TwoCond1}
    %&\alpha\frac{1}{\delta(\Gamma)}\left(
    %\frac{\s{coh}(\s{U}) r}{\sqrt n}
    %+ \Big(2 + \sqrt{ \tfrac{|v(\Gamma)| \s{coh}(\s{U})r}{n}}\Big)\sqrt{\tfrac{\s{coh}(\s{U}) r}{n} \log n}
    %\right)\le\ c_3,\label{eqn:TwoCond2}
\end{align}
then, for any $\s{X}\neq\s{X}^\star$, we have $\innerP{\s{X}^\star-\s{X},\s{W}}>0$. Finally, let us check when \eqref{eqn:TwoCond0} dominates \eqref{eqn:TwoCond1}. Assuming that \eqref{eqn:TwoCond0} holds, it can be seen that \eqref{eqn:TwoCond1} is satisfied if 
\begin{align}
    |v(\Gamma)|\leq \frac{c_2^2}{c_1^2}\frac{n}{\log n},
\end{align}
otherwise, \eqref{eqn:TwoCond1} dominates, which concludes the proof.
\end{proof}

We finally prove Lemmata~\ref{lem:maxNormCohe} and \ref{lem:infBoundSymY}.
\begin{proof}[Proof of Lemma~\ref{lem:maxNormCohe}]
Since $\s{U}$ has orthonormal columns, $\s{P}=\s{U}\s{U}^{\top}$ is the orthogonal projector onto $\s{range}(\s{U})$. Its entries satisfy
\begin{align}
\s{P}_{ij} = \s{U}_{i,:}\s{U}_{j,:}^{\top} = u_i^{\top}u_j,
\end{align}
where $u_i\triangleq \s{U}_{i,:}\in\mathbb{R}^r$. Also $\s{P}_{ii}=\|u_i\|_2^2=\ell_i$ and $\sum_{i=1}^n\ell_i=\s{trace}(\s{P})=r$. By Cauchy--Schwarz inequality,
\begin{align}
|\s{P}_{ij}| = |u_i^{\top}u_j|
\leq \|u_i\|_2\|u_j\|_2
\leq \max_k \|\s{U}_{k,:}\|_2^2
= \max_k \ell_k
= \frac{r}{n}\s{coh}(\s{U}).
\end{align}
Taking the maximum over $i,j$ gives $\|\s{P}\|_{\ell_\infty}\leq\frac{r}{n}\s{coh}(\s{U})$. For the lower bound, note that the Frobenius norm satisfies
\begin{align}
\|\s{P}\|_F^2
= \s{trace}(\s{P}^2)
= \s{trace}(\s{P})
= r.
\end{align}
Since $\|\s{P}\|_F^2 = \sum_{i,j} \s{P}_{ij}^2 \leq n^2 \|\s{P}\|_{\ell_\infty}^2$, we obtain
\begin{align}
n^2 \|\s{P}\|_{\ell_\infty}^2 \geq r,
\end{align}
and thus $\|\s{P}\|_{\ell_\infty}\geq\frac{\sqrt{r}}{n}$. This proves the claim.
\end{proof}

\begin{proof}[Proof of Lemma~\ref{lem:infBoundSymY}]
By triangle inequality, note that,
\begin{align}
\norm{\calP_{\calT}(\s{Y})}_{\ell_\infty} \leq \norm{\s{P}\s{Y}}_{\ell_\infty}+\norm{\s{Y}\s{P}}_{\ell_\infty}+\norm{\s{P}\s{Y}\s{P}}_{\ell_\infty}
\end{align}
Since $\s{Y}$ is symmetric, $\norm{\s{Y}\s{P}}_{\ell_\infty} = \norm{\s{P}\s{Y}}_{\ell_\infty}$. Moreover,
\begin{align}
\norm{\s{P}\s{Y}\s{P}}_{\ell_\infty} \leq\norm{\s{P}}_{\ell_\infty \to \ell_\infty} \norm{\s{P}\s{Y}}_{\ell_\infty}.
\end{align}
Therefore,
\begin{align}
\norm{\calP_{\calT}(\s{Y})}_{\ell_\infty} \leq (2 + \norm{\s{P}}_{\ell_\infty \to \ell_\infty}) \cdot \norm{\s{P}\s{Y}}_{\ell_\infty}.\label{eqn:boundLLP0}
%\leq (2 + \sqrt{\s{coh}(\s{U}) r}) \cdot \norm{\s{P}\s{Y}}_{\ell_\infty}.
\end{align}
Let us bound $\norm{\s{P}}_{\ell_\infty \to \ell_\infty}$. Note that $\mathsf{U}$ is supported on a set $S \subset [n]$ of size $|v(\Gamma)|$ (i.e., $\mathsf{U}_{i,:} = 0$ for $i \notin S$). Thus, for any $i \notin S$, we have $\mathsf{P}_{i,:} = 0$. For $i \in S$,
\begin{align}
\| \mathsf{P}_{i,:} \|_1 = \sum_{j \in S} | \langle \mathsf{U}_{i,:}, \mathsf{U}_{j,:} \rangle |
\leq \sum_{j \in S} \| \mathsf{U}_{i,:} \|_2 \cdot \| \mathsf{U}_{j,:} \|_2
= \| \mathsf{U}_{i,:} \|_2 \cdot \sum_{j \in S} \| \mathsf{U}_{j,:} \|_2,
\end{align}
by Cauchy--Schwarz on each inner product. Apply Cauchy--Schwarz to the sum over $ j \in S $:
\begin{align}
\sum_{j \in S} \| \mathsf{U}_{j,:} \|_2
\leq \sqrt{|v(\Gamma)|} \left( \sum_{j \in S} \| \mathsf{U}_{j,:} \|_2^2 \right)^{1/2}
= \sqrt{|v(\Gamma)|} \cdot \| \mathsf{U} \|_F
= \sqrt{|v(\Gamma)| r},
\end{align}
since $ \| \mathsf{U} \|_F^2 = \mathrm{tr}( \mathsf{U}^\top \mathsf{U} ) = r $, and rows outside $ S $ are zero. From the definition of coherence,
\begin{align}
\| \mathsf{U}_{i,:} \|_2 \leq \max_t \| \mathsf{U}_{t,:} \|_2 = \sqrt{ \frac{\s{coh}(\s{U})r}{n} }.
\end{align}
Combining the above,
\begin{align}
\| \mathsf{P}_{i,:} \|_1 \leq \sqrt{ \frac{\s{coh}(\s{U})r}{n} } \cdot \sqrt{k r}
= \sqrt{ \frac{|v(\Gamma)| \s{coh}(\s{U})r}{n} }.
\end{align}
Taking the maximum over $ i $ gives
\begin{align}
\norm{\s{P}}_{\ell_\infty \to \ell_\infty}
= \max_i \| \mathsf{P}_{i,:} \|_1
\leq \sqrt{ \frac{|v(\Gamma)|\s{coh}(\s{U})r}{n}}.\label{eqn:boundLLP}
\end{align}
Thus, substituting \eqref{eqn:boundLLP} in \eqref{eqn:boundLLP0} we obtain
\begin{align}
\norm{\calP_{\calT}(\s{Y})}_{\ell_\infty} \leq \p{2 +\sqrt{ \frac{|v(\Gamma)|\s{coh}(\s{U})r}{n}}}\cdot \norm{\s{P}\s{Y}}_{\ell_\infty}.
\end{align}
Finally, for a diagonal matrix $\s{D}$, we have $[\s{P}\s{D}]_{ij} = \s{P}_{ij} \s{D}_{jj}$, for any $i,j\in[n]$, and thus
\begin{align}
\|\s{P}\s{D}\|_{\ell_\infty} \leq \norm{\s{P}}_{\ell_\infty} \cdot \norm{\s{D}}_{\ell_\infty}
\leq \|\s{U}\|_{2,\infty}^2 \cdot \norm{\s{D}}_{\ell_\infty},
\end{align}
so that
\begin{align}
\norm{\calP_{\calT}(\s{D})}_{\ell_\infty} \leq 3 \norm{\s{P}\s{D}}_{\ell_\infty}
\leq 3 \|\s{U}\|_{2,\infty}^2 \cdot \norm{\s{D}}_{\ell_\infty}
= 3\frac{\s{coh}(\s{U}) r}{n} \cdot\norm{\s{D}}_{\ell_\infty}.
\end{align}
\end{proof}

\section{Computational Bounds}\label{sec:complowerbounds}

\subsection{Lower bound}

We follow the same notations and definitions established in Section~\ref{sec:mainresults}. Recall that our goal is to upper bound
\begin{align}
\s{Corr}_{\le D}^2\leq\sum_{\alpha\in\{0,1\}^N:|\alpha|\le D}\ \frac{\kappa_\alpha^2}{(q_n(1-p_n))^{|\alpha|}},
\end{align}
where $\kappa_\alpha=\mathbb E[x\cdot \s{X}^\alpha]-\sum_{0\le \beta\lneq \alpha}\kappa_\beta\mathbb E[\s{X}^{\alpha-\beta}]$. Equivalently, $\kappa_\alpha$ is the joint cumulant of one copy of $x$ and the set of coordinates $\{\s{X}_e:\alpha_e=1\}$. Therefore, we see that the task of upper bounding the correlation reduces to the problem of upper bounding the joint cumulants. To that end, we introduce a few important notations.
\begin{definition}[Rooted pattern]
A rooted pattern is a finite simple graph $\s{H}=(v(\s{H}),e(\s{H}))$ together with a distinguished
vertex $r^\star\in v(\s{H})$, called the root. We say that $\s{H}$ is \emph{connected} if the underlying graph is connected. %The size of $\s{H}$ is given by its number of vertices $|v(\s{H})|=t$ and number of edges $|e(\s{H})|=d$.
\end{definition}
We recall the definitions of injective homomorphism and embedding\begin{definition}[Injective homomorphism and embedding]
Let $\s{G}=(v(\s{G}),e(\s{G}))$ be a finite graph. An \emph{injective homomorphism} (or \emph{embedding}) $\psi:\s{H}\hookrightarrow \s{G}$ is an injective map $\psi:v(\s{H})\to v(\s{G})$ such that
\begin{align}
\{u,v\}\in e(\s{H}) \;\implies\; \{\psi(u),\psi(v)\}\in e(\s{G}).
\end{align}
Thus, an embedding preserves adjacency and does not identify distinct vertices of $\s{H}$.
\end{definition}

\begin{definition}[Root-preserving embedding]
Let $\s{H}=(v(\s{H}),e(\s{H}),r^\star)$ be a rooted pattern and let $v\in v(\s{G})$ be a vertex of a graph $v(\s{G})$. A \emph{root-preserving embedding} of $\s{H}$ into $\s{G}$ with root at $v$ is an injective homomorphism
$\psi:\s{H}\hookrightarrow \s{G}$ such that $\psi(r^\star)=v$.
\end{definition}

\begin{definition}[Rooted embedding counts]
Fix integers $t\ge 2$ and $d\ge 1$. For a template graph $\Gamma_n$ and a vertex $v\in v(\Gamma_n)$,
we define
\begin{align}
\s{Emb}_{t,d}(\s{H}\to\Gamma_n,v)&\triangleq\left|\left\{\psi:\s{H}\hookrightarrow \Gamma_n:\; \s{H}\; \s{is}\; \s{a}\; \s{connected}\; r^\star\text{-}\s{rooted}\; \s{pattern}\; \s{with}\; \right.\right.\nonumber\\
&\left.\left.\hspace{3.3cm}|v(\s{H})|=t,\ |e(\s{H})|=d,\ \psi(r^\star)=v\right\}\right|.
\end{align}
That is, $\s{Emb}_{t,d}(\s{H}\to\Gamma_n;v)$ is the number of root-preserving embeddings of all
connected rooted patterns with $t$ vertices and $d$ edges, where the root of the pattern is mapped to
the specified vertex $v$ of $\Gamma_n$. We also define the total count
\begin{align}
\s{Emb}_{t,d}(\s{H}\to\Gamma_n)
\triangleq \sum_{v\in v(\Gamma_n)}\s{Emb}_{t,d}(\s{H}\to\Gamma_n;v),
\end{align}
which sums over all possible root locations in $\Gamma_n$. Finally, we define
\begin{align}
\s{Emb}_{t,d}(\Gamma_n;v)&\triangleq\sum_{\substack{\s{H}:\,|v(\s{H})|=t,\,|e(\s{H})|=d\\\s{H}\;\s{is}\;\s{connected}\;\s{root}\;r^\star}}
\s{Emb}_{t,d}(\s{H}\to\Gamma_n;v),\\
\s{Emb}_{t,d}(\Gamma_n)&\triangleq\sum_{v\in v(\Gamma_n)} \s{Emb}_{t,d}(\Gamma_n;v).\label{eqn:defEmbDTotalG}
\end{align}
\end{definition}
The role of the root in the above definition is to enforce alignment with a fixed \emph{anchor} vertex in the ambient graph (say vertex $1\in[n]$). The distinction between $\s{Emb}_{t,d}(\s{H}\to\Gamma_n,v)$ and $\s{Emb}_{t,d}(\s{H}\to\Gamma_n)$ is that the former counts embeddings anchored at a specific template vertex, while the latter sums over all template roots. Next, let $\s{H}$ be a rooted pattern with $t$ vertices and $d$ edges. Let $\phi:v(\Gamma_n)\hookrightarrow [n]$ be a uniformly random injection, and set $\Gamma_n^\star=\phi(\Gamma_n)$. Consider the event
\begin{align}
\calE(\s{H})&\triangleq\left\{\exists\, v\in v(\Gamma_n)\; \s{and}\; \psi:\s{H}\hookrightarrow \Gamma_n\; \s{injective}\;\s{with}\;\psi(r^\star)=v,\right.\nonumber\\
&\left.\hspace{4.2cm} \s{and}\; (\phi\circ\psi)(r^\star)=v^\star\right\}.\label{eqn:calEHalpha}
\end{align}
%Equivalently, writing $f=\phi\circ\psi$ with $\psi:v(H)\hookrightarrow v(\Gamma_n)$ an injective homomorphism into the \emph{template} $\Gamma_n$, the event $\mathsf{E}(H)$ says: \emph{“There exists a root-preserving template embedding $\psi:H\hookrightarrow\Gamma_n$ and the random labeling $\phi$ maps the root $\psi(r)$ to the ambient anchor $1$.”}
Intuitively, $\calE(\s{H})$ is the event that the rooted pattern $\s{H}$ appears inside the planted copy $\Gamma_n^\star$ with the root landing at the ambient anchor $v^\star$. The following lemma is a key ingredient in the proof of our upper bound on $\s{Corr}_{\leq D}$.
\begin{lemma}\label{lem:probEmbedding}
For any connected rooted pattern $\s{H}$ as above,
\begin{align}
\pr\pp{\calE(\s{H})}
=
\frac{1}{|v(\Gamma_n)|}\cdot \frac{\s{Emb}_{t,d}(\s{H}\to\Gamma_n)}{(n-1)_{t-1}},
\end{align}
where $(n-1)_{t-1}\triangleq(n-1)(n-2)\cdots(n-t+1)$.
\end{lemma}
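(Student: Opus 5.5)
The plan is a direct counting argument over the uniformly random injection $\phi:v(\Gamma_n)\hookrightarrow[n]$. Throughout, write $k\triangleq|v(\Gamma_n)|$.

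\emph{Step 1: fix the reading of the event.} I will read $\calE(\s{H})$ through a fixed ambient placement. Fix an injective map $f:v(\s{H})\to[n]$ with $f(r^\star)=v^\star$; this is the form in which $\s{H}$ enters a cumulant $\kappa_\alpha$ with $\alpha=f(e(\s{H}))$. Then $\calE(\s{H})$ says that $f=\phi\circ\psi$ for some root-preserving template embedding $\psi$. I would state this reading explicitly at the start of the proof. It must be taken together with conditioning on the anchor lying in the planted copy, $x=1$. Without that conditioning, the same computation gives the unconditional value $\frac{k}{n}\cdot\frac{1}{k}\cdot\frac{\s{Emb}_{t,d}}{(n-1)_{t-1}}$, i.e.\ $1/n$ in place of $1/k$.

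\emph{Step 2: linearity of expectation.} For a single template embedding $\psi$ with $\psi(r^\star)=v$, the probability that $\phi\circ\psi=f$ is computed in two stages.
\begin{itemize}
\item Given $x=1$, the preimage $\phi^{-1}(v^\star)$ is uniform over $v(\Gamma_n)$, since $\phi$ is uniform. So $\phi(v)=v^\star$ with probability $1/k$.
\item Conditionally on $\phi(v)=v^\star$, the restriction of $\phi$ to the other $k-1$ template vertices is a uniform injection into $[n]\setminus\{v^\star\}$. So the $t-1$ non-root vertices $\psi(u)$ are sent to the prescribed distinct points $f(u)$ with probability $1/(n-1)_{t-1}$.
\end{itemize}
Multiplying gives $\frac{1}{k}\cdot\frac{1}{(n-1)_{t-1}}$ for each $\psi$. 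Summing over all root-preserving $\psi$ and all roots $v\in v(\Gamma_n)$ produces exactly $\s{Emb}_{t,d}(\s{H}\to\Gamma_n)$ as the number of terms, which yields the claimed formula.

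\emph{Step 3: from expected count to probability.} The formula is literally the expected number of $\psi$ with $\phi\circ\psi=f$. The main obstacle is converting this expectation into the probability of the event, i.e.\ checking whether distinct $\psi$ can realize the same ambient placement $f$. Since $\phi$ is injective, $\phi\circ\psi=\phi\circ\psi'$ forces $\psi=\psi'$. Hence for each $\phi$ at most one $\psi$ works, the events $\{\phi\circ\psi=f\}$ are disjoint over $\psi$, and the expectation equals the probability.

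This is also where the definition of $\s{Emb}_{t,d}$ needs care. The count must be over maps $\psi$, not over images $\psi(\s{H})$, so that automorphisms of $\s{H}$ fixing the root are counted correctly. The count must also be for the single fixed pattern $\s{H}$ rather than summed over patterns. I would note both conventions explicitly and then close the proof.
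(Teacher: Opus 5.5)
Your proposal is correct and is essentially the paper's argument: condition on which template vertex lands on $v^\star$ (uniform over $v(\Gamma_n)$), get $1/(n-1)_{t-1}$ for each root-preserving $\psi$ from the uniform injection on the remaining vertices (your fixed placement $f$ plays the role of the paper's fixed labels $\{a_w\}$), and use injectivity of $\phi$ to make the events for distinct $\psi$ disjoint. Your caveat about conditioning on $x=1$ is also right: the paper's step ``$\calR=\phi^{-1}(v^\star)$ is uniform on $v(\Gamma_n)$'' makes that conditioning silently, and the unconditional probability carries $1/n$ instead of $1/|v(\Gamma_n)|$, so the stated formula remains a valid upper bound for its use in Lemma~\ref{lem:LDPCorrUp}.
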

\begin{proof}[Proof of Lemma~\ref{lem:probEmbedding}]
Let $\calR\triangleq\phi^{-1}(v^\star)$ be the (random) template vertex mapped by $\phi$ to the ambient anchor $v^\star$. Because $\phi$ is a uniform injection, $\calR$ is uniform on $v(\Gamma_n)$, hence
\begin{align}
\pr\pp{\calE(\s{H})}=\frac{1}{|v(\Gamma_n)|}\sum_{v\in v(\Gamma_n)}\pr\pp{\calE(\s{H})\vert\calR=v}.
\end{align}
Fix $v\in v(\Gamma_n)$. Conditional on $\calR=v$, we have $\phi(v)=v^\star$, and the remaining $k-1$ template vertices are mapped to $[n]\setminus\{v^\star\}$ uniformly without replacement. For a \emph{fixed} root-preserving template embedding $\psi:\s{H}\hookrightarrow\Gamma_n$ with $\psi(r^\star)=v$, define the event
\begin{align}
\calE_\psi\triangleq\ppp{\phi(\psi(w))=a_w \ \text{for all } w\in v(\s{H})\setminus\{r^\star\}},
\end{align}
where $\{a_w:w\neq r\}\subset[n]\setminus\{v^\star\}$ is any \emph{fixed} set of $t-1$ distinct ambient vertices that we want the non-root pattern vertices to land on. Conditional on $\calR=v$, the probability that $\phi$ realizes $\calE_\psi$ is exactly
\begin{align}
\pr\pp{\calE(\s{H})\vert\calR=v}=\frac{1}{(n-1)_{t-1}},
\end{align}
because once $\phi(v)=v^\star$ is fixed, the remaining $t-1$ pattern vertices must occupy the $t-1$ distinct labels $\{a_w\}$ in a one-to-one fashion, and the remaining $|v(\Gamma_n)|-1$ labels are uniform without replacement.

Crucially, for a fixed $v$, the events $\{\calE_\psi\}_\psi$ over \emph{distinct} root-preserving embeddings $\psi$ are disjoint: two different $\psi$'s require $\phi$ to send (at least) one different template vertex to a \emph{specific} ambient label; since $\phi$ is injective and the target $\{a_w\}$ has size $t-1$, no single $\phi$ can satisfy two distinct $\psi$'s simultaneously. Therefore
\begin{align}
\pr\pp{\calE(\s{H})\vert\calR=v}
&=\sum_{\psi:\ \psi(r^\star)=v}\pr(\calE_\psi\vert\calR=v)\\
&=\s{Emb}^{\mathrm{vert}}_{t,d}(\s{H}\to\Gamma_n;v)\cdot \frac{1}{(n-1)_{t-1}}.
\end{align}
Averaging over $v$ gives
\begin{align}
\pr\pp{\calE(\s{H})}
&=\frac{1}{|v(\Gamma_n)|}\sum_{v\in v(\Gamma_n)}\s{Emb}^{\mathrm{vert}}_{t,d}(\s{H}\to\Gamma_n;v)\cdot \frac{1}{(n-1)_{t-1}}\\
&=\frac{1}{|v(\Gamma_n)|}\cdot \frac{\s{Emb}_{t,d}(\s{H}\to\Gamma_n)}{(n-1)_{t-1}},
\end{align}
which comcludes the proof.
\end{proof}
We are now ready to state and prove our upper bound on $\s{Corr}_{\leq D}$. 
\begin{lemma}\label{lem:LDPCorrUp}
Let $D\ge1$ and $0<q_n<p_n<1$. With the notation above, let $\lambda_n=(p_n-q_n)/(\sqrt{q_n(1-p_n)})$. Then
\begin{align}
\s{Corr}_{\le D}^2 \leq\bE^2[x]+\sum_{d=1}^D\sum_{t=2}^{(d+1)\wedge |v(\Gamma_n)|}[(d+1)t\lambda_n]^{2d}\frac{\s{Emb}_{t,d}(\Gamma_n)}{|v(\Gamma_n)|^2(n-1)_{t-1}},\label{eqn:CorrUpperBoundFinal}
\end{align}
where $(n-2)^{t-2}\triangleq(n-2)(n-3)\cdots(n-t)$.\label{eqn:MMSEUpperBoundFinal}
Consequently,
\begin{align}
\s{MMSE}_{\le D}\ge \s{Var}(x)-\sum_{d=1}^D\sum_{t=2}^{(d+1)\wedge |v(\Gamma_n)|}[(d+1)t\lambda_n]^{2d}\frac{\s{Emb}_{t,d}(\Gamma_n)}{|v(\Gamma_n)|^2(n-1)_{t-1}}.
\end{align}
\end{lemma}

\begin{proof}[Proof of Lemma~\ref{lem:LDPCorrUp}]
Recall that $\{\s{X}_e\}_e$ are the mean parameters of the Bernoulli observations, i.e., $\s{X}_e=\mathbb{E}[\s{Y}_e\vert\Gamma_n]\in\{q_n,p_n\}$, and $x=\Ind\{v^\star\in\Gamma_n\}$. Using \cite[Claim 2.14]{SchrammWein2022} and \cite[Prop. 2.13]{SchrammWein2022}, with $\s{I}_e\triangleq\Ind\{e\in \Gamma_n\}$ we have
\begin{align}
\kappa_\alpha=\kappa(x,\{\s{X}_e\}_{e\in\alpha})=(p_n-q_n)^{|\alpha|}\kappa(x,\{\s{I}_e\}_{e\in\alpha}),
\end{align}
and thus
\begin{align}
\s{Corr}_{\leq D}^2 \leq \sum_{\alpha\in\{0,1\}^N:|\alpha|\le D} \lambda_n^{2|\alpha|}(\kappa(x,\{\s{I}_e\}_{e\in\alpha}))^2,\label{eqn:GroupSum}
\end{align}
where $\lambda_n\triangleq\frac{p_n-q_n}{\sqrt{q_n(1-p_n)}}$. Next, fix any edge-set $\alpha\subset \binom{[n]}{2}$ of size $d = |\alpha|$, whose union with a fixed anchor $v^\star$ spans $t$ vertices; let $\s{H}_\alpha$ be the induced ambient rooted pattern (root at $v^\star$). Note that $t \leq d+1$. Define
\begin{align}
\s{Z}_0&\triangleq x=\Ind\{v^\star\in \Gamma_n\},\\
\s{Z}_j&\triangleq\Ind\{e_j\in \Gamma_n\},
\end{align}
for $1\le j\le d$, where $\{e_1,\dots,e_d\}=\alpha$. By the combinatorial formula for joint cumulants (see, \cite[Def. 2.10]{SchrammWein2022}),
\begin{align}
\kappa(\s{Z}_0,\dots,\s{Z}_d)=\sum_{\pi\in\calP_{d+1}} (|\pi|-1)!  (-1)^{|\pi|-1}\prod_{B\in b(\pi)}\mathbb{E}\pp{\prod_{j\in B}\s{Z}_j},
\end{align}
where $\mathcal{P}_{d+1}$ denotes the set of all partitions of $[d+1]$ (that is, partitions of $d+1$ labeled elements into nonempty, unlabeled blocks). For a partition $\pi\in\mathcal{P}_{d+1}$, we write $b(\pi)$ for the collection of its blocks and $|\pi|$ for the number of blocks. Taking absolute values and bounding $\mathbb{E}\pp{\prod_{j\in B}\s{Z}_j}\le1$, and using
\begin{align}
\sum_{\pi\in\calP_{d+1}} (|\pi|-1)! = \sum_{k=1}^{d+1} S(d+1,k)(k-1)!\le \sum_{k=1}^{d+1} k^d \le (d+1)^{d+1}\triangleq C_d,
\end{align}
we obtain
\begin{align}
|\kappa(\s{Z}_0,\dots,\s{Z}_d)| \le C_d\cdot\max_{\pi\in\calP_{d+1}} \prod_{B\in b(\pi)}\mathbb{E}\pp{\prod_{j\in B}\s{Z}_j}.
\end{align}
Among all blocks $B$, the block containing $\s{Z}_0=x$ yields the smallest event, hence for every partition $\pi$,
\begin{align}
\prod_{B\in b(\pi)}\mathbb{E}\pp{\prod_{j\in B}\s{Z}_j}&\leq\mathbb{E}\pp{x\prod_{e\in\alpha}\s{I}_e}\\
&=\mathbb{P}\ppp{\{v^\star\}\cup\alpha\subseteq \Gamma_n}.
\end{align}
Therefore
\begin{align}
|\kappa(x,\{\s{I}_e\}_{e\in\alpha})|\leq C_d\cdot\mathbb{P}\{\{v^\star\}\cup\alpha\subseteq \Gamma_n\}.
\end{align}
But the event $\{v^\star\}\cup\alpha\subseteq \Gamma_n$ is exactly $\calE(\s{H}_\alpha)$ in \eqref{eqn:calEHalpha}. Thus, by the Lemma~\ref{lem:probEmbedding}
\begin{align}
\mathbb{P}\ppp{\{v^\star\}\cup\alpha\subseteq \Gamma_n}&=\frac{1}{|v(\Gamma_n)|}\cdot \frac{\s{Emb}_{t,d}(\s{H}_\alpha\to\Gamma_n)}{(n-1)_{t-1}}\\
&\leq \frac{1}{|v(\Gamma_n)|}\cdot \frac{\s{Emb}_{t,d}(\Gamma_n)}{(n-1)_{t-1}},
\end{align}
where the inequality follows from the definition in \eqref{eqn:defEmbDTotalG}. Combining that above gives
\begin{align}
|\kappa(x,\{\s{I}_e\}_{e\in\alpha})|\leq \frac{C_d}{|v(\Gamma_n)|}\cdot \frac{\s{Emb}_{t,d}(\Gamma_n)}{(n-1)_{t-1}}.\label{eqn:KappaEmbdUpp}
\end{align}
Now, for fixed $d,t$, the number of ambient edge-sets $\alpha$ with $|\alpha|=d$ whose union with $v^\star$ spans exactly $t$ vertices is at most
\begin{align}
\binom{n-1}{t-1}\binom{\binom{t}{2}}{d}\leq (n-1)_{t-1}t^{2d}
\leq n^{t-1}t^{2d}.\label{eqn:EstEdgeVerE}
\end{align}
Indeed, choose the additional $t-1$ vertices among $n-1$ options, then choose $d$ edges among the $\binom{t}{2}$ possible on those $t$ vertices. The crude bounds $\binom{n-1}{t-1}\leq (n-1)_{t-1}\leq n^{t-1}$ and $\binom{\binom{t}{2}}{d}\leq \binom{t^2/2}{d}\leq t^{2d}$ yield the claim. Grouping the sum \eqref{eqn:GroupSum} by $(t,d)$, and applying \eqref{eqn:KappaEmbdUpp}--\eqref{eqn:EstEdgeVerE}, we obtain
\begin{align}
\s{Corr}^2_{\le D}
&\leq\bE^2[x]+\sum_{d=1}^D\sum_{t=2}^{(d+1)\wedge |v(\Gamma_n)|}  (n-1)_{t-1}t^{2d}\left((d+1)^d\lambda_n^d\frac{\s{Emb}_{t,d}(\Gamma_n)}{|v(\Gamma_n)|(n-1)_{t-1}}\right)^2\\
& =\bE^2[x]+\sum_{d=1}^D\sum_{t=2}^{(d+1)\wedge |v(\Gamma_n)|}[(d+1)t\lambda_n]^{2d}\frac{\s{Emb}^2_{t,d}(\Gamma_n)}{|v(\Gamma_n)|^2(n-1)_{t-1}}.
\end{align}
Finally, the identity $\s{MMSE}_{\le D}=\mathbb{E}[x]-\s{Corr}_{\le D}^2$ yields the MMSE bound.
\end{proof}
Next, to prove Theorem~\ref{thm:compLower}, we show that, roughly speaking, the slice corresponding to $(t,d) = (2,1)$ in the sum in \eqref{eqn:CorrUpperBoundFinal} dominates the entire expression. The slice $(t,d) = (2,1)$ corresponds to a single planted edge incident to the anchor. In this case, the root-averaged count equals the \emph{average degree} of the planted template:
\begin{align}
\s{Emb}_{2,1}(\Gamma_n)
=\sum_{v\in v(\Gamma_n)} \s{deg}_{\Gamma_n}(v)=2\cdot|e(\Gamma_n)|.
\end{align}
This in turn implies that $\s{Corr}^2_{\le D} = o(1)$ (and hence recovery is computationally hard) whenever
\begin{align}
    \frac{\s{Emb}^2_{2,1}(\Gamma_n)}{|v(\Gamma_n)|^2n}\ll1\implies\eta(\Gamma_n)=\frac{|e(\Gamma_n)|}{|v(\Gamma_n)|}\ll\sqrt{n}.
\end{align}
Let us now establish this rigorously.

\begin{lemma}\label{lem:anchor-avg-emb}
Let $\Gamma_n$ be a simple graph on $k$ vertices with average degree
\begin{align}
\bar{d}(\Gamma_n)=\frac{1}{k}\sum_{u\in v(\Gamma_n)}\s{deg}_{\Gamma_n}(u)=\frac{2|e(\Gamma_n)|}{k}.
\end{align}
Fix integers $t\ge 2$ and $d\ge t-1$. There exists an absolute constant $C\ge 1$ such that
\begin{equation}\label{eq:anchor-avg-emb}
\frac{1}{k}  \s{Emb}_{t,d}(\Gamma_n)
\leq 
\big(Ct\big)^{  C(d+t)}  \bar{d}(\Gamma_n)^{  t-1}.
\end{equation}
\end{lemma}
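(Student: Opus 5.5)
The plan is to reduce the count of connected rooted patterns to a count of rooted spanning trees inside $\Gamma_n$, and then to bound tree embeddings by degree products.

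\emph{Step 1 (patterns to trees).} Fix a connected rooted pattern $\s{H}$ with $t$ vertices and $d$ edges, and an embedding $\psi:\s{H}\hookrightarrow\Gamma_n$. Choose a BFS spanning tree $\s{T}\subseteq\s{H}$ from the root; $\s{T}$ has $t-1$ edges. Restricting $\psi$ to $\s{T}$ gives a root-preserving embedding of $\s{T}$. Conversely, the image vertex set $\psi(v(\s{H}))$ determines $\psi$ up to at most $t!$ relabelings. The remaining $d-(t-1)$ edges are chosen among at most $\binom{t}{2}$ pairs. The number of isomorphism types of $(\s{H},r^\star)$ is at most $2^{\binom{t}{2}}$. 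Hence
\begin{align}
\s{Emb}_{t,d}(\Gamma_n)\leq t!\,t^{2d}\,2^{t^2}\sum_{\s{T}}\s{Emb}(\s{T}\to\Gamma_n),
\end{align}
where the sum runs over rooted trees on $t$ vertices, of which there are at most $t^{t}$. All of these prefactors are absorbed into $(Ct)^{C(d+t)}$, since $t\le d+1$.

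\emph{Step 2 (tree embeddings).} Expose the tree from the root, one vertex at a time. Each new vertex must be a neighbour of its already-embedded parent $u$, so it has at most $\s{deg}_{\Gamma_n}(u)$ choices. Summing over the root $v$ gives
\begin{align}
\s{Emb}(\s{T}\to\Gamma_n)\leq\sum_{v}\;\sum_{\text{embeddings}}\;\prod_{u\in v(\s{T})}\s{deg}(\psi(u))^{c_u},
\end{align}
where $c_u$ is the number of children of $u$ and $\sum_u c_u=t-1$. For the root alone this yields $\sum_v \s{deg}(v)=k\bar d$, which is exactly the $(t,d)=(2,1)$ slice. For deeper trees, the plan is to apply H\"older or AM--GM to the degree products and write them as moments $\frac1k\sum_v\s{deg}(v)^{j}$, with the target that these moments are at most $(Cj)^{j}\bar d^{\,j}$.

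\emph{Main obstacle.} Step 2 is the crux, and I expect it to fail unless some extra input is supplied. Moments of the degree sequence are not controlled by $\bar d$ alone. The star $K_{1,k-1}$ already gives $\frac1k\s{Emb}_{3,2}\asymp k$ while $\bar d\approx2$, so \eqref{eq:anchor-avg-emb} cannot hold verbatim for arbitrary $\Gamma_n$. The only unconditional estimate I see is
\begin{align}
\sum_v\s{deg}(v)^{j}\le(k\bar d)^{j},
\end{align}
which gives $\frac1k\s{Emb}_{t,d}\le (Ct)^{C(d+t)}k^{t-2}\bar d^{\,t-1}$. That loses a factor $k^{t-2}$.

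I would therefore first try to carry out the argument under the structural condition implicit in the application, namely bounded degree dispersion, so that $\s{deg}(u)\le C\bar d$ for all $u$. With that condition, each factor in Step 2 is at most $C\bar d$, and the stated bound follows immediately with the constants from Step 1. Separately, I would check whether the downstream use in Theorem~\ref{thm:compLower} only needs the $(t,d)$ sum rather than the pointwise estimate, since in the sum the factor $(n-1)_{t-1}$ may absorb the loss when $k\le n$.
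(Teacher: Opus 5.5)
Your assessment is right: the lemma is false as stated, so declining to prove it verbatim and isolating the missing hypothesis is the correct move. Your star example is valid. It suffices to count the path on three vertices rooted at its middle vertex, with the root sent to the centre of $K_{1,k-1}$. This gives $\s{Emb}_{3,2}(K_{1,k-1})\geq (k-1)(k-2)$, so the left-hand side of \eqref{eq:anchor-avg-emb} is at least $(k-1)(k-2)/k\to\infty$, while $\bar d<2$ caps the right-hand side at $4(3C)^{5C}$.

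The paper's argument fails in Lemma~\ref{cl:avg-boundary}. After bounding $\sum_f\partial(U(f))$ by $\sum_u \s{deg}_{\Gamma_n^\sigma}(u)N_s(u;\sigma)$, it takes $\mathbb{E}_\sigma$ as if $\s{deg}_{\Gamma_n^\sigma}(u)$ were the fixed number $\s{deg}_{\Gamma_n}(u)$ and only $N_s(u;\sigma)$ were random. In fact both move together under relabeling. The sum $\sum_u \s{deg}_{\Gamma_n^\sigma}(u)N_s(u;\sigma)=\sum_w \s{deg}_{\Gamma_n}(w)N_s(w;\mathrm{id})$ and $|\mathcal{F}_s(\sigma)|$ are the same for every $\sigma$, so the averaging is vacuous. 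The claimed inequality would then say, deterministically, that the multiplicity-weighted average degree over size-$s$ histories is at most $\bar d$. In the star the centre has degree $k-1$ and lies in essentially every partial history, which is exactly your obstruction.

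Your repair is sound, and your Step~2 gives the sharp unconditional form directly. In the exploration, the pair (root, first child) ranges over at most $\sum_v\s{deg}(v)=k\bar d$ choices, and each later vertex over at most $d_{\max}$. Hence $\frac{1}{k}\s{Emb}_{t,d}(\Gamma_n)\leq t^{O(d)}\,\bar d\,d_{\max}^{\,t-2}$. The lemma therefore holds whenever $d_{\max}\leq C'\bar d$, and the star shows that $d_{\max}^{\,t-2}$ cannot be replaced by $\bar d^{\,t-2}$; your $k^{t-2}\bar d^{\,t-1}$ bound is a weaker version of this. One slip in Step~1: the factor $2^{\binom{t}{2}}$ is not absorbed by $(Ct)^{C(d+t)}$ when $d$ is close to $t$ (compare $t^2$ with $t\log t$). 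It is also unnecessary, since the at most $\binom{\binom{t}{2}}{d}\leq t^{2d}$ edge sets already enumerate all patterns.

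On your last point, the factor $(n-1)_{t-1}$ does not rescue the general case. Nor is the hypothesis implicit in Theorem~\ref{thm:compLower}, which assumes only $\eta(\Gamma_n)\leq n^{1/2-\varepsilon}$; the dispersion condition appears only in Theorem~\ref{thm:compUpper}. For the star, $\s{Emb}_{t,t-1}\gtrsim k^{t-1}$ is the true count, not a lossy estimate. So the $(t,t-1)$ summand of \eqref{eq:corr-master} is at least of order $k^{2t-4}/n^{t-1}$. This is $\Theta(1)$ at $t=3$ when $k\asymp n$, and it diverges for $t>2+1/(2\delta)$ when $k=n^{1/2+\delta}$.

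At $k\asymp n$ the conclusion of the theorem itself fails. Take the degree-$2$ statistic $g=\sum_{j\neq 1}(\s{Y}_{1j}-q)\sum_{i\notin\{1,j\}}(\s{Y}_{ji}-q)$. Its conditional mean given $\Gamma^\star$ is $(p-q)^2(k-2)$ when vertex $1$ is a leaf and $0$ otherwise. Hence $\s{Cov}(x,g)\asymp n$ while $\s{Var}(g)=O(n^2)$. The best affine function of $g$ then has mean-squared error at most $\s{Var}(x)-c(p,q)$, with $c(p,q)>0$ and $\s{Var}(x)\to 1/4$. So a condition such as $d_{\max}=O(\bar d)$ must be added to the theorem, not only to the lemma.
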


We begin with an intuitive sketch of the lemma before giving the formal proof. 
Our goal is to bound the anchor-averaged number of rooted embeddings of any $t$-vertex, $d$-edge connected pattern into $\Gamma_n$. The proof reinterprets every embedding as a \emph{growth process}: starting from the root, we reveal the pattern one vertex at a time according to a chosen 
\emph{exploration scheme} (a spanning tree together with an exposure order). At each step the number of valid extensions is controlled by the \emph{edge boundary} of the already-embedded set in~$\Gamma$. Averaging over relabelings makes every vertex of $\Gamma$ ``typical'', so that 
on average an $s$-set has boundary size about $s\cdot\bar{d}(\Gamma)$. This exchangeability turns the complicated boundary terms into a clean  multiplicative factor, leading to the bound
$\prod_{s=1}^{t-1} [s\cdot\bar{d}(\Gamma)]$. The only remaining combinatorial work is to account for the number of possible exploration schemes, which contributes an additional factor $(Ct)^{C(d+t)}$. 

\begin{proof}[Proof of Lemma~\ref{lem:anchor-avg-emb}] Let $\mathfrak H_{t,d}$ denote the family of connected rooted patterns on $t$ vertices and $d$ edges (root distinguished but otherwise labeled). The proof proceeds in five steps.

\paragraph{1) Exploration schemes.}
For a connected rooted pattern $H\in\mathfrak H_{t,d}$, fix a rooted spanning tree $T\subseteq H$ of size $t-1$, and fix an \emph{exploration order} $r=v_1,v_2,\dots,v_t$ in which each $v_{s+1}$ is adjacent in $T$ to some earlier vertex among $v_1,\dots,v_s$.
An \emph{exploration scheme} $\mathsf S$ comprises:
(i) the rooted spanning tree $T$;
(ii) the exploration order;
(iii) the choice of the extra (non-tree) edges of $H$ (there are $d-(t-1)$ of them).
Let $\mathcal S_{t,d}$ be the set of all exploration schemes over all $H\in\mathfrak H_{t,d}$.

A crude combinatorial bound suffices:
choose $T$ (at most $t^{  t-2}$ rooted labeled trees by Cayley),
choose an order (at most $t!$),
and choose the extra edges (at most $\binom{\binom{t}{2}}{d-(t-1)}\le (Ct)^{  2(d-(t-1))}$).
Thus
\begin{equation}\label{eq:scheme-count}
|\mathcal S_{t,d}|
\leq 
t^{  t-2}\cdot t!\cdot (Ct)^{  2(d-(t-1))}
\leq  (Ct)^{  C(d+t)}.
\end{equation}

\paragraph{2) Partial histories and the boundary recursion.}
Fix a relabeling $\sigma$ of $v(\Gamma_n)$, and write $\Gamma_n^\sigma$ for the relabeled graph.
For an exploration scheme $\mathsf S\in\mathcal S_{t,d}$ and an anchor $v\in v(\Gamma_n^\sigma)$, a \emph{partial history of length $s$} ($1\le s\le t$)
is a choice of an injective map sending $v_1,\dots,v_s$ to distinct vertices of $v(\Gamma_n^\sigma)$ that respects the (tree) adjacencies required by $\mathsf S$ among $v_1,\dots,v_s$.
Let $\mathcal F_s(\sigma)$ be the multiset of all partial histories of length $s$, formed by ranging over all anchors $v\in v(\Gamma_n^\sigma)$ and all schemes $\mathsf S\in\mathcal S_{t,d}$.

For $f\in\mathcal F_s(\sigma)$ write $U(f)\subseteq v(\Gamma_n^\sigma)$ for the current image (so $|U(f)|=s$), and
\begin{align}
\partial_{\Gamma_n^\sigma}(U)\triangleq\big|\{  \{x,y\}\in e(\Gamma_n^\sigma): x\in U,\ y\notin U  \}\big|
\end{align}
for the (undirected) edge boundary size of $U$.
Each extension from $s$ to $s+1$ chooses the next image vertex $v_{s+1}$ outside $U(f)$ that is adjacent, in $\Gamma_n^\sigma$, to the specified predecessor of $v_{s+1}$ in the tree $T$; consequently, for every $f\in\mathcal F_s(\sigma)$,
the number of admissible choices is at most $\partial_{\Gamma_n^\sigma}(U(f))$.
Summing over all $f$ yields the recursion
\begin{equation}\label{eq:histories-recursion-pointwise}
|\mathcal F_{s+1}(\sigma)|
\leq \sum_{f\in\mathcal F_s(\sigma)} \partial_{\Gamma_n^\sigma}\big(U(f)\big).
\end{equation}
Indeed, non-tree edges in $\mathsf S$ impose additional adjacency constraints and thus only \emph{decrease} the number of admissible choices, so \eqref{eq:histories-recursion-pointwise} remains valid.

\paragraph{3) Averaging over relabelings.}
By construction, $\s{Emb}_{t,d}(\Gamma_n)$ is invariant under relabelings of $v(\Gamma_n)$, so
\begin{align}
\s{Emb}_{t,d}(\Gamma_n)&=\mathbb E_\sigma\big[\s{Emb}_{t,d}(\Gamma_n^\sigma)\big]\\
&\leq \mathbb E_\sigma\big[  |\mathcal F_t(\sigma)|  \big],
\end{align}
because every full root-preserving embedding (for some $H$ and some $\mathsf S$) gives rise to at least one full history of length $t$.
Taking expectations of \eqref{eq:histories-recursion-pointwise}, we are reduced to bounding
\begin{align}
\mathbb E_\sigma\left[\sum_{f\in\mathcal F_s(\sigma)} \partial_{\Gamma_n^\sigma}\big(U(f)\big)\right]
\end{align}
in terms of $\mathbb E_\sigma\big[  |\mathcal F_s(\sigma)|  \big]$. We have the following lemma.

\begin{lemma}[Average boundary at size $s$]\label{cl:avg-boundary}
For each $1\le s\le t-1$,
\begin{equation}\label{eq:avg-boundary}
\mathbb E_\sigma\left[\sum_{f\in\mathcal F_s(\sigma)} \partial_{\Gamma_n^\sigma}\big(U(f)\big)\right]
\leq 
s  \bar{d}(\Gamma_n)\cdot \mathbb E_\sigma\big[  |\mathcal F_s(\sigma)|  \big].
\end{equation}
\end{lemma}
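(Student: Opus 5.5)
The plan is to bound the edge boundary by degrees and then use exchangeability over relabelings to replace each degree by the average degree $\bar{d}(\Gamma_n)$. The first step is deterministic. For any $U\subseteq v(\Gamma_n^\sigma)$, every boundary edge $\{x,y\}$ with $x\in U$, $y\notin U$ is counted once in $\s{deg}_{\Gamma_n^\sigma}(x)$, so
\begin{align}
\partial_{\Gamma_n^\sigma}(U)\leq\sum_{x\in U}\s{deg}_{\Gamma_n^\sigma}(x).
\end{align}
Summing over the multiset $\mathcal F_s(\sigma)$ gives
\begin{align}
\sum_{f\in\mathcal F_s(\sigma)}\partial_{\Gamma_n^\sigma}(U(f))\leq\sum_{j=1}^{s}\sum_{f\in\mathcal F_s(\sigma)}\s{deg}_{\Gamma_n^\sigma}(f(v_j)).
\end{align}
This splits the target into $s$ terms, one per exploration position $j$. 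Each term should then be shown to satisfy
\begin{align}
\mathbb E_\sigma\Bigl[\sum_{f}\s{deg}(f(v_j))\Bigr]\leq\bar{d}(\Gamma_n)\cdot\mathbb E_\sigma|\mathcal F_s(\sigma)|,
\end{align}
which yields the factor $s\,\bar{d}(\Gamma_n)$.

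For position $j$, I would write the inner sum as $\sum_{w}\s{deg}_{\Gamma_n^\sigma}(w)\cdot N_{j}(w;\sigma)$. Here $N_j(w;\sigma)$ is the number of histories in $\mathcal F_s(\sigma)$ that place $v_j$ at $w$. The intended use of the averaging over $\sigma$ is that a uniform relabeling makes the location $f(v_j)$ of a typical history a uniform vertex. Under that reading, $\mathbb E_\sigma[\s{deg}(f(v_j))]$ equals $\frac1k\sum_u\s{deg}(u)=\bar{d}(\Gamma_n)$, the average degree. Concretely, I would couple $\sigma$ with the history: fix the scheme and the combinatorial shape of the history, and let $\sigma$ act on the image. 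This expresses the left side as $\frac1k\sum_{u}\s{deg}(u)$ times the count of histories.

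The main obstacle is precisely this decoupling step, and I expect it to be where the argument needs repair. Relabeling is a bijection on histories, so both $|\mathcal F_s(\sigma)|$ and $\sum_f\sum_{x\in U(f)}\s{deg}(x)$ are actually \emph{invariant} in $\sigma$. Averaging over $\sigma$ therefore cannot remove the size-biasing by itself. Histories are constructed by following edges, so they preferentially visit high-degree vertices. For a star, for instance, the center appears in almost every history of length $s\geq 2$.

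I would therefore first check whether the argument survives under the interpretation in which only the anchor $v_1$ is uniformized. For $j=1$ the bound is exact, since anchors range over all of $v(\Gamma_n)$. For $j\geq 2$ I would try to charge the degree at $f(v_j)$ to the edge used to reach it. The aim is to bound $\sum_f\s{deg}(f(v_j))$ by a sum over directed edges $(a,b)$ of $\s{deg}(b)$, weighted by the histories ending at $a$.

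If this still fails, the fallback I would adopt is to prove a version of the lemma with $\bar{d}(\Gamma_n)$ replaced by a size-biased degree. One candidate is $\sum_u\s{deg}(u)^2/\sum_u\s{deg}(u)$; another is $d_{\max}(\Gamma_n)$. I would then check that Lemma~\ref{lem:anchor-avg-emb}, and hence Theorem~\ref{thm:compLower}, still go through under the corresponding assumption. That modified assumption is naturally satisfied, for example, by the nearly regular graphs covered in Theorem~\ref{thm:compUpper}.
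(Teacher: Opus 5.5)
Your diagnosis is correct, and you can state it as a fact rather than a suspicion: the lemma is false, so no proof of it exists.

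\textbf{Where the paper's proof breaks.} The paper follows exactly your first route. It bounds $\partial(U)\le\sum_{u\in U}\s{deg}(u)$, rewrites the sum as $\sum_u \s{deg}_{\Gamma_n^\sigma}(u)N_s(u;\sigma)$, and uses that $\mathbb{E}_\sigma[N_s(u;\sigma)]=\frac{s}{k}\mathbb{E}_\sigma|\mathcal{F}_s(\sigma)|$ for every $u$. It fails precisely where you said. It replaces $\mathbb{E}_\sigma[\s{deg}_{\Gamma_n^\sigma}(u)N_s(u;\sigma)]$ by $\s{deg}_{\Gamma_n}(u)\,\mathbb{E}_\sigma[N_s(u;\sigma)]$, even though both factors move with $\sigma$ and are positively correlated.

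\textbf{The statement is false.} The map $f\mapsto\sigma\circ f$ is a boundary-preserving bijection $\mathcal{F}_s(\mathrm{id})\to\mathcal{F}_s(\sigma)$. Hence the claim is equivalent to the deterministic bound $\sum_{f\in\mathcal{F}_s}\partial(U(f))\le s\,\bar{d}(\Gamma_n)\,|\mathcal{F}_s|$. Your star example refutes it. Take $\Gamma_n=\mathcal{K}_{1,m}$, $t\ge 3$ and $s=2$. Every $f\in\mathcal{F}_2$ maps $(v_1,v_2)$ onto an ordered edge, so $U(f)$ is the center plus one leaf and $\partial(U(f))=m-1$. The right-hand side is only $2\bar{d}\,|\mathcal{F}_2|=\frac{4m}{m+1}|\mathcal{F}_2|$, so the inequality fails for $m\ge 5$. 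More generally, $\mathcal{K}_{a,b}$ with $b\gg a$ gives $a+b-2$ per history against $4ab/(a+b)<4a$. The same examples refute Lemma~\ref{lem:anchor-avg-emb}: $\frac{1}{k}\s{Emb}_{3,2}(\mathcal{K}_{1,m})\ge\frac{m(m-1)}{m+1}$, while $\bar{d}^2<4$.

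\textbf{On your repair plan.}

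\emph{Anchor-only step.} Drop it. Your remark that the $j=1$ term is exact is itself wrong for $s\ge 2$. In $\mathcal{F}_s$ each anchor is weighted by its number of length-$s$ extensions. In the star, for instance, $\frac{1}{|\mathcal{F}_2|}\sum_{f}\s{deg}(f(v_1))=\frac{m+1}{2}$.

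\emph{Size-biased degree.} Charging along edges does give $\sum_u\s{deg}(u)^2/\sum_u\s{deg}(u)$ at $s=2$, but this candidate fails from $s=3$ on. Take $\mathcal{K}_M$ plus $N\gg M^3$ disjoint edges. Then $\mathcal{F}_3$ lives entirely inside the clique, so the left side equals $3(M-3)|\mathcal{F}_3|$, while the size-biased degree tends to $1$.

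\emph{Maximum degree.} The $d_{\max}$ version is correct and trivial: pointwise $\partial(U)\le s\,d_{\max}$, and in fact each extension step has at most $d_{\max}$ choices. It turns Lemma~\ref{lem:anchor-avg-emb} into a bound with $d_{\max}(\Gamma_n)^{t-1}$, so the downstream hypothesis becomes $d_{\max}(\Gamma_n)\le n^{1/2-\varepsilon}$.

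\emph{Downstream theorem.} Do not expect to keep the hypothesis on $\eta$. Take $\mathcal{K}_{a,b}$ with $a=n^{0.4}$ and $b=n^{0.7}$, so $\eta\approx n^{0.4}$. Row sums have planted excess $(p-q)b\gg\sqrt{n\log n}$ and therefore locate the $a$-side. Sums into the $a$-side then locate the $b$-side. This suggests that a constant-degree thresholded polynomial (built as in Lemma~\ref{lem:poly-approx}) already has mean squared error $o(\s{Var}(x))$, which would contradict Theorem~\ref{thm:compLower} as stated.
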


\begin{proof}[Proof of Lemma~\ref{cl:avg-boundary}]
For any $U\subseteq v(\Gamma_n^\sigma)$ we have $\partial_{\Gamma_n^\sigma}(U)\le\sum_{u\in U}\s{deg}_{\Gamma_n^\sigma}(u)$.
Thus
\begin{align}
\sum_{f\in\mathcal F_s(\sigma)} \partial_{\Gamma_n^\sigma}(U(f))
&\leq \sum_{f\in\mathcal F_s(\sigma)} \sum_{u\in U(f)} \s{deg}_{\Gamma_n^\sigma}(u)\\
&=\sum_{u\in v(\Gamma_n^\sigma)} \s{deg}_{\Gamma_n^\sigma}(u)\cdot N_s(u;\sigma),
\end{align}
where $N_s(u;\sigma)\triangleq|\{  f\in\mathcal F_s(\sigma): u\in U(f)  \}|$ is the multiplicity with which $u$ appears among the size-$s$ images.
By vertex-exchangeability under the uniform relabeling $\sigma$, $\mathbb E_\sigma[N_s(u;\sigma)]$ is the \emph{same} for all $u$.
Moreover,
\begin{align}
\sum_{u\in v(\Gamma_n^\sigma)} N_s(u;\sigma)&=\sum_{f\in\mathcal F_s(\sigma)} |U(f)|\\
&=s  |\mathcal F_s(\sigma)|,
\end{align}
hence $\mathbb E_\sigma[N_s(u;\sigma)]=(s/k)  \mathbb E_\sigma[|\mathcal F_s(\sigma)|]$ for every $u$. Taking expectations and using the fact that $\sum_{u}\s{deg}_{\Gamma_n^\sigma}(u)=2|e(\Gamma_n)|=k  \bar{d}(\Gamma_n)$ gives
\begin{align}
\mathbb E_\sigma\left[\sum_{f\in\mathcal F_s(\sigma)} \partial_{\Gamma_n^\sigma}(U(f))\right]
&\leq\sum_{u} \s{deg}_{\Gamma_n}(u)\cdot \frac{s}{k} \cdot \mathbb E_\sigma[  |\mathcal F_s(\sigma)|  ]\\
&=s  \bar{d}(\Gamma_n)\cdot\mathbb E_\sigma[  |\mathcal F_s(\sigma)|  ],
\end{align}
proving \eqref{eq:avg-boundary}.
\end{proof}

\paragraph{4) Induction and base size.}
Combining \eqref{eq:histories-recursion-pointwise} and Claim~\ref{cl:avg-boundary}, and iterating for $s=1,2,\dots,t-1$, we obtain
\begin{align}\label{eq:Ft-bound}
\mathbb E_\sigma\big[  |\mathcal F_t(\sigma)|  \big]
&\leq
\left(\prod_{s=1}^{t-1} s  \bar{d}(\Gamma_n)\right)\cdot \mathbb E_\sigma\big[  |\mathcal F_1(\sigma)|  \big]\\
&\leq t!  \bar{d}(\Gamma_n)^{  t-1}\cdot \mathbb E_\sigma\big[  |\mathcal F_1(\sigma)|  \big].
\end{align}
A partial history of length $1$ consists of choosing an anchor $v\in v(\Gamma_n^\sigma)$ and a scheme $\mathsf S\in\mathcal S_{t,d}$; hence $|\mathcal F_1(\sigma)|=k  |\mathcal S_{t,d}|$, deterministically.
Using \eqref{eq:scheme-count} and $t!\le (Ct)^t$, \eqref{eq:Ft-bound} yields
\begin{align}
\mathbb E_\sigma\big[  |\mathcal F_t(\sigma)|  \big]
\leq
k\cdot (Ct)^{  C(d+t)}\cdot \bar{d}(\Gamma_n)^{  t-1}.
\end{align}

\paragraph{5) From histories to embeddings.}
Every root-preserving embedding counted by $\s{Emb}_{t,d}(\Gamma_n^\sigma)$ corresponds to at least one full history (for some $\mathsf S$), so
$\s{Emb}_{t,d}(\Gamma_n^\sigma)\le |\mathcal F_t(\sigma)|$ and hence
\begin{align}
\s{Emb}_{t,d}(\Gamma_n)
&=\mathbb E_\sigma\big[\s{Emb}_{t,d}(\Gamma_n^\sigma)\big]\\
&\leq \mathbb E_\sigma\big[  |\mathcal F_t(\sigma)|  \big]\\
&\leq k\cdot (Ct)^{  C(d+t)}\cdot \bar{d}(\Gamma_n)^{  t-1}.
\end{align}
Dividing both sides by $k$ gives \eqref{eq:anchor-avg-emb}.
\end{proof}

We are now in a position to finish the proof of Theorem~\ref{thm:compLower}. We first recall our upper bound in Lemma~\ref{lem:LDPCorrUp}:
\begin{equation}\label{eq:corr-master}
\s{Corr}_{\leq D}^2
\leq\mathbb{E}^2[x]+
\sum_{d=1}^{D}\ \sum_{t=2}^{(d+1)\wedge k_n}
[(d+1)t\lambda_n]^{2d}\;
\frac{\mathsf{Emb}_{t,d}(\Gamma_n)^{2}}{k_n^{2}\,(n-1)_{t-1}}.
\end{equation}
Lemma~\ref{lem:anchor-avg-emb} state that
\begin{equation}\label{eq:emb-lemma}
\frac{1}{k_n}\mathsf{Emb}_{t,d}(\Gamma_n)\leq(C t)^{C(d+t)}\bar{d}(\Gamma_n)^{t-1},
\end{equation}
for a universal $C\geq1$. Since $t\leq d+1\leq D+1=o(n)$, we have the falling-factorial lower bound
\begin{equation}\label{eq:falling}
(n-1)_{t-1}\geq\Big(\frac{n}{2}\Big)^{t-1},
\end{equation}
for all large $n$. Define
\begin{equation}\label{eq:def-rn}
r_n\triangleq\frac{\bar{d}(\Gamma_n)^2}{n}\leq n^{-2\varepsilon},
\end{equation}
where the inequality follows from the assumptions of Theorem~\ref{thm:compLower}, for any $\varepsilon>0$. Plugging \eqref{eq:emb-lemma} and \eqref{eq:falling} into \eqref{eq:corr-master}, and noting the connectivity constraint $d\ge t-1$, we obtain
\begin{align}
\s{Corr}_{\le D}^2
&\le\ \mathbb{E}^2[x]+\sum_{t=2}^{D+1}\sum_{d=t-1}^{\min\{D,\binom{t}{2}\}}
[(d+1)t\lambda_n]^{2d}(C t)^{2C(d+t)}\cdot(2r_n)^{t-1}\label{eq:after-plug}\\
&=\mathbb{E}^2[x]+\sum_{t=2}^{D+1}\Phi_t\cdot(2r_n)^{t-1},\label{eq:def-Phi}
\end{align}
where for any $t\geq2$,
\begin{equation}\label{eq:Phi}
\Phi_t\triangleq\sum_{d=t-1}^{\min\{D,\binom{t}{2}\}}
[(d+1)t\lambda_n]^{2d}(C t)^{2C(d+t)}.
\end{equation}
Because we only sum over simple connected patterns, we always have $d\le \binom{t}{2}\le t^2/2$.
For such $d$,
\begin{align}
(d+1)t\lambda_n\leq ct^3,
\end{align}
for some $c\geq1$, and hence
\begin{align}
[(d+1)t\lambda_n]^{2d}\leq(ct^3)^{2d}\leq (ct^3)^{t^2}=\exp\p{t^2\log(c t^3)}\leq\exp\p{C_1\,t^2\log t},
\end{align}
for a universal $C_1$. Also,
\begin{align}
(C t)^{2C(d+t)}\leq(C t)^{2C(t^2/2+t)}\leq\exp\p{C_2t^2\log t}.
\end{align}
Multiplying the two results above and summing over at most $\binom{t}{2}-(t-1)+1=O(t^2)$ values of $d$, we can absorb the polynomial factor into the exponential envelope and get
\begin{align}\label{eq:Phi-envelope}
\Phi_t\leq\exp\p{C''t^2\log t},
\end{align}
for for some universal $C''\geq1$. Now, for $t\geq2$ define
\begin{align}\label{eq:at}
\alpha_t\triangleq\exp\p{C''t^2\log t}\cdot (2r_n)^{t-1}.
\end{align}
Then \eqref{eq:def-Phi} and \eqref{eq:Phi-envelope} give
\begin{align}\label{eq:corr-a_t}
\s{Corr}_{\le D}^2\leq\mathbb{E}^2[x]+\sum_{t=2}^{D+1}\alpha_t.
\end{align}
We next show that $\sum_{t=2}^{D+1} \alpha_t=o(1)$.
\paragraph{Case I.} Consider the regime where $D=o\p{\frac{\log n}{\log\log n}}$, and let $t_\star\triangleq D+1$. Since $r_n=n^{-2\varepsilon}$,
\begin{align}
\log \alpha_{t_\star}\leq C''D^2\log D-2\varepsilon D\log n+D\log 2.
\end{align}
Since $D\log D=o(\log n)$ in this regime, the negative term dominates, so $\log \alpha_{t_\star}=-\omega(1)$ and thus $\alpha_{t_\star}=n^{-\omega(1)}$. Because the sum in \eqref{eq:corr-a_t} has at most $D=o(\frac{\log n}{\log\log n})$ terms and the sequence $\{a_t\}$ is positive,
\begin{align}
\sum_{t=2}^{D+1} \alpha_t\leq D\cdot \alpha_{t_\star}=o(1).
\end{align}
\paragraph{Case II.} Consider the case where $D\le(\log n)^{\alpha}$, with a fixed $\alpha<1$. Again take $t_\star\triangleq D+1\leq 2(\log n)^{\alpha}$ and note that
\begin{align}
\log \alpha_{t_\star}\leq C''t_\star^2\log t_\star-2\varepsilon(t_\star-1)\log n+(t_\star-1)\log 2.
\end{align}
Using $t_\star\leq 2(\log n)^{\alpha}$ and $\log t_\star\leq \log[2(\log n)^{\alpha}]=O(\log\log n)$, we get
\begin{align}
\log \alpha_{t_\star}\leq C_3(\log n)^{2\alpha}\log\log n-2\varepsilon(\log n)^{\alpha+1}+O\p{(\log n)^{\alpha}}.
\end{align}
Because $\alpha+1>2\alpha$ for every $\alpha<1$, the negative term dominates, hence $\log \alpha_{t_\star}=-\omega(1)$ and $\alpha_{t_\star}=n^{-\omega(1)}$. Since the sum in \eqref{eq:corr-a_t} has at most $D\le(\log n)^{\alpha}$ terms,
\begin{align}
\sum_{t=2}^{D+1} \alpha_t\leq D\cdot \alpha_{t_\star}=o(1).
\end{align}
Combining Case I or II with \eqref{eq:corr-a_t} yields $\s{Corr}_{\leq D}^2\leq \mathbb{E}^2[x]+o(1)$. This concludes the proof of Theorem~\ref{thm:compLower}.

\subsection{Upper bounds}

\begin{comment}
  \begin{definition}[General planted-graph model]
Let $\Gamma=(V(\Gamma),E(\Gamma))$ be a simple graph on $k\triangleq |V(\Gamma)|$ vertices with degrees $d_\Gamma(u)$, average degree
\begin{align}
\eta(\Gamma)=\frac{1}{k}\sum_{u\in V(\Gamma)} d_\Gamma(u)=\frac{2|E(\Gamma)|}{k}.
\end{align}
Choose a uniform injection $\phi:V(\Gamma)\hookrightarrow[n]$ and set $E^\star=\{\{\phi(u),\phi(v)\}:\{u,v\}\in E(\Gamma)\}$. 
Conditional on $E^\star$, the edges are independent and
\begin{align}
Y_{ij}\sim\begin{cases}
\mathrm{Bern}(p), & \{i,j\}\in E^\star,\\
\mathrm{Bern}(q), & \{i,j\}\notin E^\star,
\end{cases}\qquad 1\le i<j\le n.
\end{align}
We define the target indicator $x\triangleq \Ind\{1\in \phi(V(\Gamma))\}\in\{0,1\}$. 
\end{definition}  
\end{comment}
\paragraph{Single-iteration power method.} Following the approach of \cite[Sec. 4.2]{SchrammWein2022}, to derive an upper bound on the truncated MMSE we analyze the performance of a simple algorithm: a single round of the power iteration method initialized with the all-ones vector, followed by thresholding. We begin by recalling two key results from \cite{SchrammWein2022}. Throught this section, we let $\nu_{p,q}\triangleq \min\{p,1-p,q,1-q\}$ and we recall that $\lambda\triangleq (p-q)/\sqrt{q(1-p)}$.
\begin{lemma}\cite[Prop. 4.1]{SchrammWein2022}\label{lem:poly-approx}
    There is a universal constant $c_0\in(0,\infty)$ such that for each $k\in\mathbb{N}$ there exists a degree-$(2k+1)$ polynomial $\tau_k:\mathbb{R}\to\mathbb{R}$ with:
\begin{equation}\label{eq:tau-approx}
\s{for}\; \ell\in\{0,1\},\quad |y-\ell|\le\Delta\le\tfrac12\ \Longrightarrow\ 
|\tau_k(y)-\ell|\le(k+\tfrac12)(c_0\Delta)^{k}.
\end{equation}
\end{lemma}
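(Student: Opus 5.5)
The statement is a polynomial approximation of the step function, cited from Schramm--Wein. My plan is to build $\tau_k$ by integrating a ``bump'' polynomial that is concentrated near $1/2$ and vanishes to order $k$ at the endpoints $0$ and $1$. Concretely, set
\begin{align}
\tau_k(y)\triangleq \frac{\int_0^y s^k(1-s)^k\,\mathrm{d}s}{\int_0^1 s^k(1-s)^k\,\mathrm{d}s},
\end{align}
which is a polynomial of degree $2k+1$ with $\tau_k(0)=0$ and $\tau_k(1)=1$. The normalizing constant is the Beta integral $B(k+1,k+1)=\frac{(k!)^2}{(2k+1)!}$.

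\textbf{Case $\ell=0$.} For $|y|\le\Delta\le\frac12$ we have $|s|\le\Delta$ and $|1-s|\le 1+\Delta\le\frac32$ along the segment from $0$ to $y$. Hence
\begin{align}
|\tau_k(y)|\le \frac{(2k+1)!}{(k!)^2}\cdot\frac{\Delta^{k+1}(3/2)^k}{k+1}.
\end{align}
Using $\frac{(2k)!}{(k!)^2}\le 4^k$, the right-hand side is at most $\frac{2k+1}{k+1}\,\Delta\,(6\Delta)^k$. Since $\Delta\le\frac12$ and $\frac{2k+1}{k+1}\le 2\le 2(k+\frac12)$, this is at most $(k+\frac12)(c_0\Delta)^k$ with, for example, $c_0=6$.

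\textbf{Case $\ell=1$.} This follows from the symmetry $\tau_k(1-y)=1-\tau_k(y)$, which holds because the integrand is symmetric under $s\mapsto 1-s$. The case therefore reduces to $\ell=0$.

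The only delicate point is getting the prefactor to be exactly $(k+\frac12)$ rather than merely $O(k)$; the crude bound above already gives this with room to spare. I would also check the degenerate case $k=0$, where $\tau_0(y)=y$ and the bound reads $|y-\ell|\le\frac12$, which holds trivially. Equivalently, the whole lemma can be invoked directly from \cite[Prop. 4.1]{SchrammWein2022}.
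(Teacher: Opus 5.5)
Your proof is correct. The paper gives no argument for this lemma: it imports the statement verbatim from \cite[Prop.~4.1]{SchrammWein2022}. So your proposal is a self-contained proof of a result the paper only cites, rather than a competing route.

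Your polynomial is the normalized incomplete Beta integral. Equivalently, it is the majority polynomial $\pr[\s{Bin}(2k+1,y)\ge k+1]$, since both vanish at $0$ and have derivative $\frac{(2k+1)!}{(k!)^2}y^k(1-y)^k$. Your estimates check out:
\begin{itemize}
\item For $|y|\le\Delta\le\frac12$, the segment bound gives $|\tau_k(y)|\le\frac{2k+1}{k+1}\Delta(6\Delta)^k$.
\item This is at most $(k+\frac12)(6\Delta)^k$ for every $k\ge 0$, because it only needs $\Delta\le\frac{k+1}{2}$.
\item The reflection identity $\tau_k(1-y)=1-\tau_k(y)$ correctly reduces the case $\ell=1$ to $\ell=0$.
\end{itemize}

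Your argument produces the explicit constant $c_0=6$. This is exactly the value the paper later fixes in the proof of Theorem~\ref{thm:upperLDP}, so your proof is consistent with how the lemma is used. Your estimate is in fact slightly stronger than needed: $\frac{2k+1}{k+1}\Delta<1$, so you get $|\tau_k(y)-\ell|<(6\Delta)^k$ without the $(k+\frac12)$ prefactor.

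A cosmetic point: the chain $\frac{2k+1}{k+1}\le 2\le 2(k+\frac12)$ fails at $k=0$. However, the inequality you actually need, $\frac{2k+1}{k+1}\le 2(k+\frac12)$, does hold there, and you treat $k=0$ separately anyway.
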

%\begin{proof}[Proof sketch]
%Construct $\tau_k$ from Chebyshev polynomials to approximate the indicator of $[1/2,\infty)$ uniformly on the two slabs $[0,\Delta]$ and $[1-\Delta,1]$; see Schramm–Wein, Prop.~4.1. The constant $c_0$ can be taken as $6$ in that construction, but any absolute $c_0$ suffices for our purposes.
%\end{proof}
\begin{lemma}\cite[Thm. 4.2]{SchrammWein2022}\label{lem:HC}
Let $f$ be a degree-$D$ polynomial in the independent edge-variables $\{Y_{ij}\}_{1\le i<j\le n}$, where each $Y_{ij}\in\{0,1\}$ with parameter $p$ or $q$. Then
\begin{equation}\label{eq:HC-ineq}
\bE\pp{f^4(Y)}\leq\p{\frac{9}{\nu_{p,q}}}^{D}\bE^2\pp{f^2(Y)}.
\end{equation}
\end{lemma}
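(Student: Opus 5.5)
The natural route is the classical Bonami--Beckner $(2,4)$-hypercontractivity argument for biased product measures, carried out by induction on the number of variables. First, since each $\s{Y}_{ij}\in\{0,1\}$ satisfies $\s{Y}_{ij}^2=\s{Y}_{ij}$, we may replace $f$ by its multilinear reduction without increasing the degree or changing its values. Let $\pi_e\in\{p,q\}$ be the parameter of coordinate $e$, and define the standardized variables
\begin{align*}
z_e\triangleq\frac{\s{Y}_e-\pi_e}{\sqrt{\pi_e(1-\pi_e)}}.
\end{align*}
These are independent, with $\bE z_e=0$ and $\bE z_e^2=1$. A direct computation gives
\begin{align*}
\bE z_e^4=\frac{1-3\pi_e(1-\pi_e)}{\pi_e(1-\pi_e)}\leq\frac{1}{\nu_{p,q}},
\end{align*}
and, by Cauchy--Schwarz (or directly), $|\bE z_e^3|\leq 1/\sqrt{\nu_{p,q}}$. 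Rewriting $f$ as a multilinear polynomial of degree $\leq D$ in the $z_e$'s, we are reduced to proving $\bE f^4\leq C^{D}(\bE f^2)^2$ with $C=9/\nu_{p,q}$. The argument never uses that the parameters are equal across coordinates, so the mixture of $p$- and $q$-coordinates causes no difficulty.

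We induct on the number $N$ of variables, with the statement holding for all degrees $D$ simultaneously. The base case is $f$ constant, which is trivial. For the inductive step, single out the last variable and write $f=g+z_N h$, where $g$ and $h$ do not depend on $z_N$, $\deg g\le D$, and $\deg h\le D-1$. By independence and $\bE z_N=0$, $\bE z_N^2=1$,
\begin{align*}
\bE f^2&=\bE g^2+\bE h^2,\\
\bE f^4&=\bE g^4+6\,\bE[g^2h^2]+4\,\bE z_N^3\,\bE[gh^3]+\bE z_N^4\,\bE h^4.
\end{align*}
We then bound each term using the induction hypothesis, which gives $\bE g^4\le C^D(\bE g^2)^2$ and $\bE h^4\le C^{D-1}(\bE h^2)^2$. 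For the mixed terms we use Cauchy--Schwarz: $\bE[g^2h^2]\le\sqrt{\bE g^4\,\bE h^4}\le C^{D-1/2}\bE g^2\,\bE h^2$, and $\bE[gh^3]\le(\bE g^4)^{1/4}(\bE h^4)^{3/4}\le C^{D-3/4}(\bE g^2)^{1/2}(\bE h^2)^{3/2}$.

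The final step, and the main obstacle, is to check that the resulting sum is at most $C^D(\bE g^2+\bE h^2)^2$ with the specific constant $C=9/\nu_{p,q}$. The delicate pieces are the cubic cross term, which is where the bias enters through $|\bE z_N^3|\le\nu_{p,q}^{-1/2}$, and the $\bE z_N^4\le\nu_{p,q}^{-1}$ factor multiplying $C^{D-1}$. Set $a=\bE g^2$ and $b=\bE h^2$. The middle terms give $6C^{-1/2}ab\cdot C^D$ and $4\nu_{p,q}^{-1/2}C^{-3/4}a^{1/2}b^{3/2}\cdot C^D$. I would first apply AM--GM to split $a^{1/2}b^{3/2}$ into $ab$ and $b^2$ pieces. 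It then suffices that $C\ge 9$ (so that $6C^{-1/2}\le2$), that $\nu_{p,q}^{-1}C^{-1}\le1$, and that the leftover cubic contributions fit within the slack of the $2ab+b^2$ budget. Since $\nu_{p,q}\le1/2$, the choice $C=9/\nu_{p,q}$ satisfies all of these. If the bookkeeping proves too tight, the fallback is to follow the standard treatment of hypercontractivity for $\nu$-biased bits, which yields precisely the constant $9/\nu$. This completes the induction and proves \eqref{eq:HC-ineq}.
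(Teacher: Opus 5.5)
Your proof is correct. The paper gives no argument of its own for this lemma; it simply quotes it from Schramm--Wein, where it is taken from the standard hypercontractivity theory for biased product spaces.

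Your route is the direct Bonami induction on $\bE f^4$ in the standardized variables $z_e$. Compared with the operator-theoretic treatment, it is elementary and self-contained, and it handles the mixture of $p$- and $q$-coordinates with no extra work. The price is that it yields only the degree-$D$ fourth-versus-second moment comparison, not the more general noise-operator inequalities.

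The one step you left unchecked does close, so the fallback is unnecessary. Write $\nu=\nu_{p,q}\le\frac12$, $C=9/\nu$, $a=\bE g^2$, $b=\bE h^2$. After dividing by $C^D$, you need
\[
2\sqrt{\nu}\,ab+\frac{4|\bE z_N^3|}{C^{3/4}}\,a^{1/2}b^{3/2}+\frac{\bE z_N^4}{C}\,b^2\le 2ab+b^2 .
\]
Since $|\bE z_N^3|\le\nu^{-1/2}$ and $\bE z_N^4\le\nu^{-1}$, the cubic coefficient is at most $4\nu^{1/4}9^{-3/4}<0.65$, and the last coefficient is at most $1/9$. Now use $a^{1/2}b^{3/2}\le\frac12(ab+b^2)$. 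The total coefficient of $ab$ is then below $\sqrt2+0.33<2$, and that of $b^2$ is below $\frac19+0.33<1$. So the induction closes with room to spare, for every $\nu\le\frac12$.
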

The first lemma provides a subroutine for constructing a polynomial approximation to the threshold function. The second lemma is a hypercontractivity result for mixed Bernoulli random variables, which, roughly speaking, asserts that the moments of low-degree polynomials are well behaved.

Next, let us define the proposed estimator. Fix a normalization $d_\star>0$ to be specified. Define the centered/rescaled row-sum at vertex $1$:
\begin{equation}\label{eq:row-sum-g}
g(\s{Y})\triangleq\frac{1}{(p-q)d_\star}\sum_{i=2}^n\p{\s{Y}_{1i}-q}.
\end{equation}
We feed $g$ into the \emph{polynomial threshold} $\tau_k$ of degree $2k+1$ in Lemma~\ref{lem:poly-approx} and set
\begin{align}
f(\s{Y})\triangleq\tau_k\p{g(\s{Y})},\qquad \deg f=2k+1\triangleq D.
\end{align}
Recall that $\phi:v(\Gamma)\hookrightarrow[n]$ is the uniform injection, we set $e^\star=\{\{\phi(u),\phi(v)\}:\{u,v\}\in e(\Gamma)\}$, and $x\triangleq \Ind\{1\in \phi(v(\Gamma))\}$. We have the following lemma.
\begin{lemma}[Moments and tails of the row-sum]\label{lem:row-moments}
Condition on $\phi$. If $x=0$, then the $\{Y_{1i}-q\}_{i=2}^n$ are i.i.d. centered Bernoullis with variance $q(1-q)$, and
\begin{align}
\bE[g(\s{Y})\vert \phi,x=0]&=0,\\
\s{Var}(g(\s{Y})\vert \phi,x=0)&=\frac{q(1-q)}{(p-q)^2}\cdot\frac{n-1}{d_\star^2}.
\end{align}
If $x=1$ and $u\in v(\Gamma)$ is the (random) root with $\phi(u)=1$, then exactly $d_\Gamma(u)$ summands have mean $(p-q)$ and the rest have mean $0$, hence
\begin{align}
\bE[g(\s{Y})\vert \phi,x=1]&=\frac{d_\Gamma(u)}{d_\star},\\ 
\s{Var}(g(\s{Y})\vert \phi,x=1)&\leq \frac{p(1-p)+(n-2)q(1-q)}{(p-q)^2}\cdot\frac{1}{d_\star^2}\leq\frac{(p+q)(n-1)}{(p-q)^2}\cdot\frac{1}{d_\star^2}.
\end{align}
Moreover, for any $t>0$,
\begin{align}
\mathbb{P}\left(|g(\s{Y})|\ge t\vert\phi,x=0\right)
&\le 2\exp\pp{-\frac{t^2(p-q)^2 d_\star^2}{2q(n-1)+\frac{2}{3}t(p-q)d_\star}},\label{eq:bern-null}\\
\mathbb{P}\pp{\left.\abs{g(\s{Y})-\tfrac{d_\Gamma(u)}{d_\star}}\geq t\right|\phi,x=1}
&\le 2\exp\left(-\frac{t^2(p-q)^2 d_\star^2}{2(p+q)(n-1)+\frac{2}{3}t(p-q)d_\star}\right).\label{eq:bern-planted}
\end{align}
\end{lemma}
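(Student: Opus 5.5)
Conditionally on $\phi$, the row $\{\s{Y}_{1i}\}_{i=2}^n$ is a collection of independent Bernoulli variables whose parameters are fixed by $\phi$. Every claim then reduces to computing first and second moments of a sum of independent bounded variables and applying Bernstein's inequality.

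\emph{Mean and variance.} Suppose first that $x=0$, so vertex $1\notin\phi(v(\Gamma))$. Then no pair $\{1,i\}$ belongs to $e^\star$. Hence each $\s{Y}_{1i}\sim\s{Bern}(q)$ independently, each summand $\s{Y}_{1i}-q$ is centered with variance $q(1-q)$, and there are $n-1$ summands. Scaling by $((p-q)d_\star)^{-1}$ gives the stated mean $0$ and variance $\frac{q(1-q)(n-1)}{(p-q)^2 d_\star^2}$.

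Now suppose $x=1$, and let $u=\phi^{-1}(1)$. The pairs $\{1,i\}\in e^\star$ are exactly the images of the $d_\Gamma(u)$ edges of $\Gamma$ at $u$. For these pairs $\s{Y}_{1i}\sim\s{Bern}(p)$, so $\s{Y}_{1i}-q$ has mean $p-q$ and variance $p(1-p)$. The remaining summands are centered with variance $q(1-q)$. Summing gives the conditional mean $d_\Gamma(u)/d_\star$. The variance is $d_\Gamma(u)p(1-p)+(n-1-d_\Gamma(u))q(1-q)$, divided by $(p-q)^2d_\star^2$. For the first displayed bound I will check whether it should read $d_\Gamma(u)p(1-p)$ rather than $p(1-p)$; I would state the corrected form if needed. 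The final bound $(p+q)(n-1)$ follows because every summand's variance is at most $\max\{p(1-p),q(1-q)\}\le p+q$.

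\emph{Tails.} Apply Bernstein's inequality to $S=\sum_{i}(\s{Y}_{1i}-\bE\s{Y}_{1i})$. Its summands are independent, centered, and bounded by $1$ in absolute value. The variance proxy is $\sigma^2\le q(n-1)$ under $x=0$ and $\sigma^2\le(p+q)(n-1)$ under $x=1$. Bernstein gives
\[
\pr(|S|\ge s)\le 2\exp\!\Big(-\frac{s^2}{2\sigma^2+\tfrac23 s}\Big).
\]
Since $g-\bE g=S/((p-q)d_\star)$, set $s=t(p-q)d_\star$. This yields \eqref{eq:bern-null} and \eqref{eq:bern-planted} directly.

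I do not expect a real obstacle here. The only point needing care is the conditioning: the planted edges touching vertex $1$ must be identified precisely as the $d_\Gamma(u)$ images of $u$'s neighborhood, and independence across $i$ must be taken given $\phi$.
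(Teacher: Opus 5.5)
Your proof is correct and follows the same route as the paper, which simply cites conditional independence given $\phi$, the direct variance computation, and Bernstein's inequality for centered summands bounded by $1$. Your suspicion about the intermediate variance bound is justified: the exact conditional variance is $\frac{d_\Gamma(u)p(1-p)+(n-1-d_\Gamma(u))q(1-q)}{(p-q)^2d_\star^2}$, so the displayed $p(1-p)+(n-2)q(1-q)$ fails whenever $d_\Gamma(u)\ge 2$ and $p(1-p)>q(1-q)$; however, the final bound $\frac{(p+q)(n-1)}{(p-q)^2d_\star^2}$, which is the only one fed into Bernstein's inequality, holds exactly as you argue, so the tail bounds are unaffected.
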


\begin{proof}[Proof of Lemma~\ref{lem:row-moments}]
All results follow from independence, the variance computations above, and Bernstein's inequality for sums of bounded mean-zero variables (each summand has range $\leq 1$ after centering by $q$ and rescaling).
\end{proof}
\begin{lemma}[From high probability to mean-square via hypercontractivity]\label{lem:HP-to-MSE}
Let $Z=f(\s{Y})-x$ for a degree-$D$ polynomial $f$. Suppose for some $\varepsilon>0$ and $\delta\in(0,1)$ we have
\begin{align}
\mathbb{P}\{Z^2\le\varepsilon\}\ge1-\delta.
\end{align}
Then, using \eqref{eq:HC-ineq},
\begin{align}
\bE[Z^2]\le\varepsilon+\p{\frac{9}{\nu_{p,q}}}^{D/2}\bE[Z^2]\sqrt{\delta}.
\end{align}
Consequently, if $\delta\le\tfrac{1}{4}(\nu_{p,q}/9)^{D}$ then $\bE[Z^2]\le 2\varepsilon$; if $\delta\le\tfrac{1}{16}(\nu_{p,q}/9)^{D}$ then $\bE[Z^2]\le\tfrac{4}{3}\varepsilon$.
\end{lemma}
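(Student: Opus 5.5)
The plan is a standard split of $\bE[Z^2]$ over the good event $\calG\triangleq\{Z^2\le\varepsilon\}$ and its complement, with Cauchy--Schwarz on the bad part. Writing $\bE[Z^2]=\bE[Z^2\Ind_{\calG}]+\bE[Z^2\Ind_{\calG^c}]$, the first term is at most $\varepsilon$. For the second, Cauchy--Schwarz gives
\begin{align}
\bE[Z^2\Ind_{\calG^c}]\le\sqrt{\bE[Z^4]}\sqrt{\pr(\calG^c)}\le\sqrt{\bE[Z^4]}\sqrt{\delta}.
\end{align}

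The remaining step is to control $\bE[Z^4]$ by $\bE^2[Z^2]$ via Lemma~\ref{lem:HC}. The point needing care is that $Z=f(\s{Y})-x$ is not itself a polynomial in $\s{Y}$, since $x$ depends on the latent $\phi$. I would therefore condition on $\phi$. Given $\phi$, the edge variables $\{\s{Y}_{ij}\}$ are independent Bernoullis with parameters in $\{p,q\}$, and $x$ is a constant. Hence $Z$ is a polynomial of degree at most $D$ in $\s{Y}$ (adding a constant does not raise the degree, assuming $D\ge1$), and Lemma~\ref{lem:HC} applies conditionally:
\begin{align}
\bE[Z^4\vert\phi]\le\p{\frac{9}{\nu_{p,q}}}^{D}\bE^2[Z^2\vert\phi].
\end{align}
To pass to the unconditional statement, the route I would try first is to absorb the conditioning into the Cauchy--Schwarz step. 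Applying Cauchy--Schwarz conditionally gives
\begin{align}
\bE[Z^2\Ind_{\calG^c}\vert\phi]\le\p{\frac{9}{\nu_{p,q}}}^{D/2}\bE[Z^2\vert\phi]\sqrt{\pr(\calG^c\vert\phi)}.
\end{align}
Taking expectations then requires comparing $\bE_\phi\big[\bE[Z^2\vert\phi]\sqrt{\pr(\calG^c\vert\phi)}\big]$ with $\bE[Z^2]\sqrt{\delta}$.

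Here symmetry helps. By relabeling invariance, $\bE[Z^2\vert\phi]$ and $\pr(\calG^c\vert\phi)$ depend on $\phi$ only through which template vertex, if any, lands on the anchor $1$. This makes the comparison nearly immediate in the relevant cases. The alternative is to accept that the unconditional step $\bE[Z^4]\le(9/\nu_{p,q})^D\bE^2[Z^2]$ holds with Jensen going the right way. This is the one delicate point, and I would try to handle it by this conditional argument first. If that fails, the fallback is to use $\bE[Z^4]\le\max_\phi\bE[Z^4\vert\phi]$ together with the near-constancy of $\bE[Z^2\vert\phi]$ across $\phi$ on each of $\{x=0\}$ and $\{x=1\}$.

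Given the displayed inequality $\bE[Z^2]\le\varepsilon+(9/\nu_{p,q})^{D/2}\bE[Z^2]\sqrt{\delta}$, the consequences are pure algebra. If $\delta\le\tfrac14(\nu_{p,q}/9)^D$, the coefficient $(9/\nu_{p,q})^{D/2}\sqrt\delta$ is at most $1/2$. Moving this term to the left gives $\tfrac12\bE[Z^2]\le\varepsilon$, i.e.\ $\bE[Z^2]\le2\varepsilon$; this is legitimate since $\bE[Z^2]<\infty$ for a polynomial of bounded variables. If instead $\delta\le\tfrac1{16}(\nu_{p,q}/9)^D$, the coefficient is at most $1/4$, giving $\bE[Z^2]\le\tfrac43\varepsilon$.
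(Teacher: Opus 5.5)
Your split over $\calG=\{Z^2\le\varepsilon\}$, the Cauchy--Schwarz bound, and the final rearrangement are exactly the paper's argument. You are also right that \eqref{eq:HC-ineq} may only be applied given $\phi$: $x$ is not a function of $\s{Y}$, and the edges are independent only conditionally on $\phi$. The paper's two-line proof applies \eqref{eq:HC-ineq} to $Z$ directly, implicitly treating $x$ as fixed; read that way, your conditional step already finishes the proof.

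The gap is your extra step back to an unconditional statement, and it cannot be closed. Read unconditionally (hypothesis on $\pr\{Z^2\le\varepsilon\}$, conclusion on $\bE[Z^2]$, both averaged over $\phi$), the lemma is false. Take $f\equiv0$ (or, for exact degree $D$, $\eta$ times a degree-$D$ monomial with $\eta\to0$). Then $Z^2=x$, so for every $\varepsilon\in(0,1)$ the hypothesis holds with $\delta=\pr(x=1)=|v(\Gamma)|/n$, while $\bE[Z^2]=|v(\Gamma)|/n$. Suppose $|v(\Gamma)|/n\le\frac14(\nu_{p,q}/9)^D$, e.g.\ $|v(\Gamma)|\le n^{1-c}$ and $D\le C\log\log n$. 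Then the conclusion $\bE[Z^2]\le2\varepsilon$ fails once $\varepsilon<|v(\Gamma)|/(2n)$.

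Each of your routes breaks on this example:
\begin{itemize}
\item Your comparison of $\bE_\phi\big[\bE[Z^2\vert\phi]\sqrt{\pr(\calG^c\vert\phi)}\big]$ with $\bE[Z^2]\sqrt{\delta}$ is off by a factor $\sqrt{n/|v(\Gamma)|}$ when both conditional quantities sit on $\{x=1\}$.
\item The symmetry reduction leaves exactly this two-component mixture. It also needs $f$ to be invariant under permutations fixing vertex $1$, which the lemma does not assume.
\item Jensen goes the wrong way, since $\bE_\phi[(\bE[Z^2\vert\phi])^2]\ge(\bE[Z^2])^2$.
\item The fallback via $\max_\phi\bE[Z^4\vert\phi]$ fails because $\bE[Z^2\vert x=1]$ can be of order $1$ while $\bE[Z^2]$ is of order $|v(\Gamma)|/n$.
\end{itemize}

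The fix is to keep everything conditional on $\phi$. Assume $\pr\{Z^2\le\varepsilon\vert\phi\}\ge1-\delta$ for every $\phi$. Given $\phi$, $x$ is constant, so $Z$ is a polynomial of degree at most $D$ in independent Bernoulli variables. Your conditional Cauchy--Schwarz step with \eqref{eq:HC-ineq} then gives $\bE[Z^2\vert\phi]\le\varepsilon+(9/\nu_{p,q})^{D/2}\bE[Z^2\vert\phi]\sqrt{\delta}$. Your algebra yields $\bE[Z^2\vert\phi]\le2\varepsilon$ (resp.\ $\tfrac43\varepsilon$), and averaging over $\phi$ gives the same bound on $\bE[Z^2]$, with no comparison across different $\phi$ needed. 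This is also what the applications supply. The proof of Theorem~\ref{thm:upperLDP} establishes $\pr\{|g(\s{Y})-x|\le r/6\vert\phi\}\ge1-\delta_D$ for each $\phi$. The Chebyshev bounds in the proof of Lemma~\ref{lem:HP-sep} are likewise derived given $\phi$ before averaging.
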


\begin{proof}[Proof of Lemma~\ref{lem:HP-to-MSE}]
Decompose $\bE[Z^2]=\bE[Z^2\Ind_{\mathcal{G}}]+\bE[Z^2\Ind_{\mathcal{B}}]\le \varepsilon+\sqrt{\bE[Z^4]\mathbb{P}(\mathcal{B})}$ with $\mathcal{G}=\{Z^2\le\varepsilon\}$ and $\mathcal{B}=\mathcal{G}^c$. Apply \eqref{eq:HC-ineq} to bound $\bE[Z^4]\le (9/\nu_{p,q})^D \bE[Z^2]^2$ and rearrange.
\end{proof}
We are now ready to state and prove our main upper bound on the $\s{MMSE}_{\leq D}$.
\begin{theorem}[General upper bound]\label{thm:upperLDP}
Fix $0<r<1$, $k\in\mathbb{N}$, and set $D=2k+1$. Assume there exists $d_\star>0$ such that
\begin{equation}\label{eq:near-regular}
\max_{u\in V(\Gamma)} \frac{|d_\Gamma(u)-d_\star|}{d_\star}\le\frac{r}{12}.
\end{equation}
Assume further that
\begin{equation}\label{eq:SNR}
(p-q)^2 d_\star^2\ge\frac{C_1}{r^2}(p+q)(n-1)\pp{\log 4+D\log\frac{9}{\nu_{p,q}}},
\end{equation}
for an absolute constant $C_1\ge 432$. Let $g$ be as in \eqref{eq:row-sum-g}, let $\tau_k$ be as in Lemma~\ref{lem:poly-approx}, and set $f=\tau_k\circ g$. Then
\begin{align}
\s{MMSE}_{\leq D}\leq\bE\pp{(f(\s{Y})-x)^2}\leq C_2D^2r^{\,D-1},
\end{align}
for an absolute constant $C_2\geq 1$.
\end{theorem}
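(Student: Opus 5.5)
The plan is to show that, with high probability, $g(\s{Y})$ lands within $\Delta\triangleq r/c_0$ (up to constants) of the correct label $x\in\{0,1\}$. Lemma~\ref{lem:poly-approx} then makes $(f-x)^2$ tiny on that event, and Lemma~\ref{lem:HP-to-MSE} upgrades the high-probability bound to a mean-square bound. The work is carried out conditionally on $\phi$ and then averaged, since the bounds will be uniform in $\phi$.

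\textbf{Step 1 (concentration).} Fix $\Delta=r/4$, possibly rescaled by $1/c_0$ so that $c_0\Delta\le r$. When $x=0$, apply \eqref{eq:bern-null} with $t=\Delta$. When $x=1$, the mean is $d_\Gamma(u)/d_\star$, which by \eqref{eq:near-regular} lies within $r/12$ of $1$. Apply \eqref{eq:bern-planted} with $t=\Delta-r/12$, which is of order $r$. In both cases the exponent is at least $c\,t^2(p-q)^2d_\star^2/((p+q)(n-1))$. Two checks are needed here:
\begin{itemize}
\item the linear term $\tfrac23 t(p-q)d_\star$ in the denominator is dominated, which holds because \eqref{eq:SNR} forces $d_\star\gtrsim\sqrt{n}\gg 1$ while $d_\star\le n$;
\item \eqref{eq:SNR} with $C_1\ge432$ makes the exponent at least $\log 4+D\log(9/\nu_{p,q})+\log 16$.
\end{itemize}
This yields failure probability $\delta\le \tfrac1{16}(\nu_{p,q}/9)^D$. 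The constant $432=3\cdot 144$ should come out of $t\ge r/12$ together with the Bernstein factor, and tracking it is the bookkeeping step.

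\textbf{Step 2 (threshold).} On the good event we have $|g-x|\le\Delta\le 1/2$. Lemma~\ref{lem:poly-approx} then gives $|f-x|\le (k+\tfrac12)(c_0\Delta)^k\le D\,r^{k}$. Hence $Z^2=(f-x)^2\le \varepsilon\triangleq D^2 r^{2k}=D^2r^{D-1}$.

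\textbf{Step 3 (hypercontractivity).} Since $f-x$ is not itself a polynomial in $\s{Y}$, we apply Lemma~\ref{lem:HP-to-MSE} to the polynomial $f$ on each conditional law given $\phi$, where $x$ is constant and the edge variables are independent Bernoullis. Lemma~\ref{lem:HC} applies there, so $Z=f-x$ is a polynomial of degree $D$. With $\delta\le\frac1{16}(\nu_{p,q}/9)^D$ this gives $\bE[Z^2\mid\phi]\le\tfrac43\varepsilon$. Averaging over $\phi$ yields $\bE[(f-x)^2]\le C_2D^2r^{D-1}$.

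The main obstacle is Step 1's constant matching. We need a single choice of $\Delta$ that simultaneously keeps $c_0\Delta\le r$, so the approximation error decays like $r^k$, and leaves the margin $\Delta-r/12$ large enough for the Bernstein exponent to beat $D\log(9/\nu_{p,q})$. If $c_0$ is large, we absorb it by taking $\Delta=r/(4c_0)$ and enlarging $C_1$ accordingly, or alternatively by replacing $r$ in the final bound by $c_0 r$. Either way, the resulting constant is absorbed into $C_2$ and the requirement on $C_1$.
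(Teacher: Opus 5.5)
Your structure matches the paper's: Bernstein concentration of the row sum, the threshold polynomial of Lemma~\ref{lem:poly-approx}, then Lemma~\ref{lem:HP-to-MSE}. Conditioning on $\phi$, so that $f-x$ is an honest degree-$D$ polynomial in independent Bernoullis, is actually more careful than the paper, which applies Lemma~\ref{lem:HP-to-MSE} unconditionally.

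The gap is the step you call constant matching, which cannot be absorbed into constants. Hypothesis \eqref{eq:near-regular} leaves a deterministic bias $|d_\Gamma(u)/d_\star-1|$ of up to $r/12$ when $x=1$. So any workable $\Delta$ must exceed $r/12$, while Step~2 needs $c_0\Delta\le r$. This window is nonempty only if $c_0<12$. With the actual constant $c_0=6$, your $\Delta=r/4$ gives $c_0\Delta=3r/2$, hence $(3r/2)^k$ rather than $r^k$. (For instance, $\tau_k(y)=\frac{(2k+1)!}{(k!)^2}\int_0^y s^k(1-s)^k\,ds$ satisfies Lemma~\ref{lem:poly-approx} with $c_0=6$, which is what the paper uses.)

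Neither of your rescues works:
\begin{itemize}
\item Taking $\Delta=r/(4c_0)=r/24<r/12$ makes the planted-case deviation $t=\Delta-r/12$ negative. Enlarging $C_1$ only shrinks fluctuations; it cannot remove a bias fixed by \eqref{eq:near-regular}.
\item Replacing $r$ by $c_0r$ costs a factor $(c_0/4)^{D-1}$. This grows with $D$ and cannot be absorbed into an absolute $C_2$.
\end{itemize}

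The fix is the paper's choice $\Delta=r/6$, so that $c_0\Delta=r$ exactly, with Bernstein deviation $t=r/12$ in both cases. For $x=1$, this deviation plus the bias gives $|g-1|\le r/6$; for $x=0$ it is more than enough. Since $d_\star\le\frac{12}{11}(n-1)$ by \eqref{eq:near-regular}, the linear Bernstein term is at most a $\frac1{33}$ fraction of $2(p+q)(n-1)$. Then \eqref{eq:SNR} with $C_1\ge 432=3\cdot144$ makes the exponent at least $\frac{432}{288}\cdot\frac{33}{34}\,L\ge 1.45\,L$, where $L=\log4+D\log(9/\nu_{p,q})$. Because $D\ge3$ and $\nu_{p,q}\le\frac12$, this gives failure probability below $\frac1{16}(\nu_{p,q}/9)^D$. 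Steps~2--3 then go through as you wrote them, yielding $\bE[(f-x)^2]\le\frac43D^2r^{D-1}$.

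Your own guess that $432=3\cdot144$ comes from $t\ge r/12$ points to exactly this choice. Also, the remark that \eqref{eq:SNR} forces $d_\star\gtrsim\sqrt n$ is not needed; the linear-term check only uses $d_\star=O(n)$.
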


\begin{proof}[Proof of Theorem~\ref{thm:upperLDP}]
Define
\begin{align}
\Delta_{\mathrm{noise}}\triangleq\sqrt{\frac{3(p+q)\pp{\log 4 + D\,\log(9/\nu_{p,q})}}{(p-q)^2d_\star^2(n-1)}}.
\end{align}
By \eqref{eq:SNR} (with $C_1\ge 432$), we have $\Delta_{\mathrm{noise}}\leq r/12$. By Lemma~\ref{lem:row-moments}, using $\Delta_{\mathrm{noise}}\le r/12$ and a union bound over the two regimes $x\in\{0,1\}$,
\begin{align}
\mathbb{P}\pp{\left.\abs{g(\s{Y})-x}\le r/6\right|\phi}\geq1-\delta_D,
\end{align}
where $\delta_D\triangleq\tfrac{1}{4}\p{\frac{\nu_{p,q}}{9}}^{D}$; the choice of $\delta_D$ follows by setting the exponents in \eqref{eq:bern-null}–-\eqref{eq:bern-planted} equal to $\log 4 + D\log(9/\nu_{p,q})$ so that each tail at most $\tfrac12(\nu_{p,q}/9)^D$, and then union bound. Now, on the event $\{|g(\s{Y})-x|\leq r/6\}$, Lemma~\ref{lem:poly-approx} with $\Delta=r/6$ gives
\begin{align}
|f(\s{Y})-x|\leq(k+\tfrac12)\p{c_0\tfrac{r}{6}}^{k}\le Ckr^{k},
\end{align}
for $C$ absorbing $c_0^k$ and $6^{-k}$; we may fix $c_0=6$ to get $|f-x|\leq (k+\tfrac12)r^k$. Hence $(f-x)^2\leq (k+\tfrac12)^2 r^{2k}$ on the good event. Next, we apply Lemma~\ref{lem:HP-to-MSE} with $\varepsilon=(k+\tfrac12)^2 r^{2k}$ and $\delta=\delta_D\le \tfrac14(\nu_{p,q}/9)^D$. We obtain
\begin{align}
\s{MMSE}_{\leq D}\leq \bE[(f(\s{Y})-x)^2]\leq2(k+\tfrac12)^2r^{2k}\leq C_2\,D^2\,r^{D-1},
\end{align}
since $D=2k+1$ and $r\in(0,1)$ imply $r^{2k}\leq r^{D-1}$ and $(k+\tfrac12)^2\leq D^2$ up to a fixed constant factor. This completes the proof.
\end{proof}

Theorem~\ref{thm:upperLDP} holds true for any choice of $d_{\star}$. In many cases, the choice $d_\star = \eta(\Gamma_n) = \frac{|e(\Gamma_n)|}{|v(\Gamma_n)|}$ is optimal. 
\begin{corollary}\label{cor:half-plus-eps}
Fix $\epsilon>0$ and $0<q<p<1$. Let $\Gamma_n$ be any template sequence with
\begin{align}
\s{Dis}(\Gamma)\triangleq\max_{v\in v(\Gamma)} \frac{|d_\Gamma(v)-\eta(\Gamma)|}{\eta(\Gamma)}&\leq\frac{r}{12},\label{eqn:condDstar1}\\
\eta(\Gamma_n)&\geq n^{\frac12+\epsilon},\label{eqn:condDstar2}
\end{align}
for some fixed $0<r<1$ and all large $n$. If $D=D(n)\leq (\log n)^{\alpha}$ for any fixed $\alpha>0$, then %for all large $n$,
%\begin{align}
%(p-q)^2\eta(\Gamma_n)^2\ \ge\ \frac{C_1}{r^2}\,(p+q)\,(n-1)\,\Big[\log 4 + D\,\log\Big(\frac{9}{\nu_{p,q}}\Big)\Big],
%\end{align}
%and therefore the degree-$D$ estimator $f_n$ satisfies
\begin{align}
\bE\pp{(f(\s{Y})-x)^2}\leq C_2\,D^2r^{D-1}\to0,
\end{align}
as $n\to\infty$.
\end{corollary}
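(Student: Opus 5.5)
The corollary should follow by applying Theorem~\ref{thm:upperLDP} with the normalization $d_\star\triangleq\eta(\Gamma_n)$. We check its two hypotheses and then read off the conclusion.

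\textbf{Step 1: near-regularity.} With $d_\star=\eta(\Gamma_n)$, condition \eqref{eq:near-regular} is literally \eqref{eqn:condDstar1}, since $\s{Dis}(\Gamma)\le r/12$. Nothing more is needed here.

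\textbf{Step 2: the SNR condition \eqref{eq:SNR}.} This is the only step requiring work. Since $p,q$ are fixed constants with $0<q<p<1$, the quantities $(p-q)^2$, $p+q$ and $\log(9/\nu_{p,q})$ are positive constants. The right-hand side of \eqref{eq:SNR} is therefore at most
\begin{align}
C_{p,q,r}\,n\,(1+D).
\end{align}
Using $D\le(\log n)^{\alpha}$, this is at most $C_{p,q,r}\,n\,(\log n)^{\alpha}$ for large $n$. By \eqref{eqn:condDstar2}, the left-hand side satisfies
\begin{align}
(p-q)^2 d_\star^2\ge (p-q)^2 n^{1+2\epsilon}.
\end{align}
Because $n^{2\epsilon}$ eventually dominates $(\log n)^{\alpha}$ times any constant, \eqref{eq:SNR} holds for all sufficiently large $n$, for any fixed $\alpha>0$ and with $C_1\ge 432$. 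The only point to watch is that $r$ is fixed, so the factor $1/r^2$ is a constant and is absorbed.

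\textbf{Step 3: conclusion.} The degree $D=2k+1$ is odd. If $D(n)$ is given in another form, we take $k=\lfloor (D-1)/2\rfloor$, which only lowers the degree. Theorem~\ref{thm:upperLDP} then gives
\begin{align}
\bE\pp{(f(\s{Y})-x)^2}\le C_2D^2r^{D-1}.
\end{align}
For the limit $C_2D^2r^{D-1}\to 0$, we need $D\to\infty$. Since $r<1$ is fixed, $D^2r^{D-1}\to0$ whenever $D=D(n)\to\infty$; the upper constraint $D\le(\log n)^{\alpha}$ is used only in Step 2. So the statement is to be read with $D(n)\to\infty$, for instance $k=m\to\infty$ slowly, and we would make this explicit. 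If $D$ stayed bounded, the bound would be only a small constant and would not tend to zero.
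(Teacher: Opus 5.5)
Your proof is correct and follows the paper's argument: apply Theorem~\ref{thm:upperLDP} with $d_\star=\eta(\Gamma_n)$, note that \eqref{eq:near-regular} is exactly \eqref{eqn:condDstar1}, and use \eqref{eqn:condDstar2} so that the left-hand side of \eqref{eq:SNR} is of order $n^{1+2\epsilon}$ while the right-hand side is $O(n(\log n)^{\alpha})$. Your observation that the limit $C_2D^2r^{D-1}\to0$ needs $D(n)\to\infty$ is correct and is left implicit in the paper; also, since Theorem~\ref{thm:upperLDP} allows any $d_\star>0$, your check goes through unchanged if $\eta$ is read as the average degree $2|e(\Gamma_n)|/|v(\Gamma_n)|$, which is the only reading under which \eqref{eqn:condDstar1} can hold (with $\eta=|e(\Gamma_n)|/|v(\Gamma_n)|$ the degrees average to $2\eta$, forcing $\s{Dis}(\Gamma)\geq 1$).
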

\begin{proof}[Proof of Corollary~\ref{cor:half-plus-eps}]
The proof follows from Theorem~\ref{thm:upperLDP}. Indeed, \eqref{eq:near-regular} is satisfied under assumption \eqref{eqn:condDstar1}, while \eqref{eqn:condDstar2} implies that the term on the left-hand side of \eqref{eq:SNR} is $\asymp n^{1+2\epsilon}$, while left-hand side of \eqref{eq:SNR} is $O(n(\log n)^{\alpha})$ for any fixed $\alpha$. Thus \eqref{eq:SNR} holds for large enough $n$.
\end{proof}
Thus, for any sequence of subgraphs $\Gamma = \Gamma_n$ such that $\s{Dis}(\Gamma)$ is bounded away from one, the MSE of the algorithm proposed above matches and complements the computational lower bound in Theorem~\ref{thm:compLower}. While the result above is certainly nontrivial, it does not cover the many cases in which $\s{Dis}(\Gamma)$ fails to satisfy the condition in \eqref{eqn:condDstar1}. Consider the following example.
\begin{example}
Split $v(\Gamma)$ into two sets $A$ and $B$ with $|A| = |B| = k/2$: $A$ forms a clique, $B$ is an independent set, and each $b \in B$ is adjacent to exactly $\sqrt{k}$ vertices in $A$. In this construction, $\eta(\Gamma) = \Theta(k)$ and $\s{Dis}(\Gamma) \asymp 1 \not\leq \tfrac{1}{12}$. Hence, Corollary~\ref{cor:half-plus-eps} does not apply, since condition \eqref{eqn:condDstar1} fails, whereas our lower bound yields $k \ll \sqrt{n}$. This simple example therefore reveals a gap.
\end{example}

\paragraph{Multi-iteration power method.} We now propose a stronger bound obtained by applying $L$ iterations of the power method (rather than the single iteration considered above). Let us define the proposed estimator. For simplicity of notations, we let $\s{Z}_{ij}\triangleq\s{Y}_{ij}-q$, for any $i,j\in[n]$. Fix $L\in\mathbb{N}$, and let $\calP_L$ denote the set of all simple undirected paths of length $L$ in the ambient complete graph starting at vertex 1 and pairwise distinct vertices, namely, $P=(u_0,u_1,\dots,u_\ell)$ with $u_0=1$. For each $P\in\calP_L$, define
\begin{align}
    \s{Z}(P)\triangleq\prod_{\ell=0}^{L-1} \s{Z}_{u_t u_{t+1}},
\end{align}
and the degree-$L$ \emph{walk polynomial}
\begin{equation}\label{eq:def-Hl}
\s{W}_L\triangleq\sum_{P\in\mathcal{P}_L}\s{Z}(P).
\end{equation}
Let $B_L\triangleq c_L n^{L/2}$ where $c_L>0$ is a constant depending only on $(L,p,q)$ to be chosen in the sequel, and set
\begin{equation}\label{eq:def-Xl}
\mathscr{X}_L\triangleq\frac{\s{W}_L}{B_L} = \frac{1}{c_L n^{L/2}}\s{W}_L.
\end{equation}
By Lemma~\ref{lem:poly-approx}, for an integer $m\ge 1$, let $\tau_m:\mathbb{R}\to[0,1]$ be a degree-$(2m{+}1)$ polynomial with the following uniform approximation property: for any $r\in(0,1/4]$,
\begin{equation}\label{eq:tau-approx2}
|y|\leq \tfrac{r}{2}\ \Longrightarrow\ |\tau_m(y)-0|\le (6r)^m,
\quad
y\geq r\ \Longrightarrow\ |\tau_m(y)-1|\le (6r)^m.
\end{equation}
%Define the estimator
%\begin{equation}\label{eq:def-estimator}
%f_{\ell,m}(Y)\ \triangleq \ \tau_m\big(X_\ell\big),\qquad \deg f_{\ell,m}=\ell+2m+1.
%\end{equation}
For $u\in v(\Gamma)$ let
\begin{align}
    W_L(\Gamma;u)&\triangleq\abs{\ppp{\s{simple}\;\s{paths}\;\s{of}\;\s{length}\;L\;\s{in}\;\Gamma\;\s{ starting}\;\s{at}\;u}},\\
    W_L^{\min}(\Gamma)&\triangleq\min_{u\in v(\Gamma)} W_L(\Gamma;u).
\end{align}
Recall that $\phi:v(\Gamma)\hookrightarrow[n]$ is the uniform injection, we set $e^\star=\{\{\phi(u),\phi(v)\}:\{u,v\}\in e(\Gamma)\}$, and $x\triangleq \Ind\{1\in \phi(v(\Gamma))\}$. We have the following lemma.
\begin{lemma}\label{lem:null-variance}
Let $C_L\triangleq 2[q(1-q)]^{L}$. Then,
\begin{align}
\bE[\s{W}_L\vert\vert \phi,x=0]&=0,\\
\s{Var}(\s{W}_L\vert \phi,x=0)&\leq C_Ln^{L}.
\end{align}
Consequently, with $B_L=c_L n^{L/2}$ and $c_L = \sqrt{C_L}$ we have $\s{Var}(\mathscr{X}_L\vert\phi,x=0)\le 1$.
\end{lemma}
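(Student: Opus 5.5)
The plan is a direct second-moment computation that exploits the orthogonality of distinct path monomials. Write $\s{W}_L=\sum_{P\in\calP_L}\s{Z}(P)$. Each $P=(1,u_1,\dots,u_L)$ is a simple path, so its $L$ edges are distinct and $\s{Z}(P)$ is a product of $L$ distinct, independent edge variables. Two facts about $\calP_L$ will be used throughout. First, vertex $1$ appears only as the starting point, so the only edge of $P$ incident to $1$ is $\{1,u_1\}$. Second, a simple path with a prescribed starting vertex is determined by its edge set, so $e(P)=e(P')$ forces $P=P'$.

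\textbf{Mean.} Conditional on $\phi$ and $x=0$, vertex $1\notin\phi(v(\Gamma))$. Hence every edge $\{1,i\}$ lies outside $e^\star$, and $\s{Z}_{1i}=\s{Y}_{1i}-q$ is a centered $\s{Bern}(q)$ fluctuation. By independence of the edges of $P$,
\begin{align}
\bE[\s{Z}(P)\vert\phi,x=0]=\bE[\s{Z}_{1u_1}]\cdot\prod_{t\ge1}\bE[\s{Z}_{u_tu_{t+1}}]=0 .
\end{align}
Summing over $P$ gives $\bE[\s{W}_L\vert\phi,x=0]=0$.

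\textbf{Variance.} Expand $\bE[\s{W}_L^2\vert\phi,x=0]=\sum_{P,P'}\bE[\s{Z}(P)\s{Z}(P')]$. If $u_1\neq u_1'$, the edge $\{1,u_1\}$ occurs exactly once in the product, because $P'$ contains no other edge at $1$. It is centered and independent of the remaining factors, so the term vanishes. For the remaining pairs, I treat every edge variable $\s{Z}_e$ as centered with variance $q(1-q)$, i.e.\ I read the null $x=0$ as the reference law in which all edges are $\s{Bern}(q)$. Under that reading, any edge lying in exactly one of $e(P),e(P')$ kills the term. Hence $\bE[\s{Z}(P)\s{Z}(P')]=[q(1-q)]^L\Ind\{P=P'\}$ by the second fact above. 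Therefore
\begin{align}
\s{Var}(\s{W}_L\vert\phi,x=0)=|\calP_L|\,[q(1-q)]^L=(n-1)_L\,[q(1-q)]^L\le n^L[q(1-q)]^L\le C_Ln^L .
\end{align}
Dividing by $B_L^2=c_L^2n^L=C_Ln^L$ yields $\s{Var}(\mathscr{X}_L\vert\phi,x=0)\le1$.

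\textbf{Main obstacle.} The step I expect to be delicate is the orthogonality claim for pairs with $u_1=u_1'$. Conditional on $\phi$, edges inside $e^\star$ that lie away from vertex $1$ have mean $p-q\neq0$ after centering by $q$. For such pairs, edges of $e^\star$ in the symmetric difference of the two paths do not automatically annihilate the term. My first attempt would be to confine this to the contribution of pairs sharing the first edge, $(1,u_1)$, and to show that it fits inside the slack factor $2$ of $C_L$. Each such term carries a factor $q(1-q)$ from the shared first edge, times $(p-q)$ for every unshared planted edge, and there are at most $n$ choices of $u_1$. This requires a count of planted path segments, which is only small when $\Gamma$ is sparse relative to $\sqrt n$. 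For dense $\Gamma$, I do not expect the exact constant $2[q(1-q)]^L$ to survive without the reading above. So I would state the lemma under the interpretation that the $x=0$ law of $\s{W}_L$ is its pure-noise law, which is what the row-sum analysis of Lemma~\ref{lem:row-moments} implicitly uses. I would also flag that the general-$\phi$ version needs an additional additive term, depending on $\Gamma$, for the planted contributions.
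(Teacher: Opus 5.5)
Your argument is the paper's argument. The paper's proof starts from the assertion that, conditioned on $\{\phi,x=0\}$, all edges are i.i.d.\ $\s{Bern}(q)$. That is exactly the ``reading'' you adopt, and the paper then discards every pair with different edge sets by orthogonality, as you do. Two of your details are sharper than the paper's:
\begin{itemize}
\item Your mean computation uses only the first edge $\{1,u_1\}$, which really is outside $e^\star$ when $x=0$, so it holds under the true conditional law.
\item A path rooted at $1$ is determined by its edge set, so the paper's ``reversed path'' case never occurs. Under the pure-noise law the variance is exactly $(n-1)_L[q(1-q)]^L$, and the factor $2$ in $C_L$ is slack.
\end{itemize}

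The obstacle you flag is genuine. It is a defect of the lemma and of the paper's proof, not of your reasoning: given $\phi$, the edges of $e^\star$ are $\s{Bern}(p)$ even when $x=0$. So the paper's opening assertion is false, and the stated bound can fail for $L\ge2$. Take $L=2$ and $\Gamma=\calK_k$. Only pairs $(1,u,v),(1,u,v')$ sharing the first vertex survive, and every surviving term is nonnegative. Those with $v\neq v'$ and $u,v,v'\in\phi(v(\Gamma))$ each contribute $q(1-q)(p-q)^2$, so
\begin{align}
\s{Var}(\s{W}_2\vert\phi,x=0)\geq q(1-q)(p-q)^2\,k(k-1)(k-2).
\end{align}
This exceeds $2[q(1-q)]^2n^2$ once $k\gg n^{2/3}$. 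Such $k$ satisfy hypothesis \eqref{eqn:condDstar3Main}, so the failure occurs exactly where the lemma is used.

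Your aside that Lemma~\ref{lem:row-moments} implicitly uses the pure-noise reading is not right. For $L=1$ every path edge is incident to vertex $1$, hence outside $e^\star$ when $x=0$, and the lemma holds literally.

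Rather than reinterpreting $x=0$, prove the true bound with your orthogonality argument plus a count. A surviving pair $P\neq P'$ shares its first edge and has all unshared edges in $e^\star$. Suppose $s\in[1,L-1]$ edges are shared. Then at least $L-s+1$ non-root vertices of $P$ lie in $\phi(v(\Gamma))$, and so do all of the at most $L-s$ vertices of $P'$ outside $v(P)$. This gives $O_L(n^{s-1}k^{2L-2s+1})\le O_L(n^L+k^{2L-1})$ such pairs, each contributing $O(1)$. Hence $\s{Var}(\s{W}_L\vert\phi,x=0)\le C'_L(n^L+k^{2L-1})$, the same form as Lemma~\ref{lem:planted-second-moment}. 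The hypothesis of Lemma~\ref{lem:HP-sep} already contains $k^{L-1/2}$, so its null half still goes through with this bound after enlarging $C^\star$.
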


\begin{proof}[Proof of Lemma~\ref{lem:null-variance}]
Conditioned on $\{\phi,x=0\}$ all edges are i.i.d. $\s{Bern}(q)$, hence $\bE[\s{Z}_{ij}]=0$ and
$\bE[\s{Z}_{ij}^2]=\s{Var}(\s{Y}_{ij})=q(1-q)\le 1/4$, for any $(i,j)\in\binom{[n]}{2}$. Thus, $\bE[\s{W}_L\vert \phi,x=0]=0$ follows since every $\s{Z}(P)$ is a product of mean-zero independent factors. For the variance, write
\begin{align}
\s{Var}(\s{W}_L\vert \phi,x=0)\ =\ \sum_{P\in\mathcal{P}_L}\bE[\s{Z}^2(P)\vert \phi,x=0]
\;+\;\sum_{\substack{P,P'\in\mathcal{P}_L\\ P\neq P'}} \bE[\s{Z}(P)\s{Z}(P')\vert \phi,x=0],
\end{align}
where we have used the fact that all $\s{Z}(P)$ are mean zero. If $P$ and $P'$ do \emph{not} use exactly the same set of edges, then there exists an edge $(i,j)$ that appears in exactly one of $P,P'$, hence independence and $\bE[\s{Z}_{ij}]=0$ imply $\bE[\s{Z}(P)\s{Z}(P')]=0$. When $P\neq P'$ but have the same undirected edge-set (i.e., the reversed path), we have
\begin{align}
    \bE[\s{Z}(P)\s{Z}(P')\vert \phi,x=0]&=\prod_{e\in P}\bE[\s{Z}_e^2]\\
    &\leq [q(1-q)]^{L}\leq 4^{-L}.
\end{align}
Note that the number of such pairs is at most $|\mathcal{P}_L|$. Similarly, for each $P$, 
\begin{align}
\bE[\s{Z}^2(P)\vert \phi,x=0]=\prod_{e\in P}\bE[\s{Z}_e^2\vert \phi,x=0]\leq 4^{-L}.
\end{align}
Since $|\mathcal{P}_L|=(n-1)_{L}=\Theta(n^L)$, we obtain $\s{Var}(\s{W}_L\vert \phi,x=0)\leq C_L n^L$ where $C_L = 2[q(1-q)]^{L}$. Scaling by $B_L=\sqrt{C_L} n^{L/2}$ yields $\s{Var}(\mathscr{X}_L\vert \phi,x=0)\leq 1$. 
\end{proof}

\begin{lemma}\label{lem:planted-mean}
Let $u\in v(\Gamma)$ be such that $\phi(u)=1$. Then
\begin{align}
\bE\pp{\s{W}_L\vert \phi,x=1}=(p-q)^LW_L(\Gamma;u).
\end{align}
\end{lemma}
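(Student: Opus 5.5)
The plan is to compute $\bE[\s{W}_L\vert\phi,x=1]$ by linearity, one path at a time. Write
\begin{align}
\bE[\s{W}_L\vert\phi,x=1]=\sum_{P\in\calP_L}\bE[\s{Z}(P)\vert\phi,x=1].
\end{align}
Condition on $\phi$, which fixes the planted edge set $e^\star$. The variables $\{\s{Z}_{ij}\}$ are then independent, with $\bE[\s{Z}_{ij}\vert\phi]=p-q$ if $\{i,j\}\in e^\star$ and $\bE[\s{Z}_{ij}\vert\phi]=0$ otherwise.

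Fix a simple path $P=(u_0,\dots,u_L)$ with $u_0=1$. Its $L$ edges $\{u_t,u_{t+1}\}$ are pairwise distinct, since the vertices are distinct. So $\s{Z}(P)$ is a product of independent factors, and
\begin{align}
\bE[\s{Z}(P)\vert\phi]=\prod_{t=0}^{L-1}\bE[\s{Z}_{u_tu_{t+1}}\vert\phi].
\end{align}
This product equals $(p-q)^L$ if every edge of $P$ lies in $e^\star$, and $0$ otherwise. Hence
\begin{align}
\bE[\s{W}_L\vert\phi,x=1]=(p-q)^L\cdot\abs{\ppp{P\in\calP_L: e(P)\subseteq e^\star}}.
\end{align}

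It remains to identify the count. A simple path in $\calK_n$ from vertex $1$ whose edges all lie in $e^\star$ visits only vertices of $\phi(v(\Gamma))$, since every vertex it reaches is an endpoint of a planted edge. Because $\phi(u)=1$, the map $(w_0,\dots,w_L)\mapsto(\phi(w_0),\dots,\phi(w_L))$ is a bijection from simple paths of length $L$ in $\Gamma$ starting at $u$ onto these paths. Injectivity of $\phi$ preserves distinctness of vertices, and $\phi$ maps $e(\Gamma)$ bijectively onto $e^\star$. The count is therefore $W_L(\Gamma;u)$, which gives the claim.

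The only point requiring care is that each $P\in\calP_L$ is counted as an ordered sequence, matching the definition of $W_L(\Gamma;u)$ as ordered paths from $u$. Both sides fix the starting vertex, so there is no reversal double-counting. Beyond that bookkeeping there is no real obstacle.
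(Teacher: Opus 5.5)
Your proof is correct and follows the same route as the paper: linearity over $P\in\calP_L$, a product of independent conditional means that equals $(p-q)^L$ for fully planted paths and $0$ otherwise, and identification of the fully planted paths from vertex $1$ with the simple paths of length $L$ in $\Gamma$ from $u$. You also make explicit two points the paper leaves implicit: the edges of a simple path are distinct, which justifies factoring $\bE[\s{Z}(P)\vert\phi]$, and $\phi$ gives an explicit bijection between the two families of paths.
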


\begin{proof}[Proof of Lemma~\ref{lem:planted-mean}]
If a path $P\in\mathcal{P}_L$ is \emph{not} fully contained in the planted edge-set, then some factor
$\s{Z}_{ij}$ on that path has mean zero (conditional on $\phi$), hence $\bE[\s{Z}(P)\vert \phi]=0$.
If $P$ is fully planted, then the $\s{Z}_{ij}$ along $P$ are i.i.d. centered with mean $(p-q)$,
so $\bE[\s{Z}(P)\vert \phi,x=1]=(p-q)^L$. The number of fully planted simple ambient paths equals the number of simple paths of length $L$ in $\Gamma$ starting at $u$, namely $W_L(\Gamma;u)$.  Summing over $P\in\mathcal{P}_L$ yields the claim.
\end{proof}

\begin{lemma}\label{lem:planted-second-moment}
Let $u\in v(\Gamma)$ be such that $\phi(u)=1$. Then there exists a constant $C_L=C_L(L,p,q)$ such that
\begin{equation}\label{eq:var-planted-correct}
\s{Var}\p{\s{W}_L\vert \phi,x=1}\leq C_L\p{n^{L}+k^{2L-1}},
\end{equation}
where $k\triangleq|v(\Gamma)|$.
\end{lemma}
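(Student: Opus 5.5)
We expand the variance as a double sum over ordered pairs of ambient paths, $\s{Var}(\s{W}_L\vert\phi,x=1)=\sum_{P,P'\in\calP_L}\s{Cov}(\s{Z}(P),\s{Z}(P')\vert\phi)$, and classify pairs by their shared edges. Conditional on $\phi$ the edge variables are independent, with $\bE\s{Z}_e=(p-q)\Ind\{e\in e^\star\}$, and $\bE\s{Z}_e^2$ and $|\bE\s{Z}_e^2|$ bounded by a constant depending on $(p,q)$. If $P$ and $P'$ share no edge, then $\s{Z}(P)$ and $\s{Z}(P')$ are independent and the covariance vanishes. So only pairs sharing at least one edge contribute.

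Fix such a pair and let $S=e(P)\cap e(P')\neq\emptyset$. The covariance factorises as
\begin{align}
\prod_{e\in (P\triangle P')}\bE\s{Z}_e\cdot\Big(\prod_{e\in S}\bE\s{Z}_e^2-\prod_{e\in S}(\bE\s{Z}_e)^2\Big).
\end{align}
Its absolute value is at most $C^{2L}$ times the indicator that every edge of $P\triangle P'$ is planted. We then count such pairs by the union graph $P\cup P'$. Because both paths start at vertex $1$ and share an edge, this union is connected. Write $v_0$ for its number of non-planted vertices (vertices outside $\phi(v(\Gamma))$) and $a$ for its number of planted non-root vertices.

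The key structural observation is the following. Every edge in $P\triangle P'$ has both endpoints in $\phi(v(\Gamma))$, so every non-planted vertex is incident only to shared edges. Hence the shared part has a vertex set containing all $v_0$ non-planted vertices, plus at least one attachment point when $v_0>0$. The union has at most $2L$ edges, and we bound the number of pairs with given $(v_0,a)$ by $C_L n^{v_0}k^{a}$. It then suffices to show $n^{v_0}k^{a}\le n^{L}+k^{2L-1}$ for every admissible configuration.

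The union is a connected graph with at most $2L$ edges, and it contains a cycle or a repeated edge whenever $P\neq P'$ overlap nontrivially. So its number of non-root vertices satisfies $v_0+a\le 2L-|S|\le 2L-1$. Moreover, the $v_0$ non-planted vertices lie in $S$, and $S$ is a union of subpaths of both paths, so $v_0\le |S|$.

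\begin{itemize}
\item If $v_0=0$, we get at most $k^{2L-1}$.
\item If $v_0\ge1$, then $n^{v_0}k^{a}\le n^{v_0}k^{2L-|S|-v_0}\le n^{v_0}k^{2L-2v_0}$. This is a geometric interpolation between $k^{2L}$ and $n^{L}$ with exponents $v_0\le L$, so it is at most $\max(n^{L},k^{2L-1})$, after handling the endpoint $v_0=L$ and using $k\le n$.
\end{itemize}

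The diagonal and reversed pairs ($P=P'$ or $P'$ equal to $P$ reversed) give $n^{L}$ directly.

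The main obstacle is making this vertex and edge accounting rigorous. We must bound the exponent $a$ by $2L-1-2v_0$ rather than $2L-1-v_0$; if that inequality fails, a cruder $k^{2L-1-v_0}n^{v_0}$ may exceed both terms. To get the sharper bound, we use the observation that non-planted vertices are covered by shared edges, which are counted once in the union but belong to both paths. This should yield $v_0+a\le 2L-1-v_0$ up to an additive constant absorbed into $C_L$. Finally, summing over the $O_L(1)$ isomorphism types of unions gives the stated bound.
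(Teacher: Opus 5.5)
You have a genuine gap at the decisive step. The framework itself is sound: expand over pairs, note that the covariance vanishes unless every edge of $P\triangle P'$ is planted (so each unplanted vertex meets only shared edges and lies on both paths), and bound the pairs of a given labelled union shape by $n^{v_0}k^{a}$.

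This differs from the paper's route, which splits $\s{W}_L$ into planted-only and mixed paths and claims a mixed path has only $O_L(1)$ correlated partners. That claim misses partners that differ from $P$ only in planted edges. For $\Gamma$ a clique and $L=3$, taking $P=(1,a,b,c)$ and $P'=(1,a',b,c)$ with $c$ unplanted gives $\asymp nk^{3}$ positively correlated pairs. Your bookkeeping is the right tool for exactly these terms.

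Here is where it breaks. From $v_0\le|S|$ you only get $a\le 2L-2v_0$. The quantity $n^{v_0}k^{2L-2v_0}$ interpolates between $k^{2L}$ and $n^{L}$, so it is bounded by $\max(n^L,k^{2L})$, not by $\max(n^L,k^{2L-1})$. Concretely, $(v_0,a)=(1,2L-2)$ is permitted by your inequalities and gives $nk^{2L-2}$. This exceeds $k^{2L-1}$ because $n>k$, and exceeds $n^{L}$ whenever $k^2>n$.

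You do notice that $a\le 2L-1-2v_0$ is needed. But the escape ``up to an additive constant absorbed into $C_L$'' fails: an additive constant in the exponent costs a factor $k^{c}$, not a constant. The sharper inequality must hold exactly, and you have not proved it.

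The missing argument is short. Let $s$ be the number of non-root vertices common to $P$ and $P'$. Each path has $L$ non-root vertices, so the union has exactly $2L-s$ non-root vertices, i.e.\ $v_0+a=2L-s$. All $v_0$ unplanted vertices are among the $s$ common ones.

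Now suppose $P\neq P'$ and $v_0\ge1$; we show some common non-root vertex is planted. Take an unplanted $w$ and the maximal run of consecutive shared edges of $P$ through $w$. If the run starts at some $y\neq1$, or stops before the last vertex of $P$, that endpoint carries an edge of $P\triangle P'$. That edge is planted, so the endpoint is a planted common non-root vertex. Otherwise the run is all of $P$, so $e(P)\subseteq e(P')$, which forces $P=P'$.

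Hence $s\ge v_0+1$, i.e.\ $a\le 2L-1-2v_0$. When $v_0=0$ you still have $a\le 2L-1$, since a shared edge has a non-root endpoint. This gives
\[
n^{v_0}k^{a}\le \frac{1}{k}\,n^{v_0}\bigl(k^{2}\bigr)^{L-v_0}\le \frac{1}{k}\max\bigl(n,k^{2}\bigr)^{L}\le n^{L}+k^{2L-1}.
\]
The diagonal terms contribute $n^{v_0}k^{L-v_0}\le n^L$. Summing over the $O_L(1)$ shapes then proves the lemma.
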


\begin{proof}
We define $\mathcal P_L^{\s{pl}}$ as the set of simple ambient $L$-paths fully contained in the planted vertex set $\phi(v(\Gamma))$, and $\mathcal P_L^{\s{mix}}\triangleq\mathcal \calP_L\setminus \mathcal P_L^{\s{pl}}$. Decompose $\s{W}_L=\s{W}_L^{\s{pl}}+\s{W}_L^{\s{mix}}$ with the corresponding meanings. We have
\begin{align}
    \s{Var}\p{\s{W}_L\vert \phi,x=1}&\leq \s{Var}\p{\s{W}_L^{\s{pl}}\vert \phi,x=1}+\s{Var}\p{\s{W}_L^{\s{mix}}\vert \phi,x=1}\nonumber\\
    &\;+2\sqrt{\s{Var}\p{\s{W}_L^{\s{pl}}\vert \phi,x=1}\s{Var}\p{\s{W}_L^{\s{mix}}\vert \phi,x=1}},
\end{align}
and we next bound the variances $\s{W}_L^{\s{pl}}$ and $\s{W}_L^{\s{mix}}$.

\paragraph{Mixed part.} If $P\in\mathcal P_L^{\s{mix}}$, then $\s{Z}(P)$ contains at least one non-planted edge; since $\bE[\s{Z}_{ij}\vert\phi,x=1]=0$ on non-planted edges and the edges remain independent conditional on $\phi$, we have $\bE[\s{Z}(P)\vert\phi,x=1]=0$.
Hence expanding the variance,
\begin{align}
\s{Var}(\s{W}_L^{\s{mix}}\vert \phi,x=1)
=\sum_{P\in\mathcal P_L^{\s{mix}}}\bE[\s{Z}(P)^2\vert \phi,x=1]
+\sum_{\substack{P\neq P'\\ P,P'\in\mathcal P_L^{\s{mix}}}}\bE[\s{Z}(P)\s{Z}(P')\vert \phi,x=1].
\end{align}
Each diagonal term is $\bE[\s{Z}(P)^2\vert \phi,x=1]=\prod_{e\in P}\bE[\s{Z}_e^2\vert \phi,x=1]\le \max\{p(1-p),q(1-q)\}^{L}\le (1/4)^L$. For off-diagonals, independence implies $\bE[\s{Z}(P)\s{Z}(P')\vert \phi,x=1]=0$ unless \emph{every} edge that appears with multiplicity one across $P\cup P'$ is planted. In particular, any non-planted edge that appears in exactly one of $P,P'$ nullifies the expectation. Therefore, for a given $P\in\mathcal P_L^{\s{mix}}$ the number of $P'\in\mathcal P_L^{\s{mix}}$ with $\bE[\s{Z}(P)\s{Z}(P')\vert \phi,x=1]\neq 0$ is bounded by a constant depending only on $L$ (one must pick exactly the same set of non-planted edges, and there are $O_L(1)$ ways to complete to a simple path once those are fixed). Consequently,
\begin{align}
\s{Var}(\s{W}_L^{\s{mix}}\vert\phi,x=1)\leq C_L^{(1)}|\mathcal P_L|\leq C_L^{(1)}n^{L}.
\end{align}

\paragraph{Planted-only part.} Write $M\triangleq |\mathcal P_L^{\s{pl}}|\leq (k-1)_{L}=O(k^L)$.
For $P\in\mathcal P_L^{\s{pl}}$, all edges of $P$ are planted, hence $\bE[\s{Z}(P)\vert\phi,x=1]=(p-q)^L$, and $\s{Var}(\s{Z}(P)\vert \phi,x=1)=\prod_{e\in P}\bE[\s{Z}_e^2\vert \phi,x=1]-(p-q)^{2L}\le (1/4)^L$. Moreover, for $P\neq P'$, the covariance vanishes unless $P$ and $P'$ share at least one edge: if $e(P)\cap e(P')=\emptyset$, then $\s{Z}(P)$ and $\s{Z}(P')$ are independent (disjoint edge sets), so $\s{Cov}(\s{Z}(P),\s{Z}(P')\vert \phi,x=1)=0$. For a fixed $P$, the number of $P'$ sharing at least one edge with $P$ is $O_L(k^{L-1})$ (a shared edge reduces one free choice). Therefore,
\begin{align}
\s{Var}(\s{W}_L^{\s{pl}}\vert \phi,x=1)
&=\sum_{P}\s{Var}(\s{Z}(P)\vert \phi,x=1)+\sum_{\substack{P\neq P'\\ e(P)\cap e(P')\neq\emptyset}}\s{Cov}(\s{Z}(P),\s{Z}(P')\vert\phi,x=1)\\
&\leq C_L^{(2)}\p{k^L + k^{2L-1}},
\end{align}
since each covariance is $O(1)$ (uniform in $k$) and there are $O(k^{2L-1})$ overlapping pairs.

Finally, by Cauchy--Schwarz inequality,
\begin{align}
|\s{Cov}(\s{W}_L^{\s{mix}},\s{W}_L^{\s{pl}}\vert\phi,x=1)|
&\leq\sqrt{\s{Var}(\s{W}_L^{\s{mix}}\vert\phi,x=1)}\sqrt{\s{Var}(\s{W}_L^{\s{pl}}\vert\phi,x=1)}\\
&\leq\sqrt{C_L^{(1)}C_L^{(2)}}n^{L/2}\,k^{L-1/2}.
\end{align}
This is absorbed by $C_L(n^L+k^{2L-1})$. Combining the above three bounds yields \eqref{eq:var-planted-correct}.
\end{proof}

We are now in a position to define the proposed estimator. Let
\begin{align}
\mu_L^{\min}(\Gamma)&\triangleq\frac{(p-q)^L}{B_L}W_L^{\min}(\Gamma),\\
t_n&\triangleq\frac{\mu_L^{\min}(\Gamma)}{2}.
\end{align}
Define the rescaled statistic $\mathscr{Z}_L\triangleq \mathscr{X}_L/t_n$, and let $\tau_m$ be as in \eqref{eq:tau-approx2}. Define the estimator
\begin{align}
f_{L,m}(\s{Y})=\tau_m(\mathscr{Z}_L)\label{eqn:EstLStep}
\end{align}
of total degree $D=L+2m+1$. We have the following result.
\begin{lemma}\label{lem:HP-sep}
Fix $r\in(0,1/4]$. Assume
\begin{equation}\label{eq:W-ell-condition}
W_L^{\min}(\Gamma)\geq C^\star(L,p,q)\pp{n^{L/2}+k^{L-1/2}}\sqrt{\log n},
\end{equation}
where 
\begin{align}
C^\star(L,p,q)\geq \max\ppp{\frac{4c_L}{(p-q)^L},\frac{2\sqrt{C_L}}{c_L(p-q)^{L}}}.
\end{align}
Then for all large $n$,
\begin{align}
&\pr\pp{|\mathscr{Z}_L|\leq \tfrac{r}{2}\vert x=0}\ge1-\frac{1}{\log n},\\
&\pr\pp{\mathscr{Z}_L\geq r\vert x=1}\geq1-\frac{1}{\log n}.
\end{align}
\end{lemma}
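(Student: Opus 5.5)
The plan is to apply Chebyshev's inequality under each hypothesis, using Lemma~\ref{lem:null-variance}, Lemma~\ref{lem:planted-mean} and Lemma~\ref{lem:planted-second-moment}.

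Recall $\mathscr{Z}_L=\mathscr{X}_L/t_n$ with $t_n=\mu_L^{\min}(\Gamma)/2=(p-q)^LW_L^{\min}(\Gamma)/(2B_L)$. Hence the events in the statement are
\begin{align}
\{|\mathscr{Z}_L|\le r/2\}&=\{|\s{W}_L|\le \tfrac{r}{4}(p-q)^LW_L^{\min}(\Gamma)\},\\
\{\mathscr{Z}_L\ge r\}&=\{\s{W}_L\ge \tfrac{r}{2}(p-q)^LW_L^{\min}(\Gamma)\}.
\end{align}
All statements are conditional on $\phi$; the final bounds are uniform in $\phi$, so they also hold unconditionally.

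\textbf{Null case.} Conditioned on $\{\phi,x=0\}$, Lemma~\ref{lem:null-variance} gives $\bE[\s{W}_L]=0$ and $\s{Var}(\s{W}_L)\le C_Ln^L$. By Chebyshev,
\begin{align}
\pr\pp{|\s{W}_L|>\tfrac r4(p-q)^LW_L^{\min}\,\vert\,\phi,x=0}\le \frac{16\,C_Ln^L}{r^2(p-q)^{2L}(W_L^{\min})^2}.
\end{align}
Inserting \eqref{eq:W-ell-condition}, namely $(W_L^{\min})^2\ge (C^\star)^2n^L\log n$, makes this at most $16C_L/(r^2(p-q)^{2L}(C^\star)^2\log n)$. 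This is at most $1/\log n$ once $C^\star$ is large enough relative to $\sqrt{C_L}/(r(p-q)^L)$. The constants listed in the statement cover this, given that $r$ is fixed and $c_L=\sqrt{C_L}$. If they fall short by an $r$-dependent factor, I will absorb $r$ into $C^\star$; this is harmless because $r\in(0,1/4]$ is fixed.

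\textbf{Planted case.} Let $u=\phi^{-1}(1)$. By Lemma~\ref{lem:planted-mean},
\begin{align}
\bE[\s{W}_L\,\vert\,\phi,x=1]=(p-q)^LW_L(\Gamma;u)\ge (p-q)^LW_L^{\min}.
\end{align}
Since $r\le 1/4$, the event $\{\s{W}_L<\tfrac r2(p-q)^LW_L^{\min}\}$ forces $\s{W}_L-\bE\s{W}_L\le -(1-\tfrac r2)(p-q)^LW_L^{\min}\le -\tfrac12(p-q)^LW_L^{\min}$. Chebyshev together with Lemma~\ref{lem:planted-second-moment} then bounds the failure probability by
\begin{align}
\frac{4C_L(n^L+k^{2L-1})}{(p-q)^{2L}(W_L^{\min})^2}.
\end{align}
From \eqref{eq:W-ell-condition} and $(a+b)^2\ge a^2+b^2$ we get $(W_L^{\min})^2\ge (C^\star)^2(n^L+k^{2L-1})\log n$. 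The bound therefore becomes $4C_L/((p-q)^{2L}(C^\star)^2\log n)\le 1/\log n$ for $C^\star\ge 2\sqrt{C_L}/(p-q)^L$, which matches the second entry of the maximum in the statement up to the $c_L$ normalization.

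\textbf{Main obstacle.} The probabilistic content is elementary. The real work is the constant bookkeeping: the normalizations $B_L=c_Ln^{L/2}$ and $t_n$ must combine so that the two entries of $C^\star$ in the statement, possibly multiplied by a factor depending on the fixed $r$, suffice in both cases. I also need the $k^{2L-1}$ term in the planted variance to be exactly what the $k^{L-1/2}$ term in \eqref{eq:W-ell-condition} controls after squaring. Both checks are direct, and Chebyshev's $1/\log n$ rate is all the statement requires.
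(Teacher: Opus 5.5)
Your argument is the paper's: Chebyshev under each hypothesis, using Lemmas~\ref{lem:null-variance}, \ref{lem:planted-mean} and \ref{lem:planted-second-moment}. Working directly with $\s{W}_L$ and the margin $1-\tfrac r2$, instead of with $\mathscr{Z}_L$ and the paper's ``mean at least $2$, deviate below $1$'', is cosmetic. Your planted case goes through with the stated constant: $c_L=\sqrt{2[q(1-q)]^L}<1$, so $2\sqrt{C_L}/(c_L(p-q)^L)\geq 2\sqrt{C_L}/(p-q)^L$.

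You were right to suspect an $r$-dependence in the null case, but your first claim there is wrong: the listed constants do \emph{not} make Chebyshev work. At $C^\star=4c_L/(p-q)^L$ your bound equals exactly $1/(r^2\log n)\geq 16/\log n$. The paper's proof makes the same slip, since its final inequality silently drops the factor $r^{-2}$. Your fallback of absorbing $r$ into $C^\star$ proves only the variant with $C^\star\geq 4c_L/(r(p-q)^L)$, not the lemma as stated.

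To keep the $r$-free constant, use the ``for all large $n$'' slack and a fourth moment instead of a second. The first entry of $C^\star$ gives $t_n\geq 2\sqrt{\log n}$, so the null event contains $\{|\mathscr{X}_L|\leq r\sqrt{\log n}\}$. Given $\phi$, $\mathscr{X}_L$ is a degree-$L$ polynomial in independent Bernoulli edges, and (granting Lemma~\ref{lem:null-variance}) $\bE[\mathscr{X}_L^2\vert\phi,x=0]\leq 1$. Lemma~\ref{lem:HC} and Markov's inequality then give
\[
\pr\pp{|\mathscr{X}_L|>r\sqrt{\log n}\vert\phi,x=0}\leq\frac{(9/\nu_{p,q})^L}{r^4\log^2 n}\leq\frac{1}{\log n}
\]
for all large $n$, since $L,r,p,q$ are fixed.

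One further caution applies equally to the paper. Lemma~\ref{lem:null-variance} treats all edges as $\s{Bern}(q)$ given $x=0$, but the planted copy still sits on other vertices. Write $\s{W}_L=\sum_{a\neq 1}\s{Z}_{1a}S_a$, where $S_a$ sums over simple length-$(L-1)$ paths from $a$ avoiding vertex~$1$. Then $\s{Var}(\s{W}_L\vert\phi,x=0)=q(1-q)\sum_a\bE[S_a^2]\geq q(1-q)(p-q)^{2L-2}\sum_{u\in v(\Gamma)}W_{L-1}(\Gamma;u)^2$. For $L\geq 2$ and a clique this is of order $k^{2L-1}$, which exceeds $C_Ln^L$ whenever $k^{2L-1}\gg n^L$, a regime the hypothesis allows. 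The $k^{L-1/2}$ term in \eqref{eq:W-ell-condition} can absorb a corrected null variance of order $n^L+k^{2L-1}$, so the argument survives. The null variance bound you import must be re-proved, though, not cited.
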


\begin{proof}[Proof of Lemma~\ref{lem:HP-sep}]
By Lemma~\ref{lem:null-variance}, we have $\bE[\mathscr{X}_L\vert x=0]=0$ and $\s{Var}(\mathscr{X}_L\vert x=0)\le 1$. Hence
$\s{Var}(\mathscr{Z}_L\vert x=0)=\s{Var}(\mathscr{X}_L\vert x=0)/t_n^2\le 1/t_n^2$.
Chebyshev's inequality then gives
\begin{align}
\pr\pp{|\mathscr{Z}_L|\geq r/2\vert x=0}&\leq\frac{4\s{Var}(\mathscr{Z}_L)}{r^2}\\
&\leq\frac{4}{r^2t_n^2}\\
&=  \frac{16B_L^2}{r^2(p-q)^{2L}(W_L^{\min}(\Gamma))^2}\\
&\leq \frac{16B_L^2}{r^2(p-q)^{2L}[C^\star(L,p,q)]^2n^{L}\log n}\\
& = \frac{16c_L^2}{r^2(p-q)^{2L}[C^\star(L,p,q)]^2\log n}\\
&\leq\frac{1}{\log n},
\end{align}
where in the second inequality we have used \eqref{eq:W-ell-condition}, and in the last inequality we used the fact that $C^\star(L,p,q)\geq \frac{4c_L}{(p-q)^L}$. Under $x=1$ and conditioning on $\phi$ with $\phi(u)=1$, Lemma~\ref{lem:planted-mean} gives
$\bE[\mathscr{X}_L\vert \phi]=\mu_L(u)\triangleq \frac{(p-q)^L}{B_L} W_L(\Gamma;u)\ge \mu_L^{\min}(\Gamma)=2t_n$, and thus $\bE[\mathscr{Z}_L\vert \phi]\ge 2$. Lemma~\ref{lem:planted-second-moment} gives $\s{Var}\p{\s{W}_L\vert \phi,x=1}\leq C_L\p{n^{L}+k^{2L-1}}$ and so $\s{Var}(\mathscr{Z}_L\vert \phi,x=1)\leq (C_L\p{n^{L}+k^{2L-1}})/(c_L^2B_L^2t_n^2)$. Thus, Chebyshev's inequality yields
\begin{align}
\pr\pp{\mathscr{Z}_L<1\vert \phi,x=1}&\leq\frac{\s{Var}(\mathscr{Z}_L\vert \phi,x=1)}{(2-1)^2}\\
&\leq \frac{C_L\p{n^{L}+k^{2L-1}}}{c_L^2B_L^2t_n^2}\\
&= \frac{4C_L\p{n^{L}+k^{2L-1}}}{c_L^2B_L^2(\mu_L^{\min}(\Gamma))^2}\\
& = \frac{4B_L^2C_L\p{n^{L}+k^{2L-1}}}{c_L^2B_L^2(p-q)^{2L}(W_L^{\min}(\Gamma))^2}\\
&\leq \frac{4C_L\p{n^{L}+k^{2L-1}}}{c_L^2(p-q)^{2L}\pp{C^\star(L,p,q)\pp{n^{L/2}+k^{L-1/2}}\sqrt{\log n}}^2}\\
&\leq \frac{4C_L}{c_L^2(p-q)^{2L}[C^\star(L,p,q)]^2\log n}\\
&\leq\frac{1}{\log n},
\end{align}
where in the last inequality we have used the fact that $C^\star(L,p,q)\geq \sqrt{\frac{4C_L}{c_L^2(p-q)^{2L}}}$. Since $r\le 1/4$, the event $\{\mathscr{Z}_L\ge 1\}$ implies $\{\mathscr{Z}_L\ge r\}$. Averaging over $\phi$ proves the claim.
\end{proof}

\begin{theorem}[$L$-step low-degree upper bound for general $\Gamma$]\label{thm:LD-upper-ell}
Fix $L\in\mathbb{N}$ and $0<q<p<1$. Let $r\in(0,1/4]$.
Assume the rooted simple-path mass satisfies
\begin{equation}
W_L^{\min}(\Gamma)\geq C^\star(L,p,q)\pp{n^{L/2}+k^{L-1/2}}\sqrt{\log n},
\end{equation}
Consider the estimator $f_{L,m}(\s{Y})$ in \eqref{eqn:EstLStep}. Assume that $m=\omega(1)$ and $D\le C\log\log n$, for some constants $C>0$. Then
\begin{align}
\bE\pp{(f_{L,m}(\s{Y})-x)^2}\leq (\log n)^{-\Omega(1)} \xrightarrow[n\to\infty]{}0.
\end{align}
\end{theorem}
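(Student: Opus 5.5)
The proof will mirror the single-iteration argument (Theorem~\ref{thm:upperLDP}): first a high-probability separation of the statistic, then a pointwise polynomial threshold, then a transfer from high probability to mean-square via hypercontractivity. The separation is already available: under the assumed lower bound on $W_L^{\min}(\Gamma)$, Lemma~\ref{lem:HP-sep} gives, for the fixed $r\in(0,1/4]$,
\begin{align*}
\pr\pp{|\mathscr{Z}_L|\le r/2\,\vert\, x=0}\ge 1-\tfrac{1}{\log n},\qquad \pr\pp{\mathscr{Z}_L\ge r\,\vert\, x=1}\ge 1-\tfrac{1}{\log n}.
\end{align*}
Hence, off an event $\calB$ of probability at most $2/\log n$, the threshold property \eqref{eq:tau-approx2} yields $|f_{L,m}(\s{Y})-x|\le (6r)^m$. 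With $\varepsilon\triangleq(6r)^{2m}$ we have $\pr[Z^2\le\varepsilon]\ge 1-\delta$, where $Z\triangleq f_{L,m}(\s{Y})-x$ and $\delta\triangleq 2/\log n$. If $6r<1$ this gives $\varepsilon\to0$, since $m=\omega(1)$. If $r$ is close to $1/4$, so that $6r>1$, I would instead rescale $\mathscr{Z}_L$ by a further constant so that both slabs lie within distance $\Delta$ small of $\{0,1\}$ and apply Lemma~\ref{lem:poly-approx} directly. This works because Lemma~\ref{lem:HP-sep} actually gives $\mathscr{Z}_L\ge 1$ under $x=1$, leaving room for the rescaling.

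Next I control the contribution of $\calB$ through Lemma~\ref{lem:HP-to-MSE}, which uses $\bE[Z^4]\le(9/\nu_{p,q})^{D}\bE[Z^2]^2$ (Lemma~\ref{lem:HC}). Here $f_{L,m}$ has degree $D=L(2m+1)$ as a polynomial in $\s{Y}$, because $\mathscr{Z}_L$ has degree $L$. The paper records $D=L+2m+1$, but either count is $O(\log\log n)$ under the hypothesis. Lemma~\ref{lem:HP-to-MSE} gives
\begin{align*}
\bE[Z^2]\le \varepsilon+(9/\nu_{p,q})^{D/2}\sqrt{\delta}\,\bE[Z^2].
\end{align*}
Since $D\le C\log\log n$, the factor satisfies $(9/\nu_{p,q})^{D/2}\le(\log n)^{C\log(9/\nu_{p,q})/2}$. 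Provided $C$ is small enough, this is $o(\sqrt{\log n})$, so the coefficient of $\bE[Z^2]$ on the right is $o(1)$. Rearranging then gives $\bE[Z^2]\le 2\varepsilon+$ (a polynomially small correction in $\log n$), i.e. a bound of the form $(\log n)^{-\Omega(1)}$. Rearranging needs $\bE[Z^2]<\infty$, which is trivial for a polynomial of bounded entries.

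The main obstacle is the mismatch between the Chebyshev-level failure probability $\delta=O(1/\log n)$ and the hypercontractive blow-up $(9/\nu_{p,q})^{D/2}$. This is exactly why $D$ must be $O(\log\log n)$ with a small enough constant, and why the final rate is only $(\log n)^{-\Omega(1)}$ rather than exponentially small in $D$. I would make the constant explicit: choose $C<1/\log(9/\nu_{p,q})$. If that restriction is unwanted, I would sharpen $\delta$ by applying Lemma~\ref{lem:HC} to $\mathscr{Z}_L$ itself, a degree-$L$ polynomial, giving fourth-moment Chebyshev bounds. A secondary point is that $\varepsilon=(6r)^{2m}$ must itself be $(\log n)^{-\Omega(1)}$. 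So I would take $m$ of order $\log\log n$, say $m=c\log\log n$, compatible with $D\le C\log\log n$. Then $\varepsilon=(\log n)^{-2c\log(1/(6r))}$, concluding the proof.
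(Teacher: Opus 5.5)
Your argument is the paper's proof step for step: Lemma~\ref{lem:HP-sep}, then the threshold property \eqref{eq:tau-approx2}, then Lemma~\ref{lem:HP-to-MSE} with $C<1/\log(9/\nu_{p,q})$. The corrections you make along the way are right, and the paper's write-up glosses over all three:
\begin{itemize}
\item $\tau_m(\mathscr{Z}_L)$ has degree $L(2m+1)$, not $L+2m+1$;
\item $(6r)^{2m}\to 0$ requires $r<1/6$;
\item the rate $(\log n)^{-\Omega(1)}$ requires $m\asymp\log\log n$, not merely $m=\omega(1)$.
\end{itemize}
One routine repair to add: Lemma~\ref{lem:HC} needs independent coordinates, which holds only conditionally on $\phi$, and $x$ is not a polynomial in $\s{Y}$. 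So run Lemma~\ref{lem:HP-to-MSE} conditionally on $\phi$, where $x$ is a constant and the bad event has probability at most $1/\log n$, and then average over $\phi$.

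There is, however, a genuine gap on the $x=1$ side of the threshold step, and your proposed fix for $6r\ge 1$ falls into it. Lemma~\ref{lem:HP-sep} gives only the one-sided bound $\mathscr{Z}_L\ge 1$. Lemma~\ref{lem:poly-approx} controls $\tau_m$ only on $|y-1|\le\Delta$. The half-line property in \eqref{eq:tau-approx2} cannot hold for any nonconstant polynomial, since $|\tau_m(y)|\to\infty$ as $y\to\infty$. For the same reason, no constant rescaling brings the $x=1$ values within $\Delta$ of $1$: having room below $\mathscr{Z}_L$ does not help when there is no bound above it.

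This is not a technicality. The conditional mean $\bE[\mathscr{Z}_L\mid\phi,x=1]=2W_L(\Gamma;u)/W_L^{\min}(\Gamma)$ depends on the root $u$ and can diverge. For example, take $L=1$ and $k=n/2$. Let $\Gamma$ be a clique on $n/4$ vertices together with $n/4$ further vertices, each joined to $n^{0.6}$ clique vertices. The hypothesis holds, since $W_1^{\min}(\Gamma)=n^{0.6}\gg\sqrt{n\log n}$. Yet with probability about $1/4$ the root is a clique vertex, and then $\mathscr{Z}_1\ge n^{0.4}/3$ with high probability. On that event the degree-$(2m+1)$ polynomial $\tau_m$ is evaluated far outside the region where Lemma~\ref{lem:poly-approx} says anything, so no bound on $\bE[Z^2]$ follows.

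To close the argument you need two things:
\begin{itemize}
\item Two-sided concentration of $\mathscr{Z}_L$ under $x=1$. Chebyshev with Lemma~\ref{lem:planted-second-moment} supplies this, around $2W_L(\Gamma;u)/W_L^{\min}(\Gamma)$.
\item An additional regularity hypothesis, the analogue of \eqref{eq:near-regular} in Theorem~\ref{thm:upperLDP}. For instance, $\max_u W_L(\Gamma;u)\le(1+o(1))W_L^{\min}(\Gamma)$; with this, your rescaling by $1/2$ and Lemma~\ref{lem:poly-approx} do work. Alternatively, assume a bounded ratio and use a threshold polynomial designed for the resulting bounded window.
\end{itemize}
The paper's proof invokes \eqref{eq:tau-approx2} exactly as you do, so it shares this gap; the stated hypothesis on $W_L^{\min}(\Gamma)$ alone does not close it.
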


\begin{proof}[Proof of Theorem~\ref{thm:LD-upper-ell}]
By Lemma~\ref{lem:HP-sep}, with $\delta_n\triangleq 1/\log n$ we have
\begin{align}
&\pr\pp{|\mathscr{Z}_L|\leq \tfrac{r}{2}\vert x=0}\ge1-\delta_n,\\
&\pr\pp{\mathscr{Z}_L\geq r\vert x=1}\geq1-\delta_n.
\end{align}
By \eqref{eq:tau-approx2}, on $\{|\mathscr{Z}_L|\le r/2\}$ we have $|\tau_m(\mathscr{Z}_L)-0|\le (6r)^m$, and on $\{\mathscr{Z}_L\ge r\}$ we have $|\tau_m(\mathscr{Z}_L)-1|\le (6r)^m$.
Thus the pointwise squared error $(\tau_m(\mathscr{Z}_L)-x)^2$ is at most $\varepsilon_n^2\triangleq (6r)^{2m}$ on the respective ``good" events, whose complements have probability at most $\delta_n$.
Now, apply Lemma~\ref{lem:HP-to-MSE} with $\bar{f}=\tau_m(\mathscr{Z}_L)-x$, noting that $\bar{f}$ is a polynomial of total degree $D=L+2m+1$. We obtain
\begin{align}
\bE[(\tau_m(\mathscr{Z}_L)-x)^2]\leq\varepsilon_n^2+\p{\frac{9}{\nu}}^{D/2}\bE[(\tau_m(\mathscr{Z}_L)-x)^2]\sqrt{\delta_n}.
\end{align}
Choose $D\le C\log\log n$ with $C$ small enough so that $\sqrt{\delta_n}=(\log n)^{-1/2}\le \tfrac14(\nu/9)^{D/2}$ for large $n$, e.g., any $C<\frac{1}{\log(9/\nu)}$ works. Then the second term can be absorbed to the left, yielding $\bE[(\tau_m(\mathscr{Z}_L)-x)^2]\leq 2\varepsilon_n^2$. Finally, $\varepsilon_n^2=(6r)^{2m}=o(1)$ since $m=\omega(1)$, which proves the claim.
\end{proof}

\section{Conclusion and Outlook}\label{sec:Out}

This work studies the problem of exact recovery of an arbitrary planted subgraph $\Gamma_n$ in the Erd\H{o}s--R\'enyi random graph, in the dense regime where both the planting and noise edge probabilities, $p_n$ and $q_n$, are fixed independently of $n$. For this problem, we characterize the statistical limits of recovery: roughly speaking, we show that if a certain graph-theoretic quantity—termed the minimal maximum subgraph density $\mu_{\s{min}}(\Gamma_n)$—is below $\log n$, then recovery is statistically impossible, while it becomes possible (via exhaustive search) when it is above $\log n$. We then turn to the problem of recovery in polynomial-time. We first propose a general algorithm that applies to arbitrary $\Gamma_n$ and provide its statistical guarantees. Next, we derive computational lower bounds based on the low-degree polynomial framework recently developed in \cite{SchrammWein2022}. Finally, we discuss several extensions, including semi-random models and weaker notions of recovery.

We hope that our work inspires more questions than it answers. We conclude by outlining several avenues for future research:
\begin{enumerate}[leftmargin=*]
    \item Extending our results to settings in which the edge probabilities $p$ and $q$ depend on $n$, especially when they vanish or approach one (for instance, polynomially in $n$), entails technical and conceptual difficulties. Some of our techniques can be generalized to handle moderately sparse regimes (e.g., when $p$ and $q$ are as small as $n^{-\alpha}$ for certain ranges of $\alpha$), but pushing these arguments further to the general case requires new ideas.
    \item While we were able to tightly characterize the statistical limits, a gap remains between our lower and upper computational bounds. At present, the precise computational barrier for recovering an arbitrary planted subgraph is remains a mystery.
    \item Our semi-random analysis focused on an adversary limited to edge deletions outside the planted subgraph and edge additions inside it. Considering models in which the adversary is allowed to add a bounded number of edges, or to perform more general perturbations, leads to natural questions concerning robustness and reconstructability.
    \item We assume that the statistician observes the entire graph. Extending the analysis to settings where only part of the graph is observable—through adaptive or non-adaptive queries—poses an interesting and challenging problem for arbitrary planted subgraphs.
    \item It is of interest, both for the detection variant and for the recovery problem studied in this paper, to provide additional forms of evidence for the statistical–computational gaps that arise. One powerful technique is through average-case reductions. For example, can we show that the hardness of detecting or recovering an arbitrary planted subgraph can be reduced from the detection or recovery of a planted clique—given that the clique is the ``easiest" subgraph to infer?
\end{enumerate}

\section*{Acknowledgments}
Fruitful discussions with Dor Elimelech on the subgraph recovery problems and related topics are acknowledged with thanks.
\bibliographystyle{alpha}
\bibliography{bibfile}

\appendix

\section{Auxiliary Lemmata}\label{app:lemmata}

\subsection{Equivalence of worst-case and Bayes risks}\label{app:EquivBayesWorst}
This appendix establishes the equivalence between the worst-case and Bayesian error probabilities when the prior over $\Gamma^\star$ is uniform on $\calS_\Gamma$.
\begin{lemma}
\label{lem:WC=Bayes}
Fix $n\in\mathbb N$ and let $\Gamma_n$ be an arbitrary graph with no isolated vertices and at most $n$ vertices.
Denote by $\calS_{\Gamma_n}$ the set of (labelled) copies of\ $\Gamma_n$ inside the complete graph $\calK_n$. For $\Gamma^\star\in\calS_{\Gamma_n}$ write $\pr_{\Gamma^\star}$ for the distribution $\calG_{\Gamma^\star}(n,p_n,q_n)$ of the observed graph $\s{G}$. Given an estimator $\hat{\Gamma}_n:\{0,1\}^{\binom n2}\to\calS_{\Gamma_n}$, define the \emph{risk function}
\begin{align}
\s{R}_n(\hat{\Gamma}_n;\Gamma^\star)
      \triangleq
      \pr_{\Gamma^\star}
      \pp{\hat{\Gamma}_n(\s{G})\neq\Gamma^\star},
\quad
\Gamma^\star\in\calS_{\Gamma_n}.
\end{align}
Let
\begin{align}
\s E_n(\hat{\Gamma}_n)&\triangleq\sup_{\Gamma^\star\in\calS_{\Gamma_n}}
                           \s{R}_n(\hat{\Gamma}_n;\Gamma^\star),\\
\overline{\s E}_n(\hat{\Gamma}_n)&\triangleq
          \E_{\Gamma^\star\sim\mathrm{Unif}(\calS_{\Gamma_n})}
             \pp{\s{R}_n(\hat{\Gamma}_n;\Gamma^\star)}.
\end{align}
Then for every estimator $\hat{\Gamma}_n$ there exists an equivariant estimator
$\hat{\Gamma}_n^{\mathrm{eq}}$ such that 
\begin{align}
\label{eq:WC=Bayes-risk}
\s{R}_n(\hat{\Gamma}_n^{\mathrm{eq}};\Gamma^\star)
       \equiv \overline{\s E}_n(\hat{\Gamma}_n),
       ,
\end{align}
for all $\Gamma^\star\in\calS_{\Gamma_n}$, and
\begin{align}
    \s E_n(\hat{\Gamma}_n^{\mathrm{eq}})
       =\overline{\s E}_n(\hat{\Gamma}_n^{\mathrm{eq}})
        =\overline{\s E}_n(\hat{\Gamma}_n)
        \leq \s E_n(\hat{\Gamma}_n).\label{eqn:chinInEq}
\end{align}
Consequently
\begin{align}
      \inf_{\hat{\Gamma}_n:\{0,1\}^{n^{(2)}} \to \calS_{\Gamma_n}}\s E_n(\hat{\Gamma}_n)
      =
      \inf_{\hat{\Gamma}_n:\{0,1\}^{n^{(2)}} \to \calS_{\Gamma_n}}\overline{\s E}_n(\hat{\Gamma}_n),
\end{align}
for every $n\in\mathbb{N}$. That is, the uniform prior on $\calS_{\Gamma_n}$ is least–favorable, and the minimax error probability coincides with the Bayes error probability under that prior.
\end{lemma}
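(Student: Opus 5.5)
The plan is the standard symmetrization argument with the group $\mathbb{S}_n$ of vertex permutations acting on labelled graphs. For $\sigma\in\mathbb{S}_n$ and a graph $\s{G}$ on $[n]$, write $\sigma\s{G}$ for the relabelled graph; $\mathbb{S}_n$ also acts on $\calS_{\Gamma_n}$ by $\Gamma'\mapsto\sigma\Gamma'$. First I will record two facts. The action on $\calS_{\Gamma_n}$ is transitive, since every copy is the image of a fixed one under some permutation. The model is equivariant: if $\s{G}\sim\pr_{\Gamma^\star}$, then $\sigma\s{G}\sim\pr_{\sigma\Gamma^\star}$. The second fact holds because edges are independent, with probability $p_n$ on $e(\Gamma^\star)$ and $q_n$ elsewhere, and $\sigma$ maps $e(\Gamma^\star)$ bijectively onto $e(\sigma\Gamma^\star)$.

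Next I will define the randomized equivariant estimator $\hat{\Gamma}_n^{\mathrm{eq}}(\s{G})\triangleq\sigma^{-1}\hat{\Gamma}_n(\sigma\s{G})$, where $\sigma\sim\s{Unif}(\mathbb{S}_n)$ is drawn independently of $\s{G}$. Randomized estimators are allowed: they can be absorbed into the infimum, since any randomized rule's risk is an average of the risks of deterministic rules, and hence one deterministic rule does at least as well for the Bayes risk. For fixed $\Gamma^\star$ I will compute
\begin{align}
\s{R}_n(\hat{\Gamma}_n^{\mathrm{eq}};\Gamma^\star)=\frac{1}{n!}\sum_{\sigma}\pr_{\Gamma^\star}\pp{\hat{\Gamma}_n(\sigma\s{G})\neq\sigma\Gamma^\star}=\frac{1}{n!}\sum_{\sigma}\s{R}_n(\hat{\Gamma}_n;\sigma\Gamma^\star),
\end{align}
using equivariance of the model. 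The map $\sigma\mapsto\sigma\Gamma^\star$ pushes the uniform measure on $\mathbb{S}_n$ to the uniform measure on $\calS_{\Gamma_n}$. This holds because every orbit-map fibre is a coset of the stabilizer, so all fibres have equal size, and transitivity ensures the image is all of $\calS_{\Gamma_n}$. Hence the right-hand side equals $\overline{\s E}_n(\hat{\Gamma}_n)$ for every $\Gamma^\star$, which gives \eqref{eq:WC=Bayes-risk}.

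The chain \eqref{eqn:chinInEq} then follows directly. A constant risk function has supremum equal to average, so $\s E_n(\hat{\Gamma}_n^{\mathrm{eq}})=\overline{\s E}_n(\hat{\Gamma}_n^{\mathrm{eq}})=\overline{\s E}_n(\hat{\Gamma}_n)$, and an average never exceeds a supremum. For the final equality of infima, the Bayes infimum is at most the minimax infimum because average $\le$ sup for every estimator. Conversely, for any $\hat{\Gamma}_n$ the chain gives an estimator with worst-case risk $\overline{\s E}_n(\hat{\Gamma}_n)$, so the minimax infimum is at most the Bayes infimum.

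The only delicate point is randomization. If one insists on deterministic estimators, I would instead argue that the Bayes-optimal (MAP) rule can be chosen equivariant. To do this, pick the tie-breaking rule by fixing an orbit representative for each graph and transporting the choice along the orbit. When $\s{G}$ has nontrivial automorphisms, the argmax set is invariant under $\s{Aut}(\s{G})$, and the choice still has to be made consistently. I expect this consistency check to be the main obstacle; the simplest way around it is to allow randomized tie-breaking, which does not change any risk.
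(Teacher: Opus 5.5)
Your argument is the same symmetrization the paper uses. You conjugate $\hat{\Gamma}_n$ by an independent uniform relabeling, use $\sigma\s{G}\sim\pr_{\sigma\Gamma^\star}$ to show the resulting risk is constant and equal to $\overline{\s{E}}_n(\hat{\Gamma}_n)$, and then read off the chain and the equality of infima. You are more careful than the paper at one step: the paper only asserts that $\sigma\mapsto\sigma\Gamma^\star$ pushes the uniform measure on $\mathbb{S}_n$ to the uniform measure on $\calS_{\Gamma_n}$, and you justify it.

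The point to correct is randomization. Your remark that randomized rules can be absorbed into the infimum is valid for the Bayes risk, not for the worst-case risk. So the inequality $\inf\s{E}_n\le\inf\overline{\s{E}}_n$ that you derive holds only when the minimax infimum ranges over randomized rules.

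Your deterministic fallback cannot be repaired. For $\s{G}=\emptyset$ one has $\s{Aut}(\s{G})=\mathbb{S}_n$, so a deterministic equivariant rule would need an $\mathbb{S}_n$-invariant copy of $\Gamma_n$. Such a copy exists only if $\Gamma_n=\calK_n$, where the lemma is trivial.

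More seriously, the final equality is false if both infima range over maps $\{0,1\}^{\binom n2}\to\calS_{\Gamma_n}$. Take $n=3$ and $\Gamma_n$ a single edge. A deterministic rule whose worst-case risk equals the optimal Bayes risk must be a MAP rule whose risks on the three edges are equal. Enumerate the possible tie-breaks on the empty graph, the full graph and the three two-edge graphs. Equal risks would require $(p,q)=(1/3,2/3)$ or $(1/3,1/2)$, which is impossible when $q<p$. So the deterministic minimax value is strictly larger than the Bayes value.

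The lemma must therefore be read with randomized estimators on the minimax side. This is also what the paper's proof does implicitly, since its $\hat{\Gamma}_n^{\mathrm{eq}}$ is randomized. For the paper's asymptotic statements the distinction is harmless. Outputting the mode of a randomized rule with worst-case error $\epsilon$ gives a deterministic rule with worst-case error at most $2\epsilon$.
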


%--------------------------------------------------------------------
\begin{proof}
Let $\mathbb{S}_n$ be the permutation group on the vertex set $[n]=\{1,\dots,n\}$.
For $\pi\in \mathbb{S}_n$ and a graph $H$ on $[n]$, define
$\pi\circ H$ to be the graph whose adjacency indicator satisfies
\begin{align}
(\pi \circ H)_{ij}=H_{\pi^{-1}(i),\pi^{-1}(j)},\qquad 1\leq i<j\leq n.
\end{align}
The action extends to subgraphs: $\pi \circ\Gamma^\star\triangleq\pi(\Gamma^\star)$
for $\Gamma^\star\in\calS_{\Gamma_n}$. Because the generative rule depends \emph{only} on (i)~which edges belong to $\Gamma^\star$ and (ii)~the probabilities $(p_n,q_n)$, for every $\pi\in \mathbb{S}_n$ and every measurable $A\subseteq\{0,1\}^{\binom n2}$ 
\begin{align}
\pr_{\Gamma^\star}\pp{\s G\in A}
=\pr_{\pi \circ\Gamma^\star}
        \pp{\s G\in\pi  \circ A},
\label{eq:invariance}
\end{align}
where $\pi  \circ A=\{\pi  \circ H:H\in A\}$. Also, note that the indicator loss
$\Ind\{\hat{\Gamma}_n(\s G)\neq\Gamma^\star\}$ is invariant in the sense that
\begin{align}
\Ind\{\hat{\Gamma}_n(\s G)\neq\Gamma^\star\}
=\Ind\bigl\{\pi  \circ \hat{\Gamma}_n(\pi  \circ \s G)
          \neq\pi  \circ \Gamma^\star\bigr\}.
\label{eq:loss-invariance}
\end{align}
Now, given an arbitrary estimator $\hat{\Gamma}_n$, define a \emph{randomized} rule
\begin{align}
\hat{\Gamma}_n^{\mathrm{eq}}(\s G)
   \triangleq\Pi^{-1}  \circ 
     \hat{\Gamma}_n(\Pi  \circ \s G),
\end{align}
where
$\Pi\sim\mathrm{Unif}(\mathbb{S}_n)$ is an independent, auxiliary random permutation.
For any fixed $\pi\in \mathbb{S}_n$,
\begin{align}
\hat{\Gamma}_n^{\mathrm{eq}}(\pi  \circ \s G)
   =\Pi^{-1}  \circ \hat{\Gamma}_n(\Pi\pi  \circ \s G)
   \,{\buildrel d\over=}\,
   \pi  \circ \bigl[\Pi^{-1}  \circ \hat{\Gamma}_n(\Pi  \circ \s G)\bigr]
   =\pi  \circ \hat{\Gamma}_n^{\mathrm{eq}}(\s G),
\end{align}
hence the rule is \emph{equivariant}. Furthermore, for any $\Gamma^\star\in\calS_{\Gamma_n}$,
\begin{align}
\s{R}_n(\hat{\Gamma}_n^{\mathrm{eq}};\Gamma^\star)
&=\E_{\Gamma^\star}
   \pp{\Ind\bigl\{\Pi^{-1}  \circ \hat{\Gamma}_n(\Pi  \circ \s G)
               \neq\Gamma^\star\bigr\}}\\
&=\frac1{|\mathbb{S}_n|}
   \sum_{\pi\in \mathbb{S}_n}
   \E_{\Gamma^\star}
      \pp{\Ind \bigl\{\hat{\Gamma}_n(\pi  \circ \s G)
                      \neq\pi  \circ \Gamma^\star\bigr\}}\\
&=
   \frac1{|\mathbb{S}_n|}
   \sum_{\pi\in \mathbb{S}_n}
   \E_{\pi \circ\Gamma^\star}
      \pp{\Ind \bigl\{\hat{\Gamma}_n(\s G)
                      \neq\pi  \circ \Gamma^\star\bigr\}}\\
&=\E_{\widetilde{\Gamma}\sim\mathrm{Unif}(\calS_{\Gamma_n})}
      \s{R}_n(\hat{\Gamma}_n;\widetilde{\Gamma})
   =\overline{\s E}_n(\hat{\Gamma}_n),
\end{align}
which is independent of $\Gamma^\star$, and the third equality follows from \eqref{eq:invariance}--\eqref{eq:loss-invariance}. This proves \eqref{eq:WC=Bayes-risk}. Next, we compare the worst-case and Bayes risks. Because the equivariant rule has equal risk \emph{everywhere}, its worst–case and average risks coincide:
$\s E_n(\hat{\Gamma}_n^{\mathrm{eq}})  =\overline{\s E}_n(\hat{\Gamma}_n^{\mathrm{eq}})$. Moreover, Jensen's inequality applied to the averaging in Step~3 gives $\overline{\s E}_n(\hat{\Gamma}_n^{\mathrm{eq}})\leq\overline{\s E}_n(\hat{\Gamma}_n)$, while by definition $\s E_n(\hat{\Gamma}_n^{\mathrm{eq}})  \leq\s E_n(\hat{\Gamma}_n)$. This yields the chain of equalities and inequalities in \eqref{eqn:chinInEq}. Finally, taking the infimum over \emph{all} estimators on both sides of
\eqref{eq:WC=Bayes-risk} gives
\begin{align}
\inf_{\hat{\Gamma}_n:\{0,1\}^{n^{(2)}} \to \calS_{\Gamma_n}}\s E_n(\hat{\Gamma}_n)
  =\inf_{\hat{\Gamma}_n:\{0,1\}^{n^{(2)}} \to \calS_{\Gamma_n}}\overline{\s E}_n(\hat{\Gamma}_n),
\end{align}
so the uniform prior on $\calS_{\Gamma_n}$ achieves the minimax value.
\end{proof}

\subsection{Proof of Lemma~\ref{lem:onionUnique}}\label{app:onionUn}

Fix a step $\ell$ and let $\Gamma^{(\ell-1)}$ denote the graph remaining at the beginning of that step. Recall that for any subgraph $\Gamma^{(\ell-1)}\subsetneq\mathcal{K}$, we define
\begin{align}
  \eta(\mathcal{K}\vert\Gamma^{(\ell-1)})
  \triangleq\frac{|\mathcal{K}\setminus\Gamma^{(\ell-1)}|}
        {|v(\mathcal{K})\setminus v(\Gamma^{(\ell-1)})|}.
\end{align}
By construction, $\Gamma^{(\ell)}$ is a \emph{maximal} subgraph that attains the
maximum value
\begin{align}
  \eta^\star=
  \max_{\Gamma^{(\ell-1)}\subsetneq\mathcal{K}}
       \eta(\mathcal{K}\vert\Gamma^{(\ell-1)}).
\end{align}
Assume, toward a contradiction, that two distinct maximisers $\calA,\calB\subsetneq\Gamma^{(\ell-1)}$ exist, i.e.,
\begin{align}
  \eta(\calA\vert\Gamma^{(\ell-1)})
  =\eta(\calB\vert\Gamma^{(\ell-1)})
  =\eta^\star,
\end{align}
with $\calA\neq\calB$. Let $\mathcal{C}\triangleq\calA\cup\calB$ and define
\begin{align}
  e_{\calX}&\triangleq|\mathcal{X}\setminus\Gamma^{(\ell-1)}|,\\
  v_{\calX}&\triangleq|v(\mathcal{X})\setminus v(\Gamma^{(\ell-1)})|,
\end{align}
for $\calX\in\{\calA,\calB,\calC\}.$ Then $e_{\calC}=e_{\calA}+e_{\calB}-\delta$ and $v_{\calC}=v_{\calA}+v_{\calB}-\gamma$, where $\delta,\gamma\geq  0$ count the edges and vertices already shared by $\calA$ and $\calB$ outside~$\Gamma^{(\ell-1)}$. Because $\eta(\calA\vert\Gamma^{(\ell-1)})=\eta(\calB\vert\Gamma^{(\ell-1)})=\eta^\star$, we have
$e_{\calA}/v_{\calA} = e_{\calB}/v_{\calB} = \eta^\star$. Notice that by the maximiality of $\eta^\star$, we have $\frac{\delta}{\gamma}\leq\eta^\star$, and consequently
\begin{align}
  \eta(\mathcal{C}\vert\Gamma^{(\ell-1)})
  =\frac{e_{\calA}+e_{\calB}-\delta}{v_{\calA}+v_{\calB}-\gamma}
  \ge
  \frac{e_{\calA}+e_{\calB}}{v_{\calA}+v_{\calB}}
  = \eta^\star,
\end{align}
with equality only if $\frac{\delta}{\gamma}=\eta^\star$. Now, if the inequality is strict, namely, $\eta(\mathcal{C}\vert\Gamma^{(\ell-1)})>\eta^\star$, then this contradicts the optimality of $\eta^\star$. If equality holds, then $\mathcal{C}\supsetneq\calA,\calB$ attains the same maximal density, contradicting the \emph{maximality} of $\Gamma{^(\ell)}$ as $\calA\neq\calB$. Thus, no two distinct maximizers coexist.

\subsection{Proof of Lemma~\ref{lem:equivminimalMSD}}\label{app:equiv}

Recall that for any $\s{J}\subseteq\Gamma$
\begin{align}
%\mu(S)\;\equiv\;
\mu(\Gamma\vert\s{J}) \triangleq \max_{\s{J}\subsetneq \s{F}\subseteq \Gamma}\ \eta(\s{F}\vert\s{J}),
\qquad S\subseteq\Gamma .
\end{align}
Denote the onion decomposition of $\Gamma$ by
\begin{align}
\emptyset=\Gamma^{(0)}\subsetneq\Gamma^{(1)}\subsetneq\cdots\subsetneq \Gamma^{(M)}=\Gamma,
\end{align}
and set
\begin{align}
d_\ell \triangleq \eta \left(\Gamma^{(\ell)}\vert\Gamma^{(\ell-1)}\right)=\mu(\Gamma\vert\Gamma^{(\ell-1)}),
\end{align}
for $\ell=1,\dots,M$. We proceed through four simple lemmas. All subgraphs are understood to be subgraphs of $\Gamma$.
\begin{lemma}\label{lem:APP1}
    If $\s{J}\subseteq \s{J}'$ then $\mu(\Gamma\vert\s{J})\ge \mu(\mu(\Gamma\vert\s{J}')$.
\end{lemma}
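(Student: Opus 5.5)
The natural approach is a \emph{mediant argument}. Fix $\s{J}\subseteq\s{J}'\subseteq\Gamma$, and let $\s{F}'$ with $\s{J}'\subsetneq\s{F}'\subseteq\Gamma$ attain $\mu(\Gamma\vert\s{J}')=\eta(\s{F}'\vert\s{J}')$. Since $\s{J}\subseteq\s{J}'\subsetneq\s{F}'$, the graph $\s{F}'$ is a feasible competitor in the maximization that defines $\mu(\Gamma\vert\s{J})$. So it suffices to show $\eta(\s{F}'\vert\s{J})\geq\eta(\s{F}'\vert\s{J}')$.

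Write $a=|\s{F}'|-|\s{J}'|$, $b=|v(\s{F}')\setminus v(\s{J}')|$, $c=|\s{J}'|-|\s{J}|$ and $d=|v(\s{J}')\setminus v(\s{J})|$. The differences telescope because $v(\s{J})\subseteq v(\s{J}')\subseteq v(\s{F}')$, so
\begin{align}
\eta(\s{F}'\vert\s{J})=\frac{a+c}{b+d},
\end{align}
which is the mediant of $a/b=\mu(\Gamma\vert\s{J}')$ and $c/d=\eta(\s{J}'\vert\s{J})$. The mediant always lies between the two fractions. Hence the inequality follows at once whenever $\eta(\s{J}'\vert\s{J})\geq\mu(\Gamma\vert\s{J}')$.

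The degenerate cases are handled by the conventions of Definition~\ref{def:relative}:
\begin{itemize}
\item If $b=0$ but $a>0$, then $\mu(\Gamma\vert\s{J}')=\infty$. In that case the witness $\s{F}'$ adds edges on the vertex set of $\s{J}$ or $\s{J}'$, and should also give $\mu(\Gamma\vert\s{J})=\infty$ through $\s{F}=\s{J}\cup(\s{F}'\setminus\s{J}')$.
\item If $d=0$ but $c>0$, the numerator gains edges at no vertex cost, so the mediant only increases.
\end{itemize}

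The main obstacle is the remaining case $\eta(\s{J}'\vert\s{J})<\mu(\Gamma\vert\s{J}')$, where the mediant drops below $a/b$. Here I would first try other competitors for $\mu(\Gamma\vert\s{J})$, such as $\s{F}=\s{J}\cup(\s{F}'\setminus\s{J}')$, which drops the cheap part $\s{J}'\setminus\s{J}$. This fails, however, when the new edges of $\s{F}'$ attach to vertices of $v(\s{J}')\setminus v(\s{J})$: those vertices must then be paid for again.

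A star $K_{1,m}$ with $\s{J}=\emptyset$ and $\s{J}'$ a single edge shows the difficulty is genuine. There $\mu(\Gamma\vert\emptyset)=m/(m+1)<1=\mu(\Gamma\vert\s{J}')$. So the unconditional inequality cannot hold, and the lemma has to be read with the hypothesis under which it is used in the proof of Lemma~\ref{lem:equivminimalMSD}. That is, $\s{J}'$ should be \emph{closed} relative to $\s{J}$, meaning $\eta(\s{J}'\vert\s{J})\geq\mu(\Gamma\vert\s{J}')$. The typical instance is $\s{J}=\Gamma^{(\ell-1)}$ and $\s{J}'=\Gamma^{(\ell)}$. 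There the hypothesis is exactly the strict decrease $d_{\ell}>d_{\ell+1}$. I would prove that decrease first, via maximality of $\Gamma^{(\ell)}$ and the mediant trick applied to $\Gamma^{(\ell+1)}$, and then apply the argument above.
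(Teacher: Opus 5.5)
Your diagnosis is correct, and it is the paper's proof that fails. The paper argues only that the feasible set $\{\s{F}:\s{J}\subsetneq\s{F}\subseteq\Gamma\}$ contains $\{\s{F}:\s{J}'\subsetneq\s{F}\subseteq\Gamma\}$. That argument overlooks that the objective itself changes with the seed. On the common feasible set, $\eta(\s{F}\vert\s{J})$ is the mediant of $\eta(\s{F}\vert\s{J}')$ and $\eta(\s{J}'\vert\s{J})$, exactly as you computed, and it can be smaller.

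Your star is a valid counterexample. In $\Gamma=K_{1,m}$ with $m\ge 2$, every nonempty subgraph has $j$ edges on $j+1$ vertices, so $\mu(\Gamma\vert\emptyset)=m/(m+1)$. Every $\s{F}\supsetneq\s{J}'$, with $\s{J}'$ a single edge, adds exactly one new vertex per new edge, so $\mu(\Gamma\vert\s{J}')=1$. Lemma~\ref{lem:APP1} is therefore false and cannot be proved.

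Two of your side claims need correcting.

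\textbf{The case $b=0$.} The inequality does not survive here. Your competitor $\s{J}\cup(\s{F}'\setminus\s{J}')$ must pay for the vertices in $v(\s{J}')\setminus v(\s{J})$. Concretely, take $\Gamma=K_3$, $\s{J}=\emptyset$, and $\s{J}'$ a two-edge path. Then $\mu(\Gamma\vert\s{J}')=1/0=\infty>1=\mu(\Gamma\vert\emptyset)$.

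\textbf{The closure-restricted version.} It is correct by the mediant inequality, but it repairs nothing.
\begin{itemize}
\item At $\s{J}=\Gamma^{(\ell-1)}$, $\s{J}'=\Gamma^{(\ell)}$, its hypothesis and its conclusion are the same inequality $d_\ell\ge d_{\ell+1}$.
\item The paper actually invokes Lemma~\ref{lem:APP1} inside Lemma~\ref{lem:APP4}. There it compares $\Gamma^{(\ell-1)}$ with an \emph{arbitrary} intermediate $\s{J}\subseteq\Gamma^{(\ell)}$. Closure fails in that setting, and the conclusion is false: for the star, $M=1$, yet taking $\s{J}$ to be one edge gives $\mu(\Gamma\vert\s{J})=1>d_1$.
\item Consequently Lemma~\ref{lem:APP4} is false, and so are the claims $T=M$ and $\lambda_\ell=d_\ell$ in Lemma~\ref{lem:equivminimalMSD}. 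For the star, $\Lambda(K_{1,m})=\{m/(m+1),1\}$ while $M=1$.
\end{itemize}

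What survives is what the main theorems need: Lemma~\ref{lem:APP3}, whose proof does not use Lemma~\ref{lem:APP1}, and the identity $\mu_{\s{min}}(\Gamma)=d_M$. The latter follows from a one-sided \emph{lower} bound, proved with your mediant trick.
\begin{itemize}
\item Fix $\s{J}\subsetneq\Gamma$. Let $\ell$ be the least index with $\Gamma^{(\ell)}\not\subseteq\s{J}$, and set $\s{K}=\s{J}\cap\Gamma^{(\ell)}$, so $\s{K}\supseteq\Gamma^{(\ell-1)}$.
\item The numerators of $\eta(\s{J}\cup\Gamma^{(\ell)}\vert\s{J})$ and $\eta(\Gamma^{(\ell)}\vert\s{K})$ agree, and $v(\s{K})\subseteq v(\s{J})$. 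Hence $\eta(\s{J}\cup\Gamma^{(\ell)}\vert\s{J})\ge\eta(\Gamma^{(\ell)}\vert\s{K})$.
\item If $\s{K}=\Gamma^{(\ell-1)}$, then $\eta(\Gamma^{(\ell)}\vert\s{K})=d_\ell$ trivially. Otherwise $d_\ell$ is the mediant of $\eta(\s{K}\vert\Gamma^{(\ell-1)})$ and $\eta(\Gamma^{(\ell)}\vert\s{K})$. Since $\eta(\s{K}\vert\Gamma^{(\ell-1)})\le d_\ell$, it follows that $\eta(\Gamma^{(\ell)}\vert\s{K})\ge d_\ell$.
\item Therefore $\mu(\Gamma\vert\s{J})\ge d_\ell\ge d_M$, with equality at $\s{J}=\Gamma^{(M-1)}$.
\end{itemize}
It is the reverse bound $\mu(\Gamma\vert\s{J})\le d_\ell$, asserted in Lemma~\ref{lem:APP4}, that fails.
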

\begin{proof}[Proof of Lemma~\ref{lem:APP1}]
The feasible set $\{\s{F}:\s{J}\subsetneq \s{F}\subseteq \Gamma\}$ contains $\{\s{F}:\s{J}'\subsetneq \s{F}\subseteq \Gamma\}$. Maximizing over a larger set cannot give a smaller value.
\end{proof}
\begin{lemma}\label{lem:APP2}
Fix $\s{J}$. Let $\mathcal{F}_{\s{J}}\triangleq\{\s{F}:\s{J}\subsetneq \s{F}\subseteq\Gamma,\ \eta(\s{F}\vert \s{J})=\mu(\Gamma\vert\s{J})\}$. If $\s{F}_1,\s{F}_2\in\mathcal{F}_{\s{J}}$ then $\s{F}_1\cup \s{F}_2\in\mathcal{F}_{\s{J}}$. Consequently, $\mathcal{F}_{\s{J}}$ has a unique inclusion-wise maximal element (its union).
\end{lemma}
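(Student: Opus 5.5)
The plan is a supermodularity (union-closure) argument on the cut counts. Write $\mu\triangleq\mu(\Gamma\vert\s{J})$, and for any $\s{F}\supsetneq\s{J}$ set
\begin{align}
e(\s{F})\triangleq|\s{F}|-|\s{J}|,\qquad v(\s{F})\triangleq|v(\s{F})\setminus v(\s{J})|.
\end{align}
Then $\s{F}\in\mathcal{F}_{\s{J}}$ means $e(\s{F})=\mu\, v(\s{F})$. By definition of $\mu$, every $\s{J}\subsetneq\s{F}'\subseteq\Gamma$ satisfies $e(\s{F}')\le\mu\, v(\s{F}')$, with the usual convention for $v(\s{F}')=0$. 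If $v(\s{F}')=0$ and $e(\s{F}')>0$, the density is $+\infty$. In that case $\mu=\infty$ and the statement degenerates: the family consists of all $\s{F}$ adding edges only among $v(\s{J})$ or with infinite ratio, and it is plainly closed under unions. So I will treat $\mu<\infty$, where any $\s{F}'\supsetneq\s{J}$ with $v(\s{F}')=0$ must have $e(\s{F}')=0$.

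The key step is the pair of inclusion--exclusion inequalities for $\s{F}_1\cup\s{F}_2$ and $\s{F}_1\cap\s{F}_2$ (intersection taken as edge sets, with the vertices they span). For edges, exact inclusion--exclusion gives
\begin{align}
e(\s{F}_1\cup\s{F}_2)+e(\s{F}_1\cap\s{F}_2)=e(\s{F}_1)+e(\s{F}_2).
\end{align}
For vertices we only get
\begin{align}
v(\s{F}_1\cup\s{F}_2)+v(\s{F}_1\cap\s{F}_2)\le v(\s{F}_1)+v(\s{F}_2),
\end{align}
because a vertex common to both $v(\s{F}_1)$ and $v(\s{F}_2)$ may fail to lie in $v(\s{F}_1\cap\s{F}_2)$. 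The intersection still contains $\s{J}$.

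There are two cases for the intersection.

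\emph{Case $\s{F}_1\cap\s{F}_2\supsetneq\s{J}$.} Maximality of $\mu$ gives $e(\s{F}_1\cap\s{F}_2)\le\mu\,v(\s{F}_1\cap\s{F}_2)$. Hence
\begin{align}
e(\s{F}_1\cup\s{F}_2)\ge \mu\big(v(\s{F}_1)+v(\s{F}_2)-v(\s{F}_1\cap\s{F}_2)\big)\ge \mu\, v(\s{F}_1\cup\s{F}_2).
\end{align}

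\emph{Case $\s{F}_1\cap\s{F}_2=\s{J}$.} The intersection term vanishes, and the same inequality follows directly from the vertex inequality.

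In both cases $\eta(\s{F}_1\cup\s{F}_2\vert\s{J})\ge\mu$. Since $\s{F}_1\cup\s{F}_2\supsetneq\s{J}$, the reverse inequality holds by definition of $\mu$, so equality holds and $\s{F}_1\cup\s{F}_2\in\mathcal{F}_{\s{J}}$.

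The consequence then follows quickly. $\Gamma$ is finite and $\mathcal{F}_{\s{J}}$ is nonempty, since the max is attained. Iterating the closure property shows the union of all members lies in $\mathcal{F}_{\s{J}}$. That union contains every member, so it is the unique inclusion-wise maximal element.

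The main obstacle is the vertex-count inequality together with the degenerate-denominator bookkeeping. I need to check that $v(\cdot)$ counts exactly the new vertices spanned by the new edges, so that the intersection's vertex set is contained in both. I also need the $v=0$ edge cases handled consistently with the $\infty$ convention of Definition~\ref{def:relative}. These are the points where a careless argument like the one in Appendix~\ref{app:onionUn} (the $\delta/\gamma$ step) could slip.
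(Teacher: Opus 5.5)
Your proposal is correct and follows essentially the same inclusion--exclusion argument as the paper: you bound the intersection term using maximality of $\mu(\Gamma\vert\s{J})$ and conclude that the union attains the maximum. Your version is slightly more careful than the paper's. The paper writes the vertex count as an exact identity, whereas only your inequality $v(\s{F}_1\cup\s{F}_2)+v(\s{F}_1\cap\s{F}_2)\le v(\s{F}_1)+v(\s{F}_2)$ holds in general (and it points the right way); the paper also does not treat the cases $\s{F}_1\cap\s{F}_2=\s{J}$ and $\mu=\infty$ separately.
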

\begin{proof}[Proof of Lemma~\ref{lem:APP2}]
With $d\triangleq\mu(\Gamma\vert\s{J})$, set $\s{U}\triangleq\s{F}_1\cup \s{F}_2$, $\s{I}\triangleq \s{F}_1\cap \s{F}_2$. Then
\begin{align}
|e(\s{U}\vert \s{J})|-d\cdot|v(\s{U}\vert \s{J})|&=\bigl[|e(\s{F}_1\vert\s{J})|-d\cdot|v(\s{F}_1\vert \s{J})|\bigr]+\bigl[|e(\s{F}_2\vert\s{J})|-d\cdot|v(\s{F}_2\vert \s{J})|\bigr]\nonumber\\
&\qquad-\bigl[|e(\s{I}\vert \s{J})|-d\cdot|v(\s{I}\vert\s{J})|\bigr]\ge 0,
\end{align}
because $\s{I}$ is feasible and hence $\eta(\s{I}\vert\s{J})\leq d$. Thus $\eta(\s{U}\vert\s{J})\ge d$, and since $d$ is maximal, equality holds.
\end{proof}
\begin{lemma}\label{lem:APP3}
Recall that $d_\ell \triangleq \eta \left(\Gamma^{(\ell)}\vert\Gamma^{(\ell-1)}\right)$, for $\ell=1,\ldots,M$. Then, $d_1>d_2>\cdots>d_M$. 
\end{lemma}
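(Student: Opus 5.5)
We must show $d_\ell > d_{\ell+1}$ for each $\ell=1,\dots,M-1$. Write $\s{J}\triangleq\Gamma^{(\ell-1)}$, $\s{A}\triangleq\Gamma^{(\ell)}$, $d\triangleq d_\ell$. By Lemma~\ref{lem:APP2}, $\s{A}$ is the unique inclusion-maximal element of $\mathcal{F}_{\s{J}}$. Take any $\s{F}$ with $\s{A}\subsetneq\s{F}\subseteq\Gamma$; it suffices to show $\eta(\s{F}\vert\s{A})<d$, since then $d_{\ell+1}=\max_{\s{F}}\eta(\s{F}\vert\s{A})<d$ (the maximum is over a finite nonempty set, nonempty because $\s{A}\neq\Gamma$ when $\ell<M$).

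The key step is additivity of the excess functional $\Phi_{\s{J}}(\s{F})\triangleq|e(\s{F})\setminus e(\s{J})|-d\cdot|v(\s{F})\setminus v(\s{J})|$ along the chain $\s{J}\subsetneq\s{A}\subsetneq\s{F}$. Since $e(\s{J})\subseteq e(\s{A})\subseteq e(\s{F})$ and likewise for vertex sets, the differences telescope:
\begin{align}
|e(\s{F})\setminus e(\s{J})|&=|e(\s{F})\setminus e(\s{A})|+|e(\s{A})\setminus e(\s{J})|,\\
|v(\s{F})\setminus v(\s{J})|&=|v(\s{F})\setminus v(\s{A})|+|v(\s{A})\setminus v(\s{J})|.
\end{align}
Hence $\Phi_{\s{J}}(\s{F})=\Phi_{\s{A}}(\s{F})+\Phi_{\s{J}}(\s{A})$, where $\Phi_{\s{A}}(\s{F})\triangleq|e(\s{F})\setminus e(\s{A})|-d\,|v(\s{F})\setminus v(\s{A})|$. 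Because $\eta(\s{A}\vert\s{J})=d$, we have $\Phi_{\s{J}}(\s{A})=0$, and because $\s{F}$ is feasible for $\s{J}$ with $\mu(\Gamma\vert\s{J})=d$, we have $\Phi_{\s{J}}(\s{F})\le 0$. Therefore $\Phi_{\s{A}}(\s{F})\le 0$.

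Next I rule out equality. If $\Phi_{\s{A}}(\s{F})=0$, then $\Phi_{\s{J}}(\s{F})=0$. Assuming $v(\s{F})\setminus v(\s{J})\neq\emptyset$, this gives $\eta(\s{F}\vert\s{J})=d$, so $\s{F}\in\mathcal{F}_{\s{J}}$, contradicting the maximality of $\s{A}$ since $\s{F}\supsetneq\s{A}$. Thus $\Phi_{\s{A}}(\s{F})<0$.

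The main obstacle is converting $\Phi_{\s{A}}(\s{F})<0$ into $\eta(\s{F}\vert\s{A})<d$ when the denominator $|v(\s{F})\setminus v(\s{A})|$ is zero, i.e., $\s{F}$ adds edges among old vertices only. In that case the convention $\eta=\infty$ applies, while $\Phi_{\s{A}}(\s{F})=|e(\s{F})\setminus e(\s{A})|>0$, contradicting $\Phi_{\s{A}}(\s{F})\le 0$. So such $\s{F}$ cannot occur: edge-maximality of $\s{A}$ forces every extension to add a vertex. (The same reasoning with $\s{J}$ in place of $\s{A}$ ensures the denominators for $\s{J}$ are positive.) With a positive denominator, dividing $\Phi_{\s{A}}(\s{F})<0$ by $|v(\s{F})\setminus v(\s{A})|$ gives $\eta(\s{F}\vert\s{A})<d$. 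Maximizing over $\s{F}$ yields $d_{\ell+1}<d_\ell$, and induction on $\ell$ gives the full strict chain.
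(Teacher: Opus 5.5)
Your proof is correct and follows essentially the same route as the paper. Both arguments split the relative density along $\Gamma^{(\ell-1)}\subsetneq\Gamma^{(\ell)}\subsetneq\s{F}$, use optimality of $d_\ell$ to exclude a strict excess, and use the onion step's maximality of $\Gamma^{(\ell)}$ to exclude a tie; the paper phrases this as a contradiction via the mediant inequality, while you phrase it directly through the linear excess $\Phi$, and you also explicitly (and correctly) rule out extensions that add no new vertex, a zero-denominator detail the paper leaves implicit.
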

\begin{proof}[Proof of Lemma~\ref{lem:APP3}]
    Suppose $d_{\ell+1}\ge d_\ell$. Let $\s{F}$ maximize $\mu(\Gamma\vert\Gamma^{(\ell)})$, so $\eta(\s{F}\vert\Gamma^{(\ell)})=d_{\ell+1}\ge d_\ell$. Then
\begin{align}
\eta(\s{F}\vert\Gamma^{(\ell-1)})=\frac{|e(\Gamma^{(\ell)}\vert\Gamma^{(\ell-1)})|+|e(\s{F}\vert\Gamma^{(\ell)})|}{|v(\Gamma^{(\ell)}\vert\Gamma^{(\ell-1)})|+|v(\s{F}\vert\Gamma^{(\ell)})|}\ \ge\ d_\ell.
\end{align}
If the inequality is strict, this contradicts maximality of $d_\ell=\mu(\Gamma\vert\Gamma^{(\ell-1)})$. If equality holds, then $\s{F}$ is also a maximizer for seed $\Gamma^{(\ell-1)}$; by Lemma~\ref{lem:APP2} and the onion step's maximality, we cannot have a strict superset $\s{F}\supsetneq\Gamma^{(\ell)}$. This leads to a contradiction. 
\end{proof}
\begin{lemma}\label{lem:APP4}
    Fix $\ell$ and set $\s{A}\triangleq\Gamma^{(\ell-1)}$, $\s{B}\triangleq\Gamma^{(\ell)}$, $d\triangleq d_\ell=\eta(\s{B}\vert \s{A})=\mu(\Gamma\vert\s{A})$.  
Then for every $\s{J}$ with $\s{A}\subseteq \s{J}\subseteq \s{B}$, we have $\mu(\Gamma\vert\s{J})=d$.
\end{lemma}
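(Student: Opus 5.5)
The plan is to prove the two inequalities $\mu(\Gamma\vert\s{J})\le d$ and $\mu(\Gamma\vert\s{J})\ge d$ separately. The first comes from monotonicity. The second comes from exhibiting $\s{B}$ as an explicit feasible extension of $\s{J}$ whose relative density is at least $d$.

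\emph{Upper bound.} Since $\s{A}\subseteq\s{J}$, Lemma~\ref{lem:APP1} gives $\mu(\Gamma\vert\s{J})\le\mu(\Gamma\vert\s{A})$. By definition of the onion step, $\mu(\Gamma\vert\s{A})=\mu(\Gamma\vert\Gamma^{(\ell-1)})=d_\ell=d$. This holds for every $\s{J}$ in the stated range, with no extra work.

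\emph{Lower bound.} Take $\s{F}=\s{B}$ as a witness in the maximum defining $\mu(\Gamma\vert\s{J})$; this is legitimate whenever $\s{J}\subsetneq\s{B}$. Because $\s{A}\subseteq\s{J}\subseteq\s{B}$, edge and new-vertex counts are additive:
\begin{align}
|e(\s{B})|-|e(\s{J})|&=\bigl(|e(\s{B})|-|e(\s{A})|\bigr)-\bigl(|e(\s{J})|-|e(\s{A})|\bigr),\\
|v(\s{B})\setminus v(\s{J})|&=|v(\s{B})\setminus v(\s{A})|-|v(\s{J})\setminus v(\s{A})|.
\end{align}
The first bracket on the right of the edge identity equals $d\cdot|v(\s{B})\setminus v(\s{A})|$. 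For the second bracket there are two cases. If $\s{J}=\s{A}$, it is $0$. Otherwise $\s{J}$ is feasible for the seed $\s{A}$, so $\eta(\s{J}\vert\s{A})\le d$, i.e. $|e(\s{J})|-|e(\s{A})|\le d\,|v(\s{J})\setminus v(\s{A})|$. Subtracting ("a part of density at most $d$ removed from a whole of density exactly $d$") yields
\begin{align}
|e(\s{B})|-|e(\s{J})|\ \ge\ d\cdot|v(\s{B})\setminus v(\s{J})|.
\end{align}
If $|v(\s{B})\setminus v(\s{J})|>0$, dividing gives $\eta(\s{B}\vert\s{J})\ge d$. If it is $0$ while $\s{J}\subsetneq\s{B}$, then $\s{B}\setminus\s{J}$ still contains edges, and the convention in Definition~\ref{def:relative} makes $\eta(\s{B}\vert\s{J})=\infty\ge d$. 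In both cases $\mu(\Gamma\vert\s{J})\ge\eta(\s{B}\vert\s{J})\ge d$. Combined with the upper bound, this gives $\mu(\Gamma\vert\s{J})=d$.

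\emph{The endpoint $\s{J}=\s{B}$.} This is the step I expect to be the real obstacle, because the witness $\s{F}=\s{B}$ is no longer a strict extension there. In fact, by Lemma~\ref{lem:APP3}, $\mu(\Gamma\vert\Gamma^{(\ell)})=d_{\ell+1}<d_\ell$ whenever $\ell<M$, so the literal equality cannot hold at $\s{J}=\s{B}$ in that case. The obstruction is genuine, not a gap in the argument. My plan is therefore as follows. I would first check whether the degenerate case $\s{B}=\Gamma$, where the feasible set is empty, is covered by some convention in the paper. I would then state explicitly that the equality is established on the range $\s{A}\subseteq\s{J}\subsetneq\s{B}$, where every step above goes through. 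This is the range needed in the proof of Lemma~\ref{lem:equivminimalMSD}: it shows that each value $d_\ell$ is attained by $\mu(\Gamma\vert\cdot)$ on the whole interval $[\Gamma^{(\ell-1)},\Gamma^{(\ell)})$, and these intervals partition the subgraphs encountered along the onion chain. At $\s{J}=\s{B}$ itself, the value $d_{\ell+1}$ is supplied by the next interval.
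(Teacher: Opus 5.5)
\paragraph{Verdict.} Your lower-bound half is correct, and more careful than the paper's, which appeals to the mediant inequality from Lemma~\ref{lem:APP3} without noting that $\eta(\s{J}\vert\s{A})\le d$ is needed; you supply that explicitly. Your endpoint observation is also right: the witness $\s{F}=\s{B}$ requires $\s{J}\subsetneq\s{B}$, and for $\ell<M$ one has $\mu(\Gamma\vert\Gamma^{(\ell)})=d_{\ell+1}<d_\ell$. The genuine gap is the upper bound, which you import from Lemma~\ref{lem:APP1} ``with no extra work'', exactly as the paper's proof does.

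\paragraph{Lemma~\ref{lem:APP1} is false.} Its proof only observes that the feasible sets are nested. But the objective also changes with the seed: $\eta(\s{F}\vert\s{J})$ and $\eta(\s{F}\vert\s{J}')$ are different functions of $\s{F}$. Enlarging the seed can shrink the vertex denominator proportionally faster than the edge numerator. Take the path $\Gamma$ with edges $\{a,b\},\{b,c\}$, and let $\s{J}$ be the single edge $\{a,b\}$. Then $\mu(\Gamma\vert\emptyset)=2/3$, whereas $\mu(\Gamma\vert\s{J})=\eta(\Gamma\vert\s{J})=(2-1)/(3-2)=1$.

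\paragraph{The statement itself fails on your retained range.} This is not just a broken step: the claim is false on the half-open range $\s{A}\subseteq\s{J}\subsetneq\s{B}$ that you propose to keep.
\begin{itemize}
\item The same path is balanced, so $M=1$, $\s{A}=\emptyset$, $\s{B}=\Gamma$ and $d=2/3$. Yet $\s{J}=\{a,b\}$ gives $\mu(\Gamma\vert\s{J})=1\neq d$.
\item For a triangle, $d=1$, but a single edge $\s{J}$ gives $\mu(\Gamma\vert\s{J})\ge\eta(\Gamma\vert\s{J})=(3-1)/(3-2)=2$.
\end{itemize}
Your own subtraction identity explains this. If $\s{J}$ adds new vertices over $\s{A}$ at relative density strictly below $d$, then $|e(\s{B})|-|e(\s{J})|>d\,|v(\s{B})\setminus v(\s{J})|$. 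Hence $\eta(\s{B}\vert\s{J})>d$ (or equals $\infty$), so on that range only the one-sided bound $\mu(\Gamma\vert\s{J})\ge d$ survives.

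\paragraph{Consequence for Lemma~\ref{lem:equivminimalMSD}.} The repair you sketch does not go through. $\Lambda(\Gamma)$ can contain values outside $\{d_1,\dots,d_M\}$: for the path, $1\in\Lambda(\Gamma)$ while $M=1$ and $d_1=2/3$. So an argument for that lemma cannot rest on $\mu(\Gamma\vert\cdot)$ being constant between consecutive onion layers.
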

\begin{proof}[Proof of Lemma~\ref{lem:APP4}]
On the one hand, by the inequality in the proof of Lemma~\ref{lem:APP3} with $\s{F}=\s{B}$,
\begin{align}
\eta(\s{B}\vert \s{J})\ge\eta(\s{B}\vert\s{A})=d\quad\Rightarrow\quad \mu(\Gamma\vert\s{J})\ge d.
\end{align}
On the other hand, by Lemma~\ref{lem:APP1}, since $\s{A}\subseteq \s{J}$,
\begin{align}
\mu(\Gamma\vert\s{J})\le\mu(\Gamma\vert\s{A})=d.
\end{align}
Together, $\mu(\Gamma\vert\s{J})=d$.
\end{proof}
We are now ready to prove Lemma~\ref{lem:equivminimalMSD}, and we start by proving that $T=M$. Indeed, for each $\ell$, $d_\ell=\mu(\Gamma\vert\Gamma^{(\ell-1)})\in \Lambda(\Gamma)$. Hence $\{d_1,\dots,d_M\}\subseteq \Lambda(\Gamma)$. In particular, the number $T$ of distinct values satisfies $T\ge M$. Now, let $\s{J}\subseteq\Gamma$ be arbitrary, and let $\ell$ be the minimal index with $\s{J}\subseteq \Gamma^{(\ell)}$. Then $\Gamma^{(\ell-1)}\subseteq \s{J}\subseteq \Gamma^{(\ell)}$. Lemma~\ref{lem:APP4} gives $\mu(\Gamma\vert\s{J})=d_\ell$. Thus $\Lambda(\Gamma)\subseteq\{d_1,\dots,d_M\}$, and so $T\le M$. Combining the above, we conclude $T=M$ and the sets of distinct values coincide:
\begin{align}
\{\lambda_1>\cdots>\lambda_M\}=\{d_1>\cdots>d_M\}.
\end{align}
By Lemma~\ref{lem:APP3}, both sides are strictly decreasing, and thus their $\ell$-th entries match:
\begin{align}
\lambda_\ell=d_\ell=\eta(\Gamma^{(\ell)}\vert\Gamma^{(\ell-1)}),\qquad \ell=1,\dots,M,
\end{align}
which concludes the proof.

\paragraph{Minimal maximum subgraph density.} We provide here an alternative proof for the fact that $\eta(\Gamma^{(M)}\vert \Gamma^{(M-1)})=\mu_{\s{min}}(\Gamma)$, where $\mu_{\s{min}}(\Gamma)$ is defined in \eqref{eqn:minimaxSubDen}. The proof follows from several facts established in \cite{LeePerniceRajaramanZadik2025}. Specifically, for any $\s{S}\subseteq\Gamma$, define $G(\s{S})\triangleq\max_{\s{S}\subsetneq\s{F}\subseteq\Gamma}\eta(\s{F}\vert\s{S})$. Then, we claim that
\begin{align}
\min_{\s{S}\subseteq \Gamma}G(\s{S})=
\min_{\alpha\in[0,1]}\min_{\substack{\s{S}\subseteq\Gamma\\ |\s{S}|\leq \alpha|\Gamma|}} G(\s{S}).
\end{align}
Indeed, for every $\alpha\in[0,1]$, the inner minimum is over a subset of all $\s{S}$, so
\begin{align}
\min_{\substack{\s{S}\subseteq\Gamma\\ |\s{S}|\leq \alpha|\Gamma|}} G(\s{S})\geq \min_{\s{S}\subseteq \Gamma}G(\s{S}).
\end{align}
Taking $\min_{\alpha\in[0,1]}$ over both sides we get
\begin{align}
\min_{\alpha\in[0,1]}\min_{\substack{\s{S}\subseteq\Gamma\\ |\s{S}|\leq \alpha|\Gamma|}} G(\s{S})\geq \min_{\s{S}\subseteq \Gamma}G(\s{S}).
\end{align}
Conversely, let $\s{S}^\star$ attain $\min_{\s{S}\subseteq \Gamma}G(\s{S})$ and take
$\alpha^\star\triangleq|\s{S}^\star|/|\Gamma|$; then $\s{S}^\star$ is feasible for $\alpha^\star$, so
\begin{align}
\min_{\substack{\s{S}\subseteq\Gamma\\ |\s{S}|\leq \alpha|\Gamma|}} G(\s{S})\leq G(\s{S}^\star)=\min_{\s{S}\subseteq \Gamma}G(\s{S}).
\end{align}
Thus equality holds. If for $\alpha\in[0,1]$ we define 
\begin{align}
    \phi(\alpha)\triangleq\min_{\substack{\s{S}\subseteq\Gamma\\ |\s{S}|\leq \alpha|\Gamma|}}\max_{\s{S}\subsetneq\s{F}\subset\Gamma}\eta(\s{F}\vert\s{S}),
\end{align}
then the above implies that
\begin{align}
\min_{\s{S}\subseteq \Gamma}\max_{\s{S}\subsetneq\s{F}\subset\Gamma}\eta(\s{F}\vert\s{S})
=
\min_{\alpha\in[0,1]}\ \phi(\alpha).\label{eqn:FirstChar}
\end{align}
Now, by \cite[Thm.~3.6(b)]{LeePerniceRajaramanZadik2025}, $\phi(\alpha)$ is piecewise constant in $\alpha$ with breakpoints $\alpha_i\triangleq|e(\Gamma^{(i)})|/|e(\Gamma)|$, and on the plateau $\alpha\in[\alpha_i,\alpha_{i+1})$,
\begin{align}
\phi(\alpha)=\eta(\Gamma^{(i+1)}\vert\Gamma^{(i)}).
\end{align}
Furthermore, by \cite[Lemma~5.4]{LeePerniceRajaramanZadik2025} these plateau values are \emph{non-increasing} in $i$, so the minimum over all $\alpha\in[0,1]$ is attained on the last plateau, i.e., for $\alpha\in(\alpha_{M-1},1]$, and equals
\begin{align}
\min_{\alpha\in[0,1]}\phi(\alpha)=\rho(\Gamma^{(M)}\vert\Gamma^{(M-1)}).\label{eqn:SecondChar}
\end{align}
Combining \eqref{eqn:FirstChar} and \eqref{eqn:SecondChar} completes the proof.

\subsection{Bounds on coherence}\label{app:cohereBounds}

\begin{lemma}\label{lem:CohUpLO}
Let $\s{U} \in \mathbb{R}^{n \times r} $ be a matrix with orthonormal columns, and supported on the set $\calS$ of size $k=|\calS|$; that is, the $i$th row of $\mathsf{U}$ satisfies $\mathsf{U}_{i,:} = 0$ for all $i \notin \calS$. Then
\begin{align}
\frac{n}{k}\leq \s{coh}(\s{U})\leq \frac{n}{r}.
\end{align}
\end{lemma}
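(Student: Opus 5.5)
Both inequalities will follow from a single object: the row leverage scores $\ell_i\triangleq\|\s{U}_{i,:}\|_2^2$ for $i\in[n]$. By the definition of coherence, $\s{coh}(\s{U})=\frac{n}{r}\max_i \ell_i$, so the whole statement reduces to two-sided bounds on $\max_i\ell_i$:
\begin{align}
\frac{r}{k}\leq \max_{i\in[n]}\ell_i\leq 1.
\end{align}

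For the upper bound, I will use the orthogonal projector $\s{P}=\s{U}\s{U}^\top$, as in the proof of Lemma~\ref{lem:maxNormCohe}. Its diagonal entries are $\s{P}_{ii}=\ell_i$. Since $\s{P}$ is an orthogonal projector, $\ell_i=e_i^\top\s{P}e_i=\|\s{P}e_i\|_2^2\leq\|e_i\|_2^2=1$. An equivalent route is to extend the columns of $\s{U}$ to an orthonormal basis of $\R^n$ and note that the $i$th row of that full orthogonal matrix has unit norm, which dominates $\ell_i$. Multiplying by $n/r$ then gives $\s{coh}(\s{U})\leq n/r$.

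For the lower bound, I will use a trace identity: $\sum_{i=1}^n\ell_i=\|\s{U}\|_F^2=\s{Tr}(\s{U}^\top\s{U})=\s{Tr}(\mathbf{I}_r)=r$. The support assumption gives $\ell_i=0$ for every $i\notin\calS$, so the sum of $\ell_i$ over $i\in\calS$ alone equals $r$. A maximum is at least an average, so $\max_i\ell_i\geq r/k$. Multiplying by $n/r$ yields $\s{coh}(\s{U})\geq n/k$.

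I do not expect a real obstacle here; this is routine linear algebra. The only point needing a short remark is the upper bound $\ell_i\leq 1$, which relies on $\s{P}$ being an orthogonal projector and is not a purely entrywise fact. As a consistency check on the statement, I will also note that $r\leq k$, since $\s{U}$ has $r$ orthonormal columns all supported on $k$ coordinates. Hence $n/k\leq n/r$, and the two bounds are compatible.
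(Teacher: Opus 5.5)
Your proof is correct and follows essentially the same route as the paper: the trace identity $\sum_i \ell_i = r$, with rows outside $\calS$ vanishing, gives the lower bound because the maximum dominates the average over $\calS$, and $\ell_i\leq 1$ gives the upper bound. Your justification of $\ell_i\leq 1$ via $\ell_i=\|\s{P}e_i\|_2^2\leq 1$ for the orthogonal projector $\s{P}=\s{U}\s{U}^\top$ is actually the sound one. The paper's stated reason, that the squared row norms sum to $r$, does not by itself imply $\ell_i\leq 1$ when $r>1$.
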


\begin{proof}[Proof of Lemma~\ref{lem:CohUpLO}]
Since the columns of $ \s{U} $ are orthonormal, we have
\begin{align}
\sum_{i=1}^n \|\s{U}_{i,:}\|_2^2
&= \s{trace}(\s{U}\s{U}^{\top})\\
&= \s{trace}(\s{U}^{\top}\s{U}) = \s{rank}(\s{X}^\star).\label{eqn:avgEng}
\end{align}
To establish the upper bound, note that for any row $ i $, the squared norm satisfies:
\begin{align}
\|  \s{U}_{i,:} \|_2^2 \leq 1,
\end{align}
because $  \s{U}_{i,:} \in \mathbb{R}^r $ and the total squared norm across all rows sums to $ r $. Hence
\begin{align}
\s{coh}(\s{U})& = \frac{n}{r} \cdot \max_{i} \| \s{U}_{i,:} \|_2^2\\
&\leq \frac{n}{r} \cdot 1 = \frac{n}{r}.
\end{align}
This proves the upper bound. For the lower bound, due to \eqref{eqn:avgEng}, the average row norm squared over the $k$ non-zero rows is $\frac{r}{k}$. Thus
\begin{align}
    \s{coh}(\s{U})\geq \frac{n}{r}\cdot\frac{r}{k} = \frac{n}{k}.
\end{align}
\end{proof}

\subsection{Spectral-degree bound on coherence}
Consider the following result.
\begin{theorem}[Spectral/degree bound on coherence]
Let $\s{X}\in\{0,1\}^{n\times n}$ be symmetric (e.g., the adjacency matrix of an undirected graph),
let $r\triangleq\s{rank}(\s{X})\geq  1$, and let $\s{U}\in\mathbb{R}^{n\times r}$ have orthonormal columns spanning $\mathrm{range}(\s{X})$. Define the coherence
\begin{align}
\s{coh}(\s{U})\triangleq\frac{n}{r}\max_{1\leq  i\leq  n}\|\s{U}_{i,:}\|_2^2.
\end{align}
Let $\sigma_{\min}>0$ be the smallest nonzero singular value of $\s{X}$, and let
$d_i\triangleq\|x_i\|_2^2$ where $x_i^\top$ is the $i$-th row of $\s{X}$, with $d_{\max}\triangleq\max_i d_i$.
Then
\begin{align}
\s{coh}(\s{U})\leq\frac{n}{r}\frac{d_{\max}}{\sigma_{\min}^2}.
\end{align}
In particular, if $\s{X}$ is a $\{0,1\}$ adjacency matrix (with or without self-loops), then $d_i$ equals the (loop–inclusive) degree of vertex $i$.
\end{theorem}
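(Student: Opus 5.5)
The plan is to express each row of $\s{U}$ in terms of the corresponding row of $\s{X}$ through the spectral decomposition, and then bound the size of that row by the smallest nonzero singular value. Since $\s{X}$ is symmetric, write its compact eigendecomposition $\s{X}=\s{U}\Lambda\s{U}^\top$. Here $\Lambda=\s{Diag}(\lambda_1,\dots,\lambda_r)$ collects the nonzero eigenvalues, with $|\lambda_j|\geq\sigma_{\min}$ for every $j$, because the singular values of a symmetric matrix are the absolute values of its eigenvalues. The columns of this $\s{U}$ span $\mathrm{range}(\s{X})$.

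Any other orthonormal basis of the same range has the form $\s{U}\s{O}$ with $\s{O}$ orthogonal. Such a change of basis leaves all row norms unchanged, since $\|\s{U}_{i,:}\s{O}\|_2=\|\s{U}_{i,:}\|_2$. Hence the coherence does not depend on which orthonormal basis is chosen, and we may work with the eigenbasis.

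The key identity is obtained by multiplying the decomposition on the right by $\s{U}$ and using $\s{U}^\top\s{U}=\mathbf{I}_r$:
\begin{align}
\s{X}\s{U}=\s{U}\Lambda,\qquad\text{hence}\qquad \s{U}=\s{X}\s{U}\Lambda^{-1}.
\end{align}
Reading off the $i$-th row gives $\s{U}_{i,:}=x_i^\top\s{U}\Lambda^{-1}$. Taking norms, and using $\|\s{U}\|_{\s{op}}=1$ (orthonormal columns) together with $\|\Lambda^{-1}\|_{\s{op}}\leq 1/\sigma_{\min}$, we get
\begin{align}
\|\s{U}_{i,:}\|_2\leq\|x_i\|_2\,\|\s{U}\|_{\s{op}}\,\|\Lambda^{-1}\|_{\s{op}}\leq\frac{\sqrt{d_i}}{\sigma_{\min}}.
\end{align}
Squaring this bound, maximizing over $i$, and multiplying by $n/r$ yields the claimed inequality $\s{coh}(\s{U})\leq\frac{n}{r}\frac{d_{\max}}{\sigma_{\min}^2}$.

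For the final remark, note that when $\s{X}$ has $\{0,1\}$ entries, $\|x_i\|_2^2$ equals the number of ones in row $i$. This is the degree of vertex $i$, with a self-loop counted once if present.

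There is no substantive obstacle in this argument. The only point that needs care is justifying that a symmetric matrix admits an eigenbasis for its range with nonzero eigenvalues bounded below by $\sigma_{\min}$ in absolute value. This is the standard spectral theorem, and the basis-invariance observation above then transfers the bound to an arbitrary orthonormal $\s{U}$.
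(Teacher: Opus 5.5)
Your proof is correct and is essentially the paper's argument. The paper writes the leverage score as $\ell_i=x_i^\top(\s{X}^2)^\dagger x_i$ via $\s{P}=\s{X}(\s{X}^2)^\dagger\s{X}$ and bounds it by $\|(\s{X}^2)^\dagger\|_{\s{op}}\|x_i\|_2^2$, which is exactly your quadratic form $\|\s{U}_{i,:}\|_2^2=x_i^\top\s{U}\Lambda^{-2}\s{U}^\top x_i$, since $(\s{X}^2)^\dagger=\s{U}\Lambda^{-2}\s{U}^\top$; the only cosmetic difference is that working with the projector makes basis-independence automatic, whereas you handle it with your explicit orthogonal change-of-basis remark.
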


\begin{proof}
Let $\s{P}\triangleq\s{U}\s{U}^\top$ be the orthogonal projector onto $\s{range}(\s{X})$.
The \emph{leverage scores} are $\ell_i\triangleq\|\s{U}_{i,:}\|_2^2=\s{P}_{ii}$, and
\begin{align}
\s{coh}(\s{U})=\frac{n}{r}\max_i \ell_i.
\end{align}
We express $\s{P}$ using only $\s{X}$. Since $\s{X}$ is symmetric, its singular values are the absolute values
of its (nonzero) eigenvalues. Writing the spectral decomposition as $\s{X}=Q\Lambda Q^\top$ with
$\Lambda=\mathrm{diag}(\lambda_1,\dots,\lambda_r,0,\dots,0)$, where $\lambda_k\neq 0$ for $k\leq r$),
one checks that
\begin{align}
\s{P}&=\s{U}\s{U}^\top \\
&= Q_r Q_r^\top \\
&= \s{X} \s{X}^\dagger\\
&= \s{X}(\s{X}^2)^\dagger \s{X},
\end{align}
where $Q_r\triangleq[q_1\ \cdots\ q_r]$ collects the eigenvectors associated with the nonzero eigenvalues,
and $\s{X}^\dagger$ denotes the Moore--Penrose pseudoinverse. Therefore, for each $i$,
\begin{align}\label{eq:leverage-variational}
\ell_i &= e_i^\top P e_i \\
&= e_i^\top \s{X} (\s{X}^2)^\dagger \s{X} e_i\\
&= x_i^\top (\s{X}^2)^\dagger x_i.
\end{align}
Let $\sigma_{\min}>0$ be the smallest nonzero singular value of $\s{X}$. Then the nonzero spectrum of $\s{X}^2$ is $\{\sigma_k^2\}_{k=1}^r$, so the operator norm of $(\s{X}^2)^\dagger$ equals $1/\sigma_{\min}^2$. Using \eqref{eq:leverage-variational} and the Cauchy--Schwarz/operator-norm bound,
\begin{align}
\ell_i &=x_i^\top (\s{X}^2)^\dagger x_i\\
&\leq \|(\s{X}^2)^\dagger\|_2\|x_i\|_2^2\\
&= \frac{\|x_i\|_2^2}{\sigma_{\min}^2}\\
&= \frac{d_i}{\sigma_{\min}^2}.
\end{align}
Taking the maximum over $i$ yields
\begin{align}
\max_i \ell_i\leq\frac{d_{\max}}{\sigma_{\min}^2},
\end{align}
and hence
\begin{align}
\s{coh}(\s{U})&=\frac{n}{r}\max_i \ell_i \\
&\leq \frac{n}{r}\frac{d_{\max}}{\sigma_{\min}^2}.
\end{align}
This completes the proof.
\end{proof}

\section{Derivation of the Maximum Likelihood Estimator}\label{app:0}

Consider the following recovery task. Pick a copy $\Gamma^\star\in\calS_{\Gamma}$. A random graph $\s{G}$ with $n$ vertices is formed as follows: keep the edges of $\Gamma$ with probability $p$, and the edges outside $\Gamma$ with probability $q$. We denote the ensemble of such planted graphs by $\calG_{\Gamma^\star}(n,p,q)$. Given $\s{G}$, the goal is to recover the graph $\Gamma^\star$. Then, we have
\begin{align}
    \pr_{\calG_{\Gamma^\star}(n,p,q)}(\s{G};\Gamma) &= \prod_{(i,j)\in e(\Gamma)}p^{\s{A}_{ij}}(1-p)^{1-\s{A}_{ij}}\prod_{(i,j)\in \binom{[n]}{2}\setminus e(\Gamma)}q^{\s{A}_{ij}}(1-q)^{1-\s{A}_{ij}}\\
    & \propto \prod_{(i,j)\in e(\Gamma)}\pp{\frac{p(1-q)}{q(1-p)}}^{\s{A}_{ij}}\\
    & = \pp{\frac{p(1-q)}{q(1-p)}}^{\sum_{(i,j)\in e(\Gamma)}\s{A}_{ij}}
\end{align}
Thus, we see that in the regime $p>q$, the MLE is given by
\begin{align}
\hat{\Gamma}_{\s{MLE}} = \arg\max_{\Gamma\in\calS_\Gamma}\sum_{(i,j)\in e(\Gamma)}\s{A}_{ij}.\label{eqn:combiDev}
\end{align}

\section{Lower Bound Via Bayes Risk Analysis}\label{app:secProofofITLOWEr}

In this appendix, we provide an alternative proof of the information-theoretic lower bounds on recovery, derived from hypothesis testing risk analysis. The resulting bound is tight for balanced graphs, i.e., graphs $\Gamma$ with $\mu(\Gamma) = \eta(\Gamma)$, and for which $\mu_{\s{min}}(\Gamma) = \mu(\Gamma)$, and with super-logarithmic maximum density, namely, $\mu(\Gamma)\geq\alpha_n\log|v(\Gamma)|$, for some $\alpha_n=\Omega(1)$ (which is, in fact, the interesting region).  Therefore, we focus on such graphs, although the method applies to arbitrary graphs as well. To that end, we begin by showing that it suffices to consider the canonical case where $p_n = 1$. We then provide brief preliminaries on the detection problem of an arbitrary planted subgraph in random graphs. Next, we propose and prove a generalized notion of the subgraph expectation threshold, which plays an important role in some of our later proofs.

We first observe that we can safely focus on the case where $p=1$. This follows from the fact that for any given instance of $\calG_{\Gamma_n}(n,p_n,q_n)$, there exist a fixed $\tilde{q}$ such that recovery is statistically easier over $\calG_{\Gamma_n}(n,1,\tilde{q}_n)$. Indeed, let $\hat{\Gamma}$ be any successful estimation algorithm for the recovery of $\Gamma$ over $\calG_{\Gamma_n}(n,p_n,q_n)$, i.e., $\pr_{\calG_{\Gamma_n}(n,p_n,q_n)}(\hat{\Gamma}\neq\Gamma)\leq\varepsilon$, for any $\varepsilon>0$. Now, let $\s{G}\sim\calG_{\Gamma_n}(n,1,q_n/p_n)$, and consider the random map $\phi:\{0,1\}^{\binom{n}{2}}\to\{0,1\}^{\binom{n}{2}}$, which receives the graph $\s{G}$ as an input, and if $(i,j)\in\s{G}$, then we keep it with probability $p_n$, otherwise, if $(i,j)\not\in\s{G}$, then it remains the same. It should be clear that the random graph $\phi(\s{G})$ is distributed as follows: if $(i,j)\in\Gamma$, then $\pr\pp{[\phi(\s{G})]_{ij}=1}=p_n\cdot 1$, otherwise, if $(i,j)\not\in\Gamma$, then $\pr\pp{[\phi(\s{G})]_{ij}=1}=p_n\cdot\p{q_n/p_n} = q_n$, and thus, $\phi(\s{G})\sim\calG_{\Gamma_n}(n,p_n,q_n)$. Furthermore, the estimator $\hat{\Gamma}(\phi(\s{G}))$ is successful by construction. Therefore, proving the impossibility of recovery over $\calG_{\Gamma_n}(n,1,q_n/p_n)$ implies immediately the impossibility of recovery over $\calG_{\Gamma_n}(n,p_n,q_n)$. Below, with abuse of notation, we focus on $\calG_{\Gamma_n}(n,1,q_n)$.

\subsection{Preliminaries on detection}

In statistical analysis of detection problems, one of the goals is to establish a lower bound on the optimal risk from below, thereby ruling out the possibility of successful detection. A general recipe for this is as follows. Recall the likelihood ratio functional,
\begin{equation}
    \s{L}(\s{G})\triangleq\frac{\mathrm{d}\P_{\calH_1}}{\mathrm{d}\P_{\calH_0}}(\s{G}),\label{eqn:LIKlihood}
\end{equation} 
which is the Radon-Nikodym derivative of $\P_{\calH_1}$ w.r.t. the measure $\P_{\calH_0}$. It is well known (see, e.g., \cite[Theorem 2.2]{tsybakov2004introduction}) that the optimal test $\phi^{\star}$ that minimizes the risk $\s{R}_n$ is the likelihood ratio test defined as,
\begin{align}
\phi^{\star}\left(\s{G}\right) \triangleq \begin{cases}
1,\ &\text{if }\mathsf{L}\left(\s{G}\right) \geq 1\\
0,\ &\text{otherwise},   
\end{cases}
\end{align}
and the associated optimal risk is $ \mathsf{R}^{\star} \triangleq \mathsf{R}\left(\phi^{\star}\left(\s{G}\right)\right)=1-d_{\s{TV}}(\P_{\calH_0},\P_{\calH_1})$. Recalling that $\chi^2(\P_{\calH_0},\P_{\calH_1})=\E_{\calH_0}[\s{L}(\s{G})^2]-1$, it can be shown that (see, e.g., \cite[Sec. 2]{tsybakov2004introduction} and \cite[Prop. 3]{sason2014bounds}),
\begin{align}
    \chi^2(\P_{\calH_0},\P_{\calH_1})\geq \max\p{\frac{1}{2\p{1-d_{\s{TV}}(\P_{\calH_0},\P_{\calH_1})}}-1,\p{2d_{\s{TV}}(\P_{\calH_0},\P_{\calH_1})}^2},
\end{align}
and thus,
\begin{align}
    \mathsf{R}^{\star}&=1-d_{\s{TV}}(\P_{\calH_0},\P_{\calH_1})\geq \max\p{1-\frac{1}{2}\sqrt{\chi^2(\P_{\calH_0},\P_{\calH_1})},\frac{1}{2(1+\chi^2(\P_{\calH_0},\P_{\calH_1}))}}.\label{eqn:lowerBoundSecond}
\end{align}
In particular, we see that $\s{R}^\star$ is bounded away from zero, namely, strong detection is impossible, if $\E[\s{L}(\s{G})^2]$ is bounded. Similarly, $\s{R}^\star$ converge to unity, i.e., weak detection is impossible if $\E_{\calH_0}[\s{L}(\s{G})^2]=1+o(1)$. Accordingly, to rule out the possibility of detection (either strong or weak) it suffices to upper bound the second moment of the likelihood function. 

Let us derive the likelihood function in our case, where the null distribution reflects the distribution of $\calG(n,q_n)$, and the alternative distribution is exactly the one we consider in the recovery problem. Then,
\begin{align}
    \s{L}(\s{G}) = \bE_{\Gamma}\pp{\frac{\pr_{\calH_1\vert\Gamma}(\s{G}\vert\Gamma)}{\pr_{\calH_0}(\s{G})}},
\end{align}
and
\begin{align}
    \frac{\pr_{\calH_1\vert\Gamma}(\s{G}\vert\Gamma)}{\pr_{\calH_0}(\s{G})} = q^{-|\Gamma|}\cdot\Ind\ppp{\Gamma\subseteq \s{G}}.
\end{align}
Thus,
\begin{align}
    \s{L}(\s{G}) = \frac{q^{-e(\Gamma)}}{|\calS_\Gamma|}\sum_{\ell=1}^{|\calS_\Gamma|}\Ind\ppp{\Gamma_\ell\subseteq \s{G}},
\end{align}
where $\{\Gamma_\ell\}_{\ell=1}^{|\calS_\Gamma|}$ are the $|\calS_\Gamma|$ possible copies of $\Gamma$ in $\calK_n$. Let $\s{N}_{\Gamma}(\s{G})\triangleq\sum_{\ell=1}^{|\calS_\Gamma|}\Ind\ppp{\Gamma_\ell\subseteq \s{G}}$ denote the graph copies enumerator. Then, we note that
\begin{align}
    \bE_{\calH_0}[\s{N}_{\Gamma}(\s{G})] = |\calS_\Gamma|\cdot q^{|\Gamma|},
\end{align}
and thus,
\begin{align}
    \s{L}(\s{G}) = \frac{\s{N}_{\Gamma}(\s{G})}{\bE_{\calH_0}[\s{N}_{\Gamma}(\s{G})]}.
\end{align}
Therefore, we obtain that
\begin{align}
    \bE_{\calH_0}[\s{L}^2(\s{G})] = \frac{\bE_{\calH_0}[\s{N}_{\Gamma}^2(\s{G})]}{\bE_{\calH_0}^2[\s{N}_{\Gamma}(\s{G})]}. \label{eqn:secondmomentEnumerator}
\end{align}
In particular, if, under some conditions we have $\bE_{\calH_0}[\s{L}^2(\s{G})]=1+o(1)$, implying that detection is statistically impossible, then, under the same conditions, we have,
\begin{align}
    \bE_{\calH_0}[\s{N}_{\Gamma}^2(\s{G})] &= (1+o(1))\cdot\bE_{\calH_0}^2[\s{N}_{\Gamma}(\s{G})] \\
    & = (1+o(1))\cdot|\calS_\Gamma|^2\cdot q^{2|\Gamma|}.\label{eqn:detectionGur}
\end{align}
It was recently shown in \cite{elimelech2025detecting} that these conditions are: If $\mu(\Gamma_n)\geq \alpha_n\cdot \log|v(\Gamma_n)|$, for some $\alpha_n=\Omega(1)$, then there exists a constant $\underline{C}>0$ such that weak detection is impossible if, 
    \begin{align}
        \mu(\Gamma_n)\leq \underline{C}\cdot \log n.\label{eq:condDenseStatproof}
    \end{align}
If, on the other hand, $\mu(\Gamma_n)=o(\log|v(\Gamma_n)|)$, then for every $\varepsilon>0$, weak detection is impossible if,
    \begin{align}
        |e(\Gamma_n)|\vee d^2_{\max}(\Gamma_n)\leq n^{1-\varepsilon}.\label{eq:condDenseStat2proof}
    \end{align}

\subsection{Generalized subgraph expectation threshold}\label{app:GeneralizedExpecTThreshold}

A central problem in probabilistic combinatorics is to understand, for a fixed graph $\Gamma$, the minimum edge probability $p$ such that the random graph $ G(n, p) $ contains $\Gamma$ as a subgraph with probability at least $ 1/2 $. This threshold is commonly known as the \emph{critical threshold} for the appearance of $\Gamma$, denoted by $ p_c(\Gamma) $. Formally, for graphs $\Gamma$ and $\s{G}$ let us denote the number of copies of $\Gamma$ in $\s{G}$ by $\calN(\Gamma,\s{G})$. The critical probability of $\Gamma$ is defined as,
\begin{align}
    q_c(\Gamma)&\triangleq\min\ppp{q\in[0,1] ~\bigg|~ \P_{\s{G}\sim \calG(n,q)}\pp{\calN(\Gamma,\s{G})\geq 1}\geq \frac{1}{2}}.
    \end{align}
A long-standing conjecture by Kahn and Kalai~\cite{kahn2007thresholds} suggests that this critical threshold is closely approximated—up to a logarithmic factor—by a more tractable quantity called the \emph{subgraph expectation threshold}. In \cite{mossel2022second}, the critical probability was bounded by a modified subgraph expectation threshold, defined as follows,
\begin{align}
    \Tilde{q}_E(\Gamma)\triangleq\min\ppp{q\in[0,1] ~\bigg|~ \E_{\s{G}\sim \calG(n,q)}\pp{\calN(\s{H},\s{G})}\geq \frac{\calN(\s{H},\Gamma)}{2}\text{ for all }\s{H}\subseteq \Gamma},
\end{align} 
where only subgraphs $\s{H}\subseteq \Gamma$ with no isolated vertices are considered.
\begin{theorem}\cite[Theorem 1]{mossel2022second} \label{th:Zadik} There exists a universal constant $C$ such that for any graph $\Gamma$, \begin{align}
    \Tilde{q}_E(\Gamma)\leq q_c(\Gamma)\leq C\cdot \Tilde{q}_E\cdot  \log|e(\Gamma)|.
\end{align}
\end{theorem}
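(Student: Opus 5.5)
The statement is \cite[Theorem 1]{mossel2022second}: for a universal $C$ and every graph $\Gamma$, $\Tilde{q}_E(\Gamma)\leq q_c(\Gamma)\leq C\cdot\Tilde{q}_E\cdot\log|e(\Gamma)|$. The plan is to prove the two inequalities separately: the lower bound by a first-moment argument, and the upper bound by a second-moment (Paley--Zygmund) argument on the count $\calN(\Gamma,\s{G})$, using the expectation condition at $q_0=\Tilde{q}_E(\Gamma)$ to control overlaps.

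For the lower bound, take $q<\Tilde{q}_E(\Gamma)$. By definition some $\s{H}\subseteq\Gamma$ (without isolated vertices) has $\E[\calN(\s{H},\s{G})]<\calN(\s{H},\Gamma)/2$. Every copy of $\Gamma$ in $\s{G}$ contains exactly $\calN(\s{H},\Gamma)$ copies of $\s{H}$. A copy of $\s{H}$ lies in at most $\calN(\Gamma,\calK_n)\calN(\s{H},\Gamma)/\calN(\s{H},\calK_n)$ copies of $\Gamma$ (this follows by symmetry). Double counting then gives $\calN(\s{H},\s{G})\geq \calN(\s{H},\Gamma)$ whenever $\calN(\Gamma,\s{G})\geq1$ (indeed, $\s{H}$ itself appears inside the copy). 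Markov's inequality yields
\begin{align}
\P\pp{\calN(\Gamma,\s{G})\geq1}\leq\P\pp{\calN(\s{H},\s{G})\geq\calN(\s{H},\Gamma)}\leq \E[\calN(\s{H},\s{G})]/\calN(\s{H},\Gamma)<1/2 .
\end{align}
Hence $q<q_c$, which proves the lower bound.

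For the upper bound, set $q=C q_0\log|e(\Gamma)|$. The second moment can be written as
\begin{align}
\E[\calN^2]/\E^2[\calN]=\sum_{\s{H}}\P[\s{H}\text{ is the overlap}]\,q^{-|e(\s{H})|},
\end{align}
where the sum runs over overlaps $\s{H}\subseteq\Gamma$ of two independent uniform copies. The probability that a uniform copy contains a fixed copy of $\s{H}$ is $\calN(\s{H},\Gamma)/\calN(\s{H},\calK_n)$. The expectation condition at $q_0$ gives $\calN(\s{H},\calK_n)q_0^{|e(\s{H})|}\geq\calN(\s{H},\Gamma)/2$. Substituting, the contribution of overlaps isomorphic to $\s{H}$ is at most roughly $2\,\calN(\s{H},\Gamma)\,(q_0/q)^{|e(\s{H})|}$. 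This is at most $2\calN(\s{H},\Gamma)(C\log|e(\Gamma)|)^{-|e(\s{H})|}$.

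The main obstacle is summing over all $\s{H}$. The number of subgraphs of $\Gamma$ with $h$ edges is at most $\binom{|e(\Gamma)|}{h}\leq|e(\Gamma)|^h$, which the factor $(\log|e(\Gamma)|)^{-h}$ does not beat. A naive union over subgraphs therefore fails. Instead I would follow the spread/fractional expectation-threshold route: decompose the overlap via a planting/spread-measure argument, or use Talagrand's selector-process bound as in Park--Pham, which yields exactly the $\log|e(\Gamma)|$ loss. The key step is to check that the modified expectation condition implies that the uniform measure on copies of $\Gamma$ is $O(q_0)$-spread. Once that holds, the Park--Pham theorem gives appearance with probability at least $1/2$ at $Cq_0\log|e(\Gamma)|$. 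Since this is imported from \cite{mossel2022second}, I would primarily cite it and only verify the spreadness reduction.
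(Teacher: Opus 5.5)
Your proposal is correct and follows the same route the paper uses (in its proof of the $L$-copy generalization, Theorem~\ref{th:SpreadGen}): the lower bound is the same Markov argument applied to $\calN(\s{H},\s{G})$, and the upper bound is the same reduction, namely that $\pi_\Gamma(\s{H}_0\subseteq\Gamma')=\calN(\s{H},\Gamma)/|\calS_{\s{H}}|\leq 2\tilde q_E^{|e(\s{H})|}$, so the uniform measure on copies is $(2\tilde q_E)^{-1}$-spread and the spread lemma (Lemma~\ref{lem:spread}) then gives $q_c\leq C\tilde q_E\log|e(\Gamma)|$. The abandoned second-moment sketch and the ``at most \dots copies of $\Gamma$'' double-counting remark play no role and can be dropped.
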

We would like to generalize the above notions to the scenario of the appearance of $L\in\mathbb{N}$ distinct copies. Specifically, define the critical probability as,
\begin{align}
    q_{c}^{(L)}(\Gamma)&\triangleq\min\ppp{q\in[0,1] ~\bigg|~ \P_{\s{G}\sim \calG(n,q)}\pp{\calN(\Gamma,\s{G})\geq L}\geq \frac{1}{2}},\label{eqn:qCl}
    \end{align}
and accordingly,
\begin{align}
    \Tilde{q}_E^{(L)}(\Gamma)\triangleq\min\ppp{q\in[0,1] ~\bigg|~ \E_{\s{G}\sim \calG(n,q)}\pp{\calN(\s{H},\s{G})}\geq \frac{L\cdot \calN(\s{H},\Gamma)}{2}\text{ for all }\s{H}\subseteq \Gamma}.
\end{align} 
Note that for $q\geq q_{c}^{(L)}(\Gamma)$, by Markov's inequality, for any $\s{H}\subseteq\Gamma$,
\begin{align}
    \frac{1}{2}&\leq \P_{\s{G}\sim \calG(n,q)}\pp{\calN(\Gamma,\s{G})\geq L}\\
    &\leq \P_{\s{G}\sim \calG(n,q)}\pp{\calN(\s{H},\s{G})\geq L\cdot \calN(\s{H},\Gamma)}\\
    &\leq \frac{\E_{\s{G}\sim \calG(n,q)}\pp{\calN(\s{H},\s{G})}}{L\cdot \calN(\s{H},\Gamma)},
\end{align}
and thus, $\Tilde{q}_E^{(L)}(\Gamma)\leq q_{c}^{(L)}(\Gamma)$. Finally, note that we can rewrite,
\begin{align}
    \Tilde{q}_E^{(L)}(\Gamma)\triangleq\max\ppp{\p{\frac{L\cdot\calN(\s{H},\Gamma)}{2|\calS_{\s{H}}|}}^{1/|e(\s{H})|}:\;\s{H}\subseteq\Gamma}.\label{eqn:EquivSpreaPro}
\end{align}
We have the following result.
\begin{theorem}\label{th:SpreadGen} There exists a universal constant $C$ such that for any graph $\Gamma$, \begin{align}
    \Tilde{q}_E^{(L)}(\Gamma)\leq q_c^{(L)}(\Gamma)\leq C\cdot L\cdot\Tilde{q}_E^{(L)}\cdot  \log(L|e(\Gamma)|).
\end{align}
\end{theorem}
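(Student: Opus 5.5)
The lower bound $\Tilde{q}_E^{(L)}(\Gamma)\le q_c^{(L)}(\Gamma)$ was already obtained above via Markov's inequality, so only the upper bound needs proof. I will reduce it to the single-copy statement, Theorem~\ref{th:Zadik}, by splitting the random graph into independent rounds. Fix $q\triangleq C'\,L\,\Tilde{q}_E^{(L)}(\Gamma)\log(L|e(\Gamma)|)$ for a large constant $C'$. Realize $\s{G}\sim\calG(n,q)$ as a union of $L$ independent graphs $\s{G}_1,\dots,\s{G}_L\sim\calG(n,q_0)$, where $1-(1-q_0)^L=q$; in particular $q_0\ge q/L$. The plan is to show that, with probability at least $1/2$, we can find copies $\Gamma_1,\dots,\Gamma_L$ of $\Gamma$ such that $\Gamma_j\subseteq\s{G}_j$ and the $\Gamma_j$ are pairwise distinct. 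These are then $L$ distinct copies inside $\s{G}$.

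The first step compares the expectation thresholds. From the characterization \eqref{eqn:EquivSpreaPro}, and since $L\ge1$, we get $\Tilde{q}_E(\Gamma)=\Tilde{q}_E^{(1)}(\Gamma)\le\Tilde{q}_E^{(L)}(\Gamma)$. Also $q_0\ge C'\,\Tilde{q}_E^{(L)}\log(L|e(\Gamma)|)$, which is above $C\,\Tilde{q}_E\log|e(\Gamma)|$ by a factor $C'\log(L|e(\Gamma)|)/(C\log|e(\Gamma)|)$. Theorem~\ref{th:Zadik} only guarantees probability $1/2$ at its threshold. I would boost this in the standard way: split $\s{G}_j$ further into $m=\Theta(\log L)$ independent sub-rounds, each at density $\ge C\Tilde{q}_E\log|e(\Gamma)|$. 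Then each $\s{G}_j$ contains a copy with failure probability at most $2^{-m}\le 1/(4L)$. A union bound over $j$ makes all $L$ rounds succeed with probability at least $3/4$. This is where the extra $\log L$ in the $\log(L|e(\Gamma)|)$ factor is spent.

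The main obstacle is distinctness: different rounds might keep finding the same copy. To handle it, I condition sequentially. Having chosen distinct $\Gamma_1,\dots,\Gamma_{j-1}$, I need $\s{G}_j$ to contain a copy outside the forbidden set $\{\Gamma_1,\dots,\Gamma_{j-1}\}$, which has fewer than $L$ elements. I would try two routes.

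\begin{itemize}
\item \emph{Spread-lemma route.} Rerun the argument behind Theorem~\ref{th:Zadik} (the spread/fractional-expectation-threshold method of Park--Pham) directly on the family $\calS_\Gamma\setminus\{\text{forbidden}\}$. The uniform measure on this family is still $\Tilde{q}$-spread, up to a $(1-o(1))$ loss, as long as the full family is much larger than $L$. That size condition is exactly what the definition of $\Tilde{q}_E^{(L)}$ encodes: at $\s{H}=\Gamma$ it requires $|\calS_\Gamma|\,q^{|e(\Gamma)|}\ge L/2$.
\item \emph{Symmetrization route.} Use $\mathrm{Aut}(\calK_n)$-invariance. Conditional on $\s{G}_j$ containing a copy, a uniformly random copy found in $\s{G}_j$ lies in a fixed set of fewer than $L$ copies with probability at most about $L/|\calS_\Gamma|$, which is negligible. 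I would verify this with a first-moment bound, $\E\,\calN(\Gamma,\s{G}_j)=|\calS_\Gamma|q_0^{|e(\Gamma)|}$, combined with symmetry.
\end{itemize}

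I expect this distinctness step to carry the real technical weight. In particular, the symmetrization route needs care when $\calN(\Gamma,\s{G}_j)$ is heavily concentrated on few copies.

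Summing the failure probabilities then gives $\P[\calN(\Gamma,\s{G})\ge L]\ge1/2$, with the constants absorbed into $C$.
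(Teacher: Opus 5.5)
Your architecture (sprinkle $\s G$ into independent rounds, apply the single-copy Theorem~\ref{th:Zadik} in each round, and force distinctness) is a genuinely different route from the paper's. The paper shows that the uniform law on $\calS_\Gamma$ is $\alpha$-spread with $\alpha=1/(2\tilde q_E^{(L)})$ and then applies Lemma~\ref{lem:Gspread}: ordered $L$-tuples of distinct copies are bundled into unions $\s G_\sigma$, this family is argued to be $(\alpha/L)$-spread with at most $L|e(\Gamma)|$ edges per member, and the single spread lemma is invoked once. As written, however, your counting step fails. Boosting each round to failure probability $1/(4L)$ with $m=\Theta(\log L)$ sub-rounds, each of density $C\tilde q_E\log|e(\Gamma)|$, costs a per-round density of order $\tilde q_E\log L\cdot\log|e(\Gamma)|$, which is a \emph{product}. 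Your budget is $q_0\approx C'\tilde q_E^{(L)}(\log L+\log|e(\Gamma)|)$, which is a \emph{sum}, so the $\log(L|e(\Gamma)|)$ factor is not ``spent'' this way. The inequality $\tilde q_E^{(L)}\ge\tilde q_E$ does not rescue you in general. Take $\Gamma=\calK_k$ with $k=\sqrt{\log n}$ and $L=\lfloor e^{\sqrt{\log n}}\rfloor$. Then the maximizer in \eqref{eqn:EquivSpreaPro} is $\s H=\Gamma$ at level $1$ and at level $L$, so $\tilde q_E^{(L)}/\tilde q_E=L^{1/|e(\Gamma)|}\to1$. The claimed bound is still $o(1)$, yet each of your rounds is short by a factor $\Theta(\log\log n)$.

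The repair is cheap and uses your second distinctness route, which needs no care about concentration. Run $N=8L$ independent rounds at density $\rho=C\tilde q_E\log|e(\Gamma)|$, with no sub-rounds, and let $F$ be the set of distinct copies found so far. Let $X$ be a uniformly chosen copy among those present in $\s G_j$. The joint law of $(\s G_j,X)$ is invariant under $S_n$, which acts transitively on $\calS_\Gamma$. Hence, conditionally on $\calN(\Gamma,\s G_j)\ge1$, $X$ is \emph{exactly} uniform on $\calS_\Gamma$. Since $F$ is independent of $\s G_j$, round $j$ yields a new copy with conditional probability at least $\frac12(1-|F|/|\calS_\Gamma|)\ge\frac14$, as long as $|F|<L\le|\calS_\Gamma|/2$. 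Your first route also works: the uniform law on $\calS_\Gamma\setminus F$ is $(4\tilde q_E)^{-1}$-spread once $|F|\le|\calS_\Gamma|/2$, and Lemma~\ref{lem:spread} applies to any such family. The condition $|\calS_\Gamma|\ge 2L$ holds whenever the claimed bound is nontrivial, because the $\s H=\Gamma$ term of \eqref{eqn:EquivSpreaPro} gives $|\calS_\Gamma|\ge\frac L2(\tilde q_E^{(L)})^{-|e(\Gamma)|}$. By a standard coupling, the probability of collecting fewer than $L$ copies is at most $\pr[\mathrm{Bin}(8L,1/4)<L]\le 1/2$, since that binomial has median $2L$. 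By monotonicity, $q_c^{(L)}(\Gamma)\le N\rho=8C\,L\,\tilde q_E\log|e(\Gamma)|\le 8C\,L\,\tilde q_E^{(L)}\log(L|e(\Gamma)|)$. Repaired this way, your route uses Theorem~\ref{th:Zadik} as a black box. It never needs a spread estimate for the unions $\s G_\sigma$, where $\s H$ can be split across several members rather than merely contained in one. It even yields a slightly stronger bound than the one stated.
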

The proof of Theorem~\ref{th:SpreadGen} relies on a powerful probabilistic tool known as the \emph{spread lemma}, which has been instrumental in several recent breakthroughs. To describe the lemma in our context, consider a probability distribution $\pi$ supported on subgraphs of the complete graph $\calK_n$. Let $\alpha > 1$. We say that the distribution $\pi$ is \emph{$\alpha$-spread} if, for every (non-empty) subgraph $\s{H} \subseteq \calK_n$,
\begin{equation}
\pi(\s{H} \subseteq \Gamma) \leq  \alpha^{-|e(\s{H})|},
\label{eq:spread}
\end{equation}
where $\Gamma\sim \pi$. The following result—adapted from Theorem 1.6 in~\cite{frankl2021fractional}—gives a threshold condition under which a random graph drawn from $G(n,p)$ is likely to intersect a family of such subgraphs.
\begin{lemma}[Spread lemma {\cite[Lemma 2]{mossel2022second}}]
\label{lem:spread}
Fix integers $k\geq  1$ and $M\geq  1$.  
Let $\calG_M\triangleq\ppp{\s{G}_{1},\dots ,\s{G}_{M}}\subseteq K_{n}$ satisfy $|e(\s{G}_{i})|\leq  k$ for every $i$. Let $\pi$ be the uniform measure on $\calG_M$, and assume the
\emph{$\alpha$‑spread} condition in \eqref{eq:spread}. Then, there is an absolute constant $C>0$ such that if $q>C\frac{\log k}{\alpha}$, then a sample from $\calG(n,q)$ contains one of the $\s{G}_i$'s with probability at least $1/2$.
\end{lemma}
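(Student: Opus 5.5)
The approach is the Park--Pham / Alweiss--Lovett--Wu--Zhang iterative ``fragment'' argument. Set $N\triangleq\binom n2$ and $m\triangleq\lceil\log_2 k\rceil+1$. Write a sample of $\calG(n,q)$, viewed as a random edge set $W$, as a union $W=W_1\cup\cdots\cup W_m$ of independent random sets. Each $W_s$ is a uniformly random $w$-subset of $e(\calK_n)$ with $w\approx (C_0/\alpha)N$, where $C_0$ is a large absolute constant. The event ``$W$ contains some $\s{G}_i$'' is monotone. Moreover, for $q>C\log k/\alpha$ with $C\gg C_0$, the binomial set $\calG(n,q)$ stochastically dominates such a union up to an $o(1)$ or small-constant error (Chernoff bounds on its size plus the standard coupling). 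So it suffices to show that the union contains some $\s{G}_i$ with probability at least $1/2+$ (slack). Throughout I allow $\calG_M$ to be a multiset with the uniform (counting) measure, and spread means $|\{i: \s{H}\subseteq \s{G}_i\}|\le M\alpha^{-|e(\s{H})|}$.

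\textbf{Key lemma (one round).} Let $\calF=\{\s{G}_1,\dots,\s{G}_M\}$ be an $\alpha$-spread multiset of edge sets, each of size at most $\kappa$, and let $W$ be a uniform $w$-set. For each $i$, choose $j$ with $\s{G}_j\subseteq \s{G}_i\cup W$ minimizing $|\s{G}_j\setminus W|$, and set $T_i\triangleq \s{G}_j\setminus W\subseteq \s{G}_i$ (the ``minimal fragment''). I will show that
\begin{align}
\E_W\big|\{i: |T_i|> \kappa/2\}\big|\le M\cdot (C_1/C_0)^{\kappa/2}.
\end{align}
The proof is an encoding/counting argument. Each bad pair $(W,i)$ is mapped to the triple $(Z,T,i)$ with $Z\triangleq W\cup T_i$, $|T_i|=t>\kappa/2$, and $T=T_i$. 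Minimality guarantees that $T_i$ is exactly the fragment of $\s{G}_j$ relative to $Z$, so $T$ can be recovered from $Z$ plus at most $2^{\kappa}$ bits of information about which subset of some fixed member inside $Z$ it is. Given $T$, the number of $i$ with $T\subseteq \s{G}_i$ is at most $M\alpha^{-t}$ by spreadness. The number of choices of $Z$ is $\binom{N}{w+t}$, compared with $\binom Nw$ choices of $W$. Since $\binom{N}{w+t}/\binom Nw\le (N/w)^t=(\alpha/C_0)^t$, multiplying everything out gives a bound of $M\,(2^{O(1)}/C_0)^{t}$ per value of $t$. Summing over $t>\kappa/2$ yields the claim once $C_0$ is large.

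\textbf{Iteration.} The new family $\{T_i\}$ consists of sets of size at most $\kappa/2$ for all but a small fraction of indices, and it is still $\alpha$-spread: $T_i\subseteq \s{G}_i$, so $\{i:\s{H}\subseteq T_i\}\subseteq\{i:\s{H}\subseteq \s{G}_i\}$. Apply the key lemma in round $s$ with $W_s$ to the surviving family, discarding the bad indices. Each round discards an expected fraction at most $(C_1/C_0)^{\kappa_s/2}$, and $\kappa_s\approx k2^{-s}$. The discarded fractions are summable, and the total is $\le 1/4$ for large $C_0$. A final Markov bound then shows that with probability $\ge 1/2$ some index survives all $m$ rounds. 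For a survivor, $\kappa_m<1$, so its fragment is empty. By construction every $T^{(s)}$ satisfies $T^{(s)}\subseteq \s{G}_i$ and $T^{(s-1)}\setminus$ (fragment) $\subseteq W_s$. Unwinding this chain, the empty final fragment means that some $\s{G}_j\subseteq W_1\cup\cdots\cup W_m$, which is the desired containment.

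\textbf{Main obstacle.} The main obstacle is the encoding step in the key lemma, specifically making the map $(W,i)\mapsto (Z,T,i)$ injective enough, i.e., recovering $T_i$ from $Z$ at a cost of only $2^{O(\kappa)}$. This is exactly where minimality of the fragment is used: I would fix the canonical tie-break (lexicographically least minimizer $j$) and argue that $T_i$ equals the minimal fragment of $\s{G}_j$ with respect to $Z$, so it is a subset of a member contained in $Z\cup \s{G}_i$. I expect the delicate points to be the constants and the handling of the last $O(1)$ rounds, where $\kappa_s$ is small and the decay $(C_1/C_0)^{\kappa_s/2}$ is only a constant. These last rounds are where the $\log k$ factor in $q$ is spent: a few extra rounds of size $C_0/\alpha$ absorb this constant loss.
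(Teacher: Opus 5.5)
The paper gives no proof of this lemma. It imports it from \cite{mossel2022second}, who adapt Theorem~1.6 of \cite{frankl2021fractional}. Your outline reconstructs exactly that minimal-fragment argument: encode a bad pair by $Z=W\cup T_i$, use spreadness to count the indices $i$ with $T\subseteq\s{G}_i$, halve the fragment size over about $\log_2 k$ independent rounds, and couple monotonically to $\calG(n,q)$. So the route is right. The bookkeeping you describe also goes through: spreadness in absolute counts survives the discards, and the chain $T^{(s)}_i\cup W_s\supseteq T^{(s-1)}_j$ unwinds to some $\s{G}_j\subseteq W_1\cup\cdots\cup W_m$.

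The step you leave open is the crux, and your description of it is off in a way that matters. It has nothing to do with the tie-break for $j$. The point is that \emph{every} member $F$ of the current family with $F\subseteq Z$ contains $T_i$:
\begin{itemize}
\item Such an $F$ satisfies $F\subseteq W\cup T_i\subseteq W\cup\s{G}_i$, so it competes in the minimization defining $T_i$, giving $|F\setminus W|\ge|T_i|$.
\item On the other hand $F\setminus W\subseteq T_i$.
\item Hence $T_i=F\setminus W\subseteq F$.
\end{itemize}
So take $F(Z)$ to be the first member, in a fixed order, contained in $Z$; it exists since $\s{G}_j\subseteq Z$. Then $T$ is one of at most $2^{\kappa}$ subsets of $F(Z)$ (that is $2^\kappa$ choices, not $2^\kappa$ bits).

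The member must be a function of $Z$ alone. Your phrase ``a subset of a member contained in $Z\cup\s{G}_i$'' would force you to decode $i$ before $T$. You could then only bound the number of $i$ by $M$ rather than $M\alpha^{-t}$, losing exactly the factor that cancels $\binom{N}{w+t}/\binom{N}{w}\le(\alpha/C_0)^t$.

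Also, the $\log k$ is spent on the roughly $\log_2 k$ halving rounds, not on the last few. Those last rounds contribute only $O(1/C_0)$ to the expected discarded fraction, which is absorbed because the target probability is a constant; no extra rounds are needed.

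Finally, both the statement and your coupling step tacitly need $k\ge 2$ (or $\log(2k)$ in place of $\log k$). For $k=1$ the hypothesis reads only $q>0$, whereas your single round needs density of order $C_0/\alpha$. The claim genuinely fails there: take the uniform measure on $M$ distinct single edges, which is $M$-spread, with $q\ll 1/M$.
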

We now propose the following generalization of the above spread lemma.
\begin{lemma}[Generalized spread lemma]
\label{lem:Gspread}
Fix integers $k\geq  1$ and $M\geq  L\geq  1$.  
Let $\calG_M\triangleq\ppp{\s{G}_{1},\dots ,\s{G}_{M}}\subseteq K_{n}$ satisfy $|e(\s{G}_{i})|\leq  k$ for every $i$. Let $\pi$ be the uniform measure on $\calG_M$, and assume the
\emph{$\alpha$‑spread} condition in \eqref{eq:spread}. Then, there is an absolute constant $C>0$ such that if,
\begin{align}
  q>C\frac{L\log(kL)}{\alpha}
  \label{eq:threshold}
\end{align}
then,
\begin{align}
  \P_{\s{G}\sim \calG(n,q)}\pp{\s{G}\;
    \s{contains}\; L\; \s{distinct}\; \s{graphs}\;\s{among}\;
    \calG_M}\geq\frac{1}{2}.
\end{align}
\end{lemma}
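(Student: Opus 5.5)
The plan is to reduce to the single-copy spread lemma (Lemma~\ref{lem:spread}) by splitting $\s{G}\sim\calG(n,q)$ into $L$ independent sparser rounds, and to keep the family spread after each round by discarding whatever has already been found. Set $q'$ by $1-q=(1-q')^L$, so that $q'\geq q/L$. Then $\s{G}$ has the same law as $\s{G}^{(1)}\cup\cdots\cup\s{G}^{(L)}$ with $\s{G}^{(j)}\sim\calG(n,q')$ i.i.d. Under \eqref{eq:threshold},
\begin{align}
q'\geq \frac{q}{L}>C\frac{\log(kL)}{\alpha}.
\end{align}
We will find the $L$ distinct graphs one per round.

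The main issue is that after some $\s{G}_i$'s are found, the remaining uniform measure may fail to be $\alpha$-spread. Suppose that after $j-1$ rounds we have found distinct $\s{G}_{i_1},\dots,\s{G}_{i_{j-1}}$. Let $\calG^{(j)}\triangleq\calG_M\setminus\{\s{G}_{i_1},\dots,\s{G}_{i_{j-1}}\}$, which has $M-j+1\geq M-L+1$ elements. If $M\geq 2L$, then for every $\s{H}$,
\begin{align}
\pi_{\calG^{(j)}}(\s{H}\subseteq\Gamma)\leq \frac{M}{M-L+1}\,\pi(\s{H}\subseteq\Gamma)\leq 2\alpha^{-|e(\s{H})|}.
\end{align}
The factor $2$ cannot be written as $(\alpha')^{-|e(\s{H})|}$ for every $\s{H}$, but it can be absorbed with $\alpha'=\alpha/2$: whenever $|e(\s{H})|\geq1$ we have $2\alpha^{-|e(\s{H})|}\leq(\alpha/2)^{-|e(\s{H})|}$. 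So $\calG^{(j)}$ is $(\alpha/2)$-spread. I expect this step to need the most care. When $M<2L$, I would instead take $\alpha'=\alpha/L$-type bounds; the extra factor $L$ in \eqref{eq:threshold} has room for this. Alternatively, one can note that $\alpha\leq M^{1/|e|}$-type constraints make this case easy once $q$ is close to $1$.

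Next, to get failure probability small enough for a union bound, I will use the quantitative form of the spread lemma from \cite{frankl2021fractional}: if $q'\geq C'\log(k/\epsilon)/\alpha'$, then the probability that $\calG(n,q')$ contains no member of the family is at most $\epsilon$. Take $\epsilon=1/(2L)$. Then $\log(k/\epsilon)=\log(2kL)$, and with $\alpha'=\alpha/2$ the condition $q'>2C'\log(2kL)/\alpha$ follows from \eqref{eq:threshold} with $C$ large. If only the $1/2$ version of Lemma~\ref{lem:spread} is available, I would instead boost it: split each round into $\lceil\log_2(2L)\rceil$ further independent subrounds. The required density then grows only by a $\log L$ factor, which is again covered by $L\log(kL)$.

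Finally, conditional on the outcome of rounds $1,\dots,j-1$, round $j$ is independent of them. The family $\calG^{(j)}$ is determined by the previous rounds and is $\alpha'$-spread, so round $j$ fails with probability at most $1/(2L)$. Because $\calG^{(j)}$ excludes the graphs already found, any graph found in round $j$ is new, so the $L$ graphs are distinct. A union bound over the $L$ rounds gives overall success probability at least $1-L\cdot\frac{1}{2L}=\frac{1}{2}$. Since each $\s{G}^{(j)}\subseteq\s{G}$, the graph $\s{G}$ contains $L$ distinct members of $\calG_M$, as claimed.
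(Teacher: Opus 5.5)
Your argument is correct in its main line and takes a genuinely different route from the paper. The paper does not sprinkle. It bundles ordered $L$-tuples of distinct indices into unions $\s{G}_\sigma=\bigcup_{\ell=1}^{L}\s{G}_{i_\ell}$, claims the uniform law on these tuples is $(\alpha/L)$-spread, and applies Lemma~\ref{lem:spread} once with $k'=kL$ and $\alpha'=\alpha/L$. There, the factor $L$ in \eqref{eq:threshold} is the spread loss and $\log(kL)$ comes from the larger edge bound. That route needs only the probability-$\tfrac12$ black box, but its crux is a spread estimate for \emph{unions}. The paper's computation $1-\prod_j(1-q_j)$ only bounds the probability that a single member $\s{G}_{i_j}$ contains $\s{H}$. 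The event $\s{H}\subseteq\s{G}_\sigma$ also allows $e(\s{H})$ to be split across several members, so one needs a union bound over the at most $L^{|e(\s{H})|}$ splittings. One also needs a without-replacement factor like your $M/(M-L+1)$, since conditioning on $\s{G}\notin\{\s{G}_{i_1},\dots\}$ can \emph{increase} the inclusion probability. In your route, the factor $L$ is paid by the sprinkling $q'\geq q/L$ and the $\log L$ by the per-round failure probability $1/(2L)$. The only spread check is the one-line bound $M/(M-L+1)\leq 2$ for the residual family, and unions never appear. The price is that you need the $\epsilon$-dependent spread lemma with threshold $\log(k/\epsilon)/\alpha$, which is strictly stronger than Lemma~\ref{lem:spread} as stated. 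Such a form is known from the sunflower-type proofs of the spread lemma, but you should cite a version that states this $\epsilon$-dependence explicitly.

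Two of your side remarks are wrong as written, though the main line does not need them.

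First, for $M<2L$ the option $\alpha'=\alpha/L$ fails; it is really $\alpha/(2L)$, since $M/(M-L+1)\leq M<2L$. The factor $L$ in \eqref{eq:threshold} has already been spent on $q'\geq q/L$, so this would need $q\gtrsim L^2\log(kL)/\alpha$. Your alternative is the right one, and it makes the case vacuous. For $L\geq 2$, some $\s{G}_i$ has an edge because the $\s{G}_i$ are distinct. Taking $\s{H}=\s{G}_i$ in \eqref{eq:spread} gives $1/M\leq\alpha^{-1}$, i.e.\ $\alpha\leq M<2L$. Then \eqref{eq:threshold} forces $q>\tfrac{C}{2}\log(kL)\geq\tfrac{C}{2}\log 2>1$ once $C>2/\log 2$. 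For $L=1$ the case is $M=1$, where \eqref{eq:spread} forces $\s{G}_1$ to be edgeless.

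Second, the boosting fallback costs a multiplicative $\log L$, not an additive one. Each round has density about $q/L$, which covers $\log(kL)/\alpha$. But $\lceil\log_2(2L)\rceil$ subrounds, each needing $\gtrsim\log k/\alpha$, require $\log(2L)\log k/\alpha$, and $\log L\cdot\log k\not\lesssim\log(kL)$ when $k,L\to\infty$. If you want to use only the $\tfrac12$-version, run $4L$ rounds instead. Each has density at least $q/(4L)>2C_0\log k/\alpha$, where $C_0$ is the constant of Lemma~\ref{lem:spread}; this holds for $C\geq 8C_0$. As long as fewer than $L$ graphs have been found, the residual family is $(\alpha/2)$-spread. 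So each round finds a new graph with conditional probability at least $\tfrac12$, and the number of successes dominates $\mathrm{Bin}(4L,\tfrac12)$, which is at least $L$ with probability at least $\tfrac12$.
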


\begin{proof}[Proof of Lemma~\ref{lem:Gspread}]
Define,
\begin{align}
  \Sigma &\triangleq
  \bigl\{ (i_{1},\dots ,i_{L})\in[M]^{L}:
          i_{1},\dots ,i_{L}\;\s{distinct}\bigr\},
\end{align}
and thus, $|\Sigma|=\frac{M!}{(M-L)!}$. 
%\begin{align}
%  |\Sigma|& = M(M-1)\cdots(M-L+1).
%\end{align}
For $\sigma=(i_{1},\dots ,i_{L})\in\Sigma$, set,
\begin{align}
  \s{G}_{\sigma}\triangleq\bigcup_{\ell=1}^L\s{G}_{i_{\ell}}.
\end{align}
Let
\begin{align}
  \pi^{\langle L\rangle} = \s{Unif}(\Sigma),
\end{align}
be the \emph{uniform} law on the index set $\Sigma$. Sampling $\sigma\sim\pi^{\langle L\rangle}$ and putting
$\s{G}^{\ast}=\s{G}_{\sigma}$ chooses an ordered $L$‑tuple without replacement and bundles its $L$ distinct graphs into one target. Now, fix a non‑empty subgraph $\s{H}\subseteq \calK_{n}$.  
For $j\in[L]$ let,
\begin{align}
  q_{j}\triangleq
  \pr_{\s{G}\sim\pi}\pp{\s{H}\subseteq \s{G}\vert\s{G}\notin\{\s{G}_{i_{1}},\dots ,\s{G}_{i_{j-1}}\}}.
\end{align}
Since $\pi$ is assumed $\alpha$‑spread, and because conditioning can only decrease the chance that $\s{H}\subseteq \s{G}$, we have,
\begin{align}
    q_{j}\leq \pr_{\s{G}\sim\pi}\pp{\s{H}\subseteq \s{G}}\leq\alpha^{-|e(\s{H})|}.
\end{align}
Therefore,
\begin{align}
  \pi^{\langle L\rangle}\p{\s{H}\subseteq \s{G}^{\ast}}&=  1-\prod_{j=1}^{L}(1-q_{j})\\
  &\leq L\alpha^{-|e(\s{H})|}\\
  & \leq (L/\alpha)^{|e(\s{H})|}.
\end{align}
Thus $\pi^{\langle L\rangle}$ is $(\alpha/L)$‑spread. Now, we invoke the single spread lemma in Lemma~\ref{lem:spread}. Note that each element $\s{G}^{\ast}$ has at most $kL$ edges, and there are at most
$|\Sigma|\leq  M^{L}$ such unions. Applying Lemma~\ref{lem:spread} with parameters $k' = kL$ and $\alpha' = \alpha/L$, we see that \eqref{eq:threshold} guarantees
$\P_{\s{G}\sim \calG(n,q)}\pp{\s{G}\supseteq \s{G}^{\ast}}\ge\tfrac12$, which concludes the proof.
\end{proof}
We are now in a position to prove Theorem~\ref{th:SpreadGen}.
\begin{proof}[Proof of Theorem~\ref{th:SpreadGen}]
Let $\pi = \pi_\Gamma$ denote the uniform distribution over all copies of a fixed graph $\Gamma$ within the complete graph $\calK_n$. Let $\Gamma'\sim\pi_\Gamma$ be a random sample from this distribution. Now, consider a subgraph $\s{H}\subset\Gamma$. Let $ \pi_{\s{H}} $ denote the uniform distribution over all copies of $\s{H}$ in $\calK_n$, and let $\s{H}'\sim\pi_{\s{H}}$. For any fixed instances $ \s{H}_0 \subseteq \calK_n $ and $ \Gamma_0 \subseteq \calK_n $, copies of $ \s{H} $ and $ \Gamma $, respectively, we can compute the inclusion probability in two equivalent ways,
\begin{align}
\pi_{\Gamma}(\s{H}_0 \subseteq \Gamma') = \pi_{\s{H}}(\s{H}' \subseteq \Gamma_0) = \frac{\calN(\s{H},\Gamma)}{|\calS_{\s{H}}|}.
\label{eq:symmetry}
\end{align}
Now, combining \eqref{eq:symmetry} with \eqref{eqn:EquivSpreaPro}, we obtain,
\begin{align}
\pi_{\Gamma}(\s{H}_0 \subseteq \Gamma') &=\frac{\calN(\s{H},\Gamma)}{|\calS_{\s{H}}|} \\
&\leq  \frac{2}{L} \cdot \pp{\widetilde{q}^{(L)}_E(\Gamma)}^{|e(\s{H})|}\\
&\leq  \left( \frac{1}{2\widetilde{q}^{(L)}_E(\Gamma)} \right)^{-|e(\s{H})|}.
\end{align}
Since the bound holds uniformly over all subgraphs $\s{H}_0$, we conclude that $\pi_\Gamma$ is $\alpha$-spread with $\alpha = \frac{1}{2\widetilde{q}^{(L)}_E(\Gamma)}$. An application of Lemma~\ref{lem:Gspread} with $k = |e(\Gamma)|$ then completes the proof of Theorem~\ref{th:SpreadGen}.

\end{proof}

Next, we prove the following result.
\begin{theorem}
\label{theorem:sublogProb}
    Fix $L\geq1$. Let $\Gamma=(\Gamma_n)$ be a sequence of graphs such that $\omega(1) \leq |v(\Gamma)|\leq n $  and
    \begin{align}
        (1+\varepsilon)\mu(\Gamma_n)\cdot [\log\log(L|e(\Gamma_n)|)+\log(CL)]+\log|v(\Gamma_n)|\leq \log n,\label{eq:condCopy}
    \end{align}
    for some $\varepsilon>0$ and $C>0$. Then, for any fixed $q$, a sample from $\calG(n,q)$ contains at least $L$ isomorphic copies of $\Gamma$, with probability at least $1/2$.
\end{theorem}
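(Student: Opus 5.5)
The plan is to apply Theorem~\ref{th:SpreadGen}. It suffices to show that $q_c^{(L)}(\Gamma_n)\le q$ for every fixed $q$ and all large $n$. By Theorem~\ref{th:SpreadGen},
\begin{align}
q_c^{(L)}(\Gamma)\le C L\,\widetilde q_E^{(L)}(\Gamma)\log(L|e(\Gamma)|),
\end{align}
so the task reduces to an upper bound on the generalized expectation threshold via the representation \eqref{eqn:EquivSpreaPro}. Concretely, we need that for every $\s{H}\subseteq\Gamma$ without isolated vertices,
\begin{align}
\p{\frac{L\,\calN(\s{H},\Gamma)}{2|\calS_{\s{H}}|}}^{1/|e(\s{H})|}\le \frac{q}{CL\log(L|e(\Gamma)|)}.
\end{align}
Equivalently, we need the count of copies of $\s{H}$ in $\Gamma$ to be at most $|\calS_{\s{H}}|$ times $\p{q/(CL\log(L|e(\Gamma)|))}^{|e(\s{H})|}$, up to the factor $2/L$.

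The key step is a crude counting bound. Let $h\triangleq|v(\s{H})|$ and $k\triangleq|v(\Gamma)|$. Then $\calN(\s{H},\Gamma)\le |\calS_{\s{H}}\cap \calK_{v(\Gamma)}|$, and the ratio of copies of $\s{H}$ inside a $k$-vertex set to copies in $\calK_n$ is $\binom{k}{h}/\binom{n}{h}\le (k/n)^h$ up to the common automorphism factor, which cancels. So it suffices to show
\begin{align}
\frac{L}{2}(k/n)^{h}\le \p{\frac{q}{CL\log(L|e(\Gamma)|)}}^{|e(\s{H})|}.
\end{align}
Taking logarithms and using $|e(\s{H})|\le \mu(\Gamma)h$ (definition of maximum subgraph density), it is enough that
\begin{align}
h(\log n-\log k)\ge \mu(\Gamma)h\pp{\log\log(L|e(\Gamma)|)+\log(CL)+\log(1/q)}+\log(L/2).
\end{align}
Dividing by $h$, this is exactly \eqref{eq:condCopy}. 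The slack factor $(1+\varepsilon)$ absorbs the fixed constant $\log(1/q)$, and the term $\log(L/2)/h$ is harmless, since $h\ge 2$ and either $\mu(\Gamma)\ge 1/2$ or the slack gives an additive margin of order $\varepsilon\mu\log\log|e(\Gamma)|$. Here $|v(\Gamma)|=\omega(1)$ ensures that $\log\log(L|e(\Gamma)|)\to\infty$, so this margin dominates the constants $\log(1/q)$ and $\log(L/2)$.

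The main obstacle is to make this absorption uniform over all $\s{H}$. The constants $\log(1/q)$ and $\log(L/2)$ must be beaten by $\varepsilon\mu(\Gamma)\log\log(L|e(\Gamma)|)$. This requires $\mu(\Gamma)$ to be bounded below, which holds since $\mu(\Gamma)\ge 1/2$ for any graph with an edge, and it requires $\log\log|e(\Gamma)|\to\infty$, which is where $|v(\Gamma)|=\omega(1)$ enters. One also needs care with the automorphism factors, showing $\calN(\s{H},\Gamma)/|\calS_{\s{H}}|\le (k)_h/(n)_h\le (k/n)^h$. With these points in place, $\widetilde q_E^{(L)}\cdot CL\log(L|e(\Gamma)|)\le q$, and Theorem~\ref{th:SpreadGen} gives probability at least $1/2$ of containing $L$ copies.
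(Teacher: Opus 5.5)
Your proposal is correct and follows the paper's proof essentially step by step. You reduce via Theorem~\ref{th:SpreadGen} to bounding $\widetilde q_E^{(L)}$ through its max-over-$\s{H}$ representation, bound $\calN(\s{H},\Gamma)/|\calS_{\s{H}}|$ by $(|v(\Gamma)|/n)^{|v(\s{H})|}$ (you count this directly as $\binom{k}{h}/\binom{n}{h}$, where the paper reads it as $\P_\Gamma[\s{H}\subseteq\Gamma]$ and invokes Lemma~\ref{lem:probcalc2}), use $|e(\s{H})|\le\mu(\Gamma)|v(\s{H})|$, and absorb $\log(1/q)$ and $\log(L/2)$ into the $(1+\varepsilon)$ slack using $|e(\Gamma)|\to\infty$, with your explicit use of $\mu(\Gamma)\ge 1/2$ and of the eventual positivity of the bracket being slightly more careful than the paper's one-line absorption.
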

In order to prove Theorem~\ref{theorem:sublogProb} we need to bound the probability that a uniform random copy of $\Gamma$ in $\calK_n$ contains an arbitrary isomorphic copy of a subgraph $\s{H}\subseteq \Gamma$ in $\calK_n$, namely, $\P_{\Gamma}[\s{H}\subseteq \Gamma]$.  
\begin{lemma}\label{lem:probcalc2} For any $\s{H}\subseteq \Gamma$,
\begin{align}
    \P_{\Gamma}[\s{H}\subseteq \Gamma]\leq \p{\frac{|v(\Gamma)|}{n}}^{|v(\s{H})|}.
\end{align}
\end{lemma}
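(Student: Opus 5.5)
The plan is to compute the inclusion probability exactly by symmetry and then bound it by a ratio of falling factorials. Fix a copy $\s{H}_0\subseteq\calK_n$ of $\s{H}$. By \eqref{eq:symmetry}, $\P_{\Gamma}[\s{H}_0\subseteq\Gamma]=\calN(\s{H},\Gamma)/|\calS_{\s{H}}|$, where $\Gamma$ is a uniform copy in $\calK_n$. So it suffices to upper bound the number of copies of $\s{H}$ inside a fixed $\Gamma$ and to lower bound the number of copies of $\s{H}$ in $\calK_n$.

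Let $h=|v(\s{H})|$ and $k=|v(\Gamma)|$. Work with labeled embeddings: every copy of $\s{H}$ corresponds to exactly $|\s{Aut}(\s{H})|$ injective homomorphisms $v(\s{H})\to[n]$ with that image. Hence $|\calS_{\s{H}}|=(n)_h/|\s{Aut}(\s{H})|$, because every injection $v(\s{H})\to[n]$ is an embedding into the complete graph. On the other hand, $\calN(\s{H},\Gamma)\le (k)_h/|\s{Aut}(\s{H})|$, since the embeddings into $\Gamma$ form a subset of the injections $v(\s{H})\to v(\Gamma)$. Dividing, the automorphism factors cancel and
\begin{align}
\P_{\Gamma}[\s{H}_0\subseteq\Gamma]\leq\frac{(k)_h}{(n)_h}=\prod_{i=0}^{h-1}\frac{k-i}{n-i}.
\end{align}

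It remains to bound each factor. Since $k\le n$, we have $\frac{k-i}{n-i}\le\frac{k}{n}$ for every $0\le i<h$, because $(k-i)n\le k(n-i)$ is equivalent to $ik\le in$. Multiplying the $h$ factors gives $(k/n)^{h}$, which is the claimed bound.

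No step is a real obstacle. The only care needed is in the counting: one must use consistent normalization, either labeled embeddings or unlabeled copies, in both the numerator and the denominator, so that the automorphism factors cancel. One could instead argue directly without \eqref{eq:symmetry}: a uniform injection $\phi:v(\Gamma)\to[n]$ must send some $h$-subset of $v(\Gamma)$ onto $v(\s{H}_0)$, and a union bound over the at most $(k)_h$ ordered choices, each of probability $1/(n)_h$, gives the same ratio.
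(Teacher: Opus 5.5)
Your proof is correct and reaches the same intermediate bound as the paper, $(k)_h/(n)_h$ (in your notation $k=|v(\Gamma)|$, $h=|v(\s{H})|$), and then uses the identical factor-by-factor comparison with $k/n$. The paper gets to that ratio slightly more directly: it skips the identity \eqref{eq:symmetry} and the automorphism bookkeeping, and simply observes that $\s{H}_0\subseteq\Gamma$ forces $v(\s{H}_0)\subseteq v(\Gamma)$, an event of probability exactly $(k)_h/(n)_h$ when $v(\Gamma)$ is a uniformly random $k$-subset; this is essentially the alternative argument you sketch at the end.
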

\begin{proof}[Proof of Lemma~\ref{lem:probcalc2}]
Let $m = |v(\Gamma)|$ and $k = |v(\s{H})|$. Fix a particular copy $\s{H}_0$ of $\s{H}$ in $\calK_n$ with vertex set $U = \{u_1,\dots,u_k\}$. Generate a uniformly random copy of $\Gamma$ by first picking its vertex set $S \subset [n]$ uniformly among all $m$-subsets (the internal labeling of $\Gamma$ can only lower the probability of containing $\s{H}_0$, so it suffices to control this step). If the sampled copy contains $\s{H}_0$, then necessarily $U \subseteq S$. Thus
\begin{align}
\pr_{\Gamma}[\s{H}_0 \subseteq \Gamma]&\leq \pr[U \subseteq S]\\
&= \prod_{i=1}^{k} \pr[u_i \in S \vert u_1,\dots,u_{i-1} \in S].
\end{align}
Conditioned on $u_1,\dots,u_{i-1} \in S$, there are $n-(i-1)$ remaining vertices and $m-(i-1)$ remaining slots in $S$, so
\begin{align}
\pr[u_i \in S \vert u_1,\dots,u_{i-1} \in S] 
&= \frac{m-(i-1)}{n-(i-1)}\\
&\leq \frac{m}{n}. 
\end{align}
Multiplying these $k$ bounds gives
\begin{align}
\pr_{\Gamma}[\s{H} \subseteq \Gamma] &\leq\pr[U \subseteq S]\\
&\leq \left(\frac{m}{n}\right)^k = \left(\frac{|v(\Gamma)|}{n}\right)^{|v(\s{H})|}.
\end{align}
    %The proof relies on two simple argument. First, we note that by symmetry considerations, the probability that a uniform random of copy $\Gamma$ contains a fixed copy of $\s{H}$ equals to the probability that a uniform random copy of $\s{H}$ in $\calK_n$ is contained in some fixed copy of $\Gamma$ (see, \cite[Lemma~2]{elimelech2025detecting}, for a formal proof). Second, we note that the event that the random copy $\s{H}$ is contained in some fixed copy $\Gamma'$ contains the event that $v(\s{H})\subseteq v(\Gamma')$. Thus, since $v(\s{H})\sim \s{Unif}\binom{[n]}{|v(\s{H})|}$ we have
    %\begin{align}
    %     \P_{\Gamma}[\s{H}\subseteq \Gamma]&\leq \P_{\s{H}}[v(\s{H})\subseteq v(\Gamma)]=\frac{\binom{|v(\Gamma)|}{|v(\s{H})|}}{\binom{n}{|v(\s{H})|}}\\
    %     &=\frac{|v(\Gamma)|\cdot (|v(\Gamma)|-1)\cdots (|v(\Gamma)|-|v(\s{H})|+1) }{n(n-1)\cdots (n-|v(\s{H})|+1)}\leq \p{\frac{|v(\Gamma)|}{n}}^{|v(\s{H})|},
    %\end{align}
    %where the last equality follows since the function $f(i)=\frac{k-i}{n-i}$ is monotone decreasing for $k<n$.
\end{proof}
We are now ready to prove Theorem~\ref{theorem:sublogProb}.
\begin{proof}[Proof of Theorem~\ref{theorem:sublogProb}]
    From the definition of $q_{c}^{(L)}(\Gamma)$ in \eqref{eqn:qCl}, our goal is to understand for which $\Gamma$'s we have $q_{c}^{(L)}(\Gamma)\leq q$, for any fixed $q\in (0,1]$. By Theorem~\ref{th:SpreadGen}, it is sufficient to show that $ \Tilde{q}_E^{(L)}\leq \frac{q}{CL\log(L|e(\Gamma)|)}$ for any fixed $q$, which holds if and only if,
    \begin{align}
        \inf_{\s{H}\subseteq \Gamma} \frac{\E_{\s{G}\sim\calG(n,\tilde{q})}\pp{\calN(\s{H},\s{G})}}{\calN(\s{H},\Gamma)}\geq \frac{L}{2},\label{eq:condinf}
    \end{align}
    where $\tilde{q}\triangleq\frac{q}{CL\log(L|e(\Gamma)|)}$. We note that,
    \begin{align}
    {\E_{\s{G}\sim\calG(n,\tilde{q})}\pp{\calN(\s{H},\s{G})}}=|\calS_{\s{H}}|\cdot \tilde{q}^{|\s{H}|},
    \end{align}
    where we recall that $|\calS_{\s{H}}|=\calN(\s{H},\calK_n)$ denotes the number of copies of $\s{H}$ in the complete graph, and $|\s{H}|$ denotes the number of edges in $\s{H}$. Thus, \eqref{eq:condinf} holds if and only if for any $\s{H}\subseteq \Gamma$ we have,
    \begin{align}
        \frac{\calN(\s{H},\Gamma)}{|\calS_{\s{H}}|}\leq \frac{2}{L} \tilde{q}^{|\s{H}|}=\frac{2}{L}\p{\frac{q}{CL\log(L|e(\Gamma)|)}}^{|\s{H}|}.\label{eq:probRC}
    \end{align}
    An easy combinatorial argument shows that the expression on the left-hand side of \eqref{eq:probRC} equals $\P_{\Gamma}[\s{H}\subseteq \Gamma]$ (see, \cite[Lemma~2]{elimelech2025detecting}). Thus, \eqref{eq:probRC} holds if,
    \begin{align}
        \log \P_{\Gamma}[\s{H}\subseteq \Gamma]-\log\frac{2}{L}-|\s{H}|\p{\log q -\log(CL)-\log\log(L|e(\Gamma)|)}\leq 0.
    \end{align}
    By the assumptions that $|v(\Gamma)|=\omega(1)$, and that $\Gamma$ has no isolated vertices, for any $\varepsilon$ and for sufficiently large $n$, the above holds if, 
    \begin{align}
        \log \P_{\Gamma}[\s{H}\subseteq \Gamma]+|\s{H}|\log(CL)+(1+\varepsilon)|\s{H}|\log\log(L|e(\Gamma)|)\leq 0.\label{eq:almostF}
    \end{align}
    Finally, using Lemma~\ref{lem:probcalc2}, the left-hand side of \eqref{eq:almostF} can be upper bounded by,
    \begin{align}
        &\log \P_{\Gamma}[\s{H}\subseteq \Gamma]+|\s{H}|\log(CL)+(1+\varepsilon)|\s{H}|\log\log(L|e(\Gamma)|\\
        &\leq |v(\s{H})|\p{\log|v(\Gamma)|-\log n}+|\s{H}|\log(CL)+(1+\varepsilon)|\s{H}|\log\log(L|e(\Gamma)|)\\
        &=|v(\s{H})|\cdot \p{\log|v(\Gamma)|+\frac{|\s{H}|}{|v(\s{H})|}\log(CL) +(1+\varepsilon)\frac{|\s{H}|}{|v(\s{H})|} \log \log (L|e(\Gamma)|) -\log n}\\
        &\overset{(a)}{\leq}|v(\s{H})|\cdot \p{\log|v(\Gamma)|+(1+\varepsilon)\mu(\Gamma) [\log \log |e(\Gamma)|+\log(CL)] -\log n }\leq 0,
    \end{align}
    where $(a)$ follows from the definition of the maximal subgraph density, and the last inequality follows from \eqref{eq:condCopy}. This concludes the proof.
\end{proof}

\subsection{Proof of Theorem~\ref{thm:lowerp1}}\label{app:prrpf}

\noindent\textbf{Recovery through detection.} As mentioned before we focus on the case where $\Gamma$ is on balanced. First note that the worst-case error probability can be lower bounded by the average risk as follows,
Accordingly, 
\begin{align}
    \inf_{\hat{\Gamma}^M}\max_{\Gamma^\star\in\calS_{\Gamma}}\pr[\hat{\Gamma}^M(\s{G};\Gamma^{(M-1)})\neq\Gamma^{(M)}] = \inf_{\hat{\Gamma}^M}\max_{\Gamma^{(M)}\in\calM(\Gamma^{(M-1)},\Gamma)}\pr[\hat{\Gamma}^M(\s{G};\Gamma^{(M-1)})\neq\Gamma^{(M)}]
\end{align}
\begin{align}
    \max_{\Gamma^\star\in\calS_{\Gamma}}\pr[\hat{\Gamma}(\s{G})\neq\Gamma^\star]\geq \bE_{\Gamma\sim\s{Unif}(\calS_\Gamma)}\pr[\hat{\Gamma}(\s{G})\neq\Gamma],\label{eqn:worst-caseRec}
\end{align}
and thus we next focus on the case where $\Gamma$ is drawn uniformly at random. It proves more convenient to analyze the probability of correct recovery, i.e.,
\begin{align}
     \pr_{\calG_{\Gamma_n}}\pp{\hat{\Gamma}(\s{G})=\Gamma} &= \bE\pp{\pr\pp{\left.\hat{\Gamma}(\s{G})=\Gamma\right|\s{G}}},
\end{align}
where we have used the law of total expectation. Let us analyze the distribution of $\Gamma$ \emph{given} $\s{G}$. First, we note that for any $\Gamma'\in\calS_{\Gamma}$,
\begin{align}
    \pr_{\calG_{\Gamma_n}}(\s{G}\vert\Gamma=\Gamma') = q^{\binom{n}{2}-|\Gamma|}\Ind\ppp{\Gamma'\subseteq \s{G}},
\end{align}
and thus, 
\begin{align}
    \pr_{\calG_{\Gamma_n}}(\Gamma=\Gamma'\vert \s{G}) &= \frac{\pr_{\calG_{\Gamma_n}}(\s{G},\Gamma=\Gamma')}{\pr_{\calG_{\Gamma_n}}(\s{G})}\\
     &= \frac{\pr_{\calG_{\Gamma_n}}(\s{G}\vert\Gamma=\Gamma')\pr(\Gamma=\Gamma')}{\sum_{\Gamma''\in\calS_{\Gamma}}\pr_{\calG_{\Gamma_n}}(\s{G}\vert\Gamma=\Gamma'')\pr(\Gamma=\Gamma'')}\\
     &=\frac{q^{\binom{n}{2}-|\Gamma|}\Ind\ppp{\Gamma'\subseteq \s{G}}}{\sum_{\Gamma''\in\calS_{\Gamma}}q^{\binom{n}{2}-|\Gamma''|}\Ind\ppp{\Gamma''\subseteq \s{G}}}\\
    & = \frac{\Ind\ppp{\Gamma'\subseteq \s{G}}}{\s{N}_{\Gamma}(\s{G})},
\end{align}
where the second equality follows from the fact that $\Gamma$ is drawn uniformly over $\calS_{\Gamma}$, the last inequality is because $|\Gamma'|=|\Gamma''|$, for any $\Gamma',\Gamma''\in\calS_{\Gamma}$, and we recall that $\s{N}_{\Gamma}(\s{G})$ counts the number of copies of $\Gamma$ in $\s{G}$. Thus, using the above we see that,
\begin{align}
    \pr_{\calG_{\Gamma_n}}\pp{\hat{\Gamma}(\s{G})=\Gamma} &= \bE\pp{\pr\pp{\left.\hat{\Gamma}(\s{G})=\Gamma\right|\s{G}}}\\
    & = \bE\pp{\frac{\Ind\ppp{\hat{\Gamma}(\s{G})\subseteq \s{G}}}{\s{N}_{\Gamma}(\s{G})}}\\
    &\leq \bE\pp{\s{N}^{-1}_{\Gamma}(\s{G})}.\label{eqn:recCount}
\end{align}
Luckily, we understand very well the distribution of $\s{N}_{\Gamma}(\s{G})$ when $\s{G}\sim\calG(n,q_n)$; here, on the other hand, the expectation is taken w.r.t. $\calG_{\Gamma_n}$. Nonetheless, we note to the following simple observation. Let $\s{G}\setminus\Gamma^\star$ be defined as the graph obtained by removing from $\s{G}$, the planted subgraph $\Gamma^\star$, all the edges between the vertices of $\Gamma^\star$ in $\s{G}$, and the edges from $\Gamma^\star$ to the vertices in $\s{G}$. Then, it is clear that $\s{G}\setminus\Gamma^\star\sim\calG(n-|v(\Gamma)|,q_n)$. Furthermore, if we let $\bar{\s{N}}_{\Gamma}(\bar{\s{G}})\triangleq\s{N}_{\Gamma}(\s{G}\setminus\Gamma^\star)$, then clearly, $\s{N}_{\Gamma}(\s{G}\setminus\Gamma^\star)\leq\s{N}_{\Gamma}(\s{G})$. Thus, if we fix $\ell\in\mathbb{N}$, then,
\begin{align}
    \bE\pp{\frac{1}{\s{N}_{\Gamma}(\s{G})}}& = \bE\pp{\frac{1}{\s{N}_{\Gamma}(\s{G})}\Ind\ppp{\s{N}_{\Gamma}(\s{G})>\ell}}+\bE\pp{\frac{1}{\s{N}_{\Gamma}(\s{G})}\Ind\ppp{\s{N}_{\Gamma}(\s{G})\leq\ell}}\\
    &\leq \ell^{-1}+\pr\pp{\s{N}_{\Gamma}(\s{G})\leq\ell}\\
    &\leq \ell^{-1}+\pr\pp{\s{N}_{\Gamma}(\s{G}\setminus\Gamma^\star)\leq\ell},\label{eqn:ExactRecov}
\end{align}
where in the second inequality we used the fact that $\s{N}_{\Gamma}(\s{G}) \geq1$ with probability one.

Finally, we prove Theorem~\ref{thm:lowerp1}, again, assuming that $\Gamma$ is balanced. Recalling \eqref{eqn:secondmomentEnumerator}, let $\bar{\s{L}}(\s{G})\triangleq\frac{\bar{\s{N}}_{\Gamma}(\bar{\s{G}})}{\bE\bar{\s{N}}_{\Gamma}(\bar{\s{G}}}$ denote the respective likelihood function. By Chebyshev's inequality, for $0<\alpha<1$,
\begin{align}
    \pr\pp{\bar{\s{N}}_{\Gamma}(\bar{\s{G}})\leq\alpha\cdot\bE\bar{\s{N}}_{\Gamma}(\bar{\s{G}})}&\leq \pr\pp{\abs{\bar{\s{N}}_{\Gamma}(\bar{\s{G}})-\bE\bar{\s{N}}_{\Gamma}(\bar{\s{G}})}\geq(1-\alpha)\cdot\bE\bar{\s{N}}_{\Gamma}(\bar{\s{G}})}\\
    &\leq\frac{\s{Var}\p{\bar{\s{N}}_{\Gamma}(\bar{\s{G}})}}{(1-\alpha)^2\bE^2\bar{\s{N}}_{\Gamma}(\bar{\s{G}})}.
\end{align}
Also,
\begin{align}
    \frac{\s{Var}\p{\bar{\s{N}}_{\Gamma}(\bar{\s{G}})}}{\bE^2\bar{\s{N}}_{\Gamma}(\bar{\s{G}})} &= \frac{\bE\bar{\s{N}}^2_{\Gamma}(\bar{\s{G}})}{\bE^2\bar{\s{N}}_{\Gamma}(\bar{\s{G}})}-1\\
    & = \bE[\bar{\s{L}}^2(\bar{\s{G}})]-1,
\end{align}
where the expectation is w.r.t. $\bar{\s{G}}\sim\calG(n-|v(\Gamma)|,q_n)$. The above second moment is, almost, the quantity we bound when lower bounding the risk of the corresponding \emph{detection problem}; it is almost, because here the underlying graphs are over $n-|v(\Gamma)|$ rather than $n$ vertices. However, this can be accounted for by replacing $n$ with $n-|v(\Gamma)|$ in \eqref{eq:condDenseStatproof}. Specifically, for sufficiently large $n$, it should be clear that there exist constants $\underline{C}$ and $\varepsilon>0$, such that \eqref{eq:condDenseStatproof} hold with $n$ replaced by $n-|v(\Gamma)|$, and accordingly, under these conditions, using \eqref{eqn:detectionGur},
\begin{align}
    \frac{\s{Var}\p{\bar{\s{N}}_{\Gamma}(\bar{\s{G}})}}{\bE^2\bar{\s{N}}_{\Gamma}(\bar{\s{G}})} &=\bE[\bar{\s{L}}^2(\bar{\s{G}})]-1=o(1).
\end{align}
Thus, with $\ell =  \alpha\cdot\bE\bar{\s{N}}_{\Gamma}(\bar{\s{G}})$,
\begin{align}
    \pr\pp{\s{N}_{\Gamma}(\s{G})\leq\ell}=o(1).
\end{align}
Finally, notice that, if we let $k=|v(\Gamma)|$, then,
\begin{align}
    \ell &= \alpha\cdot\bE\bar{\s{N}}_{\Gamma}(\bar{\s{G}}) \\
    &= \binom{n-k}{k}\frac{k!}{|\s{Aut}(\Gamma)|}\cdot q^{|\Gamma|}\\
    &\geq \p{\frac{n-k}{k}}^kq^{|\Gamma|}\\
    & = 2^{k\log\frac{n-k}{k}-|\Gamma|\log\frac{1}{q}}\\
    & = 2^{k\p{\log\frac{n-k}{k}-\frac{|\Gamma|}{k}\log\frac{1}{q}}}\\
    & \geq 2^{k\p{\log\frac{n-k}{k}-\mu(\Gamma_n)\log\frac{1}{q}}}\to\infty,
\end{align}
for some $\underline{C}$ under \eqref{eq:condDenseStat} for graphs with super-logarithmic density. Thus, using the above fact we get from \eqref{eqn:worst-caseRec}, \eqref{eqn:recCount}, and \eqref{eqn:ExactRecov}, that,
\begin{align}
    \max_{\Gamma^\star\in\calS_{\Gamma}}\pr[\hat{\Gamma}(\s{G})\neq\Gamma^\star]\geq & 1-o(1),\label{eqn:ExactRecov2}
\end{align}
under the same conditions as in \eqref{eq:condDenseStatproof} albeit with a different constant $\underline{C}$.

%\paragraph{Proof of \eqref{eq:condDenseStat2}.} To prove \eqref{eq:condDenseStat2}, we use Theorem~\ref{theorem:sublogProb}. Specifically, let $n'\triangleq n-|v(\Gamma)|$. Then, using Theorem~\ref{theorem:sublogProb}, we have that for any $q$ and sequence of graphs $\Gamma=(\Gamma_n)$ such that
%    \begin{align}
%        (1+\varepsilon)\mu(\Gamma_n)\cdot [\log\log(\ell|e(\Gamma_n)|)+\log(C\ell)]+\log|v(\Gamma_n)|\leq \log n',\label{eq:condCopyP}
%    \end{align}
%a sample from $\calG(n-|v(\Gamma)|,q)$ contains at least $\ell$ isomorphic copies of $\Gamma$, with probability at least $1/2$, namely, $\pr\pp{\s{N}_{\Gamma}(\s{G}\setminus\Gamma^\star)\leq\ell}\leq 1/2$. Thus, for any $\ell\in\mathbb{N}$, we get from \eqref{eqn:ExactRecov} that,
%\begin{align}
%    \bE\pp{\frac{1}{\s{N}_{\Gamma}(\s{G})}}& \leq \frac{1}{\ell}+\frac{1}{2},\label{eqn:ExactRecovIns}
%\end{align}
%and in light of \eqref{eqn:worst-caseRec} and \eqref{eqn:recCount} implies that,
%\begin{align}
%    \max_{\Gamma^\star\in\calS_{\Gamma}}\pr[\hat{\Gamma}(\s{G})\neq\Gamma^\star]\geq\frac{1}{2}-\frac{1}{\ell},
%\end{align}
%which is positive for any $\ell\geq3$. Finally, it is evident that \eqref{eqn:ExactRecovIns} holds under \eqref{eq:condDenseStat2}, for sufficiently large $n$.

\section{Almost-exact recovery}\label{app:weakRec}
In this appendix, we prove Theorems~\ref{thm:subLogAlmostExactMain}. To that end, we start with some preliminaries. 

\subsection{Preliminaries and auxiliary results}

\paragraph{Equivalent notions of recovery.} Recall that an estimator $\hat{\Gamma}$ almost-exactly recover $\Gamma^\star$ if, as $n \to \infty$, $\s{d}_{\s{H}}(\hat{\Gamma},\Gamma^\star)/|e(\Gamma)| \to 0$ in probability. We have the following result.
\begin{lemma}\label{lem:equivWEProbMean}
    An estimator $\hat{\Gamma}$ almost-exactly recover $\Gamma^\star$ if and only if $\bE\s{d}_{\s{H}}(\hat{\Gamma},\Gamma^\star)/|e(\Gamma)| \to 0$.
\end{lemma}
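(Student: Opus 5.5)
The plan is to treat the normalized Hamming error as a bounded random variable and apply the standard equivalence between convergence in probability and convergence in mean for uniformly bounded sequences. Set
\begin{align}
Z_n\triangleq\frac{\s{d}_{\s{H}}(\hat{\Gamma},\Gamma^\star)}{|e(\Gamma)|}.
\end{align}
The first step is a deterministic bound on $Z_n$. Both $\hat{\Gamma}$ and $\Gamma^\star$ are copies of $\Gamma_n$, so each has exactly $|e(\Gamma)|$ edges. The Hamming distance between their adjacency matrices counts the edges in the symmetric difference, possibly doubled if ordered entries are counted. It is therefore at most $2|e(\Gamma)|$, or $4|e(\Gamma)|$ under the ordered-entry convention. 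Hence $0\le Z_n\le 4$ almost surely, for every $n$ and every estimator with values in $\calS_{\Gamma_n}$. If the estimator is allowed to output arbitrary graphs, we first project its output onto $\calS_{\Gamma_n}$, or equivalently assume it takes values there, as the paper's estimators do.

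For the "if" direction, assume $\bE Z_n\to0$. Markov's inequality gives $\pr[Z_n>\epsilon]\le \bE Z_n/\epsilon\to0$ for every $\epsilon>0$, which is exactly $Z_n\to0$ in probability.

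For the "only if" direction, assume $Z_n\to0$ in probability. Fix $\epsilon>0$ and split the expectation on the event $\{Z_n\le\epsilon\}$ and its complement:
\begin{align}
\bE Z_n\le \epsilon+4\,\pr[Z_n>\epsilon].
\end{align}
Taking $\limsup$ gives $\limsup_n\bE Z_n\le\epsilon$, and letting $\epsilon\downarrow0$ gives $\bE Z_n\to0$.

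There is no serious obstacle here; this is just the uniformly bounded case of the Vitali/dominated convergence argument. The only point needing care is the uniform boundedness of $Z_n$, which relies on the estimator's range being $\calS_{\Gamma_n}$ so that $|e(\hat{\Gamma})|=|e(\Gamma)|$. I would state this range assumption explicitly.
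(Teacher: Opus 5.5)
Your proof is correct, and its core mechanism is the same as the paper's: Markov for one direction, and bounded normalized error plus an $\epsilon$-split for the other. The paper's layer-cake/dominated-convergence step over $\int_0^1\pr[Z_n>t]\,\mathrm{d}t$ is an equivalent way to pass from convergence in probability to convergence in mean.

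The real difference is scope. You take the estimator's range to be $\calS_{\Gamma_n}$, as in the formal setup of Section~2. This makes $Z_n$ bounded by an absolute constant directly, so you prove the literal statement for the same $\hat{\Gamma}$. The paper instead lets $\hat{\Gamma}$ output arbitrary graphs. To restore boundedness it replaces $\hat{\Gamma}$ by its Hamming projection $\Pi_m(\hat{\Gamma})$ onto $m$-edge graphs, where $m=|e(\Gamma)|$, using
$\s{d}_{\s{H}}(\Gamma^\star,\Pi_m(\hat{\Gamma}))\le 2\,\s{d}_{\s{H}}(\Gamma^\star,\hat{\Gamma})$ and $\s{d}_{\s{H}}(\Gamma^\star,\Pi_m(\hat{\Gamma}))\le 2m$.

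So the paper's ``only if'' direction really concludes that the projected estimator, not $\hat{\Gamma}$ itself, has vanishing mean error. For unrestricted outputs this weakening is unavoidable. Take an estimator that is otherwise exact but outputs $\calK_n$ on an event of probability $1/\log n$. It recovers almost exactly, yet its mean normalized error is of order $n^2/(|e(\Gamma)|\log n)$, which does not vanish when $|e(\Gamma)|=o(n^2/\log n)$. This is why your explicit range assumption is the right one for the statement as written.

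Your aside about projecting onto $\calS_{\Gamma_n}$ is fine, but it needs the same triangle-inequality step to show the projected estimator still converges in probability. The step uses $\Gamma^\star\in\calS_{\Gamma_n}$ as a feasible competitor, giving $\s{d}_{\s{H}}(\Gamma^\star,\Pi(\hat{\Gamma}))\le 2\,\s{d}_{\s{H}}(\Gamma^\star,\hat{\Gamma})$. As written, you leave this implicit.
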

\begin{proof}[Proof of Lemma~\ref{lem:equivWEProbMean}]
The forward implication follows immediately, since $L_1$-convergence entails convergence in probability \cite[Proposition 6.14]{Folland1999}. For the other direction, assume there is $\hat{\Gamma}$ with
\begin{align}
\frac{d_\s{H}(\Gamma^\star,\hat{\Gamma})}{m} \xrightarrow{\mathsf{P}} 0,
\end{align}
where $m\triangleq|e(\Gamma)|$. Let $\Pi_m(\hat{\Gamma})$ be any $m$-sparse vector minimizing Hamming distance to $\hat{\Gamma}$, i.e., delete or add ones to reach size $m$). Then, by triangle inequality,
\begin{align}
d_{H}(\Gamma^\star,\Pi_m(\hat{\Gamma})) &\leq d_{\s{H}}(\Gamma^\star,\hat{\Gamma}) + d_{\s{H}}(\hat{\Gamma},\Pi_m(\hat{\Gamma}))\\
&= d_{\s{H}}(\Gamma^\star,\hat{\Gamma}) + \big||\hat{\Gamma}| - m\big| \\
& = d_{\s{H}}(\Gamma^\star,\hat{\Gamma}) + \big||\s{supp}(\hat{\Gamma})| - |\s{supp}(\Gamma^\star)|\big|\\
&\leq 2\cdot d_{\s{H}}(\Gamma^\star,\hat{\Gamma}).
\end{align}
Furthermore, $|\Gamma^\star| = |\Pi_m(\hat{\Gamma})| = m$ gives $d_{H}(\Gamma^\star,\Pi_m(\hat{\Gamma})) \leq 2m$. Hence
\begin{align}
\frac{d_{H}(\Gamma^\star,\Pi_m(\hat{\Gamma}))}{m} \leq 2\left(\frac{d_{\s{H}}(\Gamma^\star,\hat{\Gamma})}{m} \wedge 1\right),
\end{align}
with probability one. Now, for any nonnegative $X$, we have $X \wedge 1 = \int_0^1 \Ind\ppp{X > t}\mathrm{d}t$. Therefore
\begin{align}
\mathbb{E}\left[\frac{d_{H}(\Gamma^\star,\Pi_m(\hat{\Gamma}))}{m}\right] &\leq 2\mathbb{E}\left[\left(\frac{d_{\s{H}}(\Gamma^\star,\hat{\Gamma})}{m} \wedge 1\right)\right]\\
&= 2 \int_0^1 \mathbb{P}\left(\frac{d_{\s{H}}(\Gamma^\star,\hat{\Gamma})}{m} > t\right)\mathrm{d}t.
\end{align}
Because $\frac{d_\s{H}(\Gamma^\star,\hat{\Gamma})}{m} \xrightarrow{\mathsf{P}} 0$, for every fixed $t>0$ we have $\mathbb{P}(d_\s{H}(\Gamma^\star,\hat{\Gamma}) > mt) \to 0$; the integrand is bounded by $1$. By dominated convergence,
\begin{align}
 \int_0^1 \mathbb{P}\left(\frac{d_{\s{H}}(\Gamma^\star,\hat{\Gamma})}{m} > t\right)\mathrm{d}t\to 0,
\end{align}
which implies that $\mathbb{E}[d_{H}(\Gamma,\Pi_m(\hat{\Gamma}))]/m \to 0$. Thus, the estimator $\tilde{\Gamma} \triangleq  \Pi_m(\hat{\Gamma})$ satisfies $\mathbb{E}[d_{H}(\Gamma^\star,\Pi_m(\hat{\Gamma}))/m] \to 0$, as required. 
\end{proof}
%Finally, note that $\bE d_{\s{H}}(\Gamma,\hat{\Gamma}) = |v(\Gamma)|-\s{ov}(\hat{\Gamma})$, so $\s{ov}(\hat{\Gamma}) = |v(\Gamma)|(1-o(1))$ implies weak recovery. But we need to change that for edges not vertices. We have the following result.

\paragraph{Genie argument for recovery.}
To prove our lower bounds, we rely on the following genie argument. Consider the onion decomposition in Definition~\ref{def:onionDec}, and for each $\Gamma^{(\ell)}$ define its onion tail as $T_\ell(\Gamma) \triangleq \Gamma \setminus \Gamma^{(\ell)}$. In light of Lemma~\ref{lem:equivWEProbMean}, the (global) minimax risk is
\begin{align}
\s{E}^\star_{\s{almost}}\triangleq \inf_{\hat\Gamma}\sup_{\Gamma^\star\in\mathcal \calS_\Gamma}
\mathbb E\left[\frac{d_{\s{H}}(\hat\Gamma,\Gamma^\star)}{|e(\Gamma^\star)|}\right].
\end{align}
Define the \emph{genie minimax risk} at level $\ell$ as follows. The oracle reveals the \emph{true} $\Gamma^{(\ell)}$ and we only need to recover the tail $T_\ell(\Gamma^\star)$;  the loss is the \emph{normalized} Hamming error on the tail
\begin{align}
\s{e}_{\s{almost}}^\star(\ell)\triangleq \inf_{\hat T}\sup_{\Gamma^\star\in\calS_\Gamma}
\mathbb E\left[\frac{d_{\s{H}}\big(\hat T(\s{G},\Gamma^{(\ell)}),\,T_\ell(\Gamma^\star)\big)}{|e(T_\ell(\Gamma^\star))|}\right].\label{eqn:lossGenPar}
\end{align}
We have the following result.
\begin{lemma}\label{lem:genWeak}
    For any fixed index $\ell\in[M(\Gamma)]$,
\begin{align}
\s{E}^\star_{\s{almost}}\geq\frac{|e(T_\ell(\Gamma^\star))|}{|e(\Gamma^\star)|}\cdot\s{e}_{\s{almost}}^\star(\ell).
\end{align}
%Equivalently, \emph{pointwise} for each $\Gamma^\star$,
%\begin{align}
%\inf_{\hat\Gamma}\;\sup_{\Gamma^\star}
%\mathbb E \left[\frac{d_{\s{H}}(\hat\Gamma,\Gamma^\star)}{|E(\Gamma^\star)|}\right]
%\;\;\ge\;\;
%\frac{|E(T_i(\Gamma^\star))|}{|E(\Gamma^\star)|}\; r_n(i).
%\end{align}
%(Use the $\inf_{\Gamma^\star}$ of the factor if you want a bound uniform over the class.)
\end{lemma}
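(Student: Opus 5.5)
The proof is a standard genie (oracle) reduction. We show that any global estimator, once it is additionally handed the true prefix $\Gamma^{(\ell),\star}$, induces an estimator of the tail whose normalized tail error is bounded by a rescaled global error.

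Fix an arbitrary estimator $\hat\Gamma$ of $\Gamma^\star$. From it we build a genie-aided tail estimator
\begin{align}
\hat T(\s{G},\Gamma^{(\ell),\star})\triangleq \hat\Gamma(\s{G})\setminus \Gamma^{(\ell),\star},
\end{align}
namely the edges of $\hat\Gamma$ that lie outside the revealed prefix. Since $T_\ell(\Gamma^\star)=\Gamma^\star\setminus\Gamma^{(\ell),\star}$, restricting both edge-indicator vectors to the coordinates outside $e(\Gamma^{(\ell),\star})$ can only lower the Hamming distance. Hence, pointwise,
\begin{align}
d_{\s{H}}\big(\hat T,T_\ell(\Gamma^\star)\big)\le d_{\s{H}}(\hat\Gamma,\Gamma^\star).
\end{align}

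Dividing by $|e(\Gamma^\star)|$ and rewriting the left side,
\begin{align}
\frac{d_{\s{H}}(\hat\Gamma,\Gamma^\star)}{|e(\Gamma^\star)|}\ge \frac{|e(T_\ell(\Gamma^\star))|}{|e(\Gamma^\star)|}\cdot\frac{d_{\s{H}}(\hat T,T_\ell(\Gamma^\star))}{|e(T_\ell(\Gamma^\star))|}.
\end{align}
The ratio $|e(T_\ell(\Gamma^\star))|/|e(\Gamma^\star)|=\s{Res}_\ell$ does not depend on which copy $\Gamma^\star\in\calS_\Gamma$ is planted: onion decompositions commute with isomorphisms by the uniqueness in Lemma~\ref{lem:onionUnique}, so every copy has layers of the same sizes. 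We may therefore take expectations and then the supremum over $\Gamma^\star$, pulling this constant factor outside. The supremum of the normalized tail risk of this particular $\hat T$ is at least the infimum over all genie-aided tail estimators, which is $\s{e}^\star_{\s{almost}}(\ell)$. Since $\hat\Gamma$ was arbitrary, taking the infimum over $\hat\Gamma$ gives the claim. We work with the expected-loss form of the risk throughout, which is justified by Lemma~\ref{lem:equivWEProbMean}.

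The argument is essentially routine. The only step needing care is confirming that $\hat T$ is a legitimate competitor in \eqref{eqn:lossGenPar}, that is, a measurable function of $(\s{G},\Gamma^{(\ell)})$. It is, because it is obtained from $\hat\Gamma(\s{G})$ and the revealed prefix by deterministic set operations. If \eqref{eqn:lossGenPar} is meant to restrict tail estimates to valid tails, we would also project $\hat T$ onto the nearest valid tail. By the triangle-inequality argument in the proof of Lemma~\ref{lem:equivWEProbMean}, this projection costs at most a factor of $2$, which would change only the constant in the lemma.
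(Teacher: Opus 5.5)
Your proposal is correct and follows the same genie reduction as the paper. Both arguments compare the Hamming error on the tail coordinates pointwise with the global error, divide by $|e(\Gamma^\star)|$, pull out the copy-invariant factor $|e(T_\ell(\Gamma^\star))|/|e(\Gamma^\star)|$, and then take $\sup$ and $\inf$. Your explicit choice $\hat T=\hat\Gamma(\s{G})\setminus\Gamma^{(\ell),\star}$ uses only the revealed prefix, so it is a cleaner version of the paper's restriction $\hat\Gamma|_{T_\ell(\Gamma^\star)}$, which as written seems to use the unknown tail. Your uniqueness-based argument for why that ratio is the same across copies also supplies a step the paper only asserts.
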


\begin{proof}[Proof of Lemma~\ref{lem:genWeak}]
Fix any estimator $\hat\Gamma$. For each $\Gamma^\star$, let $\Pi_{T_\ell(\Gamma^\star)}(\hat{\Gamma})$ denote the projection of $\hat{\Gamma}$ into $T_\ell(\Gamma^\star)$. Then, disagreements on the tail are a subset of global disagreements, and so
\begin{align}
d_{\s{H}}(\hat\Gamma,\Gamma^\star)\geq d_{\s{H}}\p{\hat\Gamma|_{T_\ell(\Gamma^\star)},T_\ell(\Gamma^\star)}.
\end{align}
Divide by $|e(\Gamma^\star)|$ and take expectation:
\begin{equation}
\mathbb E\left[\frac{d_{\s{H}}(\hat\Gamma,\Gamma^\star)}{|e(\Gamma^\star)|}\right]
\geq
\frac{|e(T_\ell(\Gamma^\star))|}{|e(\Gamma^\star)|}
\mathbb E\left[\frac{d_{\s{H}}\p{\hat\Gamma|_{T_\ell(\Gamma^\star)},T_\ell(\Gamma^\star)}}{|e(T_i(\Gamma^\star))|}\right].\label{eqn:ssTar}
\end{equation}
%From $\hat\Gamma$, build a genie estimator for the tail:
%\begin{align}
%\widehat T^{(\hat\Gamma)}(\text{data}, S) \triangleq  \hat\Gamma(\text{data})\big|_{T_i(S)}.
%\end{align}
%When the oracle provides $S=\Gamma^{(i)}$, for each $\Gamma^\star$ the genie loss of $\widehat T^{(\hat\Gamma)}$ 
%equals the \emph{tail} loss on the RHS of~($\star$). 
Consequently, by the definition of \eqref{eqn:lossGenPar}, we clearly have
\begin{align}
\sup_{\Gamma^\star\in\mathcal \calS_\Gamma}
\mathbb E\left[\frac{d_{\s{H}}\p{\hat\Gamma|_{T_\ell(\Gamma^\star)},T_\ell(\Gamma^\star)}}{|e(T_i(\Gamma^\star))|}\right]
\geq \s{e}_{\s{almost}}^\star(\ell).
\end{align}
Taking $\sup_{\Gamma^\star}$ in \eqref{eqn:ssTar} and then $\inf_{\hat\Gamma}$,
\begin{align}
\begin{aligned}
\s{E}^\star_{\s{almost}}
&=\inf_{\hat\Gamma}\sup_{\Gamma^\star\in\mathcal \calS_\Gamma}
\mathbb E\left[\frac{d_{\s{H}}(\hat\Gamma,\Gamma^\star)}{|e(\Gamma^\star)|}\right] \\
&\geq \inf_{\hat\Gamma}\sup_{\Gamma^\star\in\mathcal \calS_\Gamma}
\frac{|e(T_\ell(\Gamma^\star))|}{|e(\Gamma^\star)|}
\mathbb E\left[\frac{d_{\s{H}}\p{\hat\Gamma|_{T_\ell(\Gamma^\star)},T_\ell(\Gamma^\star)}}{|e(T_\ell(\Gamma^\star))|}\right] \\
& = \frac{|e(T_\ell(\Gamma^\star))|}{|e(\Gamma^\star)|} \inf_{\hat\Gamma}\sup_{\Gamma^\star\in\mathcal \calS_\Gamma}
\mathbb E\left[\frac{d_{\s{H}}\p{\hat\Gamma|_{T_\ell(\Gamma^\star)},T_\ell(\Gamma^\star)}}{|e(T_\ell(\Gamma^\star))|}\right]\\
&\ge \frac{|e(T_\ell(\Gamma^\star))|}{|e(\Gamma^\star)|}\cdot \s{e}_{\s{almost}}^\star(\ell),
\end{aligned}
\end{align}
where the second equality follows from the fact that $\frac{|e(T_\ell(\Gamma^\star))|}{|e(\Gamma^\star)|}$ is the same for all $\Gamma^\star\in\calS_\Gamma$.
\end{proof}
Define
\begin{align}
r_\ell^{(n)}\triangleq\frac{|e(\Gamma_n^\star\setminus \Gamma_n^{(\ell)})|}{|e(\Gamma^\star_n)|}.
\end{align}
Fix any sequence $\varepsilon_n\downarrow 0$ and define
\begin{align}
\ell_{\s{LB}}(n)\triangleq\max\{\ell: r_\ell^{(n)}>\varepsilon_n\}.
\end{align}
Heuristically, $\ell_{\s{LB}}$ is the last index where the ratio $r_\ell^{(n)}$ is $\Omega(1)$. Then, using Lemma~\ref{lem:genWeak} we have
\begin{align}
    \s{E}^\star_{\s{almost}}\geq \frac{|e(T_{\ell_{\s{LB}}}(\Gamma^\star))|}{|e(\Gamma^\star)|}\cdot\s{e}_{\s{almost}}^\star(\ell_{\s{LB}}),
\end{align}
and by the definition of $\ell_{\s{LB}}$, we know that the multiplicative factor $\frac{|e(T_{\ell_{\s{LB}}}(\Gamma^\star))|}{|e(\Gamma^\star)|}$ is strictly positive for all $n$. Hence, to rule out the possibility of almost-exact recovery, it suffices to prove that $\s{e}_{\s{almost}}^\star(\ell_{\s{LB}}) = \Omega(1)$. In fact, we first lower bound the worst-case error probability $\s{e}_{\s{almost}}$ by its average-case counterpart as follows:
\begin{align}
\s{e}_{\s{almost}}^\star(\ell)&= \inf_{\hat T}\sup_{\Gamma^{\ell}}\sup_{T_\ell\in\calM(\Gamma^{(\ell)},\Gamma^\star)}
\mathbb E\left[\frac{d_{\s{H}}\big(\hat T(\s{G},\Gamma^{(\ell)}),\,T_\ell\big)}{|e(T_\ell)|}\right]\\
&\geq \inf_{\hat T}\sup_{\Gamma^{\ell}}\bE_{T_\ell\sim\pi}
\mathbb E\left[\frac{d_{\s{H}}\big(\hat T(\s{G},\Gamma^{(\ell)}),\,T_\ell\big)}{|e(T_\ell)|}\right]\\
&\triangleq\bar{\s{e}}_{\s{almost}}^\star(\ell),\label{eqn:lossGenParAvg}
\end{align}
where $\pi \triangleq \s{Unif}(\calM(\Gamma^{(\ell)},\Gamma^\star))$ is the uniform measure over $\calM(\Gamma^{(\ell)},\Gamma^\star)$, and the inequality follows from the fact that the worst-case risk is lower bounded by the average-case risk.

\subsection{Lower bound through detection}

By the results of the previous subsection, it suffices to establish lower bounds for recovering $\bar{\Gamma}_n \triangleq \Gamma^\star_n \setminus \Gamma_n^{\ell_{\s{LB}}}$. Let $\s{G}'\triangleq\s{G}\setminus\Gamma_n^{\ell_{\s{LB}}}$ and $n'\triangleq n-v(|\Gamma_n^{\ell_{\s{LB}}}|)$. We have the following result.
\begin{theorem}\label{thm:lowerp2weak}
Assume that $p_n,q_n=\Theta(1)$. %Then, the following hold.
\begin{enumerate}
\item If $\mu(\bar\Gamma_n)\geq \alpha_n\cdot \log|v(\bar\Gamma_n)|$, for some $\alpha_n=\Omega(1)$, then there exists a constant $\underline{C}>0$ such that almost-exact recovery of $\bar\Gamma$ is impossible if
    \begin{align}
        \mu(\bar\Gamma_n)\leq \underline{C}\cdot \log n'.\label{eq:condDenseStatweakApp}
    \end{align}
\item If $\mu(\bar\Gamma_n)=o(\log|v(\bar\Gamma_n)|)$, then for every $\varepsilon>0$, almost-exact recovery of $\bar\Gamma$ is impossible if
    \begin{align}
        |e(\bar\Gamma_n)|\vee d^2_{\max}(\bar\Gamma_n)\leq n^{1-\varepsilon}.\label{condDenseStat2weakApp}
    \end{align}
\end{enumerate}
\end{theorem}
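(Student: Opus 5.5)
Both parts reduce almost-exact recovery of $\bar\Gamma_n$ to the posterior being spread over many candidate copies, each far from the truth in Hamming distance. We work in the genie-aided average-case formulation \eqref{eqn:lossGenParAvg}: $\Gamma^{(\ell_{\s{LB}})}$ is revealed, the tail $\bar\Gamma$ is uniform over admissible extensions, and, as in Appendix~\ref{app:secProofofITLOWEr}, we may take $p=1$ by the channel-degradation argument there. The posterior of $\bar\Gamma$ given $(\s{G},\Gamma^{(\ell_{\s{LB}})})$ is then uniform over the admissible copies contained in $\s{G}'$, exactly as in the computation leading to \eqref{eqn:recCount}. Hence for any estimator $\hat T$,
\begin{align}
\bE\,\frac{d_{\s{H}}(\hat T,\bar\Gamma)}{|e(\bar\Gamma)|}\geq \bE\left[\frac{1}{\s{N}}\sum_{\Gamma'\in\s{copies}(\s{G}')}\frac{d_{\s{H}}(\hat T,\Gamma')}{|e(\bar\Gamma)|}\right],
\end{align}
where $\s{N}$ counts the admissible copies. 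It therefore suffices to show that, with probability bounded below, most admissible copies in $\s{G}'$ are pairwise at normalized Hamming distance $\Omega(1)$. No single $\hat T$ can then be close to a constant fraction of them.

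\emph{Step 1 (many copies).} We lower bound $\s{N}$ via the background graph $\s{G}\setminus\Gamma^\star\sim\calG(n',q)$, which only lowers the count. In case 1 we use Chebyshev together with the second-moment bound \eqref{eqn:detectionGur} applied to $\bar\Gamma$ on $n'$ vertices. This gives $\s{N}\geq \tfrac12\bE\s{N}\to\infty$ under \eqref{eq:condDenseStatweakApp} together with the super-logarithmic density assumption, by the same calculation as at the end of Appendix~\ref{app:prrpf}. In case 2 we invoke the second branch \eqref{eq:condDenseStat2proof} of the detection impossibility result of \cite{elimelech2025detecting}, again with $n$ replaced by $n'$, to get $\bE\s{L}^2=1+o(1)$ and hence $\s{N}=(1+o(1))\bE\s{N}$ with high probability. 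Alternatively, for the sub-logarithmic case we can use Theorem~\ref{theorem:sublogProb} with $L\to\infty$ slowly, which yields at least $L$ copies with probability at least $1/2$.

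\emph{Step 2 (copies are mostly far apart).} We control the overlap of two copies drawn uniformly from those present in $\s{G}'$. Since $\s{N}$ is concentrated around its mean, this pair law is within a $(1+o(1))$ factor of the planted (null) pair law. We then bound
\begin{align}
\frac{1}{(\bE\s{N})^2}\sum_{\Gamma',\Gamma''}\pr[\Gamma'\cup\Gamma''\subseteq\s{G}']\,\Ind\{|\Gamma'\cap\Gamma''|\geq(1-\delta)|e(\bar\Gamma)|\}.
\end{align}
This is exactly the portion of the second-moment sum $\bE\s{L}^2$ coming from heavy-overlap pairs. In both regimes the detection analysis shows that the whole excess $\bE\s{L}^2-1$ is $o(1)$, while the heavy-overlap portion of the unnormalized sum equals $o(1)$ plus the diagonal contribution $1/\bE\s{N}\to0$. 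So a uniformly random pair of copies has overlap less than $(1-\delta)|e(\bar\Gamma)|$ with probability $1-o(1)$. Combining this with Step 1 and averaging $d_{\s{H}}(\hat T,\cdot)$ via the triangle inequality, any $\hat T$ is within $\delta|e|/2$ of at most an $o(1)$ fraction of the copies. Hence $\bar{\s{e}}^\star_{\s{almost}}(\ell_{\s{LB}})\geq\delta/2-o(1)$, and Lemma~\ref{lem:genWeak} finishes the proof, since $\s{Res}_{\ell_{\s{LB}}}>\varepsilon_n$ keeps the prefactor bounded away from zero along the relevant scales.

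The main obstacle is Step 2: isolating the heavy-overlap part of the second moment, and making sure the bounds imported from \cite{elimelech2025detecting} apply to the restricted family of admissible extensions, which must contain $\Gamma^{(\ell_{\s{LB}})}$, rather than to all copies of $\bar\Gamma$. The first thing to try is to note that conditioning on the revealed layer only removes candidate pairs. The overlap distribution of two uniform admissible extensions is then dominated by that of two uniform copies of $\bar\Gamma$ in $\calK_{n'}$, and the relevant inclusion probability is bounded via Lemma~\ref{lem:probcalc2}.
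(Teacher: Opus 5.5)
Your skeleton matches the paper's. You reduce to $p=1$, use that the posterior of the tail is uniform over the admissible copies present in $\s{G}'$, and import $\bE\s{L}^2=1+o(1)$ from the detection result with $n$ replaced by $n'$. Chebyshev then lets the overlapping part of the second moment control how the present copies overlap.

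The last step differs.
\begin{itemize}
\item The paper bounds each edge's posterior marginal, whose square is at most the fraction $\s{Int}$ of present pairs that intersect. By Jensen, the expected overlap of any estimator with $\bar\Gamma$ is $|e(\bar\Gamma)|\sqrt{\bE\s{Int}}+o(|e(\bar\Gamma)|)=o(|e(\bar\Gamma)|)$. So $\bE d_{\s{H}}=(2-o(1))|e(\bar\Gamma)|$, and even weak recovery fails.
\item You control only heavy-overlap pairs and use a packing argument: if a fraction $f$ of present copies lies within $\delta|e|/2$ of $\hat T$, then a fraction $f^2$ of present pairs is heavy. This gives normalized error $\geq\delta/2-o(1)$, which is weaker but enough here.
\end{itemize}
You should justify that the heavy part of $\bE\s{L}^2$ is $o(1)$. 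Let $\calC$ be the admissible family. Each pair of admissible copies contributes at least $|\calC|^{-2}$, and each edge-overlapping pair at least $q^{-1}|\calC|^{-2}$. So $\bE\s{L}^2=1+o(1)$ forces the fraction of overlapping pairs to vanish. The light part is then $1-o(1)$, and the heavy part, diagonal included, is $o(1)$.

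The step that does not go through as written is the passage from the planted law to null probabilities in Step~2. Your first display is an expectation under the planted model. There the true tail is present with probability one, and copies overlapping it are much more likely to be present. Yet you evaluate $\pr[\Gamma'\cup\Gamma''\subseteq\s{G}']=q^{|e(\Gamma'\cup\Gamma'')|}$, which is the null probability. Concentration of $\s{N}$ only handles the denominator. The planted-law expectation of the heavy-pair count is a third-moment-type quantity that $\bE\s{L}^2$ does not control directly.

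The missing ingredient is the paper's change of measure. The quantity $\frac{1}{\s{N}}\sum_{\Gamma'}d_{\s{H}}(\hat T,\Gamma')/|e(\bar\Gamma)|$ is a bounded function of $\s{G}'$. Hence its planted and null expectations differ by $O(d_{\s{TV}})=O(\sqrt{\bE\s{L}^2-1})=o(1)$. After switching, work entirely under the null. There Chebyshev gives $\s{N}\geq\frac12\bE\s{N}$ with high probability, and the background-graph bound of Step~1 becomes unnecessary.

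Two smaller remarks.
\begin{itemize}
\item The paper also simply treats the admissible extensions as copies of $\bar\Gamma$ in $\calK_{n'}$, so you are not behind it there. However, your proposed domination argument does not work. Restricting to extensions of $\Gamma^{(\ell_{\s{LB}})}$ changes the normalization. Moreover, a tail edge at an anchor vertex lands on a fixed pair with probability of order $1/n'$ rather than $1/n'^2$, so anchored extensions overlap more, not less.
\item The appeal to Lemma~\ref{lem:genWeak} lies outside this statement, which concerns $\bar\Gamma$ alone. In any case, $\s{Res}_{\ell_{\s{LB}}}>\varepsilon_n$ with $\varepsilon_n\downarrow0$ would not keep that prefactor bounded away from zero.
\end{itemize}
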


\begin{proof}[Proof of Theorem~\ref{thm:lowerp2weak}]
Let $k\triangleq|v(\bar\Gamma)|$, and $L_n\triangleq|\calM(\Gamma_n^{\ell_{\s{LB}}},\Gamma^\star)|$ where $\calM(\Gamma_n^{\ell_{\s{LB}}},\Gamma^\star)$ is the number of ways $\Gamma_n^{\ell_{\s{LB}}}$ can be extended to a copy of $\Gamma^\star$ in $\s{K}_n$, or the number of copies of $\Gamma^\star$ in $\s{K}_n$ that contain $\Gamma_n^{\ell_{\s{LB}}}$. Define the following quantity:
\begin{align}
\s{Int}_{\s{G}'}(\bar\Gamma)\triangleq \frac{1}{\calN^2_{\bar\Gamma}}\sum_{\ell_1=1}^{L_n}\sum_{\ell_1=2}^{L_n}\Ind\pp{\bar{\Gamma}_{\ell_1}\cap\bar{\Gamma}_{\ell_2}\neq\emptyset,\;\bar{\Gamma}_{\ell_1},\bar{\Gamma}_{\ell_2}\in\s{G}'},
\end{align}
where $\bar{\Gamma}_1,\bar{\Gamma}_2,\ldots,\bar{\Gamma}_{L_n}$ are all possible subgraph copies, and $\calN_{\bar\Gamma}\triangleq \sum_{\ell=1}^{L_n}\Ind\pp{\bar{\Gamma}_\ell\in\s{G}}$. Namely, $\s{Int}_{\s{G}}(\bar\Gamma)$ is the proportion of pairs copies of $\bar\Gamma$ in $\s{G}$ whose intersection is nonempty. Let $\s{N}_\ell\triangleq\Ind\pp{\bar{\Gamma}_\ell\in\s{G}}$. Then
\begin{align}
\bE_{\calH_0}[\calN^2_{\bar\Gamma}] &= \sum_{\ell_1,\ell_2}\pr_{\calH_0}[\s{N}_{\ell_1}\s{N}_{\ell_2}] \\
&= \sum_{\ell_1,\ell_2:\bar{\Gamma}_{\ell_1}\cap\bar{\Gamma}_{\ell_2}=\emptyset}\pr_{\calH_0}[\s{N}_{\ell_1}\s{N}_{\ell_2}]+\sum_{\ell_1,\ell_2:\bar{\Gamma}_{\ell_1}\cap\bar{\Gamma}_{\ell_2}\neq\emptyset}\pr_{\calH_0}[\s{N}_{\ell_1}\s{N}_{\ell_2}]\\
& \triangleq \s{A}+\s{B}.
\end{align}
We can easily compute $\s{A}$. Indeed
\begin{align}
\s{A} = L_n\cdot L_{n-k}\cdot q^{2e(\bar\Gamma)}\sim [\bE_{\calH_0}(\calN_{\bar\Gamma})]^2.
\end{align}
Now, recall that the likelihood function is defined as $\s{L}(\s{G}') = \calN_{\bar\Gamma}/\bE_{\calH_0}(\calN_{\bar\Gamma})$, and that under conditions \eqref{eq:condDenseStatproof}--\eqref{eq:condDenseStat2proof}, we have $\bE_{\calH_0}[\s{L}(\s{G}')]^2=1+o(1)$. Therefore, it follows that
\begin{align}
\frac{\s{B}}{\pp{\bE_{\calH_0}(\calN_{\bar\Gamma})}^2}=o(1).\label{eqn:Basymp}
\end{align}
Next, we note that
\begin{align}
\s{Int}_{\s{G}'}(\bar\Gamma)= \frac{1}{\calN^2_{\bar\Gamma}}\sum_{\ell_1,\ell_2:\bar{\Gamma}_{\ell_1}\cap\bar{\Gamma}_{\ell_2}\neq\emptyset}\s{N}_{\ell_1}\s{N}_{\ell_2}.
\end{align}
Therefore, we can decompose $\s{Int}_{\s{G}'}(\bar\Gamma)$ as follows
\begin{align}
\s{Int}_{\s{G}'}(\bar\Gamma) &= \s{Int}_{\s{G}'}(\bar\Gamma)\Ind\pp{\s{L}^2(\s{G}')\geq1/2}+ \s{Int}_{\s{G}'}(\bar\Gamma)\Ind\pp{\s{L}^2(\s{G}')<1/2}\\
& = \frac{\sum_{\ell_1,\ell_2:\bar{\Gamma}_{\ell_1}\cap\bar{\Gamma}_{\ell_2}\neq\emptyset}\s{N}_{\ell_1}\s{N}_{\ell_2}}{\bE[\calN^2_{\bar\Gamma}]}\frac{1}{\s{L^2(\s{G}')}}\Ind\pp{\s{L}^2(\s{G}')\geq1/2}+ \s{Int}_{\s{G}'}(\bar\Gamma)\Ind\pp{\s{L}^2(\s{G}')<1/2},
\end{align}
and thus
\begin{align}
\bE_{\calH_0}[\s{Int}_{\s{G}'}(\bar\Gamma)]&\leq \bE_{\calH_0}\pp{\frac{\sum_{\ell_1,\ell_2:\bar{\Gamma}_{\ell_1}\cap\bar{\Gamma}_{\ell_2}\neq\emptyset}\s{N}_{\ell_1}\s{N}_{\ell_2}}{\bE[\calN^2_{\bar\Gamma}]}\frac{1}{\s{L^2(\s{G}')}}\Ind\pp{\s{L}^2(\s{G}')\geq1/2}}\nonumber\\
&\quad+\bE_{\calH_0}\pp{\s{Int}_{\s{G}'}(\bar\Gamma)\Ind\pp{\s{L}^2(\s{G}')<1/2}}\\
&\leq 2\cdot\bE_{\calH_0}\pp{\frac{\sum_{\ell_1,\ell_2:\bar{\Gamma}_{\ell_1}\cap\bar{\Gamma}_{\ell_2}\neq\emptyset}\s{N}_{\ell_1}\s{N}_{\ell_2}}{\bE[\calN^2_{\bar\Gamma}]}}+\bE_{\calH_0}\pp{\Ind\pp{\s{L}^2(\s{G}')<1/2}}\\
&= 2\cdot \frac{\s{B}}{\pp{\bE_{\calH_0}(\calN_{\bar\Gamma})}^2}+\pr_{\calH_0}\pp{\s{L}^2(\s{G}')<1/2}\\
& \leq o(1),\label{eqn:ExpectedOverlap}
\end{align}
where in the inequality we have used the fact that $\s{Int}_{\s{G}'}(\bar\Gamma)\leq1$ with probability one, and the last equality is due to \eqref{eqn:Basymp} and Chebyshev's inequality. In order to prove that almost-exact recovery is impossible we will look the following overlap measure
\begin{align}
\s{over}(\hat{\Gamma})\triangleq \sum_{(i,j)\in\binom{[n]}{2}}\pr_{\calH_1}[(i,j)\in\bar\Gamma\cap\hat{\bar\Gamma}],
\end{align}
where $\hat{\bar\Gamma}$ is any possible estimator of $\bar\Gamma$. Note that $\bE d_{\s{H}}(\bar\Gamma,\hat{\bar\Gamma}) = 2|e(\bar\Gamma)|-2\s{over}(\hat{\bar\Gamma})$. Thus to rule out almost-exact recovery, it suffices to prove that $\s{over}(\hat{\bar\Gamma})=o(|e(\bar\Gamma)|)$. To that end, we note that $\s{over}(\hat{\bar\Gamma})$ can be rewritten as follows
\begin{align}
\s{over}(\hat{\bar\Gamma}) &= \sum_{\s{G}'}\pr_{\calH_1}(\s{G}')\sum_{\ell=1}^{L_n}\pr_{\calH_1}(\bar{\Gamma}_\ell\vert\s{G}')|\bar{\Gamma}_\ell\cap\hat{\bar\Gamma}|\\
& = \sum_{\s{G}'}\pr_{\calH_1}(\s{G}')\sum_{\ell=1}^{L_n}\frac{|\bar{\Gamma}_\ell\cap\hat{\bar\Gamma}|}{\calN_{\bar\Gamma}}\\
& = \sum_{\s{G}'}\pr_{\calH_0}(\s{G}')\sum_{\ell=1}^{L_n}\frac{|\bar{\Gamma}_\ell\cap\hat{\bar\Gamma}|}{\calN_{\bar\Gamma}}+\sum_{\s{G}'}[\pr_{\calH_1}(\s{G}')-\pr_{\calH_0}(\s{G}')]\sum_{\ell=1}^{L_n}\frac{|\bar{\Gamma}_\ell\cap\hat{\bar\Gamma}|}{\calN_{\bar\Gamma}}\\
&\leq \sum_{\s{G}'}\pr_{\calH_0}(\s{G}')\sum_{\ell=1}^{L_n}\frac{|\bar{\Gamma}_\ell\cap\hat{\bar\Gamma}|}{\calN_{\bar\Gamma}}+|e(\bar\Gamma)|\cdot\s{TV}(\pr_{\calH_0},\pr_{\calH_1}),
\end{align}
where in the last inequality we have used the definition of the total-variation distance, and the fact that $|\bar{\Gamma}_\ell\cap\hat{\bar\Gamma}|\leq |e(\bar\Gamma)|$, for any $\bar{\Gamma}_\ell$ and $\hat{\bar\Gamma}$. Since $\s{TV}(\pr_{\calH_0},\pr_{\calH_1})\leq\sqrt{\bE_{\calH_0}(\s{L}^2(\s{G}'))-1}$, conditions \eqref{eq:condDenseStatproof}--\eqref{eq:condDenseStat2proof} (with $n$ replaced by $n'$) imply that $\s{TV}(\pr_{\calH_0},\pr_{\calH_1}) = o(1)$, and therefore
\begin{align}
\s{over}(\hat{\bar\Gamma})\leq \sum_{\s{G}'}\pr_{\calH_0}(\s{G}')\sum_{\ell=1}^{L_n}\frac{|\bar{\Gamma}_\ell\cap\hat{\bar\Gamma}|}{\calN_{\bar\Gamma}}+o(|e(\bar\Gamma)|).
\end{align}
Next, we can write
\begin{align}
\s{over}(\hat{\bar\Gamma})&\leq \bE_{\calH_0}\pp{\sum_{\ell=1}^{L_n}\frac{|\bar{\Gamma}_\ell\cap\hat{\bar\Gamma}|}{\calN_{\bar\Gamma}}}+o(|e(\bar\Gamma)|)\\
& = \sum_{(i,j)\in\binom{[n]}{2}}^n\bE_{\calH_0}\pp{\Ind\pp{(i,j)\in\hat{\bar\Gamma}}\sum_{\ell=1}^{L_n}\frac{\Ind\pp{(i,j)\in\bar{\Gamma}_\ell}}{\calN_{\bar\Gamma}}}+o(|e(\bar\Gamma)|),
\end{align}
and we note that
\begin{align}
\pp{\sum_{\ell=1}^{L_n}\frac{\Ind\pp{i\in\bar{\Gamma}_\ell}}{\calN_{\bar\Gamma}}}^2 &= \sum_{\ell_1=1}^{L_n}\sum_{\ell_2=1}^{L_n}\frac{\Ind\pp{(i,j)\in\bar{\Gamma}_{\ell_1}}\Ind\pp{(i,j)\in\bar{\Gamma}_{\ell_2}}}{\calN^2_{\bar\Gamma}}\\
&\leq \s{Int}_{\s{G}'}(\bar\Gamma).
\end{align}
Thus
\begin{align}
\s{over}(\hat{\bar\Gamma})&\leq    \sum_{(i,j)\in\binom{[n]}{2}}\bE_{\calH_0}\pp{\Ind\pp{(i,j)\in\hat{\bar\Gamma}}\sqrt{\s{Int}_{\s{G}'}(\bar\Gamma)}}+o(|e(\bar\Gamma)|)\\
&\leq |e(\bar\Gamma)|\cdot \bE_{\calH_0}\pp{\sqrt{\s{Int}_{\s{G}'}(\bar\Gamma)}}+o(|e(\bar\Gamma)|)\\
&\leq |e(\bar\Gamma)|\cdot\sqrt{\bE_{\calH_0}\pp{\s{Int}_{\s{G}'}(\bar\Gamma)}}+o(|e(\bar\Gamma)|)\\
&\leq o(|e(\bar\Gamma)|),
\end{align}
where the third inequality follows from Jensen's inequality, and the last inequality is due to \eqref{eqn:ExpectedOverlap}. This concludes the proof.

\end{proof}

\subsection{Lower bound in sub-logarithmic density regime}

As it turns out, the proof technique in the previous subsection is not strong enough to capture the correct behavior in the sub-logarithmic density regime. Specifically, Theorem~\ref{thm:lowerp2weak} shows that, in this regime, recovery is impossible if \eqref{condDenseStat2weakApp} holds. However, as we show next, recovery is in fact always impossible in this regime.
\begin{theorem}\label{thm:subLogAlmostExact}
Assume that $q\in(0,1)$ is fixed and that $|v(\Gamma^\star\setminus\Gamma^{(\ell_{\s{LB}}),\star})|=o(n)$. If 
\begin{align}
\mu(\Gamma^\star\vert\Gamma^{(\ell_{\s{LB}}),\star})=o\p{\frac{\log|v(\Gamma^\star\setminus\Gamma^{(\ell_{\s{LB}}),\star})|}{\log \log|v(\Gamma^\star\setminus\Gamma^{(\ell_{\s{LB}}),\star})|}},\label{eq:condCopyLB}
\end{align}
then almost-exact recovery is impossible. 
\end{theorem}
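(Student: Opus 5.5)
We reduce to the genie problem, use the generalized copy-counting result (Theorem~\ref{theorem:sublogProb}, adapted to constrained extensions) to show the observed graph contains many admissible copies of the tail, and then show that the posterior spread over these copies forces a constant fraction of Hamming error.

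\textbf{Step 1 (genie reduction).} By Lemma~\ref{lem:genWeak} and the definition of $\ell_{\s{LB}}$, it suffices to show $\bar{\s{e}}^\star_{\s{almost}}(\ell_{\s{LB}})=\Omega(1)$. This is the average-case risk of recovering the tail $\bar\Gamma=\Gamma^\star\setminus\Gamma^{(\ell_{\s{LB}}),\star}$ when $\Gamma^{(\ell_{\s{LB}}),\star}$ is revealed and $T_{\ell_{\s{LB}}}\sim\pi=\s{Unif}(\calM(\Gamma^{(\ell_{\s{LB}})},\Gamma^\star))$. As in Appendix~\ref{app:secProofofITLOWEr}, the monotone coupling lets us take $p=1$. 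The posterior of the tail given $(\s{G},\Gamma^{(\ell_{\s{LB}})})$ is then uniform over the admissible extensions present in $\s{G}$. Write $\s{N}$ for their number.

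\textbf{Step 2 (many admissible copies).} We prove a constrained version of Theorem~\ref{theorem:sublogProb}. The family is now $\{T:\ T\in\calM(\Gamma^{(\ell_{\s{LB}})},\Gamma)\}$, the tails of admissible extensions. We bound the spread by a constrained Lemma~\ref{lem:probcalc2}: for any $\s{H}$ contained in the tail,
\begin{align}
\pi(\s{H}\subseteq T)\le \p{\frac{k}{n'}}^{|v(\s{H})\setminus v(\Gamma^{(\ell_{\s{LB}})})|},
\end{align}
where $k=|v(\Gamma^\star\setminus\Gamma^{(\ell_{\s{LB}}),\star})|$ and $n'=n-|v(\Gamma^{(\ell_{\s{LB}})})|$. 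The reason is that only the new vertices are random. The edges of $\s{H}$ per new vertex are at most $\mu(\Gamma\vert\Gamma^{(\ell_{\s{LB}})})$. Feeding this spread bound into Lemma~\ref{lem:Gspread} with $L=L_n\to\infty$ slowly shows that $\s{G}\setminus\Gamma^{(\ell_{\s{LB}})}$ contains at least $L$ admissible tails with probability at least $1/2$ (boostable to $1-o(1)$). The condition needed is
\begin{align}
\mu\cdot[\log\log(Lk^2)+\log(CL)]+\log k\le \log n'.
\end{align}
Hypothesis \eqref{eq:condCopyLB} gives $\mu\log\log k=o(\log k)$, and $k=o(n)$ gives $\log(n'/k)\to\infty$. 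So this holds once $L$ grows slowly enough, for example $\log L=o(\log(n'/k)/\mu)$. The condition is vertex-relative rather than $\log n$, so I must check that spreadness is measured against $\log(n'/k)$ and not $\log n'$. Monotonicity handles the planted-graph law: removing the planted tail's edges only lowers counts.

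\textbf{Step 3 (many copies forces error).} Many copies alone do not suffice, since they could all share most edges with the truth. We also need $\bE[\s{Int}]=o(1)$, with $\s{Int}$ defined as in the proof of Theorem~\ref{thm:lowerp2weak} but measured through edge overlaps. Then the overlap argument of Theorem~\ref{thm:lowerp2weak} gives $\s{over}=o(|e(\bar\Gamma)|)$.

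I expect this step to be the main obstacle. The first-moment/spread method yields existence of copies, not a second-moment concentration of pairwise overlaps. My first attempt is to use the uniform posterior directly. For any fixed estimate $\hat T$, the expected overlap equals the average over present copies $T_j$ of $|T_j\cap\hat T|$. I would bound the number of present copies sharing at least $\delta|e(\bar\Gamma)|$ edges with a fixed $\hat T$ by a first moment. Such copies pay at most a $q$-factor on fewer edges but have only $\binom{k}{\cdot}$ choices. Comparing this with the $L$-copy lower bound from Step 2 would show they form a vanishing fraction. This parallels the counting in the proof of Theorem~\ref{thm:MLE}, run in the reverse regime using $\mu\log\log k=o(\log k)$.
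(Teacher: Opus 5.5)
\textbf{Gap in Step 3.} Your diagnosis in Step 3 is right, but the repair you sketch cannot work, because the comparison with the $L$ copies from Step 2 goes the wrong way. Already for $\hat T$ equal to the planted tail (with $k$ new vertices and $m=|e(\bar\Gamma)|$ edges) it fails. The tail-degrees of the new vertices sum to at most $2m$, so at least $k/2$ of them have tail-degree at most $4m/k$. Relocating any $\delta k$ of these to fresh vertices gives distinct admissible copies that keep at least $(1-4\delta)m$ planted edges and use at most $4\delta m$ background edges. With $p=1$, the expected number of such copies present is at least $\binom{k/2}{\delta k}\binom{n'-k}{\delta k}q^{4\delta m}\ge\bigl((n'-k)q^{4m/k}/(2\delta^2k)\bigr)^{\delta k}$. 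In exactly the regime where your Step 2 condition holds ($\mu\log(1/q)=o(\log(n'/k))$), this is $\exp(\Omega(\delta k\log(n'/k)))$, which is superpolynomial in $n$ since $k\to\infty$.

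Step 2, by contrast, certifies only polynomially many copies. Summing the spread condition over single edges gives $\alpha\le\binom{n}{2}/m$, and Lemma~\ref{lem:Gspread} needs $CL\log(kL)<q\alpha$. So a first-moment bound on heavily overlapping copies can never be small relative to $L$. What you would actually need is a lower bound on the number of present far copies of the order of their expectation, i.e.\ a concentration or second-moment statement for the constrained copy count. The spread lemma does not give this. In the form $\bE_{\calH_0}\s{L}^2=1+o(1)$, it needs extra hypotheses such as $|e(\bar\Gamma)|\vee d_{\max}^2(\bar\Gamma)\le n^{1-\varepsilon}$, as in Theorem~\ref{thm:lowerp2weak}. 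A further problem is that $\hat T$ depends on $\s{G}$. A bound for a fixed $\hat T$ must therefore be made uniform, e.g.\ through the posterior edge marginals as in the $\s{Int}$ argument, which is again a second-moment quantity.

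\textbf{The paper's route.} The paper sidesteps Step 3. It runs the constrained spread argument with a single copy (Theorem~\ref{theorem:sublogProbALE}) to get, with probability at least $1/2$, one present admissible $\Gamma'$ whose tail is vertex-disjoint from the planted one. It then argues with two points: Lemma~\ref{lem:symmetryTwoCop} gives equal likelihoods when both copies are present, and $d_{\s{H}}(\hat T,T_\ell(\Gamma^\star))+d_{\s{H}}(\hat T,T_\ell(\Gamma'))\ge 2|e(T_\ell(\Gamma^\star))|$ then gives error probability at least $1/4$.

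Your objection applies to that shortcut as well. There $\Gamma'$ is chosen from $\s{G}$, whereas the mixture $\mathbb{M}=\frac12\pr_{\Gamma^\star}+\frac12\pr_{\Gamma'}$ and the bound $\mathbb{M}(\calC)\ge1/2$ require a fixed $\Gamma'$. For fixed $\Gamma'$, $\pr_{\Gamma^\star}(\Gamma'\subseteq\s{G})=q^{|e(\Gamma')\setminus e(\Gamma^\star)|}$ is exponentially small. Moreover, the true posterior is uniform over all present admissible tails, not over a pair. A correct version, e.g.\ via $\bE\, d_{\s{H}}(\hat T,T^\star)\ge\frac12\bE\, d_{\s{H}}(T^\star,T')$ with $T'$ an independent posterior sample, needs precisely the non-concentration of the posterior that your Step 3 lacks. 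So neither text supplies that ingredient.

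\textbf{Step 2 condition.} Your inference from $\mu\log\log k=o(\log k)$ and $\log(n'/k)\to\infty$ to $\mu[\log\log(Lk^2)+\log(CL)]\le\log(n'/k)$ is invalid. Taking $k=n/\log n$ and $\mu=\sqrt{\log n}$ satisfies \eqref{eq:condCopyLB} but violates the latter. Something like $k\le n^{1-\varepsilon}$ is needed. The paper's assertion that \eqref{eq:condCopyALE} follows from \eqref{eq:condCopyLB} makes the same leap.
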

To prove Theorem~\ref{thm:subLogAlmostExact}, we need a generalization of the subgraph expectation threshold that was already discussed in Appendix~\ref{app:GeneralizedExpecTThreshold0}. Specifically, the original subgraph expectation threshold in Theorem~\ref{th:Zadik} analyzes the threshold for the appearance of \emph{any possible} copy of the planted subgraph $\Gamma$ (in the complete graph $\calK_n$) within $\s{G}\sim\calG(n,q)$. For our purposes, however, some copies are precluded. Let us explain this in detail. 

In a nutshell, recall that we are in the scenario where the recovery problem is supplied with $\Gamma^{(\ell_{\s{LB}}),\star}$ and tasked with finding $\Gamma^\star$. Therefore, the admissible $\Gamma^\star$'s are only those that extend $\Gamma^{(\ell_{\s{LB}}),\star}$, namely, the copies contained in $\calM(\Gamma^{(\ell_{\s{LB}}),\star},\Gamma^\star)$. We resolve this by deriving a generalization of the subgraph expectation threshold that accounts for the appearance of a constrained set of subgraph copies. The details are provided in Appendix~\ref{app:GeneralizedExpecTThreshold0}, where we also establish the following key result.
\begin{lemma}
\label{lem:sublogProbLB}
Let $\Gamma=(\Gamma_n)$ be a sequence of graphs such that $\omega(1) \leq |v(\Gamma)|\leq n $, and
\eqref{eq:condCopyLB} holds. Then, for any fixed $q$, a sample from $\calG(n,q)$ contains an isomorphic copy of $\Gamma\in\calM(\Gamma^{(\ell_{\s{LB}}),\star},\Gamma^\star)$, with probability at least $1/2$.
\end{lemma}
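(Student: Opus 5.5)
Write $\s{A}\triangleq\Gamma^{(\ell_{\s{LB}}),\star}$ for the revealed core and $\bar\Gamma\triangleq\Gamma^\star\setminus\s{A}$ for the tail. Let $k'\triangleq|v(\Gamma^\star)\setminus v(\s{A})|=o(n)$ be the number of new tail vertices, and $\mu'\triangleq\mu(\Gamma^\star\vert\s{A})$. The plan is to mimic the proof of Theorem~\ref{th:SpreadGen} and Theorem~\ref{theorem:sublogProb}, with the uniform law on all copies of $\Gamma$ replaced by the uniform law $\pi$ on the constrained family $\calM(\s{A},\Gamma^\star)$. The edges of $\s{A}$ are regarded as already present. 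Only the tail edges $\bar\Gamma'$ of a candidate extension must appear in $\s{G}\sim\calG(n,q)$, and each extension is determined by them.

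The first step is to show that $\pi$, viewed as a law on tail edge sets, is $\alpha$-spread. Fix a nonempty edge set $\s{H}$ lying in some tail, and let $h_v\triangleq|v(\s{H})\setminus v(\s{A})|$. I would argue as in Lemma~\ref{lem:probcalc2}. A uniform extension places its $k'$ new vertices on a uniformly random $k'$-subset of the $n-|v(\s{A})|$ free vertices, and the labelings given that vertex set can only reduce the probability. Hence
\begin{align}
\pi(\s{H}\subseteq\bar\Gamma')\le\p{\frac{k'}{n-|v(\s{A})|}}^{h_v}.
\end{align}
Now I would use the relative density. Since $\s{A}\cup\s{H}$ is a subgraph strictly containing $\s{A}$, we have $|e(\s{H})|\le\mu'\,h_v$, so $h_v\ge|e(\s{H})|/\mu'$. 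One wrinkle is that $h_v$ may be zero. In that case $\s{H}$ consists of non-$\s{A}$ edges among vertices of $\s{A}$, which the onion maximality does not forbid. I expect this to be handled by the fact that such edges are either forced or are cut edges counted with the tail vertices, so I would absorb them into $\s{A}$ by requiring them present, using that $q$ is constant. This is a subtle point.

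With that in hand, $\pi$ is $\alpha$-spread with
\begin{align}
\alpha\approx\p{\frac{n}{k'}}^{1/\mu'}.
\end{align}
The second step is to invoke Lemma~\ref{lem:spread}, or Lemma~\ref{lem:Gspread} with $L=1$, where $k$ is the number of tail edges. This gives the constrained copy with probability at least $1/2$ once $q>C\log|e(\bar\Gamma)|/\alpha$. Taking logarithms, it suffices that
\begin{align}
\frac{\log(n/k')}{\mu'}\ge\log\log|e(\bar\Gamma)|+\log(C/q).
\end{align}
Since $|e(\bar\Gamma)|\le\mu' k'$ and $\log(n/k')\to\infty$ because $k'=o(n)$, this follows from \eqref{eq:condCopyLB}. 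The tail has $o(\log k'/\log\log k')$ relative density, so $\mu'\log\log k'=o(\log k')$. The main obstacle is relating $\log(n/k')$ to $\log k'$ when $k'$ is close to $n$. If $k'\ge n^{1-o(1)}$, I would instead apply the argument inside a sub-window of $n/2$ free vertices, or use the hypothesis $k'=o(n)$ more carefully. The backup plan is to exploit the explicit $\log k'/\log\log k'$ slack, which dominates the $\log\log$ term regardless of the ratio $n/k'$.

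Finally, note that the planted tail is independent of the background on non-tail pairs. Therefore the same argument applied to $\s{G}$ minus the planted tail vertices (as in Appendix~\ref{app:prrpf}) yields, with constant probability, a second admissible extension. This is the form needed afterwards.
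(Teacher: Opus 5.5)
\textbf{Gap in the final step.} Your argument follows the paper's route. You show the uniform law on $\calM(\s{A},\Gamma^\star)$ is spread using the vertex-containment bound (as in Lemma~\ref{lem:probcalc2Alsmot}) and $|e(\s{H})|\le\mu' h_v$. The spread lemma then reduces everything to $\log(n/k')\gtrsim\mu'\bigl[\log\log|e(\bar\Gamma)|+\log(C/q)\bigr]$, which is the paper's condition \eqref{eq:condCopyALE} with $\s{J}=\s{A}$. The genuine gap is the step you flagged yourself: deriving this from \eqref{eq:condCopyLB}. The paper only asserts that this is ``clear,'' and your proposed fixes do not work.

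The hypothesis gives $\mu'\log\log|e(\bar\Gamma)|=o(\log k')$. The assumption $k'=o(n)$ gives only $\log(n/k')=\omega(1)$. These two quantities are incomparable when $k'=n^{1-o(1)}$. Your spread parameter $(n/k')^{1/\mu'}$ is governed by $\log(n/k')$, not by $\log k'$. So the $\log k'/\log\log k'$ slack does not help ``regardless of the ratio $n/k'$.'' Restricting to a window of $n/2$ free vertices only makes $\log(n/k')$ smaller.

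\textbf{This cannot be repaired: the statement fails in that regime.} Take $\ell_{\s{LB}}=0$ (so $\s{A}=\emptyset$), fixed $q\in(0,1)$, and $\Gamma=\calK_{s,k-s}$ with $s=\lceil\sqrt{\log n}\,\rceil$ and $k=\lceil n/\log\log n\rceil$.
\begin{itemize}
\item This graph is balanced, so $M=1$ and $\ell_{\s{LB}}=0$. Also $k=o(n)$ and $\mu(\Gamma)<s=o(\log k/\log\log k)$, so all hypotheses hold.
\item A copy of $\Gamma$ requires $s$ vertices with at least $k-s$ common neighbours.
\item For a fixed $s$-set, the number of common neighbours is $\s{Bin}(n-s,q^s)$, with mean at most $nq^s=n\,e^{-\Theta(\sqrt{\log n})}\ll k$. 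By Chernoff it reaches $k-s$ with probability $\exp(-\Omega(ks))$.
\item A union bound over at most $n^s=e^{s\log n}$ sets shows that w.h.p.\ $\calG(n,q)$ contains no copy of $\Gamma$ at all.
\end{itemize}
So the statement, and your argument, need an additional hypothesis. One option is $k'\le n^{1-\delta}$ for a fixed $\delta>0$; then $\log(n/k')\ge\delta\log n$ dominates $\mu'\log\log|e(\bar\Gamma)|=o(\log k')$, and your computation closes. The other is to assume \eqref{eq:condCopyALE} directly, as in Theorem~\ref{theorem:sublogProbALE}.

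\textbf{A smaller point: the case $h_v=0$ cannot occur.} Onion maximality is exactly what forbids it.
\begin{itemize}
\item Each layer $\Gamma^{(\ell)}$ is an induced subgraph of $\Gamma$. Adding an edge of $\Gamma$ between vertices of $\Gamma^{(\ell)}$ would raise $\eta(\cdot\vert\Gamma^{(\ell-1)})$ without changing the denominator.
\item An induction using Lemma~\ref{lem:APP2} shows that any copy of $\Gamma^{(\ell)}$ inside a copy $\Gamma'$ of $\Gamma$ is the $\ell$-th onion layer of $\Gamma'$.
\end{itemize}
Hence every tail edge of every extension has an endpoint outside $v(\s{A})$. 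The same fact gives $\mu(\Gamma'\vert\s{A})=\mu'$. Your bound $|e(\s{H})|\le\mu' h_v$ tacitly needs this for tails of extensions other than the planted one.
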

Let us now prove Theorem~\ref{thm:subLogAlmostExact}, and then move forward to the proof of Lemma~\ref{lem:sublogProbLB}.

\begin{proof}[Proof of Theorem~\ref{thm:subLogAlmostExact}]
We prove that
\begin{align}
\inf_{\hat T}\sup_{\Gamma^\star\in\calS_\Gamma}
\pr_{\Gamma^\star}\left[d_{\s{H}}\big(\hat T(\s{G},\Gamma^{(\ell)}),\,T_\ell(\Gamma^\star)\big)\geq|e(T_\ell(\Gamma^\star))|\right]\geq\frac{1}{4},\label{eqn:ProbHig}
\end{align}
where we use $\pr_{\Gamma^\star}$ to emphasize that $\s{G}\sim\calG_{\Gamma^\star}(n,1,q)$, with $\Gamma^\star$ being the underlying planted subgraph. Proving \eqref{eqn:ProbHig} then readily implies that almost-exact recovery is impossible. 
%\begin{align}
%\pr_s\big[d(\widehat{E}(G),E(s))\ \ge\ m\big]\ \ge\ \frac{c_0}{2}.
%\end{align}
Recall that by Lemma~\ref{lem:sublogProbLB}, for all sufficiently large $n$, and since $|v(\Gamma^{(\ell_{\s{LB}}),\star})|\leq|v(\Gamma)|=o(n)$, under the condition in \eqref{eq:condCopyLB}, a typical sample from $\calG(n-|v(\Gamma^{(\ell_{\s{LB}}),\star})|,q)$ contains an isomorphic copy of $\Gamma^\star\in\calM(\Gamma^{(\ell_{\s{LB}}),\star},\Gamma^\star)$, with probability at least $1/2$. Denote this copy by $\Gamma'$, and note that by construction, both $\Gamma^\star$ and $\Gamma'$ are extensions of the same subgraph $\Gamma^{(\ell_{\s{LB}}),\star}$, and furthermore, $v(\Gamma^\star\setminus\Gamma^{(\ell_{\s{LB}}),\star})\cap v(\Gamma'\setminus\Gamma^{(\ell_{\s{LB}}),\star})=\emptyset$, namely, their tails are vertex-disjoint. Accordingly, define the event
\begin{align}
&\mathcal{F}_{\Gamma^\star}\triangleq\left\{\exists\;\Gamma'\in\calM(\Gamma^{(\ell_{\s{LB}}),\star},\Gamma^\star)\;\s{with}\;v(\Gamma^\star\setminus\Gamma^{(\ell_{\s{LB}}),\star})\cap v(\Gamma'\setminus\Gamma^{(\ell_{\s{LB}}),\star})=\emptyset ,e(\Gamma')\subseteq e(\s{G})\right\}.
\end{align}
Then $\pr_{\Gamma^\star}[\mathcal{F}_{\Gamma^\star}]\geq1/2$ provided that \eqref{eq:condCopyLB} holds. We will need the following simple observation, which we prove at the end.
\begin{lemma}\label{lem:symmetryTwoCop}
Let $\Gamma',\Gamma^\star$ be two vertex-disjoint copies of $\Gamma$. For any graph $\s{g}\in\{0,1\}^{\binom{n}{2}}$, such that all edges in both $\Gamma'$ and $\Gamma^\star$ appear in $\s{g}$, we have
\begin{align}
\pr_{\Gamma^\star}(\s{G}=\s{g})=\pr_{\Gamma'}(\s{G}=\s{g}).
\end{align}
\end{lemma}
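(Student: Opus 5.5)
The plan is to write down the likelihood of $\s{g}$ under $\pr_{\Gamma^\star}$ explicitly and check that it is unchanged when the roles of $\Gamma^\star$ and $\Gamma'$ are swapped. By the model definition (here with $p=1$, the canonical case), edges are conditionally independent. Each $e\in e(\Gamma^\star)$ is present with probability $1$, and each $e\notin e(\Gamma^\star)$ is present with probability $q$. Hence
\begin{align}
\pr_{\Gamma^\star}(\s{G}=\s{g})=\Ind\ppp{e(\Gamma^\star)\subseteq e(\s{g})}\cdot q^{|e(\s{g})|-|e(\Gamma^\star)|}(1-q)^{\binom{n}{2}-|e(\s{g})|},
\end{align}
since the absent pairs of $\s{g}$ all lie outside $\Gamma^\star$ (they contribute $1-q$ each), and the present pairs outside $\Gamma^\star$ number $|e(\s{g})|-|e(\Gamma^\star)|$. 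The same formula holds with $\Gamma'$ in place of $\Gamma^\star$.

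Next, I would use the hypothesis that $\s{g}$ contains all edges of both copies, so both indicators equal one. Since $\Gamma'$ and $\Gamma^\star$ are isomorphic copies of $\Gamma$, we have $|e(\Gamma')|=|e(\Gamma^\star)|=|e(\Gamma)|$. The two expressions therefore coincide term by term, which gives the claim. Vertex-disjointness is not actually needed for this identity; it is only used elsewhere, in the surrounding argument, to ensure the tails are distinguishable. The same computation also applies verbatim to copies in $\calM(\Gamma^{(\ell_{\s{LB}}),\star},\Gamma^\star)$ sharing a common core, and this is the form used in the proof of Theorem~\ref{thm:subLogAlmostExact}.

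If one prefers to keep general $p$, I would instead write the likelihood as $\propto\bigl[\tfrac{p(1-q)}{q(1-p)}\bigr]^{|e(\Gamma^\star)\cap e(\s{g})|}$, as in Appendix~\ref{app:0}. Because both copies are fully contained in $\s{g}$, this exponent equals $|e(\Gamma)|$ for both, so the conclusion is again immediate. There is no real obstacle. The only point of care is the bookkeeping of $|e(\s{g})|-|e(\Gamma^\star)|$, which relies on $e(\Gamma^\star)\subseteq e(\s{g})$.
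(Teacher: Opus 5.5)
Your proof is correct and follows the same basic route as the paper: write the likelihood of $\s{g}$ under $\pr_{\Gamma^\star}$ explicitly (with $p=1$) and observe that it is symmetric in the two copies. The difference is in the bookkeeping, and it matters.

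The paper splits the product over $\binom{[n]}{2}\setminus e(\Gamma^\star)$ into a factor over $e(\Gamma')$, contributing $q^{|e(\Gamma')|}$, and a factor over $\calR=\binom{[n]}{2}\setminus(e(\Gamma^\star)\cup e(\Gamma'))$. That split uses $e(\Gamma')\cap e(\Gamma^\star)=\emptyset$, which is where vertex-disjointness enters. Your closed form $q^{|e(\s{g})|-|e(\Gamma^\star)|}(1-q)^{\binom{n}{2}-|e(\s{g})|}$ depends on the planted copy only through its edge count, so it needs no disjointness at all.

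This extra generality is not cosmetic. In the proof of Theorem~\ref{thm:subLogAlmostExact}, the lemma is applied to $\Gamma^\star$ and $\Gamma'$ that both contain the core $\Gamma^{(\ell_{\s{LB}}),\star}$. They share vertices and edges; only their tails are vertex-disjoint. The paper's factorization would have to be adjusted there: the exponents become $|e(\Gamma')\setminus e(\Gamma^\star)|$ and $|e(\Gamma^\star)\setminus e(\Gamma')|$, which are equal. Your version applies verbatim, as you point out.

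Your general-$p$ remark is also fine for $p<1$. The omitted proportionality constant $\prod_e q^{\s{g}_e}(1-q)^{1-\s{g}_e}\cdot\bigl(\tfrac{1-p}{1-q}\bigr)^{|e(\Gamma)|}$ depends only on $\s{g}$ and $|e(\Gamma)|$, hence is the same for both copies.
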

We are now in a position to prove Theorem~\ref{thm:subLogAlmostExact}. Fix any estimator $\hat{T}$, and a subgraph $\Gamma^\star$. On the event $\calF_{\Gamma^\star}$, pick one disjoint extra copy $\Gamma'$ whose edges are all 1 in $\s{G}$, (existence guaranteed by $\calF_{\Gamma^\star}$). Consider the two distributions $\pr_{\Gamma^\star}$ and $\pr_{\Gamma'}$ and their equal mixture
\begin{align}
\mathbb{M}\triangleq\frac{1}{2} \pr_{\Gamma^\star} + \frac{1}{2} \pr_{\Gamma'} .
\end{align}
Define the error indicator under a randomly and uniformly chosen planted location $\widetilde{\Gamma}\in\{\Gamma^\star,\Gamma'\}$:
\begin{align}
\calE(\s{G},\widetilde{\Gamma})\triangleq\Ind\left\{d_{\s{H}}\big(\hat T(\s{G}),T_\ell(\widetilde{\Gamma})\big)\geq|e(T_\ell(\Gamma^\star))|\right\}.
\end{align}
By Lemma~\ref{lem:symmetryTwoCop}, for any realized graph $\s{g}\in\{0,1\}^{\binom{n}{2}}$ such that both $e(\Gamma^\star)$ and $e(\Gamma')$ are 1 in $\s{g}$,
\begin{align}
\pr_{\Gamma^\star}(\s{G}=\s{g})=\pr_{\Gamma'}(\s{G}=\s{g})\quad\Longrightarrow\quad
\mathbb{M}(\widetilde{\Gamma}=\Gamma^\star\vert \s{G}=\s{g})=\mathbb{M}(\widetilde{\Gamma}=\Gamma'\vert \s{G}=\s{g})=\frac{1}{2}.
\end{align}
Triangle inequality implies that
\begin{align}
d_{\s{H}}(\hat T(\s{G}),T_\ell(\Gamma^\star))+d_{\s{H}}(\hat T(\s{G}),T_\ell(\Gamma'))\geq d_{\s{H}}(T_\ell(\Gamma'),T_\ell(\Gamma^\star)) = 2|e(T_\ell(\Gamma^\star))|,
\end{align}
where we have used the fact that $v(T_\ell(\Gamma^\star))\cap v(T_\ell(\Gamma'))=\emptyset$. Therefore
\begin{align}
\max\{d_{\s{H}}(\hat T(\s{G}),T_\ell(\Gamma^\star)),d_{\s{H}}(\hat T(\s{G}),T_\ell(\Gamma'))\}\geq |e(T_\ell(\Gamma^\star))|.
\end{align}
Thus, whenever both copies $\Gamma^\star$ and $\Gamma'$ are present in $\s{g}$, we obtain
\begin{align}
    \calE(\s{g},\Gamma^\star)+\calE(\s{g},\Gamma')\geq1.\label{eqn:LLB1SumE}
\end{align}
Next, let $\calC$ be the event ``both $\Gamma^\star$ and $\Gamma'$ appear in $\s{G}$". Notice that $\calC$ coincides with $\calF_{\Gamma^\star}$ when $\s{G}\sim\pr_{\Gamma^\star}$, and with $\calF_{\Gamma'}$ when $\s{G}\sim\pr_{\Gamma'}$. Hence
\begin{align}
\mathbb{M}(\calC)=\frac{1}{2}\pr_{\Gamma^\star}(\calF_{\Gamma^\star})+\frac{1}{2}\pr_{\Gamma'}(\calF_{\Gamma'})\geq\frac{1}{2}.
\end{align}
Conditioning on $\s{G}$ and then averaging over $\widetilde{\Gamma}$ under $\mathbb{M}$,
\begin{align}
\mathbb{E}_{\mathbb{M}}\big[\calE(\s{G},\widetilde{\Gamma})\Ind\ppp{\calC}\big]
&= \mathbb{E}_{\mathbb{M}}\left[\mathbb{E}_{\mathbb{M}}\big[\calE(\s{G},\widetilde{\Gamma})\vert\s{G}\big]\ \Ind\ppp{\calC}\right]\\
&\geq
\mathbb{E}_{\mathbb{M}}\left[\frac{1}{2}\Ind\ppp{\calC}\right]\\
&= \frac{1}{2}\mathbb{M}(\mathcal{C})\geq\frac{1}{4},
\end{align}
where the inequality follows from the facts that on $\calC$ the posterior on $\{\Gamma^\star,\Gamma'\}$ is uniform, and \eqref{eqn:LLB1SumE} holds. Since $\calC\geq 0$, we can drop $\Ind\ppp{\calC}$ on the left to get
\begin{align}
\mathbb{E}_{\mathbb{M}}\big[\calE(\s{G},\widetilde{\Gamma})\big]\geq\frac{1}{4}.
\end{align}
But, we note that
\begin{align}
  \mathbb{E}_{\mathbb{M}}[\calE(\s{G},\widetilde{\Gamma})] &= \frac{1}{2}\pr_{\Gamma^\star}\pp{d_{\s{H}}\big(\hat T(\s{G}),T_\ell(\Gamma^\star)\big)\geq|e(T_\ell(\Gamma^\star))|}\nonumber\\
  &\qquad+\frac{1}{2}\pr_{\Gamma'}\pp{d_{\s{H}}\big(\hat T(\s{G}),T_\ell(\Gamma')\big)\geq|e(T_\ell(\Gamma^\star))|}.\label{eqn:sumProbab1/4}
\end{align}
Therefore, at least one of those two probabilities at the right-hand side of \eqref{eqn:sumProbab1/4} is at least $1/4$. This, in turn, implies that
\begin{align}
    &\sup_{\bar\Gamma\in\calS_\Gamma}
\pr_{\bar\Gamma}\left[d_{\s{H}}\big(\hat T(\s{G},\Gamma^{(\ell)}),\,T_\ell(\bar\Gamma)\big)\geq|e(T_\ell(\bar\Gamma))|\right]\nonumber\\
&\hspace{4cm}\geq\sup_{\bar\Gamma\in\{\Gamma^\star,\Gamma'\}}
\pr_{\bar\Gamma}\left[d_{\s{H}}\big(\hat T(\s{G},\Gamma^{(\ell)}),\,T_\ell(\bar\Gamma)\big)\geq|e(T_\ell(\bar\Gamma))|\right]\geq\frac{1}{4}.\label{eqn:almostExactEndProof}
\end{align}
Since $\hat{T}$ was arbitrary, the infimum over estimators of the left-hand side in \eqref{eqn:almostExactEndProof} is also at least $1/4$, which concludes that proof. 
\end{proof}
Finally, we prove Lemma~\ref{lem:symmetryTwoCop}.
\begin{proof}[Proof of Lemma~\ref{lem:symmetryTwoCop}]
Under $\pr_{\Gamma^\star}$, edges in $e(\Gamma^\star)$ are fixed to $1$, and edges in $\binom{[n]}{2}\setminus e(\Gamma^\star)$ are independent $\s{Bern}(q)$. Hence, for any $\s{g}\in\{0,1\}^{\binom{n}{2}}$,
\begin{align}
\pr_{\Gamma^\star}(\s{G}=\s{g})=
\begin{cases}
\prod_{e\in \binom{[n]}{2}\setminus e(\Gamma^\star)} q^{\s{g}_e}(1-q)^{1-\s{g}_e}, & \text{if } \s{g}_e=1\ \forall e\in e(\Gamma^\star),\\
0, & \s{otherwise.}
\end{cases}
\end{align}
An analogous formula holds for $\pr_{\Gamma'}$. Assume $\s{g}$ has $\s{g}_e=1$ for every $e\in e(\Gamma^\star)\cup e(\Gamma')$. Let
\begin{align}
\calR \triangleq \binom{[n]}{2}\setminus\p{e(\Gamma^\star)\cup e(\Gamma')}.
\end{align}
Then
\begin{align}
\pr_{\Gamma^\star}(\s{G}=\s{g})
&= \prod_{e\in e(\Gamma')} q^{\s{g}_e}(1-q)^{1-\s{g}_e}\prod_{e\in \calR} q^{\s{g}_e}(1-q)^{1-\s{g}_e}\\
&= q^{|e(\Gamma')|}\prod_{e\in \calR}q^{\s{g}_e}(1-q)^{1-\s{g}_e},
\end{align}
because $\s{g}_e=1$ for all $e\in e(\Gamma')$. Similarly,
\begin{align}
\pr_{\Gamma'}(\s{G}=\s{g})
= q^{|e(\Gamma^\star)|}\prod_{e\in \calR}q^{\s{g}_e}(1-q)^{1-\s{g}_e}.
\end{align}
Since $|e(\Gamma^\star)|=|e(\Gamma')|$, these expressions are equal:
\begin{align}
\pr_{\Gamma^\star}(\s{G}=\s{g})=q^{|e(\Gamma^\star)|}\prod_{e\in\calR} q^{\s{g}_e}(1-q)^{1-\s{g}_e}
=\pr_{\Gamma'}(\s{G}=\s{g}).
\end{align}
\end{proof}

\subsubsection{Proof of Lemma~\ref{lem:sublogProbLB}}\label{app:GeneralizedExpecTThreshold0}

In this subsection we prove Lemma~\ref{lem:sublogProbLB}. To that end, let us introduce a few definitions and notations. Fix a graph $\Gamma$ and a subgraph $\s{J} \subsetneq \Gamma$. Recall that $\calM(\s{J},\Gamma)$ is the set of copies of $\Gamma$ in $\calK_n$ that contain $\s{J}$. Let
\begin{align}
    \calN(\s{J},\Gamma,\s{G})\triangleq\sum_{\Gamma'\in\calM(\s{J},\Gamma)}\Ind\ppp{\Gamma'\in\s{G}},
\end{align}
which counts the number of copies of $\Gamma'\in\calM(\s{J},\Gamma)$ which appear in $\s{G}$. The critical probability of $\Gamma$ w.r.t. to $\calM$ is defined as
\begin{align}
    q_{c}(\s{J};\Gamma)&\triangleq\min\ppp{q\in[0,1] ~\bigg|~ \P_{\s{G}\sim \calG(n,q)}\pp{\calN(\s{J},\Gamma,\s{G})\geq 1\vert\s{J}\in\s{G}}\geq \frac{1}{2}}.\label{eqn:qcJGamma}
    \end{align}
Define the modified subgraph expectation threshold w.r.t. to $\calM$ as
\begin{align}
    \Tilde{q}_E(\s{J};\Gamma)&\triangleq\min\left\{q\in[0,1] ~\bigg|~ \E_{\s{G}\sim \calG(n,q)}\pp{\calN(\s{J},\s{H},\s{G})\vert\s{J}\in\s{G}}\geq \frac{\calN(\s{H},\Gamma)}{2},\;\forall\s{J}\subseteq\s{H}\subseteq \Gamma\right\},
\end{align} 
where only subgraphs $\s{J}\subseteq\s{H}\subseteq \Gamma$ with no isolated vertices are considered. We prove the following result.
\begin{theorem}\label{th:Zadik2} 
There exists a universal constant $C$ such that for any graph $\Gamma$, \begin{align}
    \Tilde{q}_E(\s{J};\Gamma)\leq q_{c}(\s{J};\Gamma)\leq C\cdot \Tilde{q}_E(\s{J};\Gamma)\cdot  \log|e(\Gamma\setminus\s{J})|.
\end{align}
\end{theorem}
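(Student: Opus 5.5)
The plan is to mirror the proof of Theorem~\ref{th:SpreadGen} and of \cite[Theorem~1]{mossel2022second}, working throughout in the conditional model in which the edges of $\s{J}$ are forced to be present. Conditional on $\s{J}\in\s{G}$, the edges outside $e(\s{J})$ remain i.i.d.\ $\s{Bern}(q)$. Moreover, $\Gamma'\in\calM(\s{J},\Gamma)$ appears in $\s{G}$ if and only if the edge set $e(\Gamma')\setminus e(\s{J})$ appears. The problem therefore reduces to the ordinary appearance of at least one member of the family $\calG_M\triangleq\{e(\Gamma')\setminus e(\s{J}):\Gamma'\in\calM(\s{J},\Gamma)\}$ in $\calG(n,q)$ restricted to $\binom{[n]}{2}\setminus e(\s{J})$. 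Each member has exactly $k\triangleq|e(\Gamma\setminus\s{J})|$ edges.

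\textbf{Lower bound.} Take $q\geq q_c(\s{J};\Gamma)$ and any $\s{J}\subseteq\s{H}\subseteq\Gamma$. Every copy $\Gamma'\in\calM(\s{J},\Gamma)$ contains $\calN(\s{H},\Gamma)$ copies of $\s{H}$ that extend $\s{J}$; I will take $\calN(\s{H},\Gamma)$ to count exactly those copies, and I will check that this is the intended normalization. Hence the event $\{\calN(\s{J},\Gamma,\s{G})\geq1\}$ implies $\calN(\s{J},\s{H},\s{G})\geq\calN(\s{H},\Gamma)$. Markov's inequality under the conditional law then gives
\[
\frac12\leq\frac{\E[\calN(\s{J},\s{H},\s{G})\vert\s{J}\in\s{G}]}{\calN(\s{H},\Gamma)},
\]
which yields $\tilde q_E(\s{J};\Gamma)\leq q_c(\s{J};\Gamma)$. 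This is the same computation as the one preceding Theorem~\ref{th:SpreadGen}.

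\textbf{Upper bound.} Let $\pi$ be the uniform law on $\calM(\s{J},\Gamma)$, viewed through the tails $e(\Gamma')\setminus e(\s{J})$. I want to show that $\pi$ is $\alpha$-spread with $\alpha=1/(2\tilde q_E(\s{J};\Gamma))$ for every nonempty edge set $\s{F}$ disjoint from $e(\s{J})$. To do this, fix such an $\s{F}$ and set $\s{H}\triangleq\s{J}\cup\s{F}$. If $\s{F}$ is not contained in any tail, then $\pi(\s{F}\subseteq\cdot)=0$. Otherwise $\s{H}$ is, up to isomorphism fixing $\s{J}$, a subgraph of $\Gamma$ with $\s{J}\subseteq\s{H}$. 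The key identity is the conditional analogue of \eqref{eq:symmetry}: double counting pairs $(\s{H}_0,\Gamma')$ with $\s{H}_0\in\calM(\s{J},\s{H})$, $\Gamma'\in\calM(\s{J},\Gamma)$, and $\s{H}_0\subseteq\Gamma'$, and using transitivity of the stabilizer of $\s{J}$ in $\mathbb S_n$ on $\calM(\s{J},\s{H})$, gives
\[
\pi(\s{F}\subseteq\Gamma'\setminus\s{J})=\frac{\calN(\s{H},\Gamma)}{|\calM(\s{J},\s{H})|}.
\]
On the other side, $\E[\calN(\s{J},\s{H},\s{G})\vert\s{J}\in\s{G}]=|\calM(\s{J},\s{H})|\,q^{|\s{F}|}$. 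So the defining inequality of $\tilde q_E$ at $q=\tilde q_E$ gives
\[
\pi(\s{F}\subseteq\cdot)\leq 2\tilde q_E^{|\s{F}|}\leq(2\tilde q_E)^{|\s{F}|}=\alpha^{-|\s{F}|}.
\]
The last step uses $|\s{F}|\geq1$ and the fact that $\tilde q_E\leq 1/2$ may be assumed, since otherwise the claim is trivial after enlarging $C$. Applying Lemma~\ref{lem:spread} with this $k$ and $\alpha$ on the ground set $\binom{[n]}{2}\setminus e(\s{J})$ then shows that $q>C\cdot2\tilde q_E\log k$ suffices. This gives $q_c(\s{J};\Gamma)\leq C'\tilde q_E(\s{J};\Gamma)\log|e(\Gamma\setminus\s{J})|$.

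\textbf{Main obstacle.} I expect the hardest step to be the symmetry identity together with the requirement that the $\s{H}$ ranging in the definition of $\tilde q_E$ cover every $\s{J}\cup\s{F}$ that arises. In particular, a subgraph $\s{H}$ may sit inside $\Gamma'$ with its $\s{J}$-part embedded non-identically. To handle this, I will count copies relative to the automorphism group of $\Gamma$ fixing $\s{J}$ pointwise, mirroring Lemma~\ref{lem:unique-full-overlap}. I will also verify that isolated vertices created by $\s{F}$ relative to $\s{H}$ are harmless, since $\s{J}$ carries its own vertices.
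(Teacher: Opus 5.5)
Your proposal is correct and follows essentially the same route as the paper. The lower bound is Markov's inequality under $\{\s{J}\in\s{G}\}$. For the upper bound, the double-counting identity $\pi(\s{H}_0\subseteq\Gamma')=\calN(\s{H},\Gamma)/|\calM(\s{J},\s{H})|$ shows that $\s{Unif}(\calM(\s{J},\Gamma))$ is $1/(2\tilde{q}_E(\s{J};\Gamma))$-spread, and the spread lemma is then applied with $k=|e(\Gamma\setminus\s{J})|$.

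Your extra care fills in justification that the paper's proof only asserts. You take $\calN(\s{H},\Gamma)$ to count only copies of $\s{H}$ inside $\Gamma'$ that extend $\s{J}$, and you define copies canonically so that the stabilizer of $\s{J}$ acts transitively. These two choices are exactly what make both the Markov step and the symmetry identity valid as written.
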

Note that for $q\geq \Tilde{q}_E(\s{J};\Gamma)$, by Markov's inequality, for any $\s{J}\subseteq\s{H}\subseteq \Gamma$,
\begin{align}
    \frac{1}{2}&\leq \P_{\s{G}\sim \calG(n,q)}\pp{\calN(\s{J},\Gamma,\s{G})\geq 1\vert\s{J}\in\s{G}}\\
    &\leq \P_{\s{G}\sim \calG(n,q)}\pp{\calN(\s{J},\s{H},\s{G})\geq \calN(\s{H},\Gamma)\vert\s{J}\in\s{G}}\\
    &\leq \frac{\E_{\s{G}\sim \calG(n,q)}\pp{\calN(\s{J},\s{H},\s{G})\vert\s{J}\in\s{G}}}{\calN(\s{H},\Gamma)},
\end{align}
and thus, $\Tilde{q}_E(\s{J};\Gamma)\leq q_{c}(\s{J};\Gamma)$. Finally, note that we can rewrite
\begin{align}
    \Tilde{q}_E(\s{J};\Gamma)\triangleq\max\ppp{\p{\frac{\calN(\s{H},\Gamma)}{2|\calM(\s{J},\s{H})|}}^{1/|e(\s{H}\setminus\s{J})|}:\;\s{J}\subseteq\s{H}\subseteq \Gamma}.\label{eqn:EquivSpreaProHen}
\end{align}
We are now in a position to prove Theorem~\ref{th:Zadik2}.
\begin{proof}[Proof of Theorem~\ref{th:Zadik2}]
Let $\pi_\Gamma=\s{Unif}(\calM(\s{J},\Gamma))$ denotes the uniform measure over all copies of a fixed graph $\Gamma$ within $\calK_n$ that contain $\s{J}$, and let $\Gamma'\sim\pi_\Gamma$ be a random sample from this distribution. Now, consider a subgraph $\s{H}$ such that $\s{J}\subseteq\s{H}\subset\Gamma$. Let $\pi_{\s{H}}=\s{Unif}(\calM(\s{J},\s{H}))$ denotes the uniform measure over all copies of $\s{H}$ within $\calK_n$ that contain $\s{J}$, and let $\s{H}'\sim\pi_{\s{H}}$. For any fixed instances $\s{H}_0,\Gamma_0\subseteq\calK_n$, copies of $\s{H}$ and $\Gamma$, respectively, we can compute the inclusion probability in two equivalent ways:
\begin{align}
\pi_{\Gamma}(\s{H}_0 \subseteq \Gamma') = \pi_{\s{H}}(\s{H}' \subseteq \Gamma_0) = \frac{|\calM(\s{H},\Gamma)|}{|\calM(\s{J},\Gamma)|}=\frac{\calN(\s{H},\Gamma)}{|\calM(\s{J},\s{H})|}.
\label{eq:symmetry2}
\end{align}
Now, combining \eqref{eq:symmetry2} with \eqref{eqn:EquivSpreaProHen}, we obtain
\begin{align}
\pi_{\Gamma}(\s{H}_0 \subseteq \Gamma') =\frac{\calN(\s{H},\Gamma)}{|\calM(\s{J},\s{H})|} \leq 2\cdot \pp{\Tilde{q}_E(\s{J};\Gamma)}^{|e(\s{H}\setminus\s{J})|} \leq  \left( \frac{1}{2\Tilde{q}_E(\s{J};\Gamma)} \right)^{-|e(\s{H}\setminus\s{J})|}.
\end{align}
Since the bound holds uniformly over all subgraphs $\s{H}_0$, we conclude that $\pi_\Gamma$ is $\alpha$-spread with $\alpha = 1/[2\Tilde{q}_E(\s{J};\Gamma)]$. An application of Lemma~\ref{lem:Gspread} with $k = |e(\Gamma\setminus\s{J})|$ then completes the proof of Theorem~\ref{th:Zadik2}.

\end{proof}

Next, we prove the following result.
\begin{theorem}
\label{theorem:sublogProbALE}
Let $\Gamma=(\Gamma_n)$ and $\s{J} = (\s{J}_n)$ be sequences of graphs such that $\omega(1) \leq |v(\Gamma)|\leq n$, $\s{J}\subseteq\Gamma$, and
\begin{align}
  (1+\varepsilon)\mu(\Gamma_n\vert\s{J}_n)\cdot [\log\log|e(\Gamma_n\setminus\s{J})|+\log C]+\log|v(\Gamma_n)\setminus v(\s{J}_n)|\leq \log n,\label{eq:condCopyALE}
\end{align}
for some $\varepsilon>0$ and $C>0$. Then, for any fixed $q$, a sample from $\calG(n,q)$ contains an isomorphic copy of $\Gamma'\in\calM(\s{J},\Gamma)$, with probability at least $1/2$.
\end{theorem}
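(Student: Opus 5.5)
We will mirror the proof of Theorem~\ref{theorem:sublogProb}, with Theorem~\ref{th:Zadik2} replacing Theorem~\ref{th:SpreadGen}. By \eqref{eqn:qcJGamma}, it suffices to show $q_c(\s{J};\Gamma)\leq q$ for the fixed $q$. Given $\s{J}\in\s{G}$, the edges outside $\s{J}$ are still i.i.d.\ $\s{Bern}(q)$, so conditioning on $\s{J}$ is harmless. By the upper bound in Theorem~\ref{th:Zadik2}, it is enough to prove
\begin{align}
\Tilde{q}_E(\s{J};\Gamma)\leq \tilde q\triangleq\frac{q}{C\log|e(\Gamma\setminus\s{J})|}.
\end{align}
Using the characterization \eqref{eqn:EquivSpreaProHen} together with the symmetry identity \eqref{eq:symmetry2}, this reduces to checking, for every $\s{J}\subseteq\s{H}\subseteq\Gamma$ (without isolated vertices outside $\s{J}$), the inequality
\begin{align}
\pi_\Gamma(\s{H}_0\subseteq\Gamma')=\frac{\calN(\s{H},\Gamma)}{|\calM(\s{J},\s{H})|}\leq 2\,\tilde q^{\,|e(\s{H}\setminus\s{J})|}.
\end{align}
Here $\Gamma'\sim\s{Unif}(\calM(\s{J},\Gamma))$.

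The key step is a constrained analogue of Lemma~\ref{lem:probcalc2}. A uniform extension $\Gamma'$ of $\s{J}$ is obtained by placing the $m\triangleq|v(\Gamma)\setminus v(\s{J})|$ new vertices uniformly among the $n-|v(\s{J})|$ free ambient vertices, followed by a relabeling. For the fixed copy $\s{H}_0\supseteq\s{J}$ to be contained in $\Gamma'$, the $h\triangleq|v(\s{H})\setminus v(\s{J})|$ new vertices of $\s{H}_0$ must land in the chosen $m$-set. The same sequential conditioning as in Lemma~\ref{lem:probcalc2} then gives
\begin{align}
\pi_\Gamma(\s{H}_0\subseteq\Gamma')\leq\Big(\frac{m}{n-|v(\s{J})|}\Big)^{h}\leq\Big(\frac{2m}{n}\Big)^{h},
\end{align}
where the last step uses $|v(\Gamma)|=o(n)$ in the application. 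If $h=0$ but $\s{H}\neq\s{J}$, this bound is vacuous and needs separate attention.

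Taking logarithms, it suffices that
\begin{align}
h\Big(\log m+\log 2-\log n\Big)+|e(\s{H}\setminus\s{J})|\big(\log C+(1+\varepsilon)\log\log|e(\Gamma\setminus\s{J})|\big)\leq 0,
\end{align}
where the constant $\log(1/q)$ and the factor $2$ are absorbed into the $\varepsilon$-slack because the log-log term diverges. Since $|e(\s{H}\setminus\s{J})|/h=\eta(\s{H}\vert\s{J})\leq\mu(\Gamma\vert\s{J})$, factoring out $h$ reduces this to \eqref{eq:condCopyALE}, after adjusting $C$ to absorb $\log 2$.

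The main obstacle is the degenerate case where $\s{H}\supsetneq\s{J}$ adds edges but no new vertices ($h=0$). Then $\eta(\s{H}\vert\s{J})=\infty$, and the vertex-counting bound gives nothing. I would first argue that the extra edges lie on $v(\s{J})$ and are determined by $\s{J}$; such $\s{H}$ should be excluded by restricting to pairs with $h\geq1$, or handled by taking $\s{J}$ to be induced in $\Gamma$ on its vertex set. This is exactly the situation when $\s{J}=\Gamma^{(\ell)}$ is an onion layer, which is edge-maximal by Lemma~\ref{lem:unique-full-overlap}. A secondary check is that the constant in Theorem~\ref{th:Zadik2} is uniform, so that $\log C$ in \eqref{eq:condCopyALE} can absorb it.
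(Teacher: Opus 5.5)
Your proposal follows the paper's proof step for step: reduce via Theorem~\ref{th:Zadik2} to $\Tilde{q}_E(\s{J};\Gamma)\le q/(C\log|e(\Gamma\setminus\s{J})|)$, rewrite this through \eqref{eq:symmetry2} as a bound on $\pi_\Gamma(\s{H}_0\subseteq\Gamma')$, control that probability by the vertex-counting argument of Lemma~\ref{lem:probcalc2Alsmot}, and factor out $|v(\s{H})\setminus v(\s{J})|$ using $\eta(\s{H}\vert\s{J})\le\mu(\Gamma\vert\s{J})$. Your two departures are genuine improvements over the paper. First, the denominator $n-|v(\s{J})|$ is the correct one: the paper's $(m/n)^k$ ignores that the new vertices must avoid $v(\s{J})$, and some restriction such as $|v(\s{J})|\le n/2$ is really needed. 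For instance, take $\s{J}$ an asymmetric graph of minimum degree $2$ on $n-m$ vertices, with $m=n/\log^2 n$, and attach $m$ pendant edges at one vertex. Then \eqref{eq:condCopyALE} holds with $\mu(\Gamma\vert\s{J})=1$, yet $|\calM(\s{J},\Gamma)|=1$ and the success probability is $q^m\to0$. Second, the $h=0$ case you flag is ruled out by the hypothesis itself: an $\s{H}\supsetneq\s{J}$ with no new vertices has $\eta(\s{H}\vert\s{J})=\infty$, which forces $\mu(\Gamma\vert\s{J})=\infty$ and is incompatible with \eqref{eq:condCopyALE}.
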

\begin{proof}[Proof of Theorem~\ref{theorem:sublogProbALE}]
    From the definition of $q_{c}(\s{J};\Gamma)$ in \eqref{eqn:qcJGamma}, our goal is to understand for which $\Gamma$'s we have $q_{c}(\s{J};\Gamma)\leq q$, for any fixed $q\in (0,1]$. By Theorem~\ref{th:Zadik2}, it is sufficient to show that $ \Tilde{q}_E(\s{J};\Gamma)\leq \frac{q}{C\log(|e(\Gamma\setminus\s{J})|)}$ for any fixed $q$, which holds if and only if
    \begin{align}
        \inf_{\s{J}\subseteq\s{H}\subseteq \Gamma} \frac{\E_{\s{G}\sim\calG(n,\tilde{q})}\pp{\calN(\s{J},\s{H},\s{G})\vert\s{J}\in\s{G}}}{\calN(\s{H},\Gamma)}\geq \frac{1}{2},\label{eq:condinfALE}
    \end{align}
    where $\tilde{q}\triangleq\frac{q}{C\log(|e(\Gamma\setminus\s{J})|)}$. We note that
    \begin{align}
    \E_{\s{G}\sim\calG(n,\tilde{q})}\pp{\calN(\s{J},\s{H},\s{G})\vert\s{J}\in\s{G}}=|\calM(\s{J},\s{H})|\cdot \tilde{q}^{|e(\s{H}\setminus\s{J})|},
    \end{align}
    Thus, \eqref{eq:condinfALE} holds if and only if for any $\s{J}\subseteq\s{H}\subseteq \Gamma$ we have
    \begin{align}
        \frac{\calN(\s{H},\Gamma)}{|\calM(\s{J},\s{H})|}\leq 2 \tilde{q}^{|e(\s{H}\setminus\s{J})|}=2\p{\frac{q}{C\log(|e(\Gamma\setminus\s{J})|)}}^{|e(\s{H}\setminus\s{J})|}.\label{eq:probRCALE}
    \end{align}
    As we have seen in the proof of Theorem~\ref{th:Zadik2}, the expression on the left-hand side of \eqref{eq:probRCALE} equals $\pi_{\Gamma}[\s{H}\subseteq \Gamma]$, where $\pi_\Gamma=\s{Unif}(\calM(\s{J},\Gamma))$ denotes the uniform measure over all copies of a fixed graph $\Gamma$ within $\calK_n$ that contain $\s{J}$. Thus, \eqref{eq:probRCALE} holds if,
    \begin{align}
        \log \pi_\Gamma[\s{H}\subseteq \Gamma]-\log2-|e(\s{H}\setminus\s{J})|\p{\log q -\log C-\log\log|e(\Gamma\setminus\s{J})|}\leq 0.
    \end{align}
    By the assumptions that $|v(\Gamma)|=\omega(1)$, and that $\Gamma$ has no isolated vertices, for any $\varepsilon$ and for sufficiently large $n$, the above holds if, 
    \begin{align}
        \log \pi_\Gamma[\s{H}\subseteq \Gamma]+|e(\s{H}\setminus\s{J})|\log C+(1+\varepsilon)|e(\s{H}\setminus\s{J})|\log\log|e(\Gamma\setminus\s{J})|\leq 0.\label{eq:almostFALE}
    \end{align}
    Finally, using Lemma~\ref{lem:probcalc2Alsmot}, the left-hand side of \eqref{eq:almostFALE} can be upper bounded by,
    \begin{align}
        &\log \pi_\Gamma[\s{H}\subseteq \Gamma]+|e(\s{H}\setminus\s{J})|\log C+(1+\varepsilon)|e(\s{H}\setminus\s{J})|\log\log|e(\Gamma\setminus\s{J})|\\
        &\leq |v(\s{H})\setminus v(\s{J})|\p{\log|v(\Gamma)\setminus v(\s{J})|-\log n}+|e(\s{H}\setminus\s{J})|\log C\nonumber\\
        &\qquad+(1+\varepsilon)|e(\s{H}\setminus\s{J})|\log\log|e(\Gamma\setminus\s{J})|\\
        &=|v(\s{H})\setminus v(\s{J})|\cdot \left(\log|v(\Gamma\setminus v(\s{J}))|+\frac{|e(\s{H}\setminus\s{J})|}{|v(\s{H})\setminus v(\s{J})|}\log C\right.\nonumber\\
        &\hspace{5cm}\left.+(1+\varepsilon)\frac{|e(\s{H}\setminus\s{J})|}{|v(\s{H})\setminus v(\s{J})|} \log \log|e(\Gamma\setminus\s{J})| -\log n\right)\\
        &\overset{(a)}{\leq}|v(\s{H})\setminus v(\s{J})|\cdot\pp{\log|v(\Gamma)\setminus v(\s{J})|+(1+\varepsilon)\mu(\Gamma\vert\s{J})\pp{\log\log|e(\Gamma\setminus\s{J})|+\log C} -\log n}\\
        &\leq 0,
    \end{align}
    where $(a)$ follows from the definition of the maximum subgraph relative density in \eqref{eqn:maxDensityRelative}, and the last inequality follows from \eqref{eq:condCopyALE}. This concludes the proof of the statement of Theorem~\ref{theorem:sublogProbALE}, and it is left to prove the following lemma.
\begin{lemma}\label{lem:probcalc2Alsmot} For any $\s{J}\subseteq\s{H}\subseteq \Gamma$,
\begin{align}
    \pi_{\Gamma}[\s{H}\subseteq \Gamma]\leq \p{\frac{|v(\Gamma)\setminus v(\s{J})|}{n}}^{|v(\s{H}\setminus\s{J})|},
\end{align}
where $\pi_\Gamma=\s{Unif}(\calM(\s{J},\Gamma))$.
\end{lemma}
\begin{proof}[Proof of Lemma~\ref{lem:probcalc2Alsmot}]
Let $m = |v(\Gamma)\setminus v(\s{J})|$ and $k = |v(\s{H}\setminus v(\s{J})|$. Fix a particular copy $\s{H}_0$ of $\s{H}$ in $\calK_n$ with vertex set $U = \{u_1,\dots,u_k\}$. Generate a uniformly random copy of $\Gamma$ by first picking its vertex set $S \subset [n]$ uniformly among all $m$-subsets (the internal labeling of $\Gamma$ can only lower the probability of containing $\s{H}_0$, so it suffices to control this step). If the sampled copy contains $\s{H}_0$, then necessarily $U \subseteq S$. Thus
\begin{align}
\pi_{\Gamma}[\s{H}_0 \subseteq \Gamma] \leq \pr[U \subseteq S]
= \prod_{i=1}^{k} \pr[u_i \in S \vert u_1,\dots,u_{i-1} \in S].
\end{align}
Conditioned on $u_1,\dots,u_{i-1} \in S$, there are $n-(i-1)$ remaining vertices and $m-(i-1)$ remaining slots in $S$, so
\begin{align}
\pr[u_i \in S \vert u_1,\dots,u_{i-1} \in S] 
= \frac{m-(i-1)}{n-(i-1)} \leq \frac{m}{n}. 
\end{align}
Multiplying these $k$ bounds gives
\begin{align}
\pi_{\Gamma}[\s{H} \subseteq \Gamma] \leq\pr[U \subseteq S] \leq \left(\frac{m}{n}\right)^k = \p{\frac{|v(\Gamma)\setminus v(\s{J})|}{n}}^{|v(\s{H}\setminus\s{J})|}.
\end{align}
\end{proof}
\end{proof}
We are now in a position to prove Lemma~\ref{lem:sublogProbLB}. To that end, we set $\s{J} = \Gamma^{(\ell_{\s{LB}})}$ and $\Gamma = \Gamma^\star$ in Theorem~\ref{theorem:sublogProbALE}. Then it is clear that condition~\eqref{eq:condCopyALE} is satisfied under~\eqref{eq:condCopyLB}, which completes the proof.

\end{document}